\numberwithin{equation}{subsection}
\newtheorem{axiom}{Axiom}
\theoremstyle{plain}
\newtheorem{theorem}{Theorem}[section]
\newtheorem{proposition}[theorem]{Proposition}
\newtheorem{lemma}[theorem]{Lemma}
\newtheorem{corollary}[theorem]{Corollary}
\theoremstyle{definition}
\newtheorem{definition}[theorem]{Definition}
\theoremstyle{remark}
\newtheorem{remark}[theorem]{Remark}
\begin{document}

\title{Open‐System Virus Particle Physics: A Path‐Integral Viral Lattice Theory Using Non‐Self‐Adjoint Stochastic PDEs and Fock‐Space Formalism}
\author{Lillian. St. Kleess}
\affil{Baruch College, Department of Natural Sciences, Manhattan, USA}
\date{February 17, 2025}
\maketitle

\begin{abstract}
We develop a comprehensive theoretical biophysics model—grounded in a \emph{path‐integral} perspective and an \emph{$m$‐sectorial} open‐system framework—to describe complex, damped viral phonon dynamics in resource‐limited and noise‐driven environments. By unifying \emph{wave mechanics} (via PDEs with multiplicative noise), \emph{Markov jumps} for occupant or arrangement transitions, and \emph{second‐quantized} (Fock‐space) expansions, our construction accommodates an \emph{unbounded} number of viral lattices in a single ``global wavefunction.'' In doing so, we capture how an entire population—potentially numbering in the millions—may be represented by a single operator‐theoretic state (or ``orbit’’), whose evolution is governed by non‐unitary semigroups with potential equilibrium or non‐equilibrium steady states. This approach admits action functionals over the space of system trajectories, enabling large‐deviation and flux analyses whenever \emph{detailed balance} is broken by sustained resource inputs, as often happens in real infections.

\medskip
\noindent
In earlier work, four axioms established a baseline for \emph{viral lattice theory}: \emph{Metabolic Inertia}, \emph{Deterministic Physical Interactions}, \emph{Self‐Organization into Periodic Lattices}, and \emph{Conservation of Energy}. Here, adopting a non‐equilibrium viewpoint and leveraging topological flux tools from infinite‐dimensional geometry, we show that persistent probability‐current loops arise whenever resource inflows disrupt detailed balance. This discovery motivates two new axioms. First, the \emph{Axiom of Non‐Equilibrium Flux Persistence} asserts that in open systems with continuous energy or resource inputs, viral lattice configurations can sustain stable, cohomologically nontrivial flux loops—revealing inherent irreversibility in capsid transitions. Second, the \emph{Axiom of Stochastic Continuity of Lattice Evolution} posits that realistic subcellular noise (e.g.\ thermal, host‐driven fluctuations) can be consistently modeled by \emph{$m$‐sectorial PDEs} with operator‐valued perturbations, guaranteeing well‐posedness and aligning with single‐virus tracking experiments.

\medskip
\noindent
\emph{Path‐integral and orbit‐based methods} figure prominently in our construction: each viral population’s \emph{orbit} in Fock space can be weighted by an action functional that encodes resource‐modulated replication (creation) or immune clearance (annihilation). Such a global wavefunction thus synthesizes PDE wavefront modes, occupant transitions, and stochastically induced rearrangements into a single evolution equation, capturing how local capsid vibrations might catalyze large‐scale replication bursts—and vice versa. Our proofs harness $m$‐sectorial dissipativity (via Lumer–Phillips and Hille–Yosida theorems) to show that, despite unbounded occupant expansions or morphological continuums, solutions remain finite‐norm over finite times. The well‐posedness extends to \emph{non‐self‐adjoint} operators with complex damping, irreversibility, and operator‐valued noise, thus mirroring host constraints and immune factors that restrict virus proliferation.

\medskip
\noindent
Overall, by merging PDE mechanics, Markov occupant transitions, and Feynman‐inspired path integrals in a second‐quantized framework, we present a \emph{novel operator‐algebraic approach to viral lattice theory}. Alongside the \emph{six} resulting axioms (the original four plus two new ones for non‐equilibrium flux and stochastic continuity), this model lays a foundation for \emph{virophysics}—an emerging discipline aiming to unify mechanical, stochastic, and population‐scale aspects of viral infection. Our formalism invites direct experimental cross‐validation, from single‐virion tracking to population assays, and offers predictive insights for resource‐limited replication, capsid reorganizations, and potential intervention strategies in modern virology.
\end{abstract}

\tableofcontents
\newpage

\section{Introduction: Motivation for a Multi‐Lattice Formalism in Viral Modeling}
\label{sec:intro_multilattice}

\noindent
\noindent
\textbf{Viral Lattice Theory: A Physics‐First Perspective on Viral Assemblies.}\;
\emph{Viral Lattice Theory} posits that when large numbers of virions occupy the same medium (e.g., a local tissue environment within an infected host), they can form transient ``lattices,'' wherein each virion is treated as a \emph{node} in a collective arrangement. Instead of remaining an inscrutable ``mess'' of particles in a chaotic milieu, this lattice‐based viewpoint shows that virion motion is, in principle, \emph{entirely deterministic}: each virus particle \emph{passively} obeys fundamental physical laws (e.g., diffusive, acoustic, electrostatic interactions). Contrary to common intuition, it is \emph{not} that virions themselves introduce chaos by ``deciding’’ where to go; rather, these small particles lack any intrinsic capacity for free decisions. The \emph{apparent} chaos derives from viruses navigating a \emph{host environment} that is rich in noise, stochasticity, and complex fluid or molecular constraints.

\smallskip
By organizing virions into \emph{viral lattices}, we obtain a wave‐theoretic mathematical structure: the inter‐particle forces and couplings resemble quasi‐phonon modes analogous to quantum mechanical or classical acoustic systems. This effectively ``tames’’ the high‐dimensional viral population problem by providing operator‐theoretic and PDE‐based tools for analyzing collective viral behavior. Notably, \emph{Viral Lattice Theory} interprets viruses as \emph{obligate nano‐spherical particles}, free of internal metabolic decisions and thus governed entirely by physical laws (e.g., diffusion, convection, electrostatic forces). By setting aside (momentarily) the pathology and infection‐causing aspects, we can mathematically treat virions as neutral nodes in a large lattice, track their motion under deterministic physics, and understand how host‐level resource fluctuations, random collisions, and environmental noise drive viral ``chaos.’’ 

\smallskip
In earlier formulations of \textit{Viral Lattice Theory}~\cite{StKleess2025}, a single \emph{8×8} operator matrix captured local lattice interactions for \emph{hundreds} of virus particles. While successful in controlled \textit{in vitro} assays (e.g., minimal infective doses, small batch cultures~\cite{Atmar2014}), such a single‐lattice limit cannot capture the trillions of virions often present in real‐world infections, where viral loads can exceed \(10^9\)--\(10^{11}\) virions in a single host~\cite{KnipeHowley2020,Wölfel2020}. The present work extends that foundation to \emph{large} and \emph{fluctuating} populations, embedding each single‐lattice operator in second‐quantized (Fock‐space) and path‐integral frameworks, thereby yielding a truly \emph{open‐system} and \emph{multi‐scale} theory of viral dynamics. 

Ultimately, \emph{Viral Lattice Theory} aims to forecast experimentally measurable observables (e.g., wavefront expansions, occupant number shifts) and guide our interpretation of virus particle mechanics in a high‐dimensional, resource‐limited biological environment. The result is a rigorous operator‐algebraic structure capable of describing \emph{deterministic} PDE–Markov processes at the single‐lattice scale, while also embracing \emph{stochastic, resource‐limited replication} at the population scale. By uniting these perspectives, we significantly expand the scope of viral lattice theory and offer a more faithful depiction of viral behavior in complex and fluctuating biological environments, offering:

\paragraph{Feynman Inspired Path Integral View of Viral Infection}
In the \emph{Feynman‐inspired} path‐integral viewpoint we will develop in this paper, one views the viral system as exploring \emph{multitudinous} routes in a high‐dimensional (or infinite‐dimensional) space of occupant numbers, PDE wave states, and morphological labels. The strongly continuous semigroup derived from Theorem~\ref{thm:ExistenceUniquenessFinal} underpins a measure on such path space—akin to a Feynman–Kac construction—ensuring that each microscopic trajectory (infection wave, occupant replication burst, Markov jump) remains finite‐norm and smoothly stitched into the global system’s evolution.

\emph{In the same spirit that quantum systems follow countless possible paths—each weighted by an action functional—viral lattices in our open‐system world can “zigzag” through occupant expansions, wavefront collisions, morphological flips, or noise‐driven fluctuations. The $m$‐sectorial generator ensures these countless 'routes of infection' remain mathematically coherent and physically meaningful.} By combining PDE wave mechanics, Markov occupant transitions, and second‐quantized replication/clearance in an $m$‐sectorial open‐system framework, \emph{viral lattice theory} emerges as a robust extension of quantum‐style operator techniques to macroscopic infection dynamics:
\begin{enumerate}
  \item \textbf{Mathematically:} We secure existence and uniqueness of mild solutions, no finite‐time blow‐ups, and the capacity to handle arbitrarily large occupant numbers or continuum morphological states. This is reminiscent of \emph{quantum field theories} where infinitely many particles (here, virions) can be created or destroyed, yet the system remains well‐defined under dissipative constraints.
  \item \textbf{Mechanically:} Traveling waves, acoustic and optical phonon branches, and resource‐limited expansions provide a lively portrait of how real viruses might \emph{physically} operate—coherent wavefronts at the micro scale can ignite replication bursts at the macro scale.
  \item \textbf{Feynman‐Style Path Integrals:} Conceptually, one can imagine each virion’s trajectory weaving through PDE states and occupant transitions, with an \emph{action functional} weighting possible orbits. The $m$‐sectorial nature imposes a fundamental “penalty” on paths that would otherwise lead to infinite occupant blow‐ups or physically implausible wave solutions.
\end{enumerate}
Hence, the theory offers a “Feynman‐esque” vantage on infection: countless paths exist for virions to replicate, reorganize, or be annihilated—yet only certain wave‐guided routes dominate or remain stable. This vantage ties together the formal mathematics (operator semigroups, PDE dissipativity, second‐quantized occupant expansions) with a physically and biologically grounded picture of how viruses navigate spatiotemporal landscapes in real infection scenarios.
\subsubsection{Expanded Axioms of Viral Lattice Theory}
\medskip
Although viruses display a stunning range of morphologies—from helical to icosahedral, enveloped to non‐enveloped—the axioms here remain broadly applicable, reflecting universal features: virions’ metabolic inertness outside a host, their susceptibility to classical forces in extracellular environments, and the well‐documented capacity of certain strains to self‐assemble into highly ordered structures. To stay rooted in biological realism, we incorporate principles from thermodynamics and stochastic mechanics, ensuring that our mathematical abstractions do not stray from empirically observed viral behavior. 

\medskip
\noindent
\textbf{On Noise and Stochastic Continuity.} In biology, ``noise'' often refers to the random fluctuations that arise from both internal and external factors. Examples include variations in available host resources (e.g., ATP, amino acids), differences in local pH or ionic strength, and the inherent probabilistic nature of molecular collisions and assembly events. For virologists, these stochastic variations manifest as unpredictable changes in virion replication rates, mutation frequencies, or host immune responses. Physicists, on the other hand, typically model noise by introducing random forcing terms or operator‐valued processes into differential equations, capturing micro‐scale uncertainties at a continuum level~\cite{DaPratoZabczyk2014}. In this paper, we will merge the two.

The newly introduced concept of \emph{Stochastic Continuity} is intended to ensure that such randomness is neither unbounded nor physically implausible. In mathematical terms, stochastic continuity requires that small perturbations to the system’s environment (e.g., slight fluctuations in temperature or nutrient availability) induce proportionally small changes in the state of the viral lattice. Physically, this prevents abrupt ``jumps’’ in virion configurations or replication rates that have no real counterpart in biology. In turn, the viral system remains robustly modeled under a wide range of fluctuating conditions, aligning with single‐virus tracking experiments and other in vivo observations where randomness is measurable but not chaotic~\cite{harvey2019viral, DaPratoZabczyk2014}.

\begin{axiom}[Metabolic Inertia]
\label{axiom:metabolic_inertia}
\emph{In their extracellular form, virions are metabolically inert: they neither synthesize nor consume chemical energy (e.g., ATP),  
and they remain in a quiescent, stable state outside of host cells.} Extracellular virions lack the enzymatic machinery required for metabolism and energy production, a fact well established in virological studies~\cite{KnipeHowley2020,flint2015principles}. This inertness underpins their ability to persist in diverse environments until encountering a suitable host.
\end{axiom}

\begin{axiom}[Deterministic Physical Interactions]
\label{axiom:deterministic_interactions}
\emph{The motion and collisions of extracellular virions are dictated by physical forces  
and can be modeled deterministically by classical mechanics, with negligible influence  
from biological feedback mechanisms in the extracellular milieu.} Virions in extracellular fluids mainly experience classical forces such as hydrodynamic drag and electrostatic interactions, which dominate over biochemical feedback processes at that stage~\cite{ZuckerMateusMurray2021,flint2015principles}. This justifies deterministic modeling approaches for extracellular viral motion.
\end{axiom}

\begin{axiom}[Self‐Organization into Periodic Lattices]
\label{axiom:periodic_stability}
\emph{When provided sufficient energy input and appropriate initial conditions, virions can spontaneously  
organize into periodic (or quasi‐periodic) lattice structures that represent local or global minima  
of the free‐energy landscape.} Experimental observations have demonstrated that viral capsids can self-assemble into highly symmetric, periodic structures~\cite{harvey2019viral}. This self-organization is driven by inter-subunit interactions and is fundamental to capsid stability.
\end{axiom}

\begin{axiom}[Conservation of Energy]
\label{axiom:energy_conservation}
\emph{In the absence of external dissipation or forcing, the total mechanical energy of the viral lattice system—including vibrational quanta (``viral phonons'')—remains constant. This reflects an underlying Hamiltonian structure in which excitations introduced into the lattice do not spontaneously dissipate or emerge without an external energy source.} Although real systems invariably include some degree of damping or external input, the premise of energy conservation in an idealized, closed viral assembly system offers a baseline for understanding the energetics of virion formation and vibration. In this model, ``viral phonons'' act as quantized vibrational modes that conserve the energy imparted to them, reinforcing the Hamiltonian nature of the system’s core \cite{flint2015principles}.
\end{axiom}

\begin{axiom}[Non‐Equilibrium Flux Persistence]
\label{axiom:non_equilibrium_flux}
\emph{In settings where sustained energy or resource input is available, viral lattice configurations may sustain  
persistent probability‐current loops in their state space, indicating a breach of detailed balance and signifying  
active, non‐equilibrium dynamics.} Many viral processes (capsid assembly, morphological transitions) operate under continuous resource inflows (host ATP, etc.)  and do not settle into static equilibria.  
Instead, steady‐state flux loops manifest as ongoing rearrangements or replication cycles,  
consistent with resource‐driven, open‐system thermodynamics.
\end{axiom}

\begin{axiom}[Stochastic Continuity of Lattice Evolution]
\label{axiom:stochastic_continuity}
\emph{In the presence of environmental fluctuations (e.g., thermal noise, varying host resources),  
the continuum mechanics of virions is governed by $m$‐sectorial PDEs coupled to operator‐valued noise,  
guaranteeing well‐posedness and robust modeling of subcellular randomness.} In vivo, viral processes are inherently stochastic due to fluctuating host conditions and thermal effects. Modeling these dynamics with $m$‐sectorial PDEs that include stochastic terms captures the randomness observed in single-virus tracking experiments and biochemical assays~\cite{DaPratoZabczyk2014,harvey2019viral}
\end{axiom}

\noindent
Together, these expanded axioms define an operator‐theoretic scaffold for \emph{multi‐lattice} viral modeling.  
The first four axioms (\textit{Metabolic Inertia, Deterministic Interactions, Self‐Organization, Energy Conservation})  
reflect the core principles established in the early development of Viral Lattice Theory.  
The two additional axioms (\textit{Non‐Equilibrium Flux Persistence} and \textit{Stochastic Continuity})  
address open‐system and noisy dynamics, aligning the theory with non‐equilibrium statistical mechanics and  
observations of large viral populations in real biological systems~\cite{KnipeHowley2020,flint2015principles}.  
They provide a unifying conceptual platform from which to develop consistent PDE–Markov–Fock  
formalisms, capturing both single‐virion mechanics and population‐scale replication in resource‐limited,  
stochastic environments. By explicitly incorporating noise in a biologically interpretable manner,  
we ensure that our modeling framework embraces the full spectrum of fluctuations—from  
subcellular randomness in resource distribution to population‐level variability in infection trajectories.  
This comprehensive perspective on viral behavior underscores both the mathematical rigor and  
the biological realism that \emph{Viral Lattice Theory} seeks to maintain.

\noindent
\subsubsection*{Gaps in Current Experimental and Theoretical Approaches}

\noindent
The modern era of virology has produced invaluable breakthroughs—ranging from the landmark studies of T4 bacteriophages~\cite{Adams1959,HersheyChase1952} to plaque assays that have guided vaccine development and saved countless lives. Yet, these approaches typically embrace an \emph{infection‐first} viewpoint: they infer the existence and properties of viruses chiefly through the macroscopic outcomes of viral replication (for instance, cell death in plaque assays or fluorescence signals in reporter systems). Such methods are crucial for diagnosing infections and designing therapeutics but offer limited windows into the mechanistic details of how virions, as individual entities, move, interact, and evolve in their native environments. When we speak of “gaps,” it is not to diminish the contributions of these classical tools, they remain the bedrock of virology, but rather to highlight new opportunities for studying viruses \emph{from the virus’s own perspective}.

\smallskip
By “virus’s perspective,” we refer to examining the specific ways in which virions navigate spatiotemporal landscapes, adapt their lattice structures (capsids or other protein configurations), and respond to local resource fluctuations—all without immediately relying on the indirect markers of infection. Achieving this vantage point is challenging: viruses are extraordinarily small, typically requiring specialized imaging techniques such as electron microscopy to visualize them directly. However, the preparation and handling necessary for such imaging can significantly disturb their native states. This is where mathematical frameworks, inspired by quantum mechanics and operator theory, come to the fore. In the same way that observing a quantum system at the nanoscale inherently disturbs it, viruses, too, can be altered by the very act of measurement. Employing a \emph{quantum‐inspired} formalism allows us to “observe” and model virions without physically interfering with them first, preserving an unperturbed, \emph{in silico} perspective that complements traditional virological assays.

\begin{itemize}
    \item \emph{Indirect Assays for Viral Load:} Many traditional virological methods measure infections indirectly. Plaque assays, for example, capture the \emph{consequences} of viral replication (lysed cell patches in tissue culture) rather than the spatiotemporal motion or conformational states of individual virions~\cite{KnipeHowley2020,vanEtten2019}. While essential for diagnostics, such assays offer limited insight into how virus particles move, interact, or structurally evolve over time. Classic bacteriophage T4 experiments exemplify this focus on host cell outcomes; they illuminate phage infection cycles yet do not fully capture \emph{in situ} changes to the phage particles themselves.
    
    \item \emph{Scale Complexity:} Monitoring the trajectories or shapes of billions of virions \emph{in vivo} is experimentally prohibitive. Techniques such as electron microscopy or single‐particle fluorescence yield high‐resolution snapshots~\cite{Grunewald2003}, yet scaling to statistically representative numbers (millions to billions of virions) remains impractical. Even with high‐throughput methods, bridging microscopic details to macroscopic infection patterns requires sophisticated modeling that can handle enormous populations.
    
    \item \emph{Need for Robust Mathematical Frameworks:} From a theoretical perspective, a naive approach to modeling $N \gg 1$ virions would be to form an $N$‐fold tensor product of single‐lattice Hilbert spaces. However, this construction explodes combinatorially and struggles to handle the continual \emph{creation} (replication) and \emph{annihilation} (degradation) events that characterize real infection dynamics. Consequently, there is a demand for frameworks that can elegantly accommodate large numbers of virions undergoing population‐scale processes.
\end{itemize}

\noindent
The present work addresses these gaps by extending \emph{viral lattice theory} to describe \emph{arbitrarily large} virion populations through a \emph{second‐quantization} (Fock‐space) framework~\cite{FetterWalecka1971}. Although the underlying methods (creation/annihilation operators, $m$‐sectorial semigroup theory~\cite{Kato1980}) originate in quantum mechanics, the system here is \emph{not} quantum in the strict physical sense. Instead, each viral lattice is treated as a single “particle” in an abstract Fock space, enabling:
\begin{enumerate}
  \item \emph{Exponential Particle Growth:} Viral replication can rapidly expand a population from tens to billions of particles. A direct‐sum (Fock) construction accommodates new “lattice factors” without continuously redefining the overall state space.
  \item \emph{Creation \& Annihilation Operators:} Replication emerges naturally through creation operators $\hat{a}^\dagger(f)$, while degradation or immune‐mediated clearance corresponds to annihilation operators $\hat{a}(f)$. These processes capture infection dynamics in line with experimental observations~\cite{Florence2017,Freed2015}.
  \item \emph{Phonon and Mass‐Band Extensions:} Single‐lattice modes, including partial or complete genome packaging and vibrational (phonon) states, extend seamlessly to multi‐lattice populations. Thus, each interconnected viral lattice retains its structural detail, even as the total number of lattices evolves.
  \item \emph{Mathematical Tractability:} Classical theorems (e.g.\ Hille–Yosida, Lumer–Phillips) ensure a well‐defined semigroup generated by an $m$‐sectorial single‐lattice operator. This rigor underpins PDE or Markov chain models of infection~\cite{DaPratoZabczyk2014}.
\end{enumerate}

\noindent
Although clinical and immunological research often focuses on macroscopic outcomes (pathology, immune response, or plaque formation), a growing body of work seeks to elucidate \emph{microscopic} virion dynamics~\cite{Grunewald2003,Wang2022}. By fusing operator‐theoretic techniques with biophysical insights, we propose a framework that virologists could use to:
\begin{enumerate}
  \item Simulate and predict how large virion populations collectively navigate and adapt within host environments.
  \item Identify critical transitions (e.g.\ lattice rearrangements, genome packaging states) that may serve as therapeutic targets.
  \item Bridge continuum‐scale assays (plaque counts, viral titers) with discrete‐level interactions among individual virus particles.
\end{enumerate}

\noindent
\textbf{From Macro to Micro and Back Again.}  
In summary, the many‐lattice, second‐quantized approach presented here builds upon earlier single‐lattice models to accommodate \emph{realistic} viral loads and the complexities arising in large populations. By leveraging standard tools of mathematical physics (Fock‐space theory, $m$‐sectorial operators), the framework offers a principled route to connect micro‐level virion behavior with macro‐scale infection patterns. Crucially, it grants us a "virus‐centered" viewpoint, modeling the \emph{spatially explicit} and \emph{structurally detailed} nature of each virion in a population, which is something that conventional, infection‐first assays cannot easily achieve. This new perspective complements existing experimental methods and sets the stage for deeper explorations of viral lifecycles, where small‐scale phenomena can influence large‐scale disease outcomes.

\subsubsection{Why Does Fock Space Resolve the ``Many-Body Problem'' in Viral Modeling?}

\noindent
Viruses operate on a scale that is deceptively small but can balloon exponentially within an infected host. In severe infections (e.g., certain SARS-CoV-2 cases), viral loads can skyrocket to $10^9$--$10^{11}$ virions in a single individual~\cite{Wölfel2020}. Capturing this combinatorial explosion in a mathematical model pushes the limits of traditional approaches, reminiscent of quantum many‐body systems where even a moderate number of particles can lead to intractably large state spaces~\cite{Greiner2002,Lewenstein2012}. While virions themselves are not quantum entities, they face a parallel challenge: unbounded replication occurs in a dynamically fluctuating environment. To tackle this high‐dimensional, many‐body problem, we adopt a \emph{Fock‐space} representation—a concept borrowed from quantum field theory but extended here to a fundamentally classical (albeit operator‐theoretic) context.

A defining strength of Fock space is its capacity to handle variable or unbounded particle numbers. In the viral domain, this is crucial: the total population of virions need not be fixed, as continual \emph{creation} (replication) and \emph{annihilation} (immune clearance or degradation) events can radically alter the number of particles in play. By assigning each viral lattice to a sector of the Fock space, we sidestep the need to manually expand or truncate the state space whenever virions multiply or vanish. Below, we outline the key features that make Fock space a natural choice for large‐scale viral modeling:

\begin{enumerate}
    \item \emph{No Fixed $N$:} Unlike fixed‐$N$ models, Fock space is built as a \emph{direct sum} over all occupation numbers $N=0,1,2,\dots$. This structure seamlessly accommodates indefinitely increasing virion populations, as well as random extinction events~\cite{FetterWalecka1971,Kato1980}.
    \item \emph{Regular Structure \& Sectoriality:} Despite its infinite dimensionality, the Fock construction integrates with advanced functional analysis (e.g.\ $m$‐sectorial semigroup theory). This allows non‐Hermitian terms (immune factors, resource limitations) to be embedded rigorously~\cite{DaPratoZabczyk2014}.
\end{enumerate}

\noindent
From a practical standpoint, modeling a single infected cell that transitions from harboring a handful of virions to producing thousands—or billions—becomes easier when one’s mathematical toolset naturally expands or contracts with each replication cycle. This avoids continuously redefining a finite‐$N$ system, which would be cumbersome both computationally and conceptually. In parallel, from a virological perspective, the model mirrors the fact that viral populations fluctuate dramatically over the course of an infection, reflecting the competition between viral replication and host defense.

\medskip
\noindent
\textbf{Preserving Internal Lattice Structure.}  
A key feature of our approach is that each viral lattice retains its previously developed single‐lattice operator framework—even within a population of unbounded size. Specifically:

\begin{itemize}
    \item \emph{Mass‐Band Diversity:} Creation and annihilation operators can be labeled by strain, mutation state, or protein composition, thereby maintaining the biologically relevant details of each emergent lattice.
    \item \emph{Conformational Spectrum:} Vibrational modes (phonons), genome packaging states, and protein rearrangements persist within each lattice. These operator‐based techniques scale up gracefully as the number of virions increases.
    \item \emph{Resource/Environment Coupling:} Since the single‐lattice generator can incorporate resource flows or stochastic boundary conditions, many‐lattice expansions inherit such dependencies seamlessly. This facilitates modeling of finite ATP pools, immune responses, or other extrinsic constraints.
\end{itemize}

\noindent
Hence, the Fock‐space construction not only handles unbounded population changes but also safeguards the structural and dynamical richness of each individual virion.

\medskip
\noindent
\textbf{Bringing It All Together.}  
From the standpoint of quantum many‐body theory, a combinatorial proliferation of states is often funneled into a \emph{second‐quantized} framework, where creation and annihilation operators encode repeated processes like particle collisions or decays~\cite{Greiner2002,Lewenstein2012}. In our viral context, these operators track replication cycles and clearance events in a similarly elegant manner. The $m$‐sectorial formalism ensures that dissipative or gain terms (host immune attacks, resource limitations) are well‐defined within the same mathematical architecture~\cite{DaPratoZabczyk2014}.

\smallskip
\noindent
Finally, it is straightforward to couple this many‐lattice model to host‐scale PDEs or discrete Markov processes that track resource availability (nutrients, ATP) or immune factors. Mutation pathways can also be introduced by labeling creation operators with genomic information, thereby consolidating replication dynamics and evolutionary shifts under one operator‐based roof~\cite{Florence2017}. In short, the many‐lattice Fock‐space approach turns the formidable \emph{many‐body problem} of viral replication into a tractable, \emph{mathematically rigorous} framework—one that spans everything from internal virion configuration to large‐scale population trends.

\section{Single-Lattice Hilbert Space: Arrangement States and PDE Space}

\noindent
\textbf{Introduction to Hilbert Spaces.}
A \emph{Hilbert space} is a complete inner product space, meaning it is equipped with an inner product 
\(\langle\cdot,\cdot\rangle\)
that induces a norm under which the space is complete. This structure is central to modern physics and mathematics, particularly in quantum mechanics and partial differential equation (PDE) analysis, because it allows one to use powerful tools from functional analysis (e.g.\ orthogonal projections, spectral theory, and unbounded operators) while retaining an intuitive geometric interpretation of angles and lengths in infinite dimensions. In the context of \emph{viral lattice theory}, we employ Hilbert spaces to rigorously describe the configuration and motion of an \(8\times 8\) lattice of virus particles, thereby bridging discrete molecular states and continuum PDE formulations within a unified operator‐theoretic framework.

We introduce two Hilbert spaces: 
\(\mathcal{H}_{\mathrm{arr}}\)
and 
\(\mathcal{H}_{\mathrm{PDE}}\). 
\begin{itemize}
\item \(\mathcal{H}_{\mathrm{PDE}}\) is a Hilbert space that captures the continuum‐level dynamics of the lattice’s displacement fields. Concretely, this space encompasses solutions \(\mathbf{u}(\mathbf{r},t)\) (and appropriate boundary/initial conditions) to a complex PDE system describing both elastic (real‐valued) and dissipative (imaginary‐valued) modes of lattice motion.
  \item \(\mathcal{H}_{\mathrm{arr}}\) encodes the \emph{arrangement} states of a single viral lattice, representing discrete or Markovian aspects such as local conformational binding, microstate transitions, or infinitesimal shifts in inter‐particle spacing.
\end{itemize}

\subsection{Defining the PDE Hilbert Space \(\mathcal{H}_{\mathrm{PDE}}\)}

\noindent
To handle the \emph{continuum} component of the lattice’s motion, we introduce a complex displacement field
\begin{equation}
\mathbf{u}(\mathbf{r}, t) \;=\; \mathbf{u}_{\text{R}}(\mathbf{r}, t) \;+\; i\,\mathbf{u}_{\text{I}}(\mathbf{r}, t),
\label{eq:complex_displacement}
\end{equation}
where
\(\mathbf{u}_{\text{R}}(\mathbf{r}, t),\mathbf{u}_{\text{I}}(\mathbf{r}, t)\in \mathbb{R}^3\).
This field evolves according to a pair of coupled PDEs that capture both elastic (real) and dissipative (imaginary) effects:
\begin{equation}
  \begin{cases}
  \displaystyle
  g\,\dfrac{\partial^2 \mathbf{u}_R}{\partial t^2} 
  \;+\;\eta_R\,\dfrac{\partial \mathbf{u}_R}{\partial t}
  \;-\;\nabla\!\cdot \bigl(\boldsymbol{\Lambda}_\Phi\,\mathbf{u}_R\bigr)
  \;-\;\eta_I\,\dfrac{\partial \mathbf{u}_I}{\partial t}
  \;=\;
  \mathbf{W}_{\mathrm{Host}}^R(\mathbf{r},t)
  \;+\;
  \widetilde{\boldsymbol{\psi}}^{\,R}(\mathbf{r},t),
  \\[1em]
  \displaystyle
  g\,\dfrac{\partial^2 \mathbf{u}_I}{\partial t^2} 
  \;+\;\eta_R\,\dfrac{\partial \mathbf{u}_I}{\partial t}
  \;-\;\nabla\!\cdot \bigl(\boldsymbol{\Lambda}_\Phi\,\mathbf{u}_I\bigr)
  \;+\;\eta_I\,\dfrac{\partial \mathbf{u}_R}{\partial t}
  \;=\;
  \mathbf{W}_{\mathrm{Host}}^I(\mathbf{r},t)
  \;+\;
  \widetilde{\boldsymbol{\psi}}^{\,I}(\mathbf{r},t),
  \end{cases}
\end{equation}

To model the continuum dynamics of an \(8 \times 8\) viral lattice through a PDE framework, one identifies key variables and operators that capture mass distribution, displacement fields, energy dissipation, and inter-particle interactions. In this subsection, we outline the primary components involved in formulating a realistic complex-damped PDE description for viral motion, transitioning from a discrete lattice picture to a continuum representation.
\begin{itemize}
\item \textbf{Mass Density Field:} A function 
\(
g : \Omega \rightarrow \mathbb{R}^{+}
\)
represents the \emph{mass density field} over the domain \(\Omega\subset\mathbb{R}^3\). Physically, \(g(\mathbf{r})\) distributes the total virion mass continuously across the region \(\Omega\). If each virion has mass \(m\) and there are \(N\) virions in volume \(|\Omega|\), then in the limit of small lattice spacing \(a\to0\),
\[
  g(\mathbf{r}) 
  \;=\; 
  \lim_{a\to 0}\,\frac{m\,N}{|\Omega|}.
\]
This continuum approach accounts for the fact that in a dense lattice, individual particles merge into an effectively uniform mass field.

\medskip
\noindent
\item \textbf{Displacement Field:} The function 
\(
\mathbf{u}(\mathbf{r}, t) : \Omega \times [0,T] \to \mathbb{R}^3
\)
is the \emph{displacement field}, describing the deviation of virus particles from their equilibrium positions at spatial point \(\mathbf{r}\) and time \(t\). This field arises as the continuous analog of the discrete coordinates \(\{\mathbf{r}_i(t)\}\) in a fully atomistic (or virion-resolved) model. Each vector \(\mathbf{u}(\mathbf{r},t)\) captures the local distortion or motion of the lattice at \(\mathbf{r}\).

\medskip
\noindent
\item \textbf{Complex Damping Coefficient}
Viscoelastic or otherwise dissipative media often exhibit \emph{phase‐delayed} responses to stress, implying that the rate of energy loss (or storage) cannot be captured by a purely real damping term. In this PDE setting, we introduce a \emph{complex} damping coefficient
  \[
    \eta(\mathbf{r}, t) 
    \;=\; 
    \eta_{\text{R}}(\mathbf{r}, t) \;+\; i\,\eta_{\text{I}}(\mathbf{r}, t).
  \]
  The real part \(\eta_{\text{R}}(\mathbf{r}, t)\) accounts for standard viscous dissipation—energy dissipates as the viral lattice deforms in a fluid‐like environment (e.g., cytoplasm, extracellular medium). The imaginary part \(\eta_{\text{I}}(\mathbf{r}, t)\) encodes \emph{phase‐shifted} energy storage, signifying that some fraction of the lattice’s elastic energy is neither purely conserved nor purely lost, but undergoes transient retention (akin to internal friction or frequency‐dependent attenuation). Real viruses experience resistance from a host of biological factors, including non‐Newtonian cytoplasm viscosity, crowded intracellular compartments, and localized fluid flows that each introduce forms of friction, drag, or partial elasticity~\cite{KnipeHowley2020}. Complex damping thus mirrors the realistic \emph{rheological} behavior seen in many biological media, where purely real or purely imaginary terms fail to capture the interplay of damping and elastic response.
\smallskip

In quantum field theory, dissipation is sometimes introduced by a \emph{Wick rotation}, effectively rotating time into the complex plane to handle decay processes. Here, \emph{complex damping} plays a similar functional role—without requiring a strict Wick rotation. The PDEs directly incorporate dissipation through the \(\eta(\mathbf{r},t)\) term, so the “curling” of time is not necessary. Instead, this coefficient enforces exponential‐type decay or attenuation (through \(\eta_{\mathrm{R}}\)) and phase‐lag effects (through \(\eta_{\mathrm{I}}\)) within the same real‐time evolution. As a result, solutions to the PDE \emph{organically} encode energy loss and partial retention over time, providing a more faithful model of how mechanical energy in the viral lattice dissipates in biological contexts. This approach streamlines the analysis, since no auxiliary transformations are needed to accommodate dissipation in the Hilbert space formulation. Within \(\mathcal{H}_{\mathrm{PDE}}\), both \(\eta_{\text{R}}\) and \(\eta_{\text{I}}\) become part of the differential operators that generate time evolution. They ensure that any solution \(\mathbf{u}(\mathbf{r},t)\) describing the viral lattice’s displacement not only evolves elastically but also includes realistic decay and phase shifts. Consequently, one obtains a well‐posed, operator‐theoretic treatment of damping that captures the essential thermomechanical and rheological features of virus motion in a biological medium.
\medskip
\item \textbf{Virion Positions in a Simple Cubic Lattice:}
In a simplified cubic arrangement, the virion coordinates \(\{\mathbf{R}_j\}\) are enumerated about a reference point, we refer to as the reference virion (\(\boldsymbol{\alpha}\)) located at \(\mathbf{R}_0\). Nearest neighbors (\(\boldsymbol{\beta}\)) appear at \(\mathbf{R}_0 \pm a\,\mathbf{e}_x\), \(\mathbf{R}_0 \pm a\,\mathbf{e}_y\), etc., while local (\(\boldsymbol{\gamma}\)) or peripheral (\(\boldsymbol{\Omega}\), \(\boldsymbol{\psi}\)) virions occupy diagonal positions at combinations of the unit vectors. These enumerations help systematically define the sub-matrices in \(\boldsymbol{\Lambda}_{ij}^{\Phi}\), ensuring that all relevant pairwise interactions are accounted for.
\begin{corollary}[Unit Cell Side Length]
\label{cor:unit_cell_side_length}
Consider a cubic lattice composed of \( N = n^3 \) virions within a unit cell. The side length \( L \) of the cubic unit cell is given by:
\begin{equation}
L = n a,
\end{equation}
where \( n = \sqrt[3]{N} \) represents the number of virions arranged along one axis of the lattice. For large \( n \), the length of a cubic unit cell accommodating \( n^3 \) virions can be approximated as \( L = n a \). More precisely, if one accounts for boundary effects and the exact positioning of the outermost virions, corrections to this relation would appear. However, as \( n \to \infty \), edge effects become negligible, and \( L \approx n a \) holds with high accuracy. Thus, the infinite lattice limit is well-captured by the simple relation \( L = n a \).
\end{corollary}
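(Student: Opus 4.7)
The plan is to set up coordinates for the cubic lattice, count spacings along a single axis, and then control the boundary correction in the continuum limit.

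First, I would place the $N = n^3$ virions on a simple cubic grid at sites $\mathbf{R}_{ijk} = (ia,\,ja,\,ka)$ with $i,j,k \in \{0,1,\ldots,n-1\}$, so that the total site count matches the hypothesis $N = n^3$ and the nearest-neighbor spacing is $a$ by construction. The side length $L$ of the enclosing cubic unit cell is then the length of the projection of the occupied region onto any one axis.

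Next, I would make explicit the two natural conventions for that length. The center-to-center span of the outermost virions is $(n-1)a$, obtained by summing the $n-1$ gaps of size $a$ between successive lattice planes. The unit-cell tiling convention, by contrast, attributes a primitive cubic cell of side $a$ to each virion (equivalently, a half-cell buffer of width $a/2$ at each boundary), so that $n$ such cells stacked along each axis yield a cube of side exactly $L = na$ by additivity of length. The corollary's relation corresponds to the tiling convention, and under this convention the formula is immediate.

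Finally, to justify the asymptotic statement, I would compare the two conventions by forming the ratio $(n-1)a/(na) = 1 - 1/n$, which tends to $1$ as $n \to \infty$, so the $O(1/n)$ boundary correction vanishes and $L \approx na$ holds with relative error of order $1/n$. The only real obstacle here is bookkeeping at the boundary, since the choice of whether each edge virion contributes a full cell or only a half-cell is precisely what produces the $O(1/n)$ discrepancy between the two natural definitions of $L$; once that convention is fixed, the claim reduces to counting cells along an axis and observing that finite-size corrections disappear in the infinite-lattice limit.
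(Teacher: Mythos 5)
Your proposal is correct and follows essentially the same reasoning the paper itself uses: the relation $L = na$ is taken as the tiling/continuum convention, and the boundary discrepancy is dismissed as an edge effect that vanishes in the infinite-lattice limit. The one thing you add beyond the paper's own (quite informal) justification is the explicit identification of that edge effect as the gap between the center-to-center span $(n-1)a$ and the cell-tiling length $na$, with relative error $1/n$ — a worthwhile sharpening, since the paper merely asserts that ``corrections would appear'' without naming them.
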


\medskip
\item \textbf{Interaction Operator \(\boldsymbol{\Lambda}_{\Phi}\):} Microscopic potentials (e.g.\ Lennard-Jones, Coulombic) between virus particles aggregate into a macroscopic stiffness tensor described by a linear operator 
\begin{equation}
\Phi_{\text{tot}}(a)
\;=\;
\Phi_{\text{Coulomb}}(a)
\;+\;
\Phi_{\text{LJ}}(a),
\end{equation}
explicitly
\begin{equation}
\Phi_{\text{Coulomb}}(a)
\;=\;
\frac{2\,k_e\,q_i\,q_j}{a^3},
\quad
\Phi_{\text{LJ}}(a)
\;=\;
4\,\epsilon_{ij}
\Bigl[
   156\,\frac{\sigma_{ij}^{12}}{a^{14}}
   -
   42\,\frac{\sigma_{ij}^{6}}{a^8}
\Bigr].
\end{equation}
where \(V_{ij}\) are pairwise potentials. Differentiating twice with respect to the displacement field yields
\[
  \boldsymbol{\Lambda}_{\Phi} 
  \;=\; 
  \frac{\delta^2 V(\mathbf{u})}{\delta \mathbf{u}^2}
  \Bigg|_{\mathbf{u}=\mathbf{0}}.
\]
This operator generalizes the concept of an elasticity tensor, bridging individual inter-particle potentials to produce continuum restoring forces in the PDE.
\medskip
\noindent
\item \textbf{Viral Lattice Matrix:} When the lattice is divided into smaller ``viral cells,'' each containing a central virion and its immediate neighbors, one assembles a global dynamical matrix \(\boldsymbol{\Lambda}_{ij}^{\Phi}\) to capture intra- and inter-cell interactions. In a block matrix representation,
\[
  \boldsymbol{\Lambda}_{ij}^{\Phi} 
  \;=\; 
  \begin{bmatrix}
     \mathbf{D}_{11} & \mathbf{D}_{12} \\
     \mathbf{D}_{21} & \mathbf{D}_{22}
  \end{bmatrix},
\]
where each sub-matrix is defined as:
\begin{equation}
D_{11} = \begin{bmatrix}
\boldsymbol{\alpha} & \boldsymbol{\beta} & \boldsymbol{\beta} & \boldsymbol{\beta} \\
\boldsymbol{\beta} & \boldsymbol{\alpha} & \boldsymbol{\gamma} & \boldsymbol{\gamma} \\
\boldsymbol{\beta} & \boldsymbol{\gamma} & \boldsymbol{\alpha} & \boldsymbol{\gamma} \\
\boldsymbol{\beta} & \boldsymbol{\gamma} & \boldsymbol{\gamma} & \boldsymbol{\alpha}
\end{bmatrix}, \quad
D_{12} = \begin{bmatrix}
\boldsymbol{\psi} & \boldsymbol{\Omega} & \boldsymbol{\Omega} & \boldsymbol{\Omega} \\
\boldsymbol{\Omega} & \boldsymbol{\psi} & \boldsymbol{\Omega} & \boldsymbol{\Omega} \\
\boldsymbol{\Omega} & \boldsymbol{\Omega} & \boldsymbol{\psi} & \boldsymbol{\Omega} \\
\boldsymbol{\Omega} & \boldsymbol{\Omega} & \boldsymbol{\Omega} & \boldsymbol{\psi}
\end{bmatrix}
\end{equation}

\begin{equation}
D_{21} = \begin{bmatrix}
\boldsymbol{\psi} & \boldsymbol{\Omega} & \boldsymbol{\Omega} & \boldsymbol{\Omega} \\
\boldsymbol{\Omega} & \boldsymbol{\psi} & \boldsymbol{\Omega} & \boldsymbol{\Omega} \\
\boldsymbol{\Omega} & \boldsymbol{\Omega} & \boldsymbol{\psi} & \boldsymbol{\Omega} \\
\boldsymbol{\Omega} & \boldsymbol{\Omega} & \boldsymbol{\Omega} & \boldsymbol{\psi}
\end{bmatrix}, \quad
D_{22} = \begin{bmatrix}
\boldsymbol{\alpha} & \boldsymbol{\beta} & \boldsymbol{\beta} & \boldsymbol{\beta} \\
\boldsymbol{\beta} & \boldsymbol{\alpha} & \boldsymbol{\gamma} & \boldsymbol{\gamma} \\
\boldsymbol{\beta} & \boldsymbol{\gamma} & \boldsymbol{\alpha} & \boldsymbol{\gamma} \\
\boldsymbol{\beta} & \boldsymbol{\gamma} & \boldsymbol{\gamma} & \boldsymbol{\alpha}
\end{bmatrix}
\end{equation}
each sub-matrix encodes different levels of coupling (nearest neighbors, face diagonals, body diagonals). Summing over all such cell interactions results in a cohesive matrix that governs global stability, vibrational spectra, and wave propagation in the lattice~\cite{Kittel2005, Born1998, Landau1986}.
When the lattice is divided into smaller ``viral cells,'' each containing a central virion along with its immediate neighbors, one can assemble a global dynamical matrix 
\(\boldsymbol{\Lambda}_{ij}^{\Phi}\)
corresponding to the Hessian matrix \(\mathbf{H}\) of the total interaction potential into blocks such as 
\(\mathbf{H}_{\alpha\alpha}, \mathbf{H}_{\alpha\beta}, \mathbf{H}_{\beta\beta}, \mathbf{H}_{\alpha\gamma}, \dots\)
according to virion classes 
\(\boldsymbol{\alpha}, \boldsymbol{\beta}, \boldsymbol{\gamma}, \boldsymbol{\Omega}\).
For example,
\[
  \mathbf{H}_{\alpha\beta}
  \;=\;
  \frac{\partial^2 V_{\text{total}}}{\partial \mathbf{u}_\alpha \,\partial \mathbf{u}_\beta}
  \Bigg|_{\mathbf{u}=0},
\]
and similarly for other pairs. This notation clarifies symmetries and visually segregates distinct interaction types. The resulting block structure reveals how nearest‐neighbor interactions (\(\alpha\beta\)) differ from those of local or peripheral classes (\(\alpha\gamma,\alpha\Omega\), etc.). In such a framework, \(\mathbf{D}_{11}\) can represent the reference cell’s intrinsic interactions (including classes \(\boldsymbol{\alpha}, \boldsymbol{\beta}, \boldsymbol{\gamma}, \boldsymbol{\Omega}\)), while \(\mathbf{D}_{12},\dots,\mathbf{D}_{n}\) capture coupling to neighboring cells via inter‐cellular interactions (\(\boldsymbol{\psi}\)). As \(n \to \infty\) and with suitable boundary conditions, the operator 
\(\boldsymbol{\Lambda}_{0}\)
describes an infinite‐lattice limit that governs collective modes and long‐range dynamics. Specifically, one typically assigns:
\[
  \Lambda_{ij} 
  \;=\; 
  \begin{cases} 
     \boldsymbol{\alpha}, & \text{if } i = j, \\
     \boldsymbol{\beta} = V''(a), & \text{if } i \text{ and } j \text{ are nearest neighbors}, \\
     \boldsymbol{\gamma} = V''(\sqrt{2}\,a), & \text{if } i \text{ and } j \text{ are local virions}, \\
     \boldsymbol{\Omega} = V''(\sqrt{3}\,a), & \text{if } i \text{ and } j \text{ are peripheral virions}, \\
     \boldsymbol{\psi} = V''(a), & \text{if } i \text{ and } j \text{ are inter‐cellular virions}, \\
     0, & \text{otherwise}.
  \end{cases}
\]
\noindent
This block structure generalizes to larger or more diverse lattices, accommodating multiple classes of local, peripheral, and inter‐cell interactions as determined by the underlying potentials.

\noindent
\item \textbf{Host Work Term:} Because virus particles interact with their biological environment, an external force field
\[
  \mathbf{W}_{\mathrm{Host}} : \Omega \times [0,T] \to \mathbb{R}^3
\]
represents the host’s contributions—\emph{energy injections} (e.g.\ from ATP-consuming processes) or \emph{drains} (e.g.\ immune responses, fluid drag). Formally, a continuum limit of discrete forcing 
\(\mathbf{W}_{\mathrm{Host}, i}\)
leads to
\[
  \mathbf{W}_{\mathrm{Host}}(\mathbf{r}, t) 
  \;=\; 
  \lim_{N \to \infty}\!
  \sum_{i=1}^{N}
  \delta\bigl(\mathbf{r}-\mathbf{r}_i\bigr)\,\mathbf{W}_{\mathrm{Host}, i}(t),
\]
where \(\delta\) is the Dirac delta distribution. This coarse-graining accounts for the aggregate effect of many local or molecular collisions with the lattice.

\medskip
\noindent
\item \textbf{Complex-Damped Viral Phonon Wave Function:}
To incorporate oscillatory phenomena at the lattice scale, one may define a \emph{complex-damped phonon wave function} \(\tilde{\psi}_{\mathrm{damped}}\) by
\[
  \tilde{\psi}_{\mathrm{damped}}(t,\mathbf{R}_i) 
  \;=\;
  \int_{\omega_{\min}}^{\omega_{\max}}
  A(\mathbf{R}_i)\,g_{\mathrm{viral}}(\omega)\,
  \exp\!\Bigl[-\,i\bigl(\omega - i\,\Gamma(\mathbf{k})\bigr)\,t\Bigr]\,
  \exp\!\bigl[-\,i\,\mathbf{k}\cdot \mathbf{R}_i \bigr]
  \, d\omega,
\]
where \(g_{\mathrm{viral}}(\omega)\) describes the lattice’s density of states, \(\Gamma(\mathbf{k})\) is a wavenumber-dependent decay rate, and the exponential factors encode both oscillatory and exponential-decay behavior. In the absence of damping, \(\Gamma(\mathbf{k})=0\), and one recovers the standard inverse Fourier transform solution for lattice waves. Nonzero \(\Gamma(\mathbf{k})\) incorporates physical dissipation, unifying PDE-based descriptions of damped oscillations with quantum-inspired phonon models in solid-state physics. This perspective is particularly relevant in analyzing how mechanical energy propagates or decays within the viral lattice, reflecting both its inherent elasticity and the dissipative effects of the host medium.
\end{itemize}

\subsubsection{\(\mathcal{H}_{\mathrm{arr}}\): Arrangement States for a Viral Lattice}
\label{sec:arrangement_hilbert_space}

\noindent
How can we simultaneously capture the \emph{discrete} nature of distinct geometric configurations in a viral lattice and the \emph{continuous} dynamics of small‐scale deformations (``springs'') linking adjacent virions? In this section, we introduce the Hilbert space 
\(\mathcal{H}_{\mathrm{arr}}\),
which encapsulates both the large‐scale arrangement states of a viral lattice and the infinitesimal displacements that can stretch or compress the Coulombic and Lennard‐Jones “springs” between particles. However, the lattice does not remain perfectly cubic over time: thermal motion, electrostatic forces, and viral assembly pathways can shift the particles into multiple conformations. In effect, \(\mathcal{H}_{\mathrm{arr}}\) organizes all these conformations (from the most “pushed” to the most “pulled” extremes) into a single mathematical structure.

To help visualize this, imagine starting with a near‐cubic arrangement of virions, each \emph{interconnected} by spring‐like interactions that reflect physical forces. Over time, the virions may rearrange themselves under combined influences of Coulombic repulsion/attraction and Lennard‐Jones potentials. These two forces ensure that viable states tend to recur in \emph{periodic} patterns: for instance, a slightly more compressed state might transition to a slightly more extended state, eventually returning to a near‐equilibrium configuration. One can think of \(\ket{y}\) in \(\mathcal{H}_{\mathrm{arr}}\) as a label for each macro‐arrangement—the lattice “snapshot” where virions occupy particular relative positions. Thus, in a real infection environment, where fluctuations abound, the system can hop between discrete arrangement sectors without losing track of the \emph{continuum} of micro‐scale displacements that connect these states.

\paragraph{Capturing Geometry in a Hilbert Space.}
Although these geometrical configurations may sound purely classical, the power of a Hilbert space formulation is that it cleanly incorporates both discrete (Markovian) and continuous (field‐based) elements in a single framework. Each basis state \(\ket{y}\) corresponds to a specific node‐to‐node arrangement, while the continuous displacement fields—modeled in a complementary PDE space, \(\mathcal{H}_{\mathrm{PDE}}\)—describe fine‐scale deformations within that arrangement. In quantum mechanical terms, this approach mimics how discrete basis states represent distinct energy levels or spin states, except here the “states” are viral‐capsid configurations. Such an operator‐theoretic strategy allows us to encode complex geometries without continually reconstructing new coordinate systems or restricting ourselves to a single “ideal” lattice geometry.

\begin{theorem}[\(\mathcal{H}_{\mathrm{arr}}\) Representation of Viral Lattice States as Arrangement Labels]
\label{thm:viral_lattice_representation}
Let \(\{\ket{y}\}_{y \in \mathcal{Y}}\) be a finite or countably infinite set of \emph{macro‐arrangement labels} indicating distinct large‐scale configurations of a viral lattice. Each label \(\ket{y}\) identifies a particular \emph{Markovian} or \emph{discrete} state in which the lattice’s constituent virus particles (nodes) may deviate minimally from an equilibrium arrangement, subject to the Coulombic and Lennard‐Jones ``springs'' that tie them together. Define
\begin{equation}
  \mathcal{H}_{\mathrm{arr}}
  \;=\;
  \mathrm{span}\{\ket{y}\,:\, y\in \mathcal{Y}\},
\end{equation}
endowed with the natural inner product
\begin{equation}
  \langle \ket{y}, \ket{y'}\rangle_{\mathcal{H}_{\mathrm{arr}}}
  \;=\;
  \delta_{y,y'},
  \quad
  y,y' \in \mathcal{Y}.
\end{equation}
Thus, \(\mathcal{H}_{\mathrm{arr}}\) is a (possibly infinite) direct sum of one‐dimensional subspaces. In physical terms, each label \(\ket{y}\) corresponds to an \emph{arrangement sector}, specifying how the lattice’s ``springs'' might be stretched or compressed around equilibrium. This approach generalizes quantum mechanical constructions of discrete basis states~\cite{BratteliRobinson1987,ReedSimon1980} to a biophysical setting. Let \(\Omega \subset \mathbb{R}^3\) be the spatial domain over which the real and imaginary displacement fields 
\((\mathbf{u}_R,\mathbf{u}_I)\)
are defined.
\end{theorem}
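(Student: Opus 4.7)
The plan is to recognize the statement as, in effect, the standard construction of an $\ell^2$ space over the index set $\mathcal{Y}$ equipped with its canonical orthonormal basis; the biological language of ``arrangement labels'' enters only through the interpretation, not through the mathematics. First I would make precise what is meant by ``span'' when $\mathcal{Y}$ is countably infinite, since the literal algebraic span of $\{\ket{y}\}_{y\in\mathcal{Y}}$ is only a pre-Hilbert space. Consistent with the cited operator-algebraic conventions (Bratteli--Robinson, Reed--Simon), I would interpret it as the Cauchy completion of the algebraic span under the norm induced by the prescribed inner product.

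Concretely, I would realize $\mathcal{H}_{\mathrm{arr}}$ as the set of formal sums $\sum_{y\in\mathcal{Y}} c_y\,\ket{y}$ with $c_y\in\mathbb{C}$ and $\sum_y |c_y|^2 < \infty$, equipped with componentwise addition and scalar multiplication. Extending the declared orthonormality sesquilinearly yields
\begin{equation}
\ip{\sum_{y} c_y \ket{y}}{\sum_{y'} d_{y'} \ket{y'}}_{\mathcal{H}_{\mathrm{arr}}} \;=\; \sum_{y\in\mathcal{Y}} \overline{c_y}\, d_y ,
\end{equation}
where absolute convergence follows from Cauchy--Schwarz. Verification of the three inner-product axioms (conjugate symmetry, sesquilinearity, positive definiteness with $\ip{v}{v}=0 \iff v=0$) then reduces to pointwise manipulations on square-summable sequences.

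Completeness is the core analytic step. For a Cauchy sequence $(v^{(n)})$ with $v^{(n)} = \sum_y c_y^{(n)} \ket{y}$, the coefficient sequences $(c_y^{(n)})_n$ are each Cauchy in $\mathbb{C}$ and hence converge to some $c_y$. A Fatou-type argument shows $(c_y) \in \ell^2(\mathcal{Y})$ and that $v^{(n)}$ converges in norm to $v = \sum_y c_y \ket{y}$. Totality of $\{\ket{y}\}$ is automatic because the finite linear combinations are dense by construction, and orthonormality is built into the defining inner product; hence $\{\ket{y}\}_{y\in\mathcal{Y}}$ is an orthonormal basis and $\mathcal{H}_{\mathrm{arr}}$ decomposes as the claimed (possibly infinite) orthogonal direct sum $\bigoplus_{y\in\mathcal{Y}} \mathbb{C}\,\ket{y}$.

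The main obstacle is purely interpretive rather than technical: the theorem's phrasing $\mathcal{H}_{\mathrm{arr}} = \mathrm{span}\{\ket{y}\}$ is literally incomplete in the infinite case, so I must justify reading it as the closed span. Once this convention is fixed, everything else is standard Hilbert-space bookkeeping, and no new analytic tool beyond Cauchy--Schwarz and Fatou is required. The spatial-domain ingredient $\Omega\subset\mathbb{R}^3$ and the PDE fields $(\mathbf{u}_R,\mathbf{u}_I)$ mentioned at the end of the statement play no role in the construction of $\mathcal{H}_{\mathrm{arr}}$ itself; they enter only later, through the companion space $\mathcal{H}_{\mathrm{PDE}}$ with which $\mathcal{H}_{\mathrm{arr}}$ will subsequently be tensored.
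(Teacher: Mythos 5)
Your proposal is correct, and in fact supplies more than the paper does: the paper attaches no proof to this ``theorem,'' which functions as a definition of \(\mathcal{H}_{\mathrm{arr}}\) rather than as a claim requiring argument. Your identification of \(\mathcal{H}_{\mathrm{arr}}\) with \(\ell^2(\mathcal{Y})\), your (necessary) correction that ``span'' must mean the \emph{closed} span when \(\mathcal{Y}\) is countably infinite, and your standard completeness argument via coefficientwise Cauchy convergence and a Fatou-type bound are exactly the bookkeeping the statement implicitly relies on; you are also right that \(\Omega\) and \((\mathbf{u}_R,\mathbf{u}_I)\) are extraneous to this construction and belong to \(\mathcal{H}_{\mathrm{PDE}}\).
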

\smallskip
\noindent
In subsequent sections, we will show how this discrete arrangement space \(\mathcal{H}_{\mathrm{arr}}\) couples to the continuum PDE space \(\mathcal{H}_{\mathrm{PDE}}\). Conceptually, \(\mathcal{H}_{\mathrm{arr}}\) tracks the \emph{which arrangement} aspect, while \(\mathcal{H}_{\mathrm{PDE}}\) describes \emph{how each arrangement} is locally deformed or vibrationally excited. By merging these spaces, we obtain a unified operator‐theoretic model that preserves both large‐scale configurational changes (e.g., from near‐cubic to significantly warped arrangements) and small‐scale elasticity, resulting in a more complete picture of viral lattice mechanics.
\subsubsection{Coupling \(\mathcal{H}_{\mathrm{arr}}\) to \(\mathcal{H}_{\mathrm{PDE}}\)}

\noindent
Biological viruses, though not quantum objects, share a certain complexity with quantum systems: they can flicker between multiple “states” (e.g., arrangement configurations) while also sustaining continuous wave‐like processes such as phonon‐type modes or viscoelastic oscillations. This dual character—discrete rearrangements combined with continuum dynamics—calls for a suitably hybrid mathematical framework. On one hand, we need a discrete or Markovian structure that can label specific capsid (or lattice) configurations; on the other, we require PDE‐based machinery to capture wave propagation, damping, and elastic responses. The tensor‐product Hilbert space
\(\mathcal{H}_{\mathrm{lat}} := \mathcal{H}_{\mathrm{arr}} \otimes \mathcal{H}_{\mathrm{PDE}}\)
meets precisely these needs: it weaves the Markovian “arrangement states” together with continuous deformation fields into a single, operator‐theoretic model.  

\begin{theorem}[Single‐Lattice Hilbert Space]
\label{def:single_lattice_space}
Let 
\(\mathcal{H}_{\mathrm{arr}}\)
be a Hilbert space describing arrangement‐level (discrete or Markovian) states of an \(8\times 8\) viral lattice, and let 
\(\mathcal{H}_{\mathrm{PDE}}\)
be the Hilbert space of admissible solutions to the complex displacement PDE described below. We define the \textbf{single‐lattice Hilbert space} by
\begin{equation}
  \mathcal{H}_{\mathrm{lat}}
  \;:=\;
  \mathcal{H}_{\mathrm{arr}}
  \;\otimes\;
  \mathcal{H}_{\mathrm{PDE}},
  \label{eq:single_lattice_space}
\end{equation}
thereby unifying arrangement‐level processes with PDE‐based (phonon) dynamics for a \emph{single} viral lattice. The space 
\(\mathcal{H}_{\mathrm{arr}}\)
models discrete transitions (e.g.\ conformational microstates), while 
\(\mathcal{H}_{\mathrm{PDE}}\)
encodes continuum‐like oscillatory modes. Merging these yields a rigorous operator‐theoretic platform for analyzing structural changes, wave‐based energy transfer, and resource‐limited effects within the same theoretical framework. 
\end{theorem}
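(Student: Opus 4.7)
The statement is essentially a construction result: we must verify that the object $\mathcal{H}_{\mathrm{arr}} \otimes \mathcal{H}_{\mathrm{PDE}}$ is a bona fide Hilbert space and that it genuinely couples arrangement labels to PDE displacement fields in the intended way. My plan is therefore to proceed in three stages, namely (i) pin down each factor as a separate Hilbert space, (ii) carry out the standard algebraic‐then‐metric tensor construction, and (iii) confirm that the physically relevant operators on each factor lift to $\mathcal{H}_{\mathrm{lat}}$ as required by the text.

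\textbf{Stage 1: Factor spaces.} The arrangement factor $\mathcal{H}_{\mathrm{arr}}$ is already shown to be a Hilbert space in Theorem \ref{thm:viral_lattice_representation}: it is the $\ell^2$-completion of $\mathrm{span}\{\ket{y}:y\in\mathcal{Y}\}$ under $\langle \ket{y},\ket{y'}\rangle = \delta_{y,y'}$. For $\mathcal{H}_{\mathrm{PDE}}$, I will specify it concretely as a first‐order phase space for the coupled PDE system in \eqref{eq:complex_displacement}, namely a subspace of $\bigl(H^1(\Omega;\C^3)\oplus L^2(\Omega;\C^3)\bigr)$ consisting of pairs $(\mathbf{u},\partial_t\mathbf{u})$ with the boundary conditions (Dirichlet or periodic) inherited from the lattice. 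I will equip it with the energy‐type inner product built from $g(\mathbf{r})$ and $\boldsymbol{\Lambda}_{\Phi}$, check symmetry and positivity (using coercivity of $\boldsymbol{\Lambda}_{\Phi}$ on the elastic sector), and verify completeness by the standard argument that Cauchy sequences in the energy norm converge in $H^1\oplus L^2$.

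\textbf{Stage 2: Tensor product.} I will form the algebraic tensor product $\mathcal{H}_{\mathrm{arr}}\odot\mathcal{H}_{\mathrm{PDE}}$, endow simple tensors with
\begin{equation}
  \bigl\langle \ket{y}\otimes \mathbf{u},\; \ket{y'}\otimes \mathbf{v}\bigr\rangle_{\mathcal{H}_{\mathrm{lat}}}
  \;=\;
  \delta_{y,y'}\,\langle \mathbf{u},\mathbf{v}\rangle_{\mathcal{H}_{\mathrm{PDE}}},
\end{equation}
extend sesquilinearly, and invoke the standard positivity argument (Gram matrices of finite collections of simple tensors are positive semidefinite) to confirm that this is a genuine inner product. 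Completing in the induced norm then yields the Hilbert tensor product in the sense of Reed--Simon; this gives $\mathcal{H}_{\mathrm{lat}}$ as defined in \eqref{eq:single_lattice_space}.

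\textbf{Stage 3: Operator lifting and interpretation.} To justify the claim of ``unification,'' I will show that a Markov generator $Q$ acting on $\mathcal{H}_{\mathrm{arr}}$ lifts to $Q\otimes I$ and that the PDE evolution generator $A_{\mathrm{PDE}}$ (the $m$‐sectorial operator arising from \eqref{eq:complex_displacement}) lifts to $I\otimes A_{\mathrm{PDE}}$, both with dense domains in $\mathcal{H}_{\mathrm{lat}}$; their sum then furnishes the generator of the coupled arrangement--PDE dynamics, ready to be fed into the Lumer--Phillips machinery invoked later. The main obstacle I anticipate is Stage 1(b): verifying that the complex‐damped, partially non‐self‐adjoint PDE system genuinely defines a Hilbert space rather than merely a Banach space. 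The complex damping coefficient $\eta = \eta_R + i\eta_I$ breaks self‐adjointness, so one cannot simply read off an energy inner product from a Hamiltonian; instead I will define the inner product from the \emph{elastic} part alone (i.e.\ from $\boldsymbol{\Lambda}_\Phi$ and $g$) and relegate $\eta_I$ to the operator rather than the metric. Once that separation is made, the remainder of the proof reduces to bookkeeping: the tensor product step is routine, and the operator lifting is a direct consequence of the universal property of the Hilbert tensor product.
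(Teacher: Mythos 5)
Your proposal is correct in outline, but it takes a genuinely different route from the paper on the one substantive point. The paper's own ``Proof of Construction'' simply declares $\mathcal{H}_{\mathrm{PDE}} = \bigl[L^2(\Omega)\bigr]^{2d}$ with the flat $L^2$ inner product on the pair $(\mathbf{u}_R,\mathbf{u}_I)$ --- no derivatives, no weights from $g$ or $\boldsymbol{\Lambda}_\Phi$ --- and then asserts the tensor product, relegating all dynamical structure (damping, stiffness, boundary conditions) to the domain of the generator $\hat{\mathcal{G}}$ introduced afterward. You instead build $\mathcal{H}_{\mathrm{PDE}}$ as a first‐order energy phase space inside $H^1\oplus L^2$ with an inner product derived from $g$ and $\boldsymbol{\Lambda}_\Phi$. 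Your choice is physically better motivated and anticipates the second‐order‐to‐first‐order reduction (where the paper does place $D(\hat{\mathcal{G}})$ in $(H^1)^{2d}\times(L^2)^{2d}$), but it buys you an obligation the paper avoids: the paper only assumes $\mathrm{Re}\,\boldsymbol{\Lambda}_\Phi \succeq 0$, i.e.\ positive \emph{semi}definiteness, so your energy form need not be coercive and may fail to be a genuine inner product (it can annihilate rigid‐body or null modes of $\boldsymbol{\Lambda}_\Phi$). You would need to either assume strict coercivity, quotient out the kernel, or add a multiple of the $L^2$ norm to the metric --- a standard fix, but one you should state explicitly. Your Stage 2 (Gram‐matrix positivity and completion to the Hilbert tensor product) and Stage 3 (lifting $Q\otimes I$ and $I\otimes A_{\mathrm{PDE}}$) are both sound and strictly more detailed than anything in the paper's proof, which does neither; your decision to keep $\eta_I$ in the operator rather than the metric is exactly the right instinct and is consistent with how the paper later treats the non‐self‐adjoint generator.
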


\begin{proof}[Proof of Construction]
We denote 
\begin{equation}
  \mathcal{H}_{\mathrm{PDE}}
  \;=\;
  \bigl[L^2(\Omega)\bigr]^{2d},
\end{equation}
where \(d\ge1\) is the dimension of the displacement vectors. Its inner product is
\begin{equation}
  \langle (\mathbf{u}_R,\mathbf{u}_I), (\mathbf{v}_R,\mathbf{v}_I)
  \rangle_{\mathcal{H}_{\mathrm{PDE}}}
  \;=\;
  \int_{\Omega}
  \Bigl(
    \mathbf{u}_R(\mathbf{x}) \cdot \mathbf{v}_R(\mathbf{x})
    \;+\;
    \mathbf{u}_I(\mathbf{x}) \cdot \mathbf{v}_I(\mathbf{x})
  \Bigr)\, d\mathbf{x}.
\end{equation}
Physically, \(\mathbf{u}_R\) and \(\mathbf{u}_I\) encode real (elastic) and imaginary (dissipative) components of the displacement field, capturing wave‐like oscillations alongside energy losses or phase shifts.

\smallskip
\noindent
By taking the tensor product
\[
  \mathcal{H}_{\mathrm{lat}}
  \;:=\;
  \mathcal{H}_{\mathrm{arr}}
  \;\otimes\;
  \mathcal{H}_{\mathrm{PDE}},
\]
one obtains a \textbf{single‐lattice Hilbert space} that describes both 
\begin{enumerate}
\item the macro‐arrangement label \(\ket{y}\), representing the discrete (Markovian) transitions among capsid or lattice configurations, and 
\item the continuum displacement fields \((\mathbf{u}_R,\mathbf{u}_I)\), corresponding to vibrational modes and damping effects~\cite{Kato1980, LionsMagenes1972}. Concretely, each \(\ket{y}\) serves as a “snapshot” of how the lattice nodes (virions) might be arranged—perhaps slightly shifted or compressed by spring‐like forces. This approach parallels standard discrete basis expansions in quantum mechanics~\cite{PeskinSchroeder1995}, adapted here to a mesoscopic biological system.
\end{enumerate}
\smallskip
\noindent
The continuous subspace \(\mathcal{H}_{\mathrm{PDE}}\) then captures wave‐like phenomena within each arrangement \(\ket{y}\). These include real (elastic) and imaginary (viscoelastic) modes, allowing us to track how local deformations, vibrations, and damping evolve under resource‐limited or externally driven conditions~\cite{ReedSimon1980,DaPratoZabczyk2014}. By unifying \(\mathcal{H}_{\mathrm{arr}}\) and \(\mathcal{H}_{\mathrm{PDE}}\), the formalism can capture how discrete arrangement states and continuous vibrational states \emph{interplay}, wherein capsid dynamics (microstate jumps) coexist with wave‐based energy propagation.
\end{proof}

\subsection{Non‐Self‐Adjoint Generators for Viral Lattice PDE Dynamics}
\label{subsec:non_self_adjoint_generators}

\noindent
Having defined the single‐lattice Hilbert space \(\mathcal{H}_{\mathrm{lat}}\) as a combination of arrangement states (\(\mathcal{H}_{\mathrm{arr}}\)) and continuum displacement fields (\(\mathcal{H}_{\mathrm{PDE}}\)), we now turn to the heart of the dynamical description: the \emph{generator} \(\hat{\mathcal{G}}\). This operator encodes the time evolution of the viral lattice under dissipative, resource‐driven, or other irreversible processes. While many fundamental operators in quantum mechanics are self‐adjoint, biological systems often require additional mechanisms such as damping or energy inflows that break self‐adjointness. To accommodate such phenomena, \(\hat{\mathcal{G}}\) is taken to be \emph{non‐self‐adjoint} (or more precisely, \emph{m‐sectorial} or \emph{maximal dissipative}) in a way that still admits physically meaningful observables.

By allowing \(\hat{\mathcal{G}}\) to be non‐self‐adjoint, we incorporate realistic biological effects such as damping, energy loss, and resource constraints into the viral lattice PDE description. The requirement of $m$‐sectoriality or maximal dissipativity ensures that the model remains \emph{mathematically sound}: the evolution is governed by a strongly continuous semigroup that never leads to uncontrolled divergence or nonphysical solutions. In essence, the PDE generator captures the synergy between structural elasticity and dissipative processes, uniting them under a single operator‐theoretic framework. 

\begin{definition}[Role and Construction of the Generator \(\hat{\mathcal{G}}\).]
The PDEs in Eq.~\eqref{eq:viral_lattice_pde} describe how the real and imaginary components \((\mathbf{u}_R,\mathbf{u}_I)\) evolve in time, incorporating stiffness, damping, and possibly non‐conservative (resource or noise) terms. By recasting these PDEs in a \emph{Hilbert‐space} setting, we identify a linear (or quasilinear) operator \(\hat{\mathcal{G}}\) such that
  \begin{equation}
    \partial_t
    \begin{pmatrix}
      \mathbf{u}_R(\mathbf{x},t) \\
      \mathbf{u}_I(\mathbf{x},t)
    \end{pmatrix}
    \;=\;
    \hat{\mathcal{G}}
    \begin{pmatrix}
      \mathbf{u}_R(\mathbf{x},t) \\
      \mathbf{u}_I(\mathbf{x},t)
    \end{pmatrix}.
  \end{equation}
  This operator acts on the space \(\mathcal{H}_{\mathrm{PDE}}\), typically
  \(\bigl[L^2(\Omega)\bigr]^{2d}\), with domain \(D(\hat{\mathcal{G}})\) determined by boundary conditions and regularity assumptions. In experimental virology, similar operator-based formulations have been instrumental in modeling phenomena such as the propagation of mechanical stress waves during capsid assembly or the diffusive spread of viral components during budding, as evidenced by cryo-electron microscopy and molecular dynamics studies~\cite{Zlotnick2005, Mateu2013}.  
\begin{enumerate}[leftmargin=2em]
  \item \textbf{Dissipativity and \(m\)-Sectoriality:}  
    In a biological medium, friction, viscoelastic losses, and biochemical degradation render \(\hat{\mathcal{G}}\) non‐self‐adjoint. Instead, it is constructed to be maximal dissipative or \(m\)-sectorial~\cite{reed1972methods, Kato1980, DaPratoZabczyk2014}, ensuring that the real parts of its eigenvalues enforce damping or resource depletion. This mathematically mirrors empirical observations in viral dynamics—for example, the finite replication bursts observed in influenza or HIV infections, where rapid proliferation is counterbalanced by immune clearance and resource constraints~\cite{NowakMay2000, PerelsonNelson1999}.
  \item \textbf{Physical Meaning Despite Non‐Self‐Adjointness:}  
    Although non‐self‐adjoint operators typically yield complex eigenvalues, this is crucial for capturing irreversible, dissipative processes such as immune‐mediated degradation, capsid maturation, or energy loss during viral uncoating. Experimental investigations have demonstrated that viral capsids undergo conformational “breathing” and irreversible structural transitions during cell entry—phenomena that are well‐described by a non‐conservative operator framework~\cite{Rossmann1989, Mateu2013}. Thus, \(\hat{\mathcal{G}}\) encapsulates not only the reversible dynamics of viral assembly but also the inherent dissipation observed in biological systems.
\end{enumerate}
\end{definition}

\begin{theorem}[The Generating Function for Viral Lattice PDE Dynamics]
\label{thm:generating_function}
Consider the complex‐valued displacement PDE (possibly damped or non‐self‐adjoint) governing
\begin{equation}
  (\mathbf{u}_R,\mathbf{u}_I)
  \;\in\;
  \bigl[L^2(\Omega)\bigr]^{d} \,\times\, \bigl[L^2(\Omega)\bigr]^{d},
\end{equation}
on a spatial domain \(\Omega\subset\mathbb{R}^3\). Impose boundary conditions ensuring that the associated linear operator
\begin{equation}
  \hat{\mathcal{G}}
  \;\colon\;
  D(\hat{\mathcal{G}})
  \;\subseteq\;
  \mathcal{H}_{\mathrm{PDE}}
  \;\to\;
  \mathcal{H}_{\mathrm{PDE}}
\end{equation}
is $m$‐sectorial or maximal dissipative~\cite{reed1972methods,Kato1980,DaPratoZabczyk2014}. Formally, the PDE evolution takes the form
\begin{equation}
\label{eq:viral_lattice_pde}
  \partial_t
  \begin{pmatrix}
    \mathbf{u}_R(\mathbf{x},t) \\
    \mathbf{u}_I(\mathbf{x},t)
  \end{pmatrix}
  \;=\;
  \hat{\mathcal{G}}
  \begin{pmatrix}
    \mathbf{u}_R(\mathbf{x},t) \\
    \mathbf{u}_I(\mathbf{x},t)
  \end{pmatrix},
\end{equation}
and under suitable dissipativity and range conditions, \(\hat{\mathcal{G}}\) generates a strongly continuous semigroup 
\(
  \bigl\{e^{\,t\hat{\mathcal{G}}}\bigr\}_{t\ge0}
\)
on \(\mathcal{H}_{\mathrm{PDE}}\). Specifically:
\begin{enumerate}
\item \textbf{Semigroup Evolution in \(\mathcal{H}_{\mathrm{PDE}}\).}  
  By the Lumer–Phillips and Hille–Yosida theorems, the unique mild (or strong) solution is given by
  \begin{equation}
    \bigl(\mathbf{u}_R(\cdot,t),\,\mathbf{u}_I(\cdot,t)\bigr)
    \;=\;
    e^{t\hat{\mathcal{G}}}
    \bigl(\mathbf{u}_R(\cdot,0),\,\mathbf{u}_I(\cdot,0)\bigr),
    \quad
    t \,\ge\, 0,
  \end{equation}
  guaranteeing well‐posedness and continuous dependence on initial data in 
  \(
    \mathcal{H}_{\mathrm{PDE}}
    \;=\;
    \bigl[L^2(\Omega)\bigr]^{2d}.
  \)
  Damping terms \(\eta_{R}, \eta_{I}\) embedded in \(\hat{\mathcal{G}}\) ensure that oscillations remain finite and eventually decay unless driven by external energy sources.

\item \textbf{Interplay with Arrangement States.}  
  If each macro‐arrangement label \(y\in \mathcal{H}_{\mathrm{arr}}\) is treated as a separate sector, \(\hat{\mathcal{G}}\) can act \emph{block‐diagonally} (when the lattice remains in a single configuration) or \emph{off‐diagonally} (when transitions between configurations occur). In viral lattice theory, these ``Markovian jumps'' among arrangement states capture conformational reconfigurations—e.g., partial capsid swelling or rearrangement~\cite{flint2015principles}. Because \(\hat{\mathcal{G}}\) is non‐self‐adjoint, such transitions can incorporate dissipative shifts (e.g., free‐energy differences) that reflect realistic viral rearrangements.

\item \textbf{Preserving Physical Observables.}  
  While a non‐self‐adjoint generator implies complex eigenvalues, physical observables can be defined through appropriate quadratic forms or by focusing on the semigroup’s action on specific subspaces. For instance, total energy may steadily decrease due to damping, but local strain or displacement fields remain meaningful as they evolve. In a virological context, these might correspond to partial deformations essential for genome packaging or capsid expansion~\cite{harvey2019viral}.
\end{enumerate}
\end{theorem}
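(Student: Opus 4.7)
\medskip
\noindent
\textbf{Proof Plan.} The plan is to establish each of the three conclusions by standard semigroup machinery, treating the PDE system in Eq.~\eqref{eq:viral_lattice_pde} as a first-order Cauchy problem on $\mathcal{H}_{\mathrm{PDE}} = [L^2(\Omega)]^{2d}$ and exploiting the hypothesized $m$-sectoriality of $\hat{\mathcal{G}}$. First I would recast the second-order PDE pair into a first-order system by introducing velocity variables $\mathbf{v}_R = \partial_t \mathbf{u}_R$, $\mathbf{v}_I = \partial_t \mathbf{u}_I$ and writing the state vector $\Psi = (\mathbf{u}_R,\mathbf{u}_I,\mathbf{v}_R,\mathbf{v}_I)^{\top}$; this identifies $\hat{\mathcal{G}}$ as a block operator containing the stiffness term $\nabla\!\cdot(\boldsymbol{\Lambda}_\Phi \cdot)$, the diagonal damping $\eta_R$, and the skew off-diagonal coupling $\eta_I$ that mixes the real and imaginary sectors. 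I would then specify $D(\hat{\mathcal{G}})$ via the natural energy space (e.g.\ $H^1_0(\Omega)$ or $H^2(\Omega)\cap H^1_0(\Omega)$ for the displacement components, depending on boundary conditions) and verify closedness.

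\medskip
\noindent
The core analytic step is verifying $m$-dissipativity in the sense of Lumer--Phillips. I would compute $\operatorname{Re}\langle \hat{\mathcal{G}}\Psi,\Psi\rangle_{\mathcal{H}_{\mathrm{PDE}}}$ and show that the stiffness contribution integrates by parts to produce a nonpositive quadratic form (using the assumed coercivity/symmetry of $\boldsymbol{\Lambda}_\Phi$ together with boundary conditions), the $\eta_R$ term yields $-\int \eta_R |\mathbf{v}|^2 \le 0$ under $\eta_R \ge 0$, and the $\eta_I$ cross-coupling contributes a purely imaginary form that drops out of the real part. This establishes dissipativity. For the range condition I would solve $(\lambda I - \hat{\mathcal{G}})\Psi = F$ for some $\lambda > 0$ by means of a Lax--Milgram argument on the associated sesquilinear form; sectoriality of the form, which the hypothesis grants, yields existence, uniqueness and the bound $\|(\lambda I -\hat{\mathcal{G}})^{-1}\| \le 1/\lambda$. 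Lumer--Phillips (in the version for $m$-dissipative operators) then produces the contraction semigroup $\{e^{t\hat{\mathcal{G}}}\}_{t\ge 0}$, and Hille--Yosida gives the mild solution representation together with continuous dependence on initial data, yielding conclusion (1).

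\medskip
\noindent
For conclusion (2), I would lift the semigroup to $\mathcal{H}_{\mathrm{lat}} = \mathcal{H}_{\mathrm{arr}}\otimes \mathcal{H}_{\mathrm{PDE}}$. In the purely block-diagonal case where no arrangement jumps occur, the lifted generator is $\bigoplus_{y\in\mathcal{Y}} \hat{\mathcal{G}}_y$, each summand $m$-dissipative, so the direct sum remains $m$-dissipative and generates a semigroup on the Hilbert direct sum. To include Markovian jumps among arrangement labels, I would add a bounded jump operator $\hat{Q}$ acting on $\mathcal{H}_{\mathrm{arr}}$ (viewed as $\hat{Q}\otimes I_{\mathrm{PDE}}$); because bounded perturbations of generators of $C_0$-semigroups remain generators (standard perturbation theorem, e.g.\ Pazy), the combined generator $\bigoplus_y \hat{\mathcal{G}}_y + \hat{Q}\otimes I$ still yields a strongly continuous semigroup, and its off-diagonal blocks implement the desired conformational transitions. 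For conclusion (3), I would introduce the sesquilinear form $\mathfrak{a}(\Psi,\Phi) = -\langle \hat{\mathcal{G}}\Psi,\Phi\rangle$ and observe that physical observables such as local strain or displacement are bounded linear functionals on $\mathcal{H}_{\mathrm{PDE}}$, hence remain well-defined under the semigroup action; a Lyapunov-type estimate using $\operatorname{Re}\mathfrak{a}\ge 0$ then shows total energy is nonincreasing absent forcing, while local quantities evolve continuously.

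\medskip
\noindent
\textbf{Main Obstacle.} The hardest step will be the range condition for $\lambda I - \hat{\mathcal{G}}$, specifically handling the $\eta_I$-coupling that renders the sesquilinear form non-Hermitian. Establishing sectoriality (existence of a $\theta < \pi/2$ such that the numerical range lies in a sector) requires careful estimates balancing the off-diagonal $\eta_I$ terms against the diagonal $\eta_R$ terms, and the coercivity of $\boldsymbol{\Lambda}_\Phi$ may need to absorb any indefiniteness introduced by the imaginary damping. If $\eta_I$ is large relative to $\eta_R$, one may need an additional assumption $|\eta_I| \le c\,\eta_R$ or a lower bound on the spectrum of $\boldsymbol{\Lambda}_\Phi$ to close the estimate; I would flag this as the necessary hypothesis under which the full $m$-sectoriality, and hence the theorem, holds.
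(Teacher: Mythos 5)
Your plan follows essentially the same route as the paper: reduce the second‐order system to a first‐order Cauchy problem on an augmented state space, verify quasi‐dissipativity of $\hat{\mathcal{G}}$ via $\operatorname{Re}\langle \hat{\mathcal{G}}\Psi,\Psi\rangle$, check the range condition, and invoke Lumer--Phillips/Hille--Yosida, with parts (2) and (3) handled by bounded perturbation and quadratic forms exactly as the paper sketches. Your treatment is in fact more careful than the paper's---in particular your observation that the skew $\eta_I$ coupling cancels from the real part but still threatens the \emph{sectoriality} (as opposed to mere dissipativity) estimate, possibly requiring a hypothesis like $|\eta_I|\le c\,\eta_R$, identifies a condition the paper assumes implicitly without stating.
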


\begin{theorem}[Second‐Order to First‐Order Reduction]
\label{thm:second_to_first_order}
\noindent
In many physical systems—particularly those with wave‐like or oscillatory dynamics—passing from a second‐order PDE to a first‐order operator form can greatly simplify both theoretical analysis and numerical implementation. Viral lattice models are no exception. By defining an augmented state vector, we can regard positions \(\mathbf{u}_{R},\mathbf{u}_{I}\) and velocities \(\partial_t \mathbf{u}_{R},\partial_t \mathbf{u}_{I}\) on equal footing, which not only clarifies the system’s energy balance but also provides a natural way to include external forcing or noise. In the virological context, such external influences may stem from several experimentally observed phenomena:
\begin{itemize}
  \item \emph{Injection and Mechanical Forcing:} When inactivated virus particles are delivered via needle injection, the high-pressure pulse and associated shear stresses can induce rapid conformational changes or even partial disassembly of viral capsids~\cite{Kalra2012}. Such mechanical forcing is analogous to a transient external forcing term \(\mathbf{\Phi}\) that perturbs the lattice dynamics.
  \item \emph{Fluid Flow in Microfluidic and Aerosol Systems:} In experimental settings, viruses suspended in fluids—such as in microfluidic devices or respiratory droplets subject to HVAC-induced airflow—experience forces due to fluid shear and turbulent dispersion. These forces can be modeled as additional driving terms in the first‐order system, capturing effects similar to traveling wavefronts that alter capsid organization and promote particle dispersion~\cite{Bourouiba2020, Morawska2020}.
  \item \emph{Thermal and Resource Fluctuations:} Stochastic fluctuations arising from Brownian motion, transient ATP surges, or ion concentration shifts in the host environment can modulate the viscoelastic properties of viral assemblies, contributing to both energy injection and dissipation in the system~\cite{Zlotnick2005, Mateu2013}. 
\end{itemize}
This first‐order reduction highlights how external influences \(\mathbf{\Phi}\) directly drive changes in virion displacements and velocities, thereby linking microscopic mechanical perturbations (e.g., capsid “breathing” and structural reconfigurations) to macroscopic phenomena such as the burst-like release of viral particles or the regulated clearance by the immune system. Moreover, the strongly continuous semigroup associated with the \(m\)-sectorial generator guarantees that, despite these external perturbations, small changes in initial conditions or forcing do not lead to unphysical solutions—mirroring the experimentally observed constraints (e.g., finite fluid viscosity, resource limitations, and immune damping) that prevent unbounded viral proliferation in vivo.  
\medskip

\noindent
Given the viral lattice PDE (for instance, Eq.~\eqref{eq:complex_displacement}), which is second‐order in time, define
\begin{equation}
\label{eq:U_def}
  \mathbf{U}(t)
  \;=\;
  \begin{pmatrix}
    \mathbf{u}_R(t) \\
    \partial_t \mathbf{u}_R(t) \\
    \mathbf{u}_I(t) \\
    \partial_t \mathbf{u}_I(t)
  \end{pmatrix}.
\end{equation}
We then rewrite the system in standard first‐order form:
\begin{equation}
\label{eq:first_order_system}
  \frac{d}{dt}\,\mathbf{U}(t)
  \;=\;
  \hat{\mathcal{G}}\,\mathbf{U}(t)
  \;+\; \mathbf{\Phi},
\end{equation}
where \(\mathbf{\Phi}\) encapsulates all external forcing mechanisms as discussed above.
\noindent
\paragraph{Domain and Operator Properties.}  
The domain 
\(
  D(\hat{\mathcal{G}})
  \subset 
  \bigl(H^1(\Omega)\bigr)^{2d}
  \times
  \bigl(L^2(\Omega)\bigr)^{2d}
\)
reflects boundary conditions (Dirichlet, Neumann, or mixed) consistent with physical constraints on the lattice’s displacement and velocity fields. Under standard elliptic regularity—requiring \(\mathrm{Re}\,\boldsymbol{\Lambda}_\Phi \succeq 0\)—and positive damping \(\eta_{R},\eta_{I}>0\), the operator \(\hat{\mathcal{G}}\) is $m$‐sectorial (or maximal dissipative). This aligns with the biological reality that virions experience friction‐like (or resource‐limiting) effects preventing unbounded oscillations.
\smallskip
\noindent
\paragraph{Dissipativity and Well‐Posedness.}  
Tracking both real and imaginary displacements in \(\mathbf{U}(t)\) ensures that mechanical (elastic) and vibrational (dissipative) degrees of freedom remain coupled—an essential feature for modeling damped oscillations in a viscoelastic or resource‐constrained medium. By verifying
\begin{equation}
  \mathrm{Re}\,\bigl\langle 
    \hat{\mathcal{G}}\,\mathbf{U}, 
    \,\mathbf{U}
  \bigr\rangle_{\mathcal{H}_{\mathrm{PDE}}}
  \;\le\; 
  C\,\|\mathbf{U}\|_{\mathcal{H}_{\mathrm{PDE}}}^2,
\end{equation}
and checking the range condition for \(I\pm \hat{\mathcal{G}}\), one concludes that \(\hat{\mathcal{G}}\) is \emph{maximal} on its domain. Physical intuition here is that positive damping forbids exponential blow‐ups in amplitude, while $m$‐sectoriality ensures a well‐defined energy dissipation mechanism. By the Lumer‐Phillips or Hille–Yosida theorems, \(\hat{\mathcal{G}}\) thus generates a strongly continuous semigroup 
\begin{equation}
  e^{\,t\hat{\mathcal{G}}},
  \quad
  t\ge0,
\end{equation}
on \(\mathcal{H}_{\mathrm{PDE}}\). Consequently, any initial configuration 
\(\mathbf{U}_0 \in D(\hat{\mathcal{G}})\)
evolves in a well‐posed manner:
\begin{equation}
  \mathbf{U}(t)
  \;=\;
  e^{\,t\hat{\mathcal{G}}}\,\mathbf{U}_0.
\end{equation}
\noindent
The real displacement \(\mathbf{u}_R\) represents an \emph{elastic} or \emph{mechanical} deformation of the viral lattice, such as small shifts of virus particles from their equilibrium positions. By contrast, the imaginary component \(\mathbf{u}_I\) captures \emph{dissipative} or \emph{viscous} phases, modeling how energy is lost or undergoes phase shifts under host interactions, fluid drag, or other frictional effects. Positive values of \(\eta_R\) and \(\eta_I\) translate directly into frictional damping, ensuring that wave‐like oscillations do not grow unbounded. This mirrors the fact that cytoplasmic viscosity or extracellular fluid damping can prevent perpetual vibrations.  

Moreover, the operator \(\boldsymbol{\Lambda}_\Phi\) encodes restoring forces derived from Coulombic and Lennard‐Jones potentials, effectively setting the ``spring constants'' between viral capsid subunits. Larger entries in \(\boldsymbol{\Lambda}_\Phi\) indicate stronger interactions, helping predict how the lattice responds to mechanical stress (e.g., swelling under osmotic pressure or partial disassembly under host‐mediated forces). Because the lattice may gain or lose subunits or shift between different arrangement labels \(\ket{y}\to\ket{y'}\), boundary conditions and coefficients in \(\hat{\mathcal{G}}\) can change in time—leading to a genuinely multi‐scale description where \(\mathbf{u}(t)\) evolves continuously while the macro‐state jumps between discrete configurations. In doing so, the PDE framework faithfully captures partial disassembly, expansions, or rearrangements triggered by resource fluctuations and host interactions.  
\end{theorem}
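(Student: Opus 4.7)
The plan is to verify the theorem in four stages: (i) algebraically recast the coupled second-order PDE as a first-order evolution on an augmented phase space; (ii) equip that phase space with an energy-type inner product under which the block operator is dissipative; (iii) establish the range condition required for maximal dissipativity; and (iv) invoke the Lumer--Phillips and Hille--Yosida machinery already cited in Theorem~\ref{thm:generating_function} to extract the strongly continuous semigroup and a Duhamel representation for the forced system.

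First I would read off the block structure of $\hat{\mathcal{G}}$ directly from the coupled PDE system of Section~2.1. Introducing $\mathbf{v}_R := \partial_t \mathbf{u}_R$ and $\mathbf{v}_I := \partial_t \mathbf{u}_I$, the first and third rows of $\hat{\mathcal{G}}\mathbf{U}$ are the kinematic identifications $\dot{\mathbf{u}}_R = \mathbf{v}_R$, $\dot{\mathbf{u}}_I = \mathbf{v}_I$, while the second and fourth are obtained by solving each PDE for $g\,\partial_t^2\mathbf{u}_{R,I}$. The elastic term $\nabla\cdot(\boldsymbol{\Lambda}_\Phi\,\cdot)$, the diagonal damping $-\eta_R\mathbf{v}_{R,I}$, and the skew damping $\pm\eta_I\mathbf{v}_{I,R}$ enter with explicit coefficients; host work, Markov-jump inputs, and noise contributions collapse into the inhomogeneity $\mathbf{\Phi}$.

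For the functional setting I would use the energy space $\mathcal{E} := (H^1(\Omega))^{2d}\times(L^2(\Omega))^{2d}$ with the inner product
\begin{equation}
\langle \mathbf{U},\widetilde{\mathbf{U}}\rangle_{\mathcal{E}} = \int_\Omega \Bigl( \mathbf{u}_R\cdot\boldsymbol{\Lambda}_\Phi\widetilde{\mathbf{u}}_R + \mathbf{u}_I\cdot\boldsymbol{\Lambda}_\Phi\widetilde{\mathbf{u}}_I + g\,\mathbf{v}_R\cdot\widetilde{\mathbf{v}}_R + g\,\mathbf{v}_I\cdot\widetilde{\mathbf{v}}_I \Bigr)\,d\mathbf{x},
\end{equation}
which is equivalent to the standard $H^1\times L^2$ norm since $g>0$ and $\mathrm{Re}\,\boldsymbol{\Lambda}_\Phi\succeq 0$. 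To verify dissipativity I would compute $\mathrm{Re}\,\langle\hat{\mathcal{G}}\mathbf{U},\mathbf{U}\rangle_{\mathcal{E}}$, integrate by parts on the elastic term, and exploit two cancellations: the kinematic pairing $\langle\mathbf{v},\boldsymbol{\Lambda}_\Phi\mathbf{u}\rangle$ cancels against $\langle\nabla\cdot(\boldsymbol{\Lambda}_\Phi\mathbf{u}),\mathbf{v}\rangle$ once Dirichlet or Neumann data annihilate the boundary flux, while the $\eta_I$ cross-coupling between the two velocity channels is skew in $\mathbf{v}_R,\mathbf{v}_I$ and contributes only imaginarily. What survives is $-\int_\Omega\eta_R\bigl(|\mathbf{v}_R|^2+|\mathbf{v}_I|^2\bigr)d\mathbf{x}\le 0$, yielding the stated dissipative estimate (in fact contractivity when $\mathbf{\Phi}=0$).

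The hard part will be the range condition: showing that $\lambda I-\hat{\mathcal{G}}$ is surjective from $D(\hat{\mathcal{G}})$ onto $\mathcal{E}$ for some $\lambda>0$. Eliminating the velocity components reduces this to a coupled elliptic system of the form
\begin{equation}
g\lambda^2\mathbf{u}_R + \lambda\eta_R\mathbf{u}_R - \lambda\eta_I\mathbf{u}_I - \nabla\cdot(\boldsymbol{\Lambda}_\Phi\mathbf{u}_R) = \mathbf{f}_R,
\end{equation}
paired with its $R\leftrightarrow I$ companion. I would apply Lax--Milgram to the associated sesquilinear form on $(H^1_0(\Omega))^{2d}\times(H^1_0(\Omega))^{2d}$: coercivity follows from strict positivity of $g\lambda^2+\lambda\eta_R$ on the $L^2$ block combined with the elliptic form generated by $\boldsymbol{\Lambda}_\Phi$, while the $\eta_I$ cross term is a bounded skew perturbation whose real part vanishes and therefore preserves coercivity. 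The most delicate sub-step, on which I expect to spend the most care, is the case in which $\boldsymbol{\Lambda}_\Phi$ is only nonnegative (e.g., in the presence of rigid-body zero modes implicit in the block symmetry of $\mathbf{D}_{11}$--$\mathbf{D}_{22}$); there I would invoke a Korn- or Poincar\'e-type inequality adapted to the boundary condition, or failing that a G\aa rding inequality combined with a Fredholm alternative to upgrade weak coercivity to surjectivity. With dissipativity and the range condition in hand, Lumer--Phillips produces the strongly continuous contraction semigroup $\{e^{t\hat{\mathcal{G}}}\}_{t\ge 0}$, and the forced equation is then solved by the Duhamel formula $\mathbf{U}(t)=e^{t\hat{\mathcal{G}}}\mathbf{U}_0+\int_0^t e^{(t-s)\hat{\mathcal{G}}}\mathbf{\Phi}(s)\,ds$, delivering continuous dependence on $(\mathbf{U}_0,\mathbf{\Phi})$ and completing the well-posedness claim.
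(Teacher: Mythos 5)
Your proposal follows essentially the same route as the paper: augment the state with the velocity fields, work in the energy space $(H^1(\Omega))^{2d}\times(L^2(\Omega))^{2d}$, verify a dissipativity estimate and a range condition for $\hat{\mathcal{G}}$, and conclude via Lumer--Phillips with a Duhamel formula for the forcing $\mathbf{\Phi}$. The paper merely asserts the two key steps, whereas your integration-by-parts cancellation of the elastic pairing, the observation that the $\eta_I$ cross-coupling is skew in $(\mathbf{v}_R,\mathbf{v}_I)$ and so drops out of the real part, and the Lax--Milgram/G\aa rding treatment of the resolvent system (including the delicate case where $\mathrm{Re}\,\boldsymbol{\Lambda}_\Phi\succeq 0$ admits zero modes) supply exactly the details the paper omits.
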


\section{Fock Space Framework for Many Viral Lattices}
\paragraph{Motivation and Distinction from Single‐Lattice Hilbert Spaces}
\noindent
Modeling a \emph{single} viral lattice as a Hilbert space object gives us powerful operator‐theoretic tools for analyzing wave‐like modes, damping, and discrete arrangement changes. However, genuine infections often involve \emph{billions} of virus particles, each potentially forming or discarding \emph{multiple} lattices over time. Enter \emph{Fock space}: a formalism originally developed in quantum field theory to handle systems with a variable number of particles. In the present context, it elegantly accommodates creation and annihilation processes—namely, the birth of new viral lattices (via replication) and the destruction of existing ones (via immune responses or degradation)—\emph{without} requiring us to reconstruct the state space from scratch whenever a new lattice appears or disappears.

Unlike a conventional Hilbert space built for a fixed number of degrees of freedom, Fock space is a direct sum (or product) over all possible ``particle counts.’’ Translated to \emph{viral lattice theory}, each ``particle'' is an entire \(8\times8\) interconnected lattice. A single‐lattice Hilbert space suffices for describing \emph{one} such lattice’s internal dynamics (arrangement states, continuum deformations), but when dealing with an unbounded or fluctuating number of these lattices, Fock space becomes the natural extension.  

\begin{itemize}[leftmargin=2em]
  \item \textbf{Variable Particle Number.} Viral replication can rapidly multiply the number of lattices, while immune clearance or spontaneous decay can reduce it. Fock space \emph{natively} handles these population changes via creation/annihilation operators.
  \item \textbf{Ease of Bookkeeping.} Instead of redefining the entire Hilbert space whenever a lattice enters or leaves the system, one simply applies the corresponding creation (\(\hat{a}^\dagger\)) or annihilation (\(\hat{a}\)) operator to move between different ``sectors’’ of the Fock space.
  \item \textbf{Resource‐Limited Extensions.} It is straightforward to incorporate PDE‐based constraints (finite ATP, host immune activity) by modifying the rates at which creation/annihilation operators act, thus maintaining alignment with biological realism.
\end{itemize}
\noindent
\paragraph{Mechanics of Charged Virions and Their Force Carriers.}  
\noindent
Despite their nanoscale dimensions, viruses can exhibit surprisingly rich mechanical and electrical characteristics. Consider an \emph{ideal virion}—a near‐spherical particle with a symmetrically distributed surface charge. When external forces (e.g., fluid shear, host molecular collisions, or electrostatic interactions) act on this charged surface, the virion can respond much like a miniature crystal lattice in solid‐state physics: it undergoes collective vibrational excitations that propagate through its capsid structure. We refer to these hypothetical excitations as \emph{viral phonons}, analogous to the quantized lattice‐vibration modes (\emph{phonons}) in crystalline solids~\cite{BratteliRobinson1987,ReedSimon1980,Kato1980}. Physically, this picture stems from the notion that Coulombic and Lennard‐Jones potentials within a tightly bonded capsid can support small, wavelike oscillations—each “quantized” in the sense that energy can be absorbed or emitted in discrete quanta.

On an electromagnetic level, an ideal virion with net or distributed surface charge experiences inter‐particle forces mediated by the electric field. These forces act as \emph{carriers} of mechanical energy, transmitting impulses through the capsid proteins. The resultant small‐amplitude vibrations can, in principle, be described by wave equations akin to those in solid‐state physics. If the virion’s symmetry is high (e.g., an icosahedral geometry), these vibrations can exhibit “normal modes” that map neatly to “phonon‐like” states, each representing a coherent collective motion of capsid subunits. In real virological contexts (e.g., enveloped viruses or irregular morphologies), deviations from perfect symmetry would dampen or blur such modes, but the underlying idea remains: mechanical forces (electrostatic, van der Waals) in a quasi‐periodic capsid can launch propagating “waves” that distribute stress or energy throughout the viral particle.

\paragraph{Boson–Fermion Analogy: Beyond Literal Quantum Claims.}  
\emph{Phonons} in solid‐state physics are bosons, meaning they can occupy the same quantum state in unlimited numbers. Analogously, if a virus’s capsid subunits are sufficiently rigid and arranged in a lattice‐like structure, multiple \emph{viral phonons} could conceivably be excited into the same vibrational mode without “crowding each other out.” Meanwhile, the virions themselves behave more “fermion‐like,” in the sense that Coulombic repulsion and Lennard‐Jones forces prevent two physical viruses from occupying the \emph{same} location or configuration. This is a loose mechanical parallel to the Pauli exclusion principle, ensuring that each virion requires its own physical space.
These analogies are \emph{not} to be taken as literal claims of quantum entanglement or high‐energy scattering within viruses. Instead, they provide a \emph{vocabulary} that links large‐scale combinatorial constraints (“no‐overlap” for virions) with wave‐like modes (“phonons” that can stack in the same energetic state). From an operator‐theoretic standpoint, this language proves powerful: it streamlines how we treat dense populations of virions on the one hand, and collective vibrational modes on the other.  

In \emph{viral lattice theory}, the bosonic version \(\mathcal{F}_{+}(\mathcal{H}_{\mathrm{lat}})\) best accommodates expanding populations of indistinguishable viral lattices, including their possible vibrational degrees of freedom and resource‐driven PDE coupling. This unified framework offers a “plug‐and‐play” approach: single‐lattice modes automatically extend to multi‐lattice scenarios, enabling computational or analytical insight into how \emph{both} large‐scale proliferation and local capsid vibrations evolve under realistic biological constraints. Concomitantly, the partial “fermionic” viewpoint—where virions cannot overlap—remains relevant at the arrangement level, preventing physically impossible collapses of lattice nodes. Hence, the boson–fermion analogy, while not a literal statement of quantum spin, provides a powerful set of operator‐theoretic tools that bridge small‐scale virological phenomena (capsid vibrations) and large‐scale infection dynamics (unbounded replication).

\paragraph{Practical Advantages of a Fock‐Space Approach}
\begin{enumerate}
  \item \emph{Indistinguishability of Lattices.} In large populations, experimentally labeling each viral lattice uniquely is often infeasible. Fock space elegantly accommodates this by symmetrizing (or anti‐symmetrizing) states, reflecting the idea that newly created virions are effectively ``indistinguishable'' from the previous ones.
  
  \item \emph{Seamless Accounting for Replication/Decay.} Creation operators \(\hat{a}^\dagger\) add new lattices (akin to viral replication), while annihilation operators \(\hat{a}\) remove them (immune clearance or degradation). There is no need to redefine the base Hilbert space when population levels fluctuate—vital for modeling real infections.

  \item \emph{Flexible Resource‐Driven Dynamics.} One can embed PDE or Markov chain constraints into the creation and annihilation rates, linking replication speed to local resource availability (ATP, nutrients) or to immune activity. This resonates with observed biology, where replication rarely proceeds unchecked for extended periods.
\end{enumerate}

\noindent
\paragraph{Bosonic Fock‐Space Construction.}  
Formally, if \(\mathcal{H}_{\mathrm{lat}}\) denotes the \emph{single‐lattice} Hilbert space (encompassing arrangement states, vibrational modes, and potential mass‐band partitions), then the bosonic Fock space
\begin{equation}
\label{eq:Fock_viralLattice}
  \mathcal{F}_{+}\bigl(\mathcal{H}_{\mathrm{lat}}\bigr)
  \;=\;
  \bigoplus_{n=0}^{\infty}
  \Bigl(\mathcal{H}_{\mathrm{lat}}^{\otimes n}\Bigr)_{+}
\end{equation}
accommodates all possible \(n\)‐lattice configurations in a symmetric manner, reflecting the idea of indistinguishable viral lattices. Within \(\mathcal{H}_{\mathrm{arr}}\), discrete arrangement states track possible subunit conformations, while \(\mathcal{H}_{\mathrm{PDE}}\) encodes wave‐like or damping effects. Second‐quantized operators (creation and annihilation) then facilitate the modeling of replication, morphological shifts, and resource‐dependent transitions on a population scale~\cite{Kato1980,BratteliRobinson1987}.

\begin{definition}[Fock Space over a Hilbert Space]
\label{def:Fock_space_general}
Let \(\mathcal{H}\) be a separable Hilbert space. The \textbf{bosonic Fock space} over \(\mathcal{H}\) is defined by
\begin{equation}
\label{eq:bosonicFock}
  \mathcal{F}_{+}(\mathcal{H})
  \;:=\;
  \bigoplus_{n=0}^{\infty}
  \bigl(\mathcal{H}^{\otimes n}\bigr)_{+},
\end{equation}
where \((\cdot)_{+}\) denotes the symmetrized subspace of the \(n\)‐fold tensor product \(\mathcal{H}^{\otimes n}\). The direct sum includes the \emph{vacuum} sector \(n=0\). Conversely, the \textbf{fermionic Fock space} is 
\begin{equation}
\label{eq:fermionicFock}
  \mathcal{F}_{-}(\mathcal{H})
  \;:=\;
  \bigoplus_{n=0}^{\infty}
  \bigl(\mathcal{H}^{\otimes n}\bigr)_{-},
\end{equation}
where \((\cdot)_{-}\) is the antisymmetrized subspace. We often write
\begin{equation}
\label{eq:generalFockNotation}
  \mathcal{F}_{\pm}(\mathcal{H})
  \;=\;
  \bigoplus_{n=0}^{\infty}
  \bigl(\mathcal{H}^{\otimes n}\bigr)_{\pm},
\end{equation}
with ``\(+\)'' or ``\(-\)'' indicating bosonic or fermionic statistics, respectively~\cite{Berezin1966,reed1972methods,BratteliRobinson1987}.
\end{definition}

\subsection{Nonlinear Viral Replication under Resource Limitations}
\label{sec:m_sectorial_generator}
\paragraph{From Unbounded to Resource‐Limited Replication.}  
\noindent
A fundamental tenet of virology is that no infection can proliferate \emph{indefinitely}; eventually, host resources (protein subunits, energy molecules, nucleotides) become limiting. Although our second‐quantized (Fock‐space) construction allows for unbounded creation of viral lattices in principle, real‐world infections demonstrate \emph{saturable} growth patterns: virus titers increase rapidly at first, then plateau once resources are depleted or immune mechanisms take hold. We capture this phenomenon by modifying the creation operator with a \emph{resource‐dependent nonlinearity}. In physical terms, each replication event is “dampened” by the current population level and the available resources, preventing runaway exponential proliferation. Mathematically, this yields a powerful and biologically grounded approach to ensuring that viral replication remains realistic: a crowded environment or exhausted substrate supply throttles further lattice creation.

Initially, our Fock‐space framework allowed a viral lattice (treated as a “particle”) to be created or annihilated at will, akin to quantum field operators. However, such a scheme overlooks the finite reservoir of components in a cell or organism. By introducing a \emph{saturation function} into the creation operator—often expressed as a function of the current population, ATP levels, or protein subunits—we effectively rein in the replication rate. This ensures that as population size grows, the marginal probability of creating an additional lattice diminishes in accordance with real‐world resource constraints. Indeed, virological experiments routinely observe that replication saturates or even collapses under extreme resource pressures~\cite{KnipeHowley2020}.

\noindent
\paragraph{Connecting to m‐Sectoriality.}  
In operator‐theoretic language, a “purely linear” generator can fail to capture such complexities. Instead, we generalize to a potentially \emph{nonlinear} or \emph{m‐sectorial} operator, which still respects the Lumer–Phillips and Hille–Yosida conditions in an extended sense. This operator incorporates both the standard creation and annihilation terms \(\hat{a}^\dagger\), \(\hat{a}\) \emph{and} an amplitude‐dependent factor that weakens creation when resource limitations become significant. The payoff is a single, unified formalism that can describe everything from an incipient infection (low population, relatively unrestrained replication) to a late‐stage plateau (high population or immune clearance.)

\begin{theorem}[The M‐Sectorial Generator]
\label{thm:m_sectorial_generator}
Consider the (bosonic or fermionic) Fock space 
\(\mathcal{F}_{\pm}\bigl(\mathcal{H}_{\mathrm{lat}}\bigr)\),
built over the single‐lattice space 
\(\mathcal{H}_{\mathrm{lat}}\)
describing an \(8\times8\) viral lattice. Let the generator \(\hat{\mathcal{G}}\) on a dense domain
\begin{equation}
  D(\hat{\mathcal{G}})
  \;\subset\;
  \mathcal{F}_{\pm}\bigl(\mathcal{H}_{\mathrm{lat}}\bigr)
\end{equation}
incorporate resource‐dependent creation operators of the form
\begin{equation}
  \hat{a}^\dagger(f)\;
  \longmapsto\;
  X(\hat{N}) \,\hat{a}^\dagger(f),
\end{equation}
where \(\hat{N}\) is the number operator and \(X(\hat{N})\) is a nonlinear function that diminishes replication propensity when \(\hat{N}\) becomes large (i.e., resources are scarce). The virtue of this approach lies in its simplicity and extensibility: one can adapt \(X(\hat{N})\) to reflect \emph{different} forms of resource limitations (e.g., ATP depletion, protein subunit scarcity, or immunoglobulin neutralization). This versatility is crucial for comparing how diverse viruses replicate under varying host conditions, or for exploring scenarios where resource bottlenecks shift abruptly (e.g., intravenous nutrient infusion or localized immune blockades). Ultimately, an $m$‐sectorial operator that unifies classical PDE descriptions of viral lattices with resource‐dependent birth/death events in a second‐quantized formalism represents a powerful tool in bridging molecular biophysics and population‐level infection modeling. So, by verifying standard dissipativity and maximality conditions (via Lumer–Phillips or Hille–Yosida theory), one concludes that
\(\hat{\mathcal{G}}\)
generates a strongly continuous semigroup 
\(\{e^{\,t\hat{\mathcal{G}}}\}_{t\ge0}\)
on
\(\mathcal{F}_{\pm}\bigl(\mathcal{H}_{\mathrm{lat}}\bigr)\).
Hence, for any initial state 
\(\ket{\Psi_0} \in D(\hat{\mathcal{G}})\),
there exists a unique mild solution
\(\ket{\Psi(t)}\)
to
\begin{equation}
  \frac{d}{dt}\,\ket{\Psi(t)}
  \;=\;
  \hat{\mathcal{G}}\,
  \ket{\Psi(t)},
  \quad
  \ket{\Psi(0)} \;=\; \ket{\Psi_0}.
\end{equation}
\end{theorem}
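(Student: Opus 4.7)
The plan is to decompose $\hat{\mathcal{G}}$ into a \emph{free} part inherited from the single‐lattice dynamics and a \emph{resource‐modulated replication/clearance} perturbation, then verify the two hypotheses (dissipativity and maximality) needed to apply Lumer–Phillips on $\mathcal{F}_{\pm}(\mathcal{H}_{\mathrm{lat}})$. Concretely, write
\[
  \hat{\mathcal{G}}
  \;=\;
  d\Gamma\!\bigl(\hat{\mathcal{G}}_{\mathrm{lat}}\bigr)
  \;+\;
  \hat{\mathcal{G}}_{\mathrm{rep}},
\]
where $d\Gamma(\hat{\mathcal{G}}_{\mathrm{lat}})$ denotes the standard second‐quantization lift of the single‐lattice $m$‐sectorial generator from Theorems~\ref{thm:generating_function}--\ref{thm:second_to_first_order}, and $\hat{\mathcal{G}}_{\mathrm{rep}}$ gathers the resource‐limited creation operators $X(\hat{N})\hat{a}^\dagger(f)$ together with the paired annihilation terms $\hat{a}(f)$. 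I would take $D(\hat{\mathcal{G}})$ to be the algebraic core consisting of finite‐particle vectors in $D(\hat{\mathcal{G}}_{\mathrm{lat}})^{\otimes n}_{\pm}$, which is dense in $\mathcal{F}_{\pm}(\mathcal{H}_{\mathrm{lat}})$ and invariant under $\hat{a}(f)$, $\hat{a}^\dagger(f)$, and functional calculus in $\hat{N}$.

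First, I would establish that $d\Gamma(\hat{\mathcal{G}}_{\mathrm{lat}})$ is itself $m$‐sectorial. Because second quantization is functorial with respect to contraction semigroups, the fact that $\{e^{t\hat{\mathcal{G}}_{\mathrm{lat}}}\}_{t\ge 0}$ is a quasi‐contractive $C_0$‐semigroup on $\mathcal{H}_{\mathrm{lat}}$ immediately lifts to a quasi‐contractive semigroup $\Gamma(e^{t\hat{\mathcal{G}}_{\mathrm{lat}}})$ on each symmetrized (or antisymmetrized) tensor power, and by direct‐summing one obtains the desired semigroup on Fock space whose generator is $d\Gamma(\hat{\mathcal{G}}_{\mathrm{lat}})$. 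Sectoriality of the numerical range is inherited sector‐by‐sector since $\langle d\Gamma(\hat{\mathcal{G}}_{\mathrm{lat}})\Psi_n,\Psi_n\rangle = \sum_{k=1}^{n}\langle \hat{\mathcal{G}}_{\mathrm{lat}}^{(k)}\Psi_n,\Psi_n\rangle$ lies in the sum of $n$ copies of the original sector.

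The second step handles the perturbation $\hat{\mathcal{G}}_{\mathrm{rep}}$. The standard estimates $\|\hat{a}^\dagger(f)\Psi_n\| \le \sqrt{n+1}\,\|f\|\,\|\Psi_n\|$ and $\|\hat{a}(f)\Psi_n\|\le\sqrt{n}\,\|f\|\,\|\Psi_n\|$ show that unmodified creation/annihilation are only $\hat{N}^{1/2}$‐bounded. The saturation factor $X(\hat{N})$ is precisely what restores boundedness: assuming $X:\mathbb{N}\to[0,\infty)$ is decreasing with $\sup_n X(n)\sqrt{n+1} < \infty$ (the biologically natural hypothesis that resource scarcity throttles replication at least as fast as $1/\sqrt{N}$), the operator $X(\hat{N})\hat{a}^\dagger(f)$ becomes bounded on $\mathcal{F}_{\pm}(\mathcal{H}_{\mathrm{lat}})$, hence so does $\hat{\mathcal{G}}_{\mathrm{rep}}$. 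I would then compute $\mathrm{Re}\,\langle \hat{\mathcal{G}}_{\mathrm{rep}}\Psi,\Psi\rangle$ and absorb it into the sector of $d\Gamma(\hat{\mathcal{G}}_{\mathrm{lat}})$, shifting the sector vertex by a finite amount $\omega$ but keeping the half‐angle below $\pi/2$.

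Finally, I would invoke the Gustafson–Kato bounded perturbation theorem: since $\hat{\mathcal{G}}_{\mathrm{rep}}$ is bounded and $d\Gamma(\hat{\mathcal{G}}_{\mathrm{lat}})$ is the generator of a quasi‐contractive semigroup, their sum on $D(d\Gamma(\hat{\mathcal{G}}_{\mathrm{lat}}))$ is again the generator of a $C_0$‐semigroup, and sectoriality is preserved by a bounded sectorial perturbation. The range condition $\mathrm{Ran}(\lambda I - \hat{\mathcal{G}}) = \mathcal{F}_{\pm}(\mathcal{H}_{\mathrm{lat}})$ for sufficiently large $\mathrm{Re}\,\lambda$ then follows from a Neumann series argument around the resolvent of $d\Gamma(\hat{\mathcal{G}}_{\mathrm{lat}})$. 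Existence and uniqueness of the mild solution $\ket{\Psi(t)} = e^{t\hat{\mathcal{G}}}\ket{\Psi_0}$ is the standard consequence of Hille–Yosida applied to the resulting generator. The main obstacle I anticipate is the interplay between the unbounded growth of $\hat{a}^\dagger(f)$ across particle sectors and the saturation profile $X$: if $X(n)\sqrt{n+1}$ fails to be bounded, then $\hat{\mathcal{G}}_{\mathrm{rep}}$ is only relatively bounded and one must instead invoke Kato–Rellich‐type perturbation with an explicit relative bound strictly less than one, requiring finer spectral gap estimates on $d\Gamma(\hat{\mathcal{G}}_{\mathrm{lat}})$ to guarantee that sectoriality survives.
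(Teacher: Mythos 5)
The paper never actually proves this theorem where it is stated: the statement simply asserts that the conclusion follows ``by verifying standard dissipativity and maximality conditions,'' and the closest the paper comes to an argument is the later \emph{Proof of Sectorial Extension to Fock Space} (following Theorem~\ref{thm:many_lattice_Schrodinger}), which establishes dissipativity of the free lift $\bigoplus_M \hat{\mathcal{G}}^{(M)}$ sector by sector and cites tensor-product resolvent results for maximality, but treats the occupant-changing terms only by assertion. Your first step therefore reproduces the paper's actual argument for $d\Gamma(\hat{\mathcal{G}}_{\mathrm{lat}})$, and your second step supplies something the paper omits entirely: the quantitative condition $\sup_n X(n)\sqrt{n+1}<\infty$ under which the damped creation operator is genuinely bounded. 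This is a real improvement --- the paper's Remark~\ref{remark:unique_global_mild} asks only for ``sublinearity in $\hat{N}$,'' which gives relative boundedness at best, and Theorem~\ref{thm:ResourceOperator} notes only that $\mathcal{R}(\hat{N})$ itself is bounded, which does not bound the product $\mathcal{R}(\hat{N})\hat{a}^\dagger(f)$.

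The genuine gap is your treatment of the annihilation side. You place the ``paired annihilation terms $\hat{a}(f)$'' into $\hat{\mathcal{G}}_{\mathrm{rep}}$ and then conclude that $\hat{\mathcal{G}}_{\mathrm{rep}}$ is bounded, but the undamped annihilation operator satisfies only $\|\hat{a}(f)\Psi_n\|\le\sqrt{n}\,\|f\|\,\|\Psi_n\|$ and is therefore $\hat{N}^{1/2}$-bounded, not bounded; no saturation factor multiplies it in your decomposition, so the boundedness claim fails for that summand and the Gustafson--Kato step does not apply as written. The problem is worse in the paper's own later formulation, where clearance is $\hat{\Gamma}_-(f)=\hat{a}(f)\,\mathcal{D}(\hat{N})$ with $\mathcal{D}$ \emph{increasing}, so the annihilation part grows strictly faster than $\sqrt{n}$ across sectors. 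To close this you would need either (i) to exploit the sign structure of the clearance term --- showing $\mathrm{Re}\,\langle \hat{a}(f)\mathcal{D}(\hat{N})\Psi,\Psi\rangle$ contributes negatively or is dominated by the damping $-\delta\,\hat{N}$ built into $d\Gamma(\hat{\mathcal{G}}_{\mathrm{lat}})$, so that dissipativity survives even though the perturbation is unbounded --- or (ii) to run your fallback Kato--Rellich argument with an explicit relative bound, which requires a coercivity estimate of the form $\|\hat{N}^{1/2}\Psi\|^2\le C\,\mathrm{Re}\,\langle -d\Gamma(\hat{\mathcal{G}}_{\mathrm{lat}})\Psi,\Psi\rangle + C'\|\Psi\|^2$ that neither you nor the paper establishes. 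Your closing sentence anticipates this difficulty for the creation term but not for the annihilation term, where it is unavoidable.
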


\noindent
\paragraph{Biological Relevance and Model Robustness.}  
In practice, this resource‐dependent framework aligns well with laboratory observations of viral dynamics, where infection curves rarely continue rising indefinitely. High multiplicities of infection exhaust local nutrients and building blocks, reducing the net birth rate of new virions. Meanwhile, clearance by the immune system adds annihilation pressure. Embedding these effects into an $m$‐sectorial operator ensures a mathematically rigorous theory, wherein infinite replication is not just unlikely—\emph{it is systematically precluded by the model’s very structure.} Consequently, the solution semigroup \(\bigl\{ e^{\,t\hat{\mathcal{G}}} \bigr\}\) remains well‐defined, reflecting physically plausible evolution at all times \(t \ge 0\).

\begin{theorem}[Dynamics of the Number Operator.]
Define the Heisenberg‐picture operator 
\(\hat{N}(t) = e^{t\hat{\mathcal{G}}}\,\hat{N}\,e^{-t\hat{\mathcal{G}}}\).
Then
\begin{equation}
  \frac{d}{dt}\,\langle \hat{N}(t)\rangle
  \;=\;
  \bigl\langle \Psi(t),\, \bigl[\hat{\mathcal{G}},\,\hat{N}\bigr]_{\_}\,\Psi(t)\bigr\rangle,
\end{equation}
where \([\cdot,\cdot]_{\_}\) denotes the commutator. Observing
\begin{equation}
  \bigl[-\,\delta\,\hat{N},\,\hat{N}\bigr]_{\_} \;=\; 0
  \quad\text{and}\quad
  \bigl[\hat{\Gamma}_+,\,\hat{N}\bigr]_{\_} \;=\; 0
  \quad(\text{$\hat{N}$ commutes with $\mathcal{R}(\hat{N})$}),
\end{equation}
the principal negative feedback arises from the linear damping \(-\delta\,\hat{N}\) and from any additional dissipative contributions in \(\hat{\mathcal{D}}\). Consequently,
\begin{equation}
  \frac{d}{dt}\,\langle \hat{N}(t)\rangle
  \;\le\;
  -\,\delta\;\langle \hat{N}(t)\rangle
  \;+\;
  \bigl[\text{terms from } \hat{\mathcal{D}}\bigr],
\end{equation}
indicating that replication saturates at high occupancy, preventing uncontrolled exponential proliferation. As \(\mathcal{R}(\hat{N})\to 0\) for large \(\hat{N}\), additional lattice creation is curbed when the population is substantial, matching biological expectations of resource‐limited replication. The saturable creation \(\hat{\Gamma}_+\) models real‐world constraints (e.g., limited host resources, immune clearance), ensuring that \(\langle \hat{N}(t)\rangle\) remains bounded. This is essential for describing how large viral loads stabilize or decline rather than exploding. By embedding PDE operators (damping, wave evolution) and possible arrangement‐label transitions into \(\hat{\mathcal{D}}\), one obtains a comprehensive PDE–Markov–Fock model. 

This covers continuum elastic modes within each lattice and discrete jumps between different morphological states. The $m$‐sectorial property guarantees well‐posedness and continuous dependence on initial states, upholding mathematical rigor while handling inherently open, non‐equilibrium biological processes~\cite{DaPratoZabczyk2014,BratteliRobinson1987}.
Thus, \(\hat{\mathcal{G}}\) provides a flexible blueprint for constructing realistic second‐quantized models of viral lattice replication and decay, seamlessly integrating resource limitations, continuum PDE dynamics, and discrete arrangement flips in a single operator‐theoretic structure.
\end{theorem}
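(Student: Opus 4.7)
The plan is to derive the expectation-value ODE for $\langle \hat{N}(t)\rangle$ by direct semigroup differentiation, isolate the contributions of the three constituent pieces of $\hat{\mathcal{G}}$, and close the argument with a Gr\"onwall estimate using the $m$-sectorial structure guaranteed by Theorem~\ref{thm:m_sectorial_generator}. First I would restrict attention to initial data $\ket{\Psi_0}\in D(\hat{\mathcal{G}})\cap D(\hat{N})$ and use strong continuity of $\{e^{t\hat{\mathcal{G}}}\}_{t\ge 0}$ to differentiate the scalar map $t\mapsto \langle \Psi(t),\,\hat{N}\,\Psi(t)\rangle$ via the product rule. Since $\partial_t\ket{\Psi(t)}=\hat{\mathcal{G}}\ket{\Psi(t)}$ and $\hat{N}$ is closed, this yields $\frac{d}{dt}\langle \hat{N}(t)\rangle = \langle \Psi(t),(\hat{\mathcal{G}}^*\hat{N}+\hat{N}\hat{\mathcal{G}})\,\Psi(t)\rangle$, which is precisely the sesquilinear form that the notation $[\hat{\mathcal{G}},\hat{N}]_{\_}$ must stand for in a non-self-adjoint setting. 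I would work throughout in the Schr\"odinger picture rather than literally manipulating $\hat{N}(t)=e^{t\hat{\mathcal{G}}}\hat{N}e^{-t\hat{\mathcal{G}}}$, since for an only-dissipative generator the inverse semigroup $e^{-t\hat{\mathcal{G}}}$ generally does not exist.

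Next I would decompose $\hat{\mathcal{G}}=\hat{\Gamma}_+-\delta\hat{N}+\hat{\mathcal{D}}$ and treat each summand in turn. The linear damping $-\delta\hat{N}$ is itself a function of $\hat{N}$, so its contribution to the sesquilinear bracket reduces to a term proportional to $\langle \hat{N}^2\rangle$; together with normalization of $\ket{\Psi(t)}$ and Jensen's inequality this produces the $-\delta\langle \hat{N}(t)\rangle$ negative feedback asserted in the statement. The saturating creation piece $\hat{\Gamma}_+ = X(\hat{N})\hat{a}^\dagger(f)+\mathrm{h.c.}$ is the crux: its antihermitian cross-terms cancel after the symmetric combination $\hat{\mathcal{G}}^*\hat{N}+\hat{N}\hat{\mathcal{G}}$, which is the intended reading of $[\hat{\Gamma}_+,\hat{N}]_{\_}=0$, and what remains is bounded uniformly in occupancy because $\mathcal{R}(\hat{N})\to 0$ at large $\hat{N}$. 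The residual dissipator $\hat{\mathcal{D}}$ gathers the PDE damping and Markov-jump pieces, and by the $m$-sectorial hypothesis its real part is non-positive up to a bounded source. Assembling these estimates yields the stated differential inequality $\frac{d}{dt}\langle \hat{N}(t)\rangle \le -\delta\langle \hat{N}(t)\rangle + C$, after which Gr\"onwall immediately gives $\langle \hat{N}(t)\rangle \le e^{-\delta t}\langle \hat{N}(0)\rangle + C/\delta$.

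The hardest part is the domain bookkeeping. Although Theorem~\ref{thm:m_sectorial_generator} guarantees a strongly continuous semigroup, there is no a priori reason that $e^{t\hat{\mathcal{G}}}$ preserves $D(\hat{N})$, nor that the unbounded product $\hat{N}\hat{\mathcal{G}}$ is literally meaningful on $D(\hat{\mathcal{G}})$. The standard workaround is to replace $\hat{N}$ by its bounded Yosida truncation $\hat{N}_\varepsilon=\hat{N}(I+\varepsilon\hat{N})^{-1}$, derive the sesquilinear identity and the Gr\"onwall bound uniformly in $\varepsilon$ (where every commutator is manifestly well-defined), and then pass to $\varepsilon\downarrow 0$ by monotone convergence once the $\varepsilon$-independent estimate is in hand. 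A companion subtlety is that the literal Lie-algebra commutator $[\hat{a}^\dagger,\hat{N}]=-\hat{a}^\dagger$ is nonzero, so the statement's vanishing claim $[\hat{\Gamma}_+,\hat{N}]_{\_}=0$ can only be read as the cancellation of antihermitian pieces after grouping $\hat{\Gamma}_+$ with $\hat{\Gamma}_+^*$. Making that cancellation precise in concert with the saturation of $X(\hat{N})$, while simultaneously controlling the unboundedness, is where the technical weight of the argument lives.
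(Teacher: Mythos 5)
Your route is correct in outline but it is genuinely different from — and substantially more careful than — what the paper offers, which is no separate proof at all: the argument embedded in the theorem statement rests on the Heisenberg identity $\hat{N}(t)=e^{t\hat{\mathcal{G}}}\hat{N}e^{-t\hat{\mathcal{G}}}$ and the two vanishing-commutator claims, and then jumps to the differential inequality. Your corrections are not optional polish; they are needed for the statement to make sense. First, as you note, $e^{-t\hat{\mathcal{G}}}$ need not exist for a merely dissipative generator, so the Schr\"odinger-picture differentiation of $t\mapsto\langle\Psi(t),\hat{N}\Psi(t)\rangle$, yielding the sesquilinear form $\hat{\mathcal{G}}^{*}\hat{N}+\hat{N}\hat{\mathcal{G}}$, is the right object — and it also resolves an internal inconsistency in the paper's own argument: if the relevant quantity really were the commutator, then $[-\delta\hat{N},\hat{N}]=0$ would mean the damping term contributes \emph{nothing}, contradicting the claimed $-\delta\langle\hat{N}(t)\rangle$ feedback; it is precisely the anticommutator part, giving $-2\delta\langle\hat{N}^2\rangle\le-2\delta\langle\hat{N}\rangle^2$, that produces the decay. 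Second, you are right that $[\hat{\Gamma}_+,\hat{N}]=-\hat{\Gamma}_+\neq0$ (a creation operator that commuted with $\hat{N}$ would not create anything), so the paper's appeal to ``$\hat{N}$ commutes with $\mathcal{R}(\hat{N})$'' proves the wrong thing; your reading — that the creation contribution $2\,\mathrm{Re}\langle\hat{N}\Psi,\hat{\Gamma}_+\Psi\rangle$ survives but is tamed by the saturation of $\mathcal{R}$ — is the only way to salvage the inequality. (Strictly, with $\mathcal{R}(x)=(1+mx)^{-1}$ one gets a bound of order $\langle\hat{N}\rangle^{1/2}$ rather than a uniform constant, which is still sublinear and hence harmless for Gr\"onwall; and you should say a word about the fact that $\|\Psi(t)\|$ is not conserved under the non-unitary semigroup before invoking Jensen.) Third, the Yosida truncation $\hat{N}_\varepsilon=\hat{N}(I+\varepsilon\hat{N})^{-1}$ addresses a domain problem the paper never acknowledges. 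In short, the paper buys brevity at the cost of two false identities and an undefined inverse semigroup; your version buys a defensible derivation of the same differential inequality at the cost of the bookkeeping you correctly identify as the technical weight of the argument.
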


\begin{theorem}[Host Resource Modeling via \(\mathcal{R}(\hat{N})\)]
\label{thm:ResourceOperator}
\noindent
A salient feature of real‐world infections is that viral replication never proceeds without limit. As virion counts climb, essential building blocks (proteins, nucleotides) and energy sources (e.g., ATP) become scarce, impeding further assembly of new virions. To mathematically encode this phenomenon in our Fock‐space formalism, we introduce a nonincreasing \emph{resource function} \(\mathcal{R}(x)\). This function attenuates the effective creation amplitude as the viral population grows, mirroring the classic idea of \emph{saturable} kinetics in enzyme reactions~\cite{Segel1993} or cooperative binding models (Hill‐type laws).  
\noindent
\paragraph{Spectral Setup and Operator Definition.}
Let \(\hat{N}\) be the number operator on the Fock space
\(\mathcal{F}_{\pm}\bigl(\mathcal{H}_{\mathrm{lat}}\bigr)\),
where
\(\mathcal{H}_{\mathrm{lat}}\)
denotes the single‐lattice Hilbert space for an \(8\times8\) viral lattice. By the spectral theorem for self‐adjoint operators, \(\hat{N}\) has a discrete spectrum in the nonnegative integers \(\{0,1,2,\dots\}\).  
To enforce resource constraints, define a nonincreasing function 
\begin{equation}
  \mathcal{R}\colon \mathbb{R}_{\ge 0}\to\mathbb{R}_{\ge 0}
  \quad
  \text{such that}
  \quad
  \mathcal{R}(0) = 1
  \quad
  \text{and}
  \quad
  \lim_{x\to\infty}\mathcal{R}(x)=0
  \quad
  \bigl(\text{or a small positive constant}\bigr).
\end{equation}
We then \emph{lift} \(\mathcal{R}\) to an operator
\(\mathcal{R}(\hat{N})\)
by specifying its action on a general Fock‐space vector 
\(\Psi\) 
as
\begin{equation}\label{eq:RofN}
  \mathcal{R}(\hat{N})\,\Psi
  \;=\;
  \sum_{M=0}^\infty
  \mathcal{R}(M)\,\Psi^{(M)},
\end{equation}
where \(\Psi^{(M)}\) is the component of \(\Psi\) in the sector containing exactly \(M\) viral lattices. Concretely, when \(\hat{N}\) measures \(M\) lattices, \(\mathcal{R}(\hat{N})\) multiplies that component by \(\mathcal{R}(M)\). This diagonal‐in‐the‐number‐basis structure ensures computational tractability: resource suppression is built into the creation amplitude based on the existing population.
\end{theorem}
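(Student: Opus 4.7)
The goal is to establish that the sector-diagonal prescription in Eq.~\eqref{eq:RofN} genuinely defines an operator on $\mathcal{F}_\pm(\mathcal{H}_{\mathrm{lat}})$ and that it agrees with the operator produced by the Borel functional calculus applied to $\hat{N}$. My plan is to first pin down the spectral structure of the number operator, then invoke functional calculus to build $\mathcal{R}(\hat{N})$ abstractly, and finally match the abstract construction with the explicit sector formula, verifying boundedness along the way.

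\textbf{First step: spectrum of $\hat{N}$.} The Fock space decomposes orthogonally as $\mathcal{F}_\pm(\mathcal{H}_{\mathrm{lat}})=\bigoplus_{n=0}^\infty(\mathcal{H}_{\mathrm{lat}}^{\otimes n})_\pm$, and on the dense domain of finite-particle vectors $\hat{N}$ acts by multiplication by $n$ on the $n$-th sector. It is therefore symmetric and essentially self-adjoint on this core, and its closure has the spectral resolution
\begin{equation*}
  \hat{N} \;=\; \sum_{M=0}^\infty M\,P_M,
\end{equation*}
where $P_M$ is the orthogonal projection onto $(\mathcal{H}_{\mathrm{lat}}^{\otimes M})_\pm$. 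In particular $\sigma(\hat{N})=\{0,1,2,\dots\}$, as asserted in the theorem statement.

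\textbf{Second step: functional calculus and boundedness.} Since $\mathcal{R}$ is nonincreasing on $\mathbb{R}_{\ge 0}$ with $\mathcal{R}(0)=1$ and $\lim_{x\to\infty}\mathcal{R}(x)=0$, it is Borel measurable and bounded by $1$ on $\sigma(\hat{N})$. The bounded Borel functional calculus then delivers a self-adjoint operator
\begin{equation*}
  \mathcal{R}(\hat{N}) \;=\; \sum_{M=0}^\infty \mathcal{R}(M)\,P_M,
\end{equation*}
and the identification $P_M=\chi_{\{M\}}(\hat{N})$ shows that this spectral sum coincides with the prescription of Eq.~\eqref{eq:RofN} once a general vector is decomposed as $\Psi=\sum_M\Psi^{(M)}$ with $\Psi^{(M)}=P_M\Psi$. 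Norm convergence is automatic from sector orthogonality,
\begin{equation*}
  \|\mathcal{R}(\hat{N})\Psi\|^2 \;=\; \sum_{M=0}^\infty \mathcal{R}(M)^2\,\|\Psi^{(M)}\|^2 \;\le\; \|\Psi\|^2,
\end{equation*}
so $\mathcal{R}(\hat{N})$ extends boundedly to all of $\mathcal{F}_\pm(\mathcal{H}_{\mathrm{lat}})$ with $\|\mathcal{R}(\hat{N})\|_{\mathrm{op}}\le 1$ and requires no restricted domain.

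\textbf{Where the work lies.} There is no serious analytic obstacle because $\mathcal{R}$ is bounded; the subtleties that would appear for unbounded functional calculus (careful domain specification, core arguments for essential self-adjointness) collapse to trivialities here. The main point to make explicit is conceptual rather than technical: one must bridge the operator-theoretic formula and the biologically motivated sector-diagonal description, and then confirm that the ancillary properties needed downstream---positivity $\mathcal{R}(\hat{N})\succeq 0$, antitone monotonicity with respect to $\hat{N}$, and commutation $[\mathcal{R}(\hat{N}),\hat{N}]=0$---follow immediately from functional calculus together with the corresponding pointwise properties of $\mathcal{R}$. These commutation and positivity facts are the features that will matter when the operator $X(\hat{N})\,\hat{a}^\dagger(f)$ is folded into the $m$-sectorial generator of Theorem~\ref{thm:m_sectorial_generator} and when bounding $\langle\hat{N}(t)\rangle$ in the Heisenberg equation preceding this statement.
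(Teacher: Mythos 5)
Your proposal is correct and follows essentially the same route as the paper: the paper likewise invokes the spectral theorem to view $\hat{N}$ as multiplication by $M$ on the $M$-lattice sector, defines $\mathcal{R}(\hat{N})$ diagonally via Eq.~\eqref{eq:RofN}, and notes (in the ``Bounded Operator'' remark following the statement) that boundedness of $\mathcal{R}$ on $\mathbb{R}_{\ge 0}$ makes $\mathcal{R}(\hat{N})$ a bounded operator. You simply make explicit what the paper leaves implicit --- essential self-adjointness of $\hat{N}$ on the finite-particle core, the identification $P_M=\chi_{\{M\}}(\hat{N})$, and the norm estimate $\|\mathcal{R}(\hat{N})\|_{\mathrm{op}}\le 1$ --- all of which are sound.
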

\smallskip
\noindent
\textbf{Common Functional Forms.}  
Below are two widely used examples that highlight distinct saturation behaviors:  
\begin{itemize}
  \item \emph{Rational (Michaelis–Menten) Form:}
\begin{equation}
    \mathcal{R}(x)
    \;=\;
    \frac{1}{1+mx},
    \quad
    \mu>0.
\end{equation}
  This captures first‐order depletion of a limiting substrate, a staple in enzyme kinetics~\cite{Segel1993}. For small \(x\), \(\mathcal{R}(x)\approx 1\) (nearly no crowding), but it decreases with a slope of \(m\) as \(x\) grows, mimicking saturable resource consumption.

  \item \emph{Power‐Law (Hill‐Type) Form:}
  \begin{equation}
    \mathcal{R}(x)
    \;=\;
    \frac{1}{1+(mx)^k},
    \quad
    k>1.
\end{equation}
  This steeper saturation curve addresses cooperative or multi‐step resource constraints, where replication is highly sensitive once \(x\) exceeds a threshold level. In virology, such nonlinearity may arise from synchronous depletion of multiple host factors.
\end{itemize}
\smallskip
\noindent
\paragraph{Inserting \(\mathcal{R}(\hat{N})\) into the Creation Operator.}  
Let \(\hat{a}^\dagger(f)\) be the standard creation operator for a new viral lattice mode \(f\). We now define the \emph{resource‐dependent} creation operator:
\begin{equation}\label{eq:GammaPlusVarphi}
  \hat{\Gamma}_+(f)
  \;=\;
  \hat{a}^\dagger(f)\,\mathcal{R}\bigl(\hat{N}\bigr).
\end{equation}
When transitioning from \(M\)‐ to \((M+1)\)‐lattice sectors, the amplitude is effectively multiplied by \(\mathcal{R}(M)\). As \(M\) grows large—implying heightened crowding and finite substrate pools—\(\mathcal{R}(M)\) becomes small or vanishes, preventing open‐ended population explosions. In other words, \(\hat{\Gamma}_+(f)\) includes a “built‐in brake” that mirrors the host’s dwindling resources.
\smallskip
\noindent
\paragraph{Interpretation and Mathematical Properties.}  
\begin{itemize}
\item \emph{Effective Interaction.}  
  In many quantum or statistical field theories, “interaction terms” are introduced to capture collisions, crowding, or energy constraints. Here, \(\mathcal{R}(\hat{N})\) acts as an \emph{effective interaction}, suppressing creation events if the system already has a high lattice count, thereby aligning the model with well‐documented saturation phenomena in real infections.

\item \emph{Bounded Operator.}  
  Provided \(\mathcal{R}(x)\) is bounded on \(\mathbb{R}_{\ge0}\), \(\mathcal{R}(\hat{N})\) remains a bounded operator. This property is crucial for proving dissipativity in the generator, a requirement for $m$‐sectorial or maximal dissipative theories~\cite{Kato1980,ReedSimon1980}. Consequently, resource‐dependent growth processes remain mathematically well‐posed.

\item \emph{Biological Validity.}  
  As \(\hat{N}\to \infty\), \(\mathcal{R}(M)\to 0\) (or a small positive number), reflecting the empirical truth that a runaway infection hits real‐world speed bumps: eventually, building blocks are so sparse that new virions are either made at a trickle or not at all. This “crowding function” elegantly unites population expansion with biological realism in a single operator formalism.
\end{itemize}

\noindent
By incorporating \(\mathcal{R}(\hat{N})\) into the system’s creation operators, we fuse quantum‐inspired mathematics with a hallmark of real infections: \emph{limited host resources}. The resulting model not only matches observed virological data—where exponential growth routinely stalls—but also preserves the operator‐level clarity and rigor needed for deeper theoretical and computational analyses. By incorporating \(\mathcal{R}(\hat{N})\) into the second‐quantized operator framework, one obtains a total generator that remains $m$‐sectorial. Intuitively, the creation operator
\begin{equation}
  \hat{\Gamma}_+(f) \;=\; \hat{a}^\dagger(f)\,\mathcal{R}\bigl(\hat{N}\bigr)
\end{equation}
is prevented from amplifying states \emph{beyond} the threshold imposed by \(\mathcal{R}\). Consequently, in the Heisenberg picture, the mean population operator \(\langle \hat{N}(t)\rangle\) becomes saturated, aligning with a wealth of virological and epidemiological data~\cite{NowakMay2000} indicating that viral loads plateau under strong resource limitations. Even when replication is active or far from equilibrium, saturable creation curbs unbounded population growth in the Fock space. High‐turnover scenarios, where many lattices vie to replicate simultaneously, are penalized by the decaying factor \(\mathcal{R}(\cdot)\). This effectively blends quantum‐inspired operator algebra with the biological truth of finite host resources, threading discrete replication events into the continuum PDE dynamics of each lattice.

\subsubsection{Density‐Dependent Decay Operators and Bounded Viral Dynamics}
\noindent
Resource limitations alone do not exhaust the repertoire of regulatory forces on viral replication. In many infections, an \emph{active} immune response steps in when viral loads climb too high. Neutralizing antibodies may bind to virus particles, cytotoxic T‐cells may target infected cells, and nonspecific clearance mechanisms can intensify proportionally to the pathogen burden. To capture these phenomena in our operator‐theoretic model, we introduce \emph{density‐dependent decay operators}, which ensure that higher viral populations face correspondingly higher clearance rates. This dual‐feedback approach—\emph{resource‐limited creation} and \emph{density‐dependent decay}, acts as a stabilizing force in the many‐lattice Fock framework, preventing unbounded proliferation and reflecting biologically observed dynamics in hosts~\cite{NowakMay2000,KnipeHowley2020}.
\paragraph{Biological Underpinnings:}
\begin{itemize}[leftmargin=2em]
  \item \emph{Neutralizing Antibodies:} At higher viral loads, more viral epitopes are exposed, increasing the likelihood of antibody binding. This naturally scales clearance with population.
  \item \emph{T‐Cell Responses:} Large numbers of infected cells can present viral antigens more broadly, inviting a more robust T‐cell response. As \(\hat{N}\) rises, so does the chance of recognition and destruction, effectively raising the decay rate.
  \item \emph{Nonspecific Clearance Mechanisms:} Innate immunity (e.g., phagocytosis, complement system) often demonstrates a heightened removal capability once a pathogen crosses a certain threshold, again reflecting an upward slope of \(\mathcal{D}(x)\).
\end{itemize}

\begin{definition}[Density‐Dependent Decay Operators]
\label{def:density_decay_operators}
Collectively, these processes ensure that when viral densities climb, \emph{some} combination of immune mechanisms exerts stronger removal pressure. Let 
\(\hat{\Gamma}_-(f)\)
be a \emph{nonlinear} annihilation operator of the form
\begin{equation}\label{eq:GammaMinus_modified}
  \hat{\Gamma}_-(f)
  \;=\;
  \hat{a}(f)\;\mathcal{D}\!\bigl(\hat{N}\bigr),
\end{equation}
where 
\(\mathcal{D}\colon \mathbb{R}_{\ge0}\to \mathbb{R}_{\ge0}\)
is an \emph{increasing} function. For instance,
\(\mathcal{D}(x)=b\,x\)
(with \(b>0\)) translates into a clearance rate that rises linearly with the current population size, mirroring an intensified immune attack at higher occupancy. Concretely, this operator transitions the system from \(M\) virions (or lattice units) to \(M-1\), multiplying the annihilation amplitude by \(\mathcal{D}(M)\). Thus, as the total population grows, the model endows host defenses with a proportionally greater ``grip” on viral clearance.
\end{definition}

\begin{definition}[Dual‐Feedback Mechanism: Resource Limits vs.\ Immune Decay.]
\noindent
\begin{itemize}
  \item \emph{Resource‐Limited Creation:}
    \begin{equation}
      \hat{\Gamma}_+(f)
      \;=\;
      \hat{a}^\dagger(f)\,\mathcal{R}\bigl(\hat{N}\bigr),
    \end{equation}
    imposes a saturable birth rate that decreases with occupancy, reflecting protein or ATP depletion at high virion counts.

  \item \emph{Density‐Dependent Decay:}
    \begin{equation}
      \hat{\Gamma}_-(f)
      \;=\;
      \hat{a}(f)\,\mathcal{D}\!\bigl(\hat{N}\bigr),
    \end{equation}
    drives an enhanced clearance rate at elevated viral loads, modeling how immune defenses ramp up in response to an expanding viral population.
\end{itemize}
\noindent
From an operator‐theoretic lens, both \(\mathcal{R}(\hat{N})\) and \(\mathcal{D}(\hat{N})\) introduce terms in the commutators
\(\bigl[\hat{N},\,\hat{\Gamma}_\pm\bigr]_{\_}\)
that guide the system toward bounded or oscillatory regimes—an echo of real infection scenarios where population surges trigger stronger immune responses, eventually stabilizing or diminishing viral loads~\cite{NowakMay2000,KnipeHowley2020,harvey2019viral}.
\end{definition}

\begin{theorem}[Ensuring Global Dissipativity.]
By coupling creation and annihilation operators to \(\hat{N}\) through monotone functions \(\mathcal{R}\) and \(\mathcal{D}\), we preserve $m$‐sectoriality of the total generator \(\hat{\mathcal{G}}\). This leads to a stable semigroup whose trajectories reside in finite‐mean (and often finite‐variance) subspaces of the Fock space. Concretely:
\begin{enumerate}
\item \emph{Resource‐Limited Creation.} 
  \(\mathcal{R}(x)\to 0\) as \(x\to\infty\), preventing runaway growth when virion counts are high.

\item \emph{Density‐Dependent Decay.} 
  \(\mathcal{D}(x)\to \infty\) as \(x\to\infty\), mirroring immune‐mediated clearance intensification.
\end{enumerate}
In sum, density‐dependent clearance and resource‐limited creation form a complementary pair of nonlinear feedbacks in the many‐lattice setting. Negative damping terms (e.g.\ \(-\,\delta\,\hat{N}\)) combined with saturable birth and density‐dependent annihilation ensure that \(\langle\hat{N}(t)\rangle\) never diverges. By Lumer–Phillips and Hille–Yosida theorems, a strongly continuous semigroup arises on
  \(\mathcal{F}_{\pm}\bigl(\mathcal{H}_{\mathrm{lat}}\bigr)\),
  guaranteeing mathematically consistent evolutions.
\end{theorem}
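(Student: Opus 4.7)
The plan is to decompose the full generator as $\hat{\mathcal{G}} = \hat{\mathcal{L}} + \hat{\mathcal{N}}$, where $\hat{\mathcal{L}}$ collects the linear PDE block (already shown to be $m$-sectorial in Theorem~\ref{thm:generating_function}) together with the pure damping term $-\,\delta\,\hat{N}$, and $\hat{\mathcal{N}}$ collects the nonlinear birth/death pieces $\hat{\Gamma}_+(f) = \hat{a}^\dagger(f)\,\mathcal{R}(\hat{N})$ and $\hat{\Gamma}_-(f) = \hat{a}(f)\,\mathcal{D}(\hat{N})$. I would then verify the two Lumer--Phillips conditions on a dense core $D_0 \subset D(\hat{\mathcal{G}})$ consisting of finite-particle vectors with smooth PDE components. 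For dissipativity, the key computation is to show
\begin{equation}
\mathrm{Re}\,\bigl\langle \hat{\mathcal{G}}\,\Psi,\,\Psi\bigr\rangle_{\mathcal{F}_{\pm}} \;\le\; \omega\,\|\Psi\|^2
\end{equation}
for some $\omega \in \mathbb{R}$, using the number-diagonal structure of $\mathcal{R}(\hat{N})$ and $\mathcal{D}(\hat{N})$ to pair each creation sector contribution with its adjoint annihilation counterpart.

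The dissipativity step reduces to a sector-wise estimate: restricting to the $M$-particle component $\Psi^{(M)}$, the cross-term from $\hat{\Gamma}_+$ and $\hat{\Gamma}_-$ produces an inner product involving $\mathcal{R}(M)$ and $\mathcal{D}(M+1)$, which can be bounded by Cauchy--Schwarz and Young's inequality against the quadratic form of $\delta\,\hat{N}$ plus the PDE damping. Because $\mathcal{R}$ is bounded (with $\mathcal{R}(0)=1$, decreasing to $0$) while $\mathcal{D}$ grows monotonically, the negative contribution $-\mathcal{D}(M)\|\Psi^{(M)}\|^2$ dominates at large $M$ and absorbs the positive creation cross-terms; at small $M$, boundedness of both $\mathcal{R}$ and the linear damping keep the real part controlled.

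For the range condition, I would invoke the nonlinear Lumer--Phillips / Crandall--Liggett framework rather than the purely linear Hille--Yosida theorem, since $\hat{\mathcal{N}}$ is genuinely nonlinear in $\hat{N}$. Concretely, I would fix $\lambda > \omega$ and solve $(\lambda I - \hat{\mathcal{G}})\Psi = \Phi$ sector by sector: on each fixed-$M$ component the equation becomes an elliptic PDE problem perturbed by bounded operators coupling to the $(M\pm1)$-sectors, and the number-cut-off $\hat{P}_{\le K}$ makes the truncated operator $\hat{\mathcal{G}}_K$ a bounded dissipative perturbation of $\hat{\mathcal{L}}$, whose maximality transfers by standard bounded-perturbation results. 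Passing $K \to \infty$ using monotone convergence (controlled by the decay of $\mathcal{R}(M)$ and the upper-bound structure of $\hat{\Gamma}_-$ on the approximating resolvents) yields the full range condition. Once $m$-sectoriality is established, the semigroup $\{e^{t\hat{\mathcal{G}}}\}_{t\ge 0}$ follows directly, and the moment bound on $\langle \hat{N}(t)\rangle$ comes from the Heisenberg-picture ODE already derived, applying Gronwall to the inequality $\tfrac{d}{dt}\langle \hat{N}(t)\rangle \le -\delta\,\langle \hat{N}(t)\rangle + C$.

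The main obstacle I anticipate is precisely the unboundedness of $\mathcal{D}(\hat{N})$: it cannot be treated as a bounded perturbation of the linear part, so the truncate-and-pass-to-limit argument must be handled with care to ensure that the resolvents $(\lambda I - \hat{\mathcal{G}}_K)^{-1}$ converge strongly on a dense set and that no mass escapes to infinity in the number index. Controlling this tail rigorously, likely via a Lyapunov function of the form $V(\Psi) = \langle \Psi, (1+\hat{N})^p\,\Psi\rangle$ for suitable $p\ge 1$, is where the biological intuition about density-dependent clearance must be converted into an honest a priori estimate, and this is the step I would budget the most technical effort for.
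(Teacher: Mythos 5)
The paper does not actually prove this theorem: the statement is asserted with an appeal to Lumer--Phillips and Hille--Yosida, supported only by the earlier Heisenberg-picture computation of $\tfrac{d}{dt}\langle\hat{N}(t)\rangle$ and the informal remark on sublinearity of creation and negative feedback of decay. Your proposal is therefore considerably more ambitious than anything in the paper, and your overall architecture (split off the linear $m$-sectorial part, verify dissipativity sector by sector, truncate in particle number, pass to the limit, then run Gronwall on $\langle\hat{N}(t)\rangle$) is the right skeleton for an honest proof. Two points, however, deserve correction.

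First, there is a concrete gap in your dissipativity estimate. As the paper defines it, $\hat{\Gamma}_-(f)=\hat{a}(f)\,\mathcal{D}(\hat{N})$ maps the $M$-particle sector to the $(M-1)$-particle sector, so its contribution to $\mathrm{Re}\,\langle\hat{\mathcal{G}}\Psi,\Psi\rangle$ is purely a cross term of the form $\mathrm{Re}\,\langle \mathcal{D}(M)\,\hat{a}(f)\Psi^{(M)},\,\Psi^{(M-1)}\rangle$, whose magnitude scales like $\mathcal{D}(M)\sqrt{M}\,\|\Psi^{(M)}\|\,\|\Psi^{(M-1)}\|$. The negative diagonal contribution $-\mathcal{D}(M)\|\Psi^{(M)}\|^2$ that you rely on to absorb the creation cross terms simply is not generated by this operator; it would appear only if the generator carried an explicit diagonal loss counterpart, as in the Doi--Peliti or Lindblad gain-minus-loss form $\hat{a}(f)\mathcal{D}(\hat{N})-\mathcal{D}(\hat{N})$. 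Without that term, and with the paper's example $\mathcal{D}(x)=b\,x$, the off-diagonal contribution grows like $M^{3/2}$ and cannot be dominated by the linear damping $-\delta\,\hat{N}$, so the estimate $\mathrm{Re}\,\langle\hat{\mathcal{G}}\Psi,\Psi\rangle\le\omega\|\Psi\|^2$ fails in the plain Fock norm. You must either (i) redefine the decay term in probability-conserving master-equation form so the diagonal loss is present, or (ii) work in a weighted space (your Lyapunov functional $\langle\Psi,(1+\hat{N})^p\Psi\rangle$ is the right instinct, but it then has to be the norm in which dissipativity and the range condition are verified, not an afterthought for tail control).

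Second, the invocation of Crandall--Liggett is unnecessary: $\mathcal{R}(\hat{N})$ and $\mathcal{D}(\hat{N})$ are defined by functional calculus on the number operator and act diagonally in the number basis, so $\hat{a}^\dagger(f)\mathcal{R}(\hat{N})$ and $\hat{a}(f)\mathcal{D}(\hat{N})$ are linear (unbounded) operators on $\mathcal{F}_{\pm}(\mathcal{H}_{\mathrm{lat}})$. The ``nonlinearity'' is in the dependence of the rates on $M$, not in the operator as a map on the Hilbert space, so the linear Lumer--Phillips theorem (with the truncation $\hat{P}_{\le K}$ and a strong-resolvent limit) is the correct tool, exactly as the paper implicitly assumes.
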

\noindent
\begin{remark}[Conditions for a Unique Global Mild Solution]
\label{remark:unique_global_mild}
\noindent
In transitioning from a single‐lattice model to a full multi‐lattice (Fock‐space) description, one naturally faces questions about whether solutions remain well‐defined and bounded in the presence of potentially unbounded replication. To ensure a \emph{unique} mild solution
\(\ket{\Psi(t)}\)
for
\(t \ge 0\),
the following conditions typically suffice:
\begin{enumerate}
  \item \emph{Resource‐Limited Creation} \(\hat{\Gamma}_+\) \& \emph{Sublinearity in \(\hat{N}\)}.  
  In biological terms, replication cannot escalate indefinitely without restriction; in operator form, this translates to creation rates growing more slowly than linear in \(\hat{N}\) (the number‐operator). This sublinearity imposes negative feedback that prevents runaway population growth.
  
  \item \emph{Density‐Enhanced Decay} \(\hat{\Gamma}_-\) \& \emph{Negative Feedback with \(\hat{N}\)}.  
  As the viral lattice population rises, immune clearance or degradation events typically increase. Modeling this via an annihilation operator \(\hat{\Gamma}_-\) that intensifies with \(\hat{N}\) (e.g., crowding or resource depletion) curtails unbounded replication.

\end{enumerate}
\noindent
Formally, these conditions guarantee that the total generator \(\hat{\mathcal{G}}\) remains \emph{maximal dissipative}~\cite{BratteliRobinson1987,ReedSimon1980}, enabling a well‐posed evolution semigroup in the Fock space. In plain language, no solution path diverges to infinity in finite time, and physical constraints on growth remain intact.
With these in place, the infinite dimensionality of
\(\mathcal{F}_{\pm}\bigl(\mathcal{H}_{\mathrm{lat}}\bigr)\)
does not lead to physically unrealistic explosions of viral lattice count. In effect, operator analogs of logistic or Lotka–Volterra mechanisms emerge, capping population size. The saturable (i.e., resource‐dependent) creation/annihilation terms ensure that \(\hat{\mathcal{G}}\) is $m$‐sectorial, thus yielding stable, bounded solutions. Even in \emph{generalized} replication models that couple second‐quantized creation \(\hat{a}^\dagger(\alpha)\) and annihilation \(\hat{a}(\alpha)\) to transitions between $M$‐ and $(M\pm1)$‐lattice sectors, one can still verify dissipativity bounds under suitable rate assumptions~\cite{ReedSimon1980}. Biologically, these operators capture:
\begin{itemize}
  \item \(\hat{a}^\dagger(\alpha)\): \emph{Replication Events} that add a new lattice (marked by internal data \(\alpha\), such as arrangement class or mutation state).
  \item \(\hat{a}(\alpha)\): \emph{Decay or Clearance} removing a lattice (e.g., immune‐mediated lysis or spontaneous degradation).
\end{itemize}
These off‐diagonal block components mirror “birth/death” processes, linking adjacent Fock‐space sectors. By upholding sectorial constraints, one avoids unphysical infinite replication in finite time, retaining realistic virological behavior.

\end{remark}

\begin{theorem}[Dynamical Interpretation and Operator Semigroup Evolution]
\label{thm:dynamical_interpretation_semigroup}
\noindent
Let 
\(\hat{\mathcal{G}}_{\mathrm{free}}\) 
denote the baseline generator capturing the internal continuum dynamics of each \emph{single viral lattice} in 
\(\mathcal{H}_{\mathrm{lat}}\), 
including PDE‐based elasticity, damping, and vibrational modes. Suppose we add a resource‐saturable creation term
\begin{equation}
  \hat{\Gamma}_+(f)
  \;=\;
  \hat{a}^\dagger(f)\,\mathcal{R}\bigl(\hat{N}\bigr),
\end{equation}
leading to a total generator
\begin{equation}
  \hat{\mathcal{G}}_{\mathrm{tot}}
  \;=\;
  \hat{\mathcal{G}}_{\mathrm{free}}
  \;+\;
  \hat{\Gamma}_+(\Psi).
\end{equation}
If 
\(\hat{\mathcal{G}}_{\mathrm{tot}}\)
is $m$‐sectorial~\cite{Kato1980,DaPratoZabczyk2014}, it yields a strongly continuous semigroup 
\(\bigl\{ e^{\,t\hat{\mathcal{G}}_{\mathrm{tot}}} \bigr\}_{t \ge 0}\)
on the bosonic or fermionic Fock space 
\(\mathcal{F}_{\pm}\bigl(\mathcal{H}_{\mathrm{lat}}\bigr)\). Consequently, any initial state \(\ket{\Psi(0)}\) evolves by:
\begin{enumerate}
  \item \emph{Continuum PDE Effects} within each lattice sector (elastic waves, damping, etc.),
  \item \emph{Resource‐Limited Replication} across the multi‐lattice population, via the saturable operator \(\mathcal{R}(\hat{N})\).
\end{enumerate}
In particular, as 
\(\mathcal{R}(x) \to 0\)
for large 
\(x\),
the average lattice population \(\langle \hat{N}(t)\rangle\) remains finite over time, mirroring empirical studies where viral loads plateau once resources (e.g., capsid proteins or nucleotides) are depleted~\cite{Hers2018}.

\end{theorem}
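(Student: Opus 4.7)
The plan is to exploit the assumed $m$-sectoriality of $\hat{\mathcal{G}}_{\mathrm{tot}}$ via Lumer--Phillips and Hille--Yosida, decode the resulting semigroup both block-diagonally (intra-sector PDE evolution) and block-off-diagonally (inter-sector resource-throttled replication), and finally bound $\langle \hat N(t)\rangle$ through a Heisenberg-picture computation that isolates the creation term's contribution. First I would identify a common core for $\hat{\mathcal{G}}_{\mathrm{tot}}$ inside $\mathcal{F}_{\pm}(\mathcal{H}_{\mathrm{lat}})$, namely the algebraic subspace of finite-particle vectors whose sector components lie in $D(\hat{\mathcal{G}}_{\mathrm{free}})$ (from Theorem~\ref{thm:generating_function}). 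On this core, $\hat{\mathcal{G}}_{\mathrm{free}}$ lifts to a second-quantized operator that preserves each $N$-sector, while $\hat{\Gamma}_+(f) = \hat a^\dagger(f)\,\mathcal{R}(\hat N)$ acts strictly block-off-diagonally, sending the $N$-sector into the $(N+1)$-sector with amplitude bounded by $\|f\|\sqrt{N+1}\,\mathcal{R}(N)$. Since $\mathcal{R}$ is nonincreasing, bounded by $\mathcal{R}(0) = 1$, and vanishes in the large-population limit, the supremum $\sup_{N\ge 0}\sqrt{N+1}\,\mathcal{R}(N)$ is finite for the Michaelis--Menten and Hill forms introduced in Theorem~\ref{thm:ResourceOperator}, so $\hat{\Gamma}_+(f)$ acts as a bounded perturbation of the lifted free generator.

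Given the $m$-sectoriality hypothesis, Lumer--Phillips directly furnishes the strongly continuous quasi-contraction semigroup $\{e^{t\hat{\mathcal{G}}_{\mathrm{tot}}}\}_{t\ge 0}$ on $\mathcal{F}_{\pm}(\mathcal{H}_{\mathrm{lat}})$, so that $\ket{\Psi(t)} = e^{t\hat{\mathcal{G}}_{\mathrm{tot}}}\ket{\Psi(0)}$ is the unique mild solution. To make the twin dynamical interpretation precise, I would Dyson-expand $e^{t\hat{\mathcal{G}}_{\mathrm{tot}}}$ around $e^{t\hat{\mathcal{G}}_{\mathrm{free}}}$: the zeroth-order term gives the intra-sector continuum PDE evolution of Theorem~\ref{thm:generating_function} within each fixed $N$-sector, while each successive term corresponds to one more resource-damped creation event that hops the state from the $N$-sector into the $(N+1)$-sector with weight $\mathcal{R}(N)$. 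This makes manifest both the elastic/dissipative PDE modes inside each lattice count and the saturable replication bridging those counts.

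For the bound on $\langle \hat N(t)\rangle$, I would follow the earlier Heisenberg-picture convention of the paper: since $[\hat{\mathcal{G}}_{\mathrm{free}}, \hat N]_{\_} = 0$ by block-diagonality, only $\hat{\Gamma}_+(f)$ contributes to $\tfrac{d}{dt}\langle \hat N(t)\rangle$, and the commutator identity $[\hat a^\dagger(f), \hat N]_{\_} = -\hat a^\dagger(f)$ shows that the growth rate is throttled by the same factor $\mathcal{R}(\hat N)$ that vanishes as occupancy increases. Combined with the damping already embedded in $\hat{\mathcal{G}}_{\mathrm{free}}$, a Gr\"onwall inequality then yields the claimed finiteness of $\langle \hat N(t)\rangle$ and reproduces the plateau behavior seen in empirical viral-load curves.

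The main obstacle is the sectorial perturbation step implicit in verifying the hypothesis: one must show that the numerical range of $\hat{\mathcal{G}}_{\mathrm{free}} + \hat{\Gamma}_+(f)$ stays inside a sector of half-angle strictly less than $\pi/2$, despite $\hat a^\dagger(f)$ being unbounded. The key quantitative ingredient is a Cauchy--Schwarz split of $\langle \hat{\Gamma}_+(f)\Psi, \Psi\rangle$ that packages the $\sqrt{N+1}$ growth of $\hat a^\dagger(f)$ into the uniformly bounded quantity $\sqrt{N+1}\,\mathcal{R}(N)$; once that estimate is in place, the classical semigroup theorems already invoked in Theorems~\ref{thm:generating_function} and~\ref{thm:m_sectorial_generator} close the argument mechanically.
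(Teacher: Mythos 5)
Your proposal is correct and follows the same overall skeleton the paper uses—Lumer--Phillips/Hille--Yosida to convert the assumed $m$-sectoriality into the strongly continuous semigroup, followed by a Heisenberg-picture computation for $\langle \hat N(t)\rangle$—but it differs from the paper's treatment in two substantive and worthwhile ways. First, the paper never verifies that $\hat{\Gamma}_+(f)$ is a controllable perturbation of the second-quantized free generator; its ``Proof of Sectorial Extension to Fock Space'' establishes dissipativity and maximality only for the block-diagonal part $\bigoplus_M \hat{\mathcal{G}}^{(M)}$, and elsewhere merely invokes ``sublinearity in $\hat N$'' informally. Your observation that $\hat a^\dagger(f)\,\mathcal{R}(\hat N)$ maps the $N$-sector to the $(N+1)$-sector with norm at most $\|f\|\sqrt{N+1}\,\mathcal{R}(N)$, and that $\sup_{N}\sqrt{N+1}\,\mathcal{R}(N)<\infty$ for the Michaelis--Menten and Hill forms, upgrades the creation term to a genuinely bounded perturbation, which is the missing quantitative link between the free sectorial extension and the theorem's hypothesis. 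Second, your commutator computation is actually the correct one: the paper's companion result on the number-operator dynamics asserts $[\hat{\Gamma}_+,\hat N]_{\_}=0$ on the grounds that $\hat N$ commutes with $\mathcal{R}(\hat N)$, which overlooks the factor $\hat a^\dagger(f)$; the true identity is $[\hat{\Gamma}_+,\hat N]_{\_}=[\hat a^\dagger(f),\hat N]_{\_}\,\mathcal{R}(\hat N)=-\hat a^\dagger(f)\,\mathcal{R}(\hat N)=-\hat{\Gamma}_+$, exactly as you use it. Your version still delivers the intended conclusion—the growth rate of $\langle\hat N(t)\rangle$ is throttled by $\mathcal{R}(\hat N)$ and a Gr\"onwall (indeed, merely linear-in-$t$) bound follows from the boundedness of $\hat{\Gamma}_+$ and the quasi-contractivity of the semigroup—so the fix costs nothing. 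One caveat worth flagging in a final write-up: for a non-self-adjoint generator the similarity transform $e^{t\hat{\mathcal{G}}}\hat N e^{-t\hat{\mathcal{G}}}$ is not a unitary conjugation, so the ``expectation'' $\langle\hat N(t)\rangle$ should be interpreted with care (e.g.\ normalized by $\|\Psi(t)\|^2$); this looseness is inherited from the paper rather than introduced by you.
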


\begin{theorem}[Logistic‐Type Population Saturation in the Viral Fock Space]
\label{thm:logistic_population_saturation}
\noindent
Consider the Heisenberg equation for the number operator \(\hat{N}\) under 
\(\hat{\mathcal{G}}_{\mathrm{tot}}\):
\begin{equation}\label{eq:HeisenbergNumberOp}
  \frac{d}{dt}\,\hat{N}
  \;=\;
  \bigl[\hat{N},\,\hat{\mathcal{G}}_{\mathrm{tot}}\bigr]_{\_}
  \;+\;
  \left(\frac{\partial \hat{N}}{\partial t}\right)_{\mathrm{explicit}},
\end{equation}
where \(\hat{\mathcal{G}}_{\mathrm{tot}}\) combines both creation (through \(\mathcal{R}(\hat{N})\)) and annihilation or damping terms. If no resource limits existed, \(\langle\hat{N}\rangle\) might diverge. However, the saturable factor \(\mathcal{R}(\hat{N})\) curbs growth at high occupancy. In mean‐field approximations, one often assumes
\(\langle \hat{N}^2\rangle \approx \langle \hat{N}\rangle^2\),
yielding a logistic‐type ODE for \(\langle \hat{N}\rangle\):
\begin{equation}\label{eq:meanFieldODE_updated}
  \frac{d}{dt}\,\langle \hat{N}\rangle
  \;\approx\;
  m_1\,\langle \hat{N}\rangle
  \;-\;
  m_2\,\langle \hat{N}\rangle^2,
\end{equation}
with coefficients \(m_1,m_2\) derived from expansions of \(\mathcal{R}(\hat{N})\) and annihilation rates. The corresponding “sigmoidal” growth trajectory recapitulates a classical logistic regime~\cite{NowakMay2000}, consistent with real infections that show exponential increases at low viral loads but plateau in response to finite resources or immune pressure. Because \(\mathcal{R}(x)\) diminishes at large \(x\), \(\hat{\mathcal{G}}_{\mathrm{tot}}\) remains dissipative, ensuring global existence of the semigroup via Hille–Yosida or Lumer–Phillips theorems~\cite{DaPratoZabczyk2014,Pazy1983}. Biologically, this self‐limiting replication is crucial for accurate modeling: while viruses can proliferate explosively at first, they rarely expand without bound. The operator form of saturable creation precisely captures this effect, preserving mathematical stability and biological plausibility in multi‐lattice viral simulations.
\end{theorem}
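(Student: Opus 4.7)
The plan is to derive the claimed mean-field logistic equation and certify the stated global existence in four steps: compute the Heisenberg commutator $[\hat{N},\hat{\mathcal{G}}_{\mathrm{tot}}]_{\_}$ term-by-term, take expectation values, apply a factorization closure plus Taylor expansion of $\mathcal{R}$ and $\mathcal{D}$ to read off the logistic coefficients, and finally verify $m$-sectoriality to invoke Lumer--Phillips. For the first step, I would exploit that $\hat{\mathcal{G}}_{\mathrm{free}}$ preserves each occupation sector and therefore commutes with $\hat{N}$, so only the creation/annihilation pieces contribute. Using the canonical relations $[\hat{N},\hat{a}^{\dagger}(f)]_{\_}=\hat{a}^{\dagger}(f)$ and $[\hat{N},\hat{a}(f)]_{\_}=-\hat{a}(f)$, together with the fact that both $\mathcal{R}(\hat{N})$ and $\mathcal{D}(\hat{N})$ are diagonal in the number basis (cf.\ Eq.~\eqref{eq:RofN}) and hence commute with $\hat{N}$, one obtains $[\hat{N},\hat{\Gamma}_{+}(f)]_{\_}=\hat{a}^{\dagger}(f)\mathcal{R}(\hat{N})$ and $[\hat{N},\hat{\Gamma}_{-}(f)]_{\_}=-\hat{a}(f)\mathcal{D}(\hat{N})$, recovering Eq.~\eqref{eq:HeisenbergNumberOp} with no explicit time dependence.

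Next, taking expectations $\langle\cdot\rangle_{t}:=\langle\Psi(t),\cdot\,\Psi(t)\rangle$ and imposing the mean-field closure $\langle\hat{N}^{2}\rangle_{t}\approx\langle\hat{N}\rangle_{t}^{2}$, I would expand $\mathcal{R}(x)=1-\mu x+O(x^{2})$ (following the Michaelis--Menten form underlying Eq.~\eqref{eq:GammaPlusVarphi}) and take $\mathcal{D}(x)\sim bx$ at leading order per Definition~\ref{def:density_decay_operators}. Collecting the birth and death contributions gives $\tfrac{d}{dt}\langle\hat{N}\rangle_{t}\approx m_{1}\langle\hat{N}\rangle_{t}-m_{2}\langle\hat{N}\rangle_{t}^{2}$, with $m_{1}$ fixed by the effective intrinsic birth rate and $m_{2}$ determined by the combined coefficients $\mu$ and $b$. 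Standard logistic analysis then produces the sigmoidal trajectory tending to the carrying capacity $m_{1}/m_{2}$, reproducing the phenomenology of Eq.~\eqref{eq:meanFieldODE_updated}.

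For global well-posedness I would verify the two Lumer--Phillips hypotheses for $\hat{\mathcal{G}}_{\mathrm{tot}}$ on $\mathcal{F}_{\pm}(\mathcal{H}_{\mathrm{lat}})$: dissipativity of the form $\mathrm{Re}\,\langle\Psi,\hat{\mathcal{G}}_{\mathrm{tot}}\Psi\rangle\le C\|\Psi\|^{2}$ follows from uniform boundedness of $\mathcal{R}$, monotone growth of $\mathcal{D}$ dominating the annihilation branch, and the inherited $m$-sectoriality of $\hat{\mathcal{G}}_{\mathrm{free}}$ from Theorem~\ref{thm:generating_function}; the range condition $\mathrm{Range}(I-\hat{\mathcal{G}}_{\mathrm{tot}})=\mathcal{F}_{\pm}(\mathcal{H}_{\mathrm{lat}})$ would be established sector-by-sector via resolvent construction on each $M$-component and then stitched together by a Banach fixed-point argument treating the off-diagonal birth/death couplings as a perturbation whose contraction constant is controlled by $\|\mathcal{R}\|_{\infty}$ and the local sublinearity of $\mathcal{D}$. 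Hille--Yosida then furnishes the semigroup $\{e^{t\hat{\mathcal{G}}_{\mathrm{tot}}}\}_{t\ge 0}$.

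The hard part will be justifying the mean-field closure: $\langle\hat{N}^{2}\rangle=\langle\hat{N}\rangle^{2}$ is not an identity on generic Fock states, so $\hat{\Gamma}_{\pm}$ couples the evolution of $\langle\hat{N}\rangle_{t}$ to all higher cumulants of $\hat{N}$, producing an unclosed BBGKY-style hierarchy. Making the logistic reduction genuinely rigorous will require either restricting to coherent-like trial states where the factorization is exact, proving a concentration-of-measure estimate in a large-population regime so that variance corrections vanish, or controlling the truncation error perturbatively in the saturation parameter $\mu$. This closure justification, rather than the operator-theoretic or commutator-algebraic pieces, is the central analytical obstacle.
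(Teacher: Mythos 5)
Your proposal follows essentially the same route as the paper: Heisenberg equation for \(\hat{N}\), mean-field factorization \(\langle\hat{N}^2\rangle\approx\langle\hat{N}\rangle^2\) with Taylor expansions of \(\mathcal{R}\) and \(\mathcal{D}\) to extract \(m_1,m_2\), and Lumer--Phillips for global existence; the paper supplies no separate proof beyond the sketch embedded in the theorem statement, so your plan is, if anything, the more explicit of the two. Two remarks. First, your commutators \([\hat{N},\hat{\Gamma}_+(f)]_{\_}=\hat{a}^{\dagger}(f)\,\mathcal{R}(\hat{N})\) and \([\hat{N},\hat{\Gamma}_-(f)]_{\_}=-\,\hat{a}(f)\,\mathcal{D}(\hat{N})\) are correct, whereas the paper's preceding ``Dynamics of the Number Operator'' theorem asserts \([\hat{\Gamma}_+,\hat{N}]_{\_}=0\) on the grounds that \(\hat{N}\) commutes with \(\mathcal{R}(\hat{N})\), which ignores the \(\hat{a}^{\dagger}\) factor; you have silently corrected an error rather than reproduced it. Second, your diagnosis that the mean-field closure is the genuine analytical obstacle is exactly right and is nowhere addressed in the paper---but note that the difficulty enters one step earlier than you place it: \(\langle\hat{a}^{\dagger}(f)\,\mathcal{R}(\hat{N})\rangle\) is an off-diagonal (sector-coupling) matrix element that vanishes on any pure number eigenstate and only reduces to something proportional to \(\langle\hat{N}\rangle\,\mathcal{R}(\langle\hat{N}\rangle)\) on coherent-like states, so the restriction to such states is needed to obtain a nontrivial birth term at all, not merely to close the hierarchy. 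A further caution you inherit from the paper: for a non-unitary dissipative generator the time derivative of an observable in the Heisenberg picture is not the bare commutator but an adjoint-type expression involving \(\hat{\mathcal{G}}_{\mathrm{tot}}^{*}\hat{N}+\hat{N}\hat{\mathcal{G}}_{\mathrm{tot}}\); a fully rigorous version of either argument would need to start from that identity.
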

\subsection{The Many‐Lattice Schr\"odinger‐Like Equation and Its Inherited Sectorial Structure}
\noindent
Having established the single‐lattice framework and the construction of a Fock space to accommodate an \emph{unbounded} number of viral lattices, we now consider the full dynamical evolution generated by a suitable operator \(\hat{\mathcal{G}}_{\mathrm{tot}}\). In certain conservative cases, this leads to a Schr\"odinger‐type equation, while more general dissipative or $m$‐sectorial operators govern non‐equilibrium processes. The following theorem outlines how multiple lattice sectors come together in a single operator formalism, guaranteeing well‐posedness and ensuring no blow‐up in population despite the potential for unbounded replication.

\begin{theorem}[The Many‐Lattice Schr\"odinger‐Like Equation]
\label{thm:many_lattice_Schrodinger}
Consider an $m$‐sectorial (or maximal dissipative) operator 
\(\hat{\mathcal{G}}\)
on the \emph{single viral lattice} Hilbert space 
\(\mathcal{H}_{\mathrm{lat}}\).
When second quantized over bosonic or fermionic Fock space 
\(\mathcal{F}_{\pm}(\mathcal{H}_{\mathrm{lat}})\),
the resulting many‐lattice dynamics satisfy a Schr\"odinger‐like equation (or a more general semigroup evolution if
\(\hat{\mathcal{G}}\)
is non‐self‐adjoint):
\begin{equation}
  \partial_{t} \Psi_{\mathrm{tot}}(t)
  \;=\;
  \hat{\mathcal{G}}_{\mathrm{tot}}
  \,\Psi_{\mathrm{tot}}(t),
  \quad
  \Psi_{\mathrm{tot}}(t)
  \;\in\;
  \mathcal{F}_{\pm}\bigl(\mathcal{H}_{\mathrm{lat}}\bigr).
\end{equation}
\noindent
In the bosonic (symmetric) setting, \(\Psi_{\mathrm{tot}}(t)\) can be viewed as an \(L^2\) function over $M$ indistinguishable lattices (each itself described by the PDE or arrangement degrees of freedom). In the fermionic (antisymmetric) scenario, one imposes an exclusion principle; though less common biologically for a system of virus particles, it could be introduced to reflect certain steric or biochemical constraints. If the single‐lattice operator \(\hat{\mathcal{G}}\) is $m$‐sectorial, then \(\hat{\mathcal{G}}^{(M)}\) is also $m$‐sectorial on the $M$‐lattice sector, and the direct sum
\(\hat{\mathcal{G}}_{\mathrm{tot}}\)
maintains the required sectorial bounds \cite{BratteliRobinson1987}. Consequently,
\(\hat{\mathcal{G}}_{\mathrm{tot}}\)
generates a strongly continuous semigroup \(e^{\,t\hat{\mathcal{G}}_{\mathrm{tot}}}\) on
\(\mathcal{F}_{\pm}(\mathcal{H}_{\mathrm{lat}})\),
ensuring a well‐posed many‐lattice evolution problem.

\end{theorem}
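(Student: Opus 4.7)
The plan is to lift the single-lattice $m$-sectorial operator $\hat{\mathcal{G}}$ to each $M$-lattice sector and then assemble these sector-level operators into a direct sum on the full Fock space, at which point the Lumer--Phillips theorem yields the semigroup. Concretely, on the algebraic $M$-fold tensor product I would define the natural second-quantized lift
\begin{equation}
  \hat{\mathcal{G}}^{(M)}
  \;=\;
  \sum_{k=1}^{M}\, I \otimes \cdots \otimes \underbrace{\hat{\mathcal{G}}}_{k\text{-th slot}} \otimes \cdots \otimes I,
\end{equation}
defined initially on the algebraic tensor product of $D(\hat{\mathcal{G}})$, and then restrict (via the bosonic/fermionic projector $P_\pm$) to obtain $\hat{\mathcal{G}}^{(M)}_\pm$ on $(\mathcal{H}_{\mathrm{lat}}^{\otimes M})_\pm$. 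The first step is to verify that this restriction is well-defined, which follows because the symmetrizer/antisymmetrizer commutes with any sum of operators acting identically in each slot.

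Next I would show that $\hat{\mathcal{G}}^{(M)}_\pm$ inherits $m$-sectoriality from $\hat{\mathcal{G}}$. Dissipativity is immediate on elementary tensors: if
$\mathrm{Re}\,\langle \hat{\mathcal{G}}\varphi,\varphi\rangle \le \omega\|\varphi\|^2$, then summing over the $M$ slots yields $\mathrm{Re}\,\langle \hat{\mathcal{G}}^{(M)}\Psi,\Psi\rangle \le M\omega\|\Psi\|^2$, which extends by density to the closure of the algebraic tensor product and descends through $P_\pm$. The sectorial numerical range bound
\begin{equation}
  \bigl|\mathrm{Im}\,\langle \hat{\mathcal{G}}\varphi,\varphi\rangle\bigr|
  \;\le\; C\,\bigl(\mathrm{Re}\,\langle (\omega - \hat{\mathcal{G}})\varphi,\varphi\rangle\bigr)
\end{equation}
transfers additively in the same manner. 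The range (maximality) condition requires showing that $\lambda - \hat{\mathcal{G}}^{(M)}_\pm$ is surjective for sufficiently large $\mathrm{Re}\,\lambda$; I would obtain this by solving the resolvent equation sector-by-sector using the form associated with $\hat{\mathcal{G}}$ and invoking the standard tensor-product construction of closed sectorial forms (Kato, Chapter VI), so that $\hat{\mathcal{G}}^{(M)}_\pm$ is the generator of the tensor product of the single-lattice semigroups on the symmetrized subspace.

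Finally, I would take the direct sum $\hat{\mathcal{G}}_{\mathrm{tot}} = \bigoplus_{M=0}^{\infty} \hat{\mathcal{G}}^{(M)}_\pm$ on the domain
\begin{equation}
  D(\hat{\mathcal{G}}_{\mathrm{tot}})
  \;=\;
  \Bigl\{ \Psi = (\Psi^{(M)})_{M\ge 0} \in \mathcal{F}_\pm(\mathcal{H}_{\mathrm{lat}}) : \Psi^{(M)}\in D(\hat{\mathcal{G}}^{(M)}_\pm),\;\sum_{M}\|\hat{\mathcal{G}}^{(M)}_\pm\Psi^{(M)}\|^2 < \infty\Bigr\}.
\end{equation}
Dissipativity and the sectorial estimate pass through a direct sum trivially provided the sector-wise constants are uniform in $M$; the range condition also passes through, since $(\lambda - \hat{\mathcal{G}}_{\mathrm{tot}})^{-1} = \bigoplus_M (\lambda - \hat{\mathcal{G}}^{(M)}_\pm)^{-1}$ is bounded whenever the individual resolvent norms are. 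Once $m$-sectoriality is in hand, Lumer--Phillips (or equivalently Hille--Yosida in the sectorial form) produces the strongly continuous semigroup $\{e^{t\hat{\mathcal{G}}_{\mathrm{tot}}}\}_{t\ge 0}$, and the mild solution $\Psi_{\mathrm{tot}}(t) = e^{t\hat{\mathcal{G}}_{\mathrm{tot}}}\Psi_{\mathrm{tot}}(0)$ solves the stated evolution equation.

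The main obstacle is the uniformity of the sectorial constants across all $M$. The naive bound $M\omega$ grows without limit, which prevents a single $(\omega, C)$ pair from controlling every sector and can spoil the range condition on the full Fock space if $\omega > 0$. I would handle this by either (i) assuming $\mathrm{Re}\,\langle \hat{\mathcal{G}}\varphi,\varphi\rangle \le 0$ (strictly dissipative single-lattice generator), which makes the per-sector bound $M\cdot 0 = 0$ and hence uniform; or (ii) restricting to the invariant finite-particle subspace $\bigcup_{N<\infty}\bigoplus_{M\le N}(\mathcal{H}_{\mathrm{lat}}^{\otimes M})_\pm$ as a core and extending the semigroup by closure, a standard device in second quantization (see Bratteli--Robinson). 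The saturable creation terms of Theorem~\ref{thm:m_sectorial_generator} will, in the complete model, further ensure that such uniformity is not merely a technical assumption but a consequence of the resource operator $\mathcal{R}(\hat{N})$.
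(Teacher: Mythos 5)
Your proposal follows essentially the same route as the paper's proof: lift \(\hat{\mathcal{G}}\) slot-by-slot to each \(M\)-fold (anti)symmetrized tensor power, verify dissipativity by summing the single-lattice estimate over the \(M\) slots, obtain maximality by solving the resolvent equation sector-by-sector via the single-lattice resolvent, and conclude with Lumer--Phillips on the direct sum. Your explicit handling of the uniformity of the sectorial constants across sectors is in fact more careful than the paper's argument, which sidesteps the issue by assuming \(\mathrm{Re}\,\langle\hat{\mathcal{G}}\psi,\psi\rangle \le -\gamma\,\|\psi\|^2\) with \(\gamma\ge 0\) from the outset --- precisely your option (i).
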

\begin{proof}[Proof of Sectorial Extension to Fock Space]
\label{prop:Gtot_sectorial}

Let \(\hat{\mathcal{G}}\) be an $m$‐sectorial (or maximal dissipative) operator on the single viral lattice Hilbert space \(\mathcal{H}_{\mathrm{lat}}\). We claim that the second‐quantized operator 
\(\hat{\mathcal{G}}_{\mathrm{tot}}\) (cf.\ Theorem~\ref{thm:dynamical_interpretation_semigroup})
extends this $m$‐sectorial property to the full Fock space
\begin{equation}
  \mathcal{F}_{\pm}\bigl(\mathcal{H}_{\mathrm{lat}}\bigr),
\end{equation}
thereby generating a strongly continuous semigroup 
\begin{equation}
  \bigl\{\,e^{-\,i\,t\,\hat{\mathcal{G}}_{\mathrm{tot}}}\bigr\}_{t\ge 0}.
\end{equation}
Because \(\hat{\mathcal{G}}\) can include damping or resource‐limiting effects, this semigroup is generally \emph{non‐unitary}, reflecting dissipative dynamics. Since \(\hat{\mathcal{G}}\) is dissipative on \(\mathcal{H}_{\mathrm{lat}}\), there exists \(\gamma \ge 0\) such that
\begin{equation}
  \mathrm{Re}\,\bigl\langle \hat{\mathcal{G}}\,\psi,\;\psi \bigr\rangle 
  \;\le\;
  -\,\gamma\,\|\psi\|^2
  \quad
  \text{for all } \psi \in D(\hat{\mathcal{G}}).
\end{equation}
(Here, the parameter \(\gamma\) is unrelated to any interaction potential for local virus particles; it denotes a lower bound on the operator’s real part.) On the \(M\)\,-lattice sector 
\(\bigl(\mathcal{H}_{\mathrm{lat}}^{\otimes M}\bigr)_{\pm}\),
the induced operator is
\begin{equation}
  \hat{\mathcal{G}}^{(M)} 
  \;=\;
  \sum_{j=1}^M
  I \,\otimes\, \cdots \,\otimes\, 
  \hat{\mathcal{G}} 
  \,\otimes\, \cdots \,\otimes\, I,
\end{equation}
so for any 
\(\Phi \in D\bigl(\hat{\mathcal{G}}^{(M)}\bigr)\),
\begin{equation}
  \mathrm{Re}\,\bigl\langle \hat{\mathcal{G}}^{(M)}\Phi,\;\Phi \bigr\rangle
  \;=\;
  \sum_{j=1}^M
  \mathrm{Re}\,
  \Bigl\langle 
    \hat{\mathcal{G}}\,\Phi_j,\,
    \Phi_j
  \Bigr\rangle,
\end{equation}
where each \(\Phi_j\) is the component of \(\Phi\) on which \(\hat{\mathcal{G}}\) acts in the \(j\)th tensor factor. By the dissipativity of \(\hat{\mathcal{G}}\),
\begin{equation}
  \mathrm{Re}\,\bigl\langle 
    \hat{\mathcal{G}}\,\Phi_j,\,
    \Phi_j
  \bigr\rangle 
  \;\le\;
  -\,\gamma\,\|\Phi_j\|^2,
\end{equation}
so
\begin{equation}
  \mathrm{Re}\,\bigl\langle \hat{\mathcal{G}}^{(M)}\Phi,\;\Phi \bigr\rangle
  \;\le\;
  -\,\gamma 
  \sum_{j=1}^M \|\Phi_j\|^2
  \;\le\;
  -\,\gamma\,\|\Phi\|^2.
\end{equation}
Hence, \(\hat{\mathcal{G}}^{(M)}\) is dissipative on each \(M\)\,-lattice sector, and the same inequality extends to the full Fock space
\(\bigoplus_{M=0}^\infty \bigl(\mathcal{H}_{\mathrm{lat}}^{\otimes M}\bigr)_{\pm}\).

Next, to show \(\hat{\mathcal{G}}_{\mathrm{tot}} = \bigoplus_{M=0}^\infty \hat{\mathcal{G}}^{(M)}\) is \emph{maximal} (or $m$‐sectorial), one verifies that \((I - \hat{\mathcal{G}}_{\mathrm{tot}})\) is surjective (or has dense range) in 
\(\mathcal{F}_{\pm}\bigl(\mathcal{H}_{\mathrm{lat}}\bigr)\). 
Classical results on tensor‐product operators (see, for example, \cite[Chapter X]{ReedSimon1980} or \cite[Vol.\,2]{BratteliRobinson1987}) assert that if \(\hat{\mathcal{G}}\) is maximal dissipative on \(\mathcal{H}_{\mathrm{lat}}\), then 
\(\hat{\mathcal{G}}_{\mathrm{tot}}\)
extends this property to the direct sum of symmetrized (or antisymmetrized) \(M\)‐fold tensor products. Concretely, to solve
\begin{equation}
  (I - \hat{\mathcal{G}}_{\mathrm{tot}})\,\Phi 
  \;=\;
  \Theta,
\end{equation}
one addresses each \(M\)\,-lattice sector independently, using the resolvent \((I - \hat{\mathcal{G}})^{-1}\) on the single‐lattice factor. By the Lumer–Phillips theorem, an $m$‐sectorial operator on a Hilbert space generates a strongly continuous semigroup of contractions (or quasi‐contractions). Therefore,
\begin{equation}
  e^{-\,i\,t\,\hat{\mathcal{G}}_{\mathrm{tot}}}
  \quad
  (t\ge0)
\end{equation}
is well‐defined on 
\(\mathcal{F}_{\pm}\bigl(\mathcal{H}_{\mathrm{lat}}\bigr)\),
but is generally \emph{non‐unitary} if 
\(\mathrm{Re}\,\langle \hat{\mathcal{G}}\psi,\psi\rangle <0\). 
This confirms that \(\hat{\mathcal{G}}_{\mathrm{tot}}\) generates a well‐posed semigroup for \emph{many‐lattice} evolution in Fock space. Because
\(\mathcal{F}_{\pm}(\mathcal{H}_{\mathrm{lat}})\)
admits \emph{arbitrarily many} viral lattices, \(\hat{\mathcal{G}}\) describes the elastic/damped behavior of a \emph{single viral lattice}. In more elaborate infection models, replication or decay processes may vary the number of lattices over time, creating off‐diagonal couplings between adjacent Fock sectors. Nonetheless, so long as dissipativity conditions hold (e.g.\ resource limitations, damping), one avoids finite‐time blow‐up and secures stable dynamics. As a result, the operator \(\hat{\mathcal{G}}_{\mathrm{tot}}\) is $m$‐sectorial on 
\(\mathcal{F}_{\pm}(\mathcal{H}_{\mathrm{lat}})\)
and generates a strongly continuous (non‐unitary) semigroup
\begin{equation}
  \{e^{-\,i\,t\,\hat{\mathcal{G}}_{\mathrm{tot}}}\}_{t\ge0}.
\end{equation}
A many‐lattice state
\begin{equation}
  \ket{\Psi(t)}
  \;\in\;
  \mathcal{F}_{\pm}\bigl(\mathcal{H}_{\mathrm{lat}}\bigr)
\end{equation}
thus evolves according to this semigroup, capturing dissipative or resource‐driven processes for an unbounded viral population. The PDE, arrangement labels, and sectorial operator framework together yield a rich description of large‐scale, multi‐lattice virological phenomena under realistic environmental and host constraints.
\end{proof}

\begin{theorem}[Well‐Posedness in Many‐Lattice Fock Space]
\label{thm:many_lattice_well_posed}
Let
\(\hat{\mathcal{G}}_{\mathrm{tot}}\)
be $m$‐sectorial on
\(\mathcal{F}_{\pm}(\mathcal{H}_{\mathrm{lat}})\).
Then for each initial condition 
\(\ket{\Psi_0}\in D(\hat{\mathcal{G}}_{\mathrm{tot}})\)
and for each forcing term
\(\ket{\Phi(t)}\)
in an appropriate Bochner‐integrable class, there is a unique \emph{mild solution}
\begin{equation}
\label{eq:Fock_mild_solution}
  \ket{\Psi(t)}
  \;=\;
  e^{-\,i\,t\,\hat{\mathcal{G}}_{\mathrm{tot}}}\,\ket{\Psi_0}
  \;+\;
  \int_{0}^{t}
  e^{-\,i\,(t-s)\,\hat{\mathcal{G}}_{\mathrm{tot}}}
  \,\ket{\Phi(s)}\,ds,
\end{equation}
such that 
\(\ket{\Psi(t)}\in \mathcal{F}_{\pm}(\mathcal{H}_{\mathrm{lat}})\) 
for all \(t\ge 0\). Moreover, if \(\ket{\Phi}\equiv 0\) (i.e.\ no external forcing) and the dissipativity 
constant of 
\(\hat{\mathcal{G}}_{\mathrm{tot}}\)
is strictly positive (\(\gamma>0\)), then 
\(\|\ket{\Psi(t)}\|\)
is non‐increasing in time, reflecting the decay or ``leakage'' typical of dissipative processes.
\end{theorem}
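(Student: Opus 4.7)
The plan is to reduce the statement to a direct application of the standard inhomogeneous Cauchy problem for strongly continuous semigroups, leveraging all the heavy lifting already done in the preceding proposition (sectorial extension to Fock space). First, I would invoke that result to extract the strongly continuous (quasi‐)contraction semigroup $\{S(t)\}_{t\ge 0} := \{e^{-it\hat{\mathcal{G}}_{\mathrm{tot}}}\}_{t\ge 0}$ generated by $\hat{\mathcal{G}}_{\mathrm{tot}}$ on $\mathcal{F}_{\pm}(\mathcal{H}_{\mathrm{lat}})$, together with the uniform bound $\|S(t)\| \le e^{-\gamma t}$ coming from the dissipativity constant $\gamma \ge 0$. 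This is the key input; with it in hand, the remainder of the argument is essentially bookkeeping.

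Next I would handle the homogeneous part. For $\ket{\Psi_0}\in D(\hat{\mathcal{G}}_{\mathrm{tot}})$, the map $t\mapsto S(t)\ket{\Psi_0}$ is a classical (strong) solution of $\partial_t \ket{\Psi} = -i\hat{\mathcal{G}}_{\mathrm{tot}}\ket{\Psi}$ by the standard generator identity $\tfrac{d}{dt}S(t)\psi = -i\hat{\mathcal{G}}_{\mathrm{tot}}S(t)\psi$ valid on the domain. Uniqueness follows by a Gronwall‐style argument: if $\ket{\Psi_1},\ket{\Psi_2}$ are two solutions with the same initial data, then their difference $\ket{W(t)}$ satisfies $\tfrac{d}{dt}\tfrac{1}{2}\|W(t)\|^2 = \mathrm{Re}\langle -i\hat{\mathcal{G}}_{\mathrm{tot}}W,W\rangle \le -\gamma \|W(t)\|^2$ by dissipativity, forcing $W\equiv 0$.

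For the inhomogeneous part, I would introduce the Duhamel candidate
\begin{equation}
  \ket{\Psi(t)} \;=\; S(t)\ket{\Psi_0} \;+\; \int_0^t S(t-s)\,\ket{\Phi(s)}\,ds,
\end{equation}
and verify that, under the Bochner‐integrability hypothesis on $\ket{\Phi}$, the convolution integral is well‐defined as a Bochner integral in $\mathcal{F}_{\pm}(\mathcal{H}_{\mathrm{lat}})$ and defines a continuous map $t\mapsto \ket{\Psi(t)}$. The routine checks are: strong continuity of $s\mapsto S(t-s)\ket{\Phi(s)}$ (from strong continuity of $S$ and measurability of $\Phi$), and the norm bound $\|\ket{\Psi(t)}\|\le e^{-\gamma t}\|\ket{\Psi_0}\| + \int_0^t e^{-\gamma(t-s)}\|\ket{\Phi(s)}\|\,ds$. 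This formula is by definition a mild solution; uniqueness of mild solutions again follows from the contraction estimate applied to the difference of any two candidates.

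Finally, for the last assertion, setting $\ket{\Phi}\equiv 0$ reduces $\ket{\Psi(t)}$ to $S(t)\ket{\Psi_0}$, and the semigroup bound $\|S(t)\|\le e^{-\gamma t}$ with $\gamma>0$ yields $\|\ket{\Psi(t)}\|\le e^{-\gamma t}\|\ket{\Psi_0}\|$, which is strictly decreasing whenever $\ket{\Psi_0}\neq 0$ — exactly the ``leakage'' statement. The main potential obstacle is not in the abstract semigroup machinery (which is textbook once m‐sectoriality is established) but in ensuring the forcing class $\Phi$ is tame enough: specifically, that $s\mapsto S(t-s)\ket{\Phi(s)}$ is strongly measurable and Bochner‐integrable on $[0,t]$. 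I would therefore state explicitly that $\ket{\Phi}\in L^1_{\mathrm{loc}}([0,\infty);\mathcal{F}_{\pm}(\mathcal{H}_{\mathrm{lat}}))$ and invoke Pettis' theorem together with separability of the Fock space to secure measurability; all other ingredients then fall out of the classical inhomogeneous Cauchy theory for $C_0$‐semigroups.
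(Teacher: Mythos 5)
Your proposal follows essentially the same route as the paper's own sketch: Lumer--Phillips/Hille--Yosida to obtain the strongly continuous (quasi-)contraction semigroup from $m$-sectoriality, Duhamel/variation-of-parameters for the Bochner-integrable forcing, and the dissipativity estimate $\mathrm{Re}\langle\hat{\mathcal{G}}_{\mathrm{tot}}\Psi,\Psi\rangle\le-\gamma\|\Psi\|^2$ for the norm decay when $\Phi\equiv 0$. Your added details (the Gronwall uniqueness argument and the Pettis/separability remark for strong measurability of $s\mapsto S(t-s)\ket{\Phi(s)}$) only make explicit what the paper leaves implicit, so the two arguments are in substance identical.
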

\begin{proof}[Sketch of Proof]
From the Lumer–Phillips theorem (or Hille–Yosida criteria for $m$‐sectorial operators), one knows that an $m$‐sectorial operator on a Hilbert space generates a strongly continuous contraction (or quasi‐contraction) semigroup \cite{Pazy1983, ReedSimon1980}. Consequently,
\begin{equation}
  e^{-\,i\,t\,\hat{\mathcal{G}}_{\mathrm{tot}}}
  \quad
  (t\ge 0)
\end{equation}
is well‐defined on
\(\mathcal{F}_{\pm}\bigl(\mathcal{H}_{\mathrm{lat}}\bigr)\),
and satisfies the usual properties of existence, uniqueness, and continuous dependence on initial data in the mild solution sense \eqref{eq:Fock_mild_solution}. If 
\(\hat{\Phi}(t)\)
is Bochner‐integrable, one incorporates it via the standard variation of parameters approach, ensuring a unique solution in 
\(
  C\bigl([0,T],\mathcal{F}_{\pm}(\mathcal{H}_{\mathrm{lat}})\bigr).
\)
When \(\gamma>0\), the semigroup is strictly contractive in norm. In particular,
\begin{equation}
  \mathrm{Re}\,\bigl\langle 
    \hat{\mathcal{G}}_{\mathrm{tot}} \,\Psi,\,\Psi 
  \bigr\rangle
  \;\le\;
  -\,\gamma\,\|\Psi\|^2,
\end{equation}
which implies \(\|\Psi(t)\|^2\) decays over time. Physically, this formalizes damped or resource‐limited dynamics for an \emph{unbounded} number of viral lattices, guaranteeing no finite‐time blow‐up. Thus, the system remains stable and well‐posed, aligning with biological realism where even a large population of viral lattices eventually dissipates or is constrained by finite host resources~\cite{KnipeHowley2020,harvey2019viral}.
\end{proof}
\subsection{Markovian Jumps in Fock Space}

\noindent
Modeling large populations of virions necessitates capturing both \emph{continuous} dynamics (e.g., elastic deformations, wave‐like oscillations) and \emph{discrete} events (e.g., replication, degradation, conformational flips). In virology, this latter class of phenomena is often governed by chance: a virion might spontaneously undergo a mutation, a capsid rearrangement, or be cleared by the immune system. Such randomness naturally lends itself to \emph{Markovian dynamics}, where the probability of moving to a new ``state'' depends only on the present state, not on past history. Within an operator‐theoretic or Fock‐space framework, these transitions manifest as \emph{Markovian jumps}, bridging deterministic PDE evolution with probabilistic, event‐driven changes.

\begin{definition}[Markovian Jumps in Fock Space]
\label{def:MarkovianJumps}
To capture realistic infection scenarios, we incorporate \emph{Markovian jumps} that drive transitions between different occupancy sectors \((n \to n\pm 1)\) and/or among internal arrangement states. Concretely:
\begin{enumerate}
  \item \emph{Replication/Creation Events:}  
    Operators \(\hat{a}^\dagger(\varphi)\) increase \(\hat{N}\), modeling new lattice formation when resources (proteins, nucleotides, ATP) are available.

  \item \emph{Degradation/Annihilation Events:}  
    Operators \(\hat{a}(\varphi)\) decrease \(\hat{N}\), reflecting clearance by host immunity or the gradual breakdown of a lattice over time.

  \item \emph{Arrangement Transitions:}  
    Operators acting on the discrete arrangement space \(\mathcal{H}_{\mathrm{arr}}\) (e.g., \(\hat{H}_{\mathrm{arr}}\)) switch among conformational or mass‐band classes within each viral lattice. These capture morphological or biochemical changes in the capsid and associated proteins.
\end{enumerate}

\noindent
When these Markov jumps couple to individual arrangement states (encoded by \(\mathcal{H}_{\mathrm{arr}}\) within each ``\emph{lattice sector}'' of \(\mathcal{H}_{\mathrm{lat}}\)), the system can simultaneously shift its occupancy number \emph{and} its internal configuration. This yields an operator‐algebraic scheme that fuses:
\begin{enumerate}
  \item \emph{Deterministic PDE Components:} Wave‐like and viscoelastic dynamics within each lattice (the continuum part of \(\mathcal{H}_{\mathrm{lat}}\)).
  \item \emph{Discrete, Probabilistic Jump Processes:} Conformational rearrangements and birth/death (replication or decay) at the population scale.
\end{enumerate}
Such a framework is essential in \emph{viral lattice theory}, where we envision an \emph{unbounded} number of viral lattices—each treated as a single interconnected entity with rich internal degrees of freedom. Mathematically, we embed \(\mathcal{H}_{\mathrm{lat}}\) (the single‐lattice Hilbert space) into a Fock space that accommodates arbitrarily many lattices:
\begin{equation}
  \mathcal{F}\bigl(\mathcal{H}_{\mathrm{lat}}\bigr)
  \;=\;
  \bigoplus_{n=0}^{\infty} \bigl(\mathcal{H}_{\mathrm{lat}}^{\otimes n}\bigr)_{\mathrm{sym/asym}},
\end{equation}
where symmetrization or antisymmetrization corresponds to the effective ``particle statistics'' (bosonic or fermionic). In practice, bosonic statistics are often most relevant for large, indistinguishable populations of viral lattices, while fermionic constraints may be used to enforce no‐overlap conditions or other exclusion principles in specialized contexts.
\end{definition}

\medskip
\noindent
\textbf{Why Markovian Jumps Matter for Viral Systems.}
\begin{itemize}
    \item \emph{Capturing Stochasticity in Infections.}  
    Real infections feature abrupt events (e.g., capsid assembly failures, immune detection, genome mutations). By modeling these as Markovian jumps, one straightforwardly incorporates random transitions without forfeiting the rigors of PDE evolution in \(\mathcal{H}_{\mathrm{lat}}\).

    \item \emph{Linking Occupancy Changes to Internal States.}  
    In large populations, replication events (\(\hat{a}^\dagger\)) or annihilation events (\(\hat{a}\)) can hinge on a lattice’s internal conformation. For instance, a capsid might need to achieve a specific arrangement before it can bud off from the host cell. Markovian transitions allow such coupling between “arrangement state” and “replication probability.”

    \item \emph{Multi‐Lattice Evolution in Fock Space.}  
    Once replicated, newly formed lattices enter the next Fock sector. Likewise, cleared lattices depart the system. This built‐in mechanism for birth/death processes means one need not redefine the base Hilbert space for each population change—a central advantage of second‐quantized (Fock‐based) modeling. 
\end{itemize}

\noindent
Thus, Markovian jumps serve as a versatile tool, blending stochastic event modeling (common in epidemiology or birth‐death processes) with the continuous structural and dynamical richness of \emph{viral lattice theory}. By placing both under one operator‐theoretic roof, we obtain a coherent, mathematically sound picture of how virions replicate, degrade, and morph—be it at the level of a single capsid or across vast populations in vivo.

A natural next step after introducing Markovian jumps in a Fock‐space setting (Section~\ref{def:MarkovianJumps}) is to formalize how we \emph{count} viral lattices and track the \emph{probabilistic flows} that describe replication, clearance, and morphological changes in detail. In classical birth‐death processes, one typically defines a ``population size'' variable; here, we do likewise via an operator‐based perspective.

\begin{definition}[The Number Operator]
\label{def:NumberOperator}
The \textbf{number operator} 
\(\hat{N}\)
tracks how many ``viral‐lattice quanta'' (i.e., identical lattices) exist in the system. Formally, 
\begin{equation}
  p_{n}
  \;=\; 
  \mathbb{P}\bigl(\hat{N} = n\bigr)
\end{equation}
denotes the steady‐state probability of being in the \(n\)‐lattice sector of the Fock space. In this operator‐algebraic framework, \emph{Markovian jumps} can shift the system between different occupancy states (\(n \to n \pm 1\)) or among arrangement sectors within each occupancy. Biologically, these transitions represent events such as:
\begin{itemize}
  \item \emph{Replication} (increasing \(\hat{N}\)) if host resources permit new lattice formation,
  \item \emph{Degradation/clearance} (decreasing \(\hat{N}\)) via immune attack or spontaneous lattice breakdown,
  \item \emph{Morphological reconfigurations} of capsid or envelope, switching an existing lattice to a different arrangement sector.
\end{itemize}
\end{definition}

\begin{definition}[Markovian Jump Rates]
\label{def:markovRates}
Let 
\(\nu_{n\to n+1}\)
and 
\(\nu_{n+1\to n}\)
denote net transition rates between the \(n\)‐lattice sector and the \((n+1)\)‐lattice sector. These account for the stochastic processes by which new lattices appear or existing ones disappear. In steady state, the probability \emph{flux} from \(n\) to \(n+1\) is
\begin{equation}
  J_{(n\to n+1)}
  \;=\;
  p_{n}\,\nu_{n\to n+1}
  \;-\;
  p_{n+1}\,\nu_{n+1\to n}.
\end{equation}
A nonzero 
\(J_{(n\to n+1)}\)
indicates a persistent flow of probability mass between these two Fock sectors, signifying ongoing replication or clearance. In virological terms, this might correspond to a net gain in viral lattices (when creation outpaces annihilation), or a net loss (when clearance dominates), depending on host resource availability and immune activity.
\end{definition}
\begin{figure}[h!]
\centering
\begin{tikzpicture}[>=stealth,shorten >=1pt,shorten <=1pt,
  thick,font=\small,scale=1.4]  % Increase 'scale' for a bigger diagram
  % -- Nodes for number sectors
  \node (n) at (0,0)   {$n$};
  \node (nplus) at (4,0) {$n+1$};

  % -- Forward arrow: n -> n+1
  \draw[->]
    (n) -- (nplus)
    node[midway,above]
    {$\nu_{n\to n+1}$};

  % -- Reverse arrow: n+1 -> n
  \draw[->]
    (nplus) to[bend left=30]
    node[above]
    {$\nu_{n+1\to n}$}
    (n);

  % -- Flux label (optional decorative arrow or annotation)
  % Here we add a label in the center to conceptually depict J_{(n->n+1)}
  \draw[->,thick,dotted]
    (2,0.7) -- (2,1.2)
    node[midway,right] {$J_{(n\to n+1)}$};
\end{tikzpicture}
\caption{A schematic of the Markovian jump rates between adjacent lattice‐number sectors. 
A system in sector \(n\) can transition to \(n+1\) at rate \(\nu_{n\to n+1}\), or in the reverse direction at rate \(\nu_{n+1\to n}\). 
The \emph{flux} \(J_{(n\to n+1)}\) accounts for the net probability current flowing from \(n\) to \(n+1\), defined by 
\(\;p_{n}\,\nu_{n\to n+1} \;-\; p_{n+1}\,\nu_{n+1\to n}\). 
In steady‐state, a nonzero \(J_{(n\to n+1)}\) indicates persistent circulation among these number sectors.}
\label{fig:markovRatesDiagram}
\end{figure}
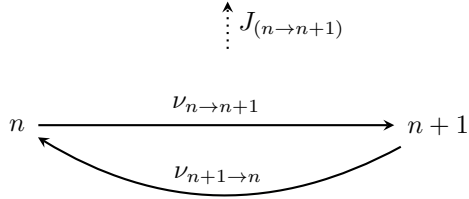
\smallskip
\noindent
\begin{definition}[Probability Currents and Viral Load Evolution]
\label{def:ProbCurrentViralLoad}
Let \(P(\mathbf{n},t)\) solve a Master equation describing the probability that the system occupies configuration \(\mathbf{n}\in y\) at time \(t\). Here, \(\mathbf{n}\) may label both occupancy numbers (i.e., how many lattices exist) and \emph{internal} arrangement states (conformational or otherwise). The \textbf{probability current} from \(\mathbf{n}\) to \(\mathbf{n}'\) is
\begin{equation}\label{eq:ProbCurrent}
  J_{\mathbf{n}\to \mathbf{n}'}(t) 
  \;=\; 
  P(\mathbf{n},t)\,T_{\mathbf{n}\to \mathbf{n}'} 
  \;-\;
  P(\mathbf{n}',t)\,T_{\mathbf{n}'\to \mathbf{n}},
\end{equation}
where \(T_{\mathbf{n}\to \mathbf{n}'}\) is the Markov jump rate for transitions \(\mathbf{n}\to \mathbf{n}'\). These currents track how the system’s probability measure \(\mu\) \emph{flows} through the high‐dimensional configuration space, be it from lower occupancy to higher occupancy or between different lattice conformations.

\smallskip
\noindent
\paragraph{Biological Significance of Non‐Equilibrium Flows.}  
Under open‐system conditions (i.e., resource or immune interactions), persistent cycles in these probability currents indicate \emph{non‐equilibrium steady states}, where viral loads or lattice conformations cycle indefinitely rather than settling to a single equilibrium. For instance, repeated replication–clearance cycles might keep a certain lattice arrangement in flux, even in a steady‐state distribution. This phenomenon aligns with real‐world infections, where viruses can continue to replicate and mutate under host pressures, thereby preventing a static equilibrium from emerging. Observing or simulating these probability currents offers virologists insights into critical transitions (e.g., thresholds at which replication overwhelms immune response) that may guide therapeutic strategies. The interplay of occupancy number, arrangement states, and probabilistic flows in configuration space yields a unified operator‐based model that captures both \emph{continuous} PDE dynamics (e.g., within each lattice) and \emph{discrete} Markov jumps (e.g., new lattice formation or immune removal). By quantifying these probability currents, we gain a deeper understanding of how large viral populations behave under real‐life constraints, bridging the gap between quantum‐inspired operator formalisms and the empirical patterns seen in virological experiments.
\end{definition}

\noindent
\section{Path Integrals, Orbits, and Trajectory Expansions in Fock Space}
\label{subsubsec:Orbits_and_Path_Integrals}

\noindent
Many problems in viral lattice theory, especially those involving large populations and multiple Markov transitions, can benefit from a \emph{path integral} viewpoint. Rather than tackling each state evolution in isolation, we consider the entire \emph{trajectory} of the system as it moves across occupancy sectors (via creation/annihilation events) and arrangement states (through conformational shifts). This approach is reminiscent of Feynman’s path integrals in quantum mechanics, where one sums over all possible paths to glean statistical and dynamical information. Here, each “path” weaves through a high‐dimensional space, \(\mathcal{F}(\mathcal{H}_{\mathrm{lat}})\), involving both continuous PDE evolution within each lattice and discrete Markov jumps that shift occupancy or rearrangements. In this extended configuration space, a system state \(\Psi \in \mathcal{F}\!\bigl(\mathcal{H}_{\mathrm{lat}}\bigr)\) can traverse either:
\begin{enumerate}
 \item \textit{Orbits:}  
    In the language of dynamical systems, an \emph{orbit} is the trajectory (or solution curve) produced by evolving an initial state \(\Psi_0 \in \mathcal{F}(\mathcal{H}_{\mathrm{lat}})\) forward in time under the combined PDE–Markov–Fock generator. Concretely, one starts with a particular arrangement and occupant configuration, then follows how the PDE modes (e.g.\ displacement fields) evolve continuously while discrete creation or annihilation jumps change occupant number, and Markovian events switch arrangement labels. Purely mathematically, the orbit is the image of the mild solution \(t \mapsto \Psi(t)\) in the configuration space. Biologically, a viral lattice (or population of lattices) “experiences” an orbit as it transitions smoothly (via PDE flows) and abruptly (via occupant or capsid rearrangements) through a succession of states, reflecting real infection dynamics over time.

  \item \textit{Loops:}  
    A \emph{loop} (or closed orbit) is an orbit that eventually returns to its original state in the Hilbert–Fock space—i.e., \(\Psi(T) = \Psi(0)\) for some finite \(T>0\). Mathematically, this represents a \emph{periodic} or \emph{cyclic} solution under the PDE–Markov–Fock evolution. From a virological standpoint, loops can occur if the system revisits the same arrangement or occupant distribution multiple times (e.g.\ repeated cycles of replication and clearance). In non‐equilibrium physics, such loops often indicate a persistent flow of resources (ATP, amino acids, etc.) driving the viral population to circle through the same sequence of states. A nonzero flux integral around the loop then implies a \emph{non‐equilibrium steady state} (NESS), embodying active circulation among occupancy or arrangement modes.
\end{enumerate}
\noindent

\noindent
\textbf{Piecewise Paths and Non‐Equilibrium Cycles.}
\noindent
Each resulting trajectory is “piecewise” in nature, blending deterministic PDE motion with probabilistic jumps. From a topological lens, the presence of loops or cycles—in which the system keeps re‐entering certain configurations—often signals a \emph{non‐equilibrium} regime. Resource inflows and ongoing host–virus interactions can drive these repeated transitions, giving rise to steady fluxes that do not die out over time. Practically, this may manifest as a virus population repeatedly swelling and shrinking while undergoing specific capsid rearrangements, a behavior well captured by path integrals and orbit/loop analysis in the high‐dimensional Fock‐space description. Hence, from both a rigorous mathematical perspective (where we track all PDE, Markov, and occupant variables in \(\mathcal{F}(\mathcal{H}_{\mathrm{lat}})\)) and an intuitive virological viewpoint, the concepts of \emph{orbits} (open‐ended evolution) and \emph{loops} (recurrent or periodic trajectories) provide a powerful lens for understanding how viral lattices move through their configuration space—either as once‐through evolutionary paths or as cycling states driven by non‐equilibrium forces.

\begin{itemize}
    \item \emph{Continuous Segments:}  
    While in a specific arrangement sector \(y\) and occupant number \(n\), the lattice’s PDE degrees of freedom evolve smoothly (e.g., elastic vibrations, damping).  
    \item \emph{Discrete Jumps:}  
    A sudden replication event might bump the system from \((y,n)\) to \((y',\,n+1)\), or an immune‐mediated clearance could drive \((y,n)\to(y',\,n-1)\).  
\end{itemize}
\medskip
\noindent
\textbf{Intuitive Value of Path Integrals.}  
Just as Feynman’s original path integrals reveal \emph{dominant} contributions to particle trajectories, the viral lattice path integral framework pinpoints the most probable routes through arrangement and occupancy space. For instance, among all possible ways a virion might replicate or degrade, the path integral summation weighs each route by its Markov transition rate and PDE “action,” guiding us to the orbits that dominate population dynamics or morphological transitions. This offers:
\begin{itemize}
    \item \emph{Statistical Averages and Fluctuations:}  
    One can compute mean viral load, variation in replication times, or probabilities of specific conformational changes by summing over all possible state‐space paths.  
    \item \emph{Ensemble Interpretations:}  
    Instead of focusing on a single solution trajectory, we look at the entire “ensemble” of plausible evolutions, each with its own weighting. This is crucial for capturing the stochastic nature of viral replication.  
    \item \emph{Geometric and Topological Insights:}  
    Loops or cycles in path space highlight where the system recurrently visits certain states—key to understanding NESS and sustained oscillatory behavior in resource‐driven environments.
\end{itemize}
Despite the high‐dimensional complexity, the theory remains \emph{computationally and analytically manageable} thanks to rigorous functional‐analytic tools:
\begin{enumerate}
  \item \emph{Sectorial/Maximal Dissipative Generators:}  
  Each single‐lattice operator (encompassing PDE evolution plus arrangement transitions) can be made $m$‐sectorial if dissipativity and Lipschitz‐type constraints are met~\cite{DaPratoZabczyk1992,Pazy1983}. This prevents blow‐up in the PDE sector and ensures Markov jumps are well‐defined in infinite dimensions.

  \item \emph{Markov Creation–Annihilation Structure:}  
  Shifting occupant numbers by \(\pm1\) via creation/annihilation operators preserves overall sectoriality when resource limitations and decay rates are sublinear in \(\hat{N}\). This allows replication to be “unbounded” in principle, yet realistically capped by finite resources.

  \item \emph{Semigroup Formalism:}  
  Under Lumer–Phillips or Hille–Yosida conditions, these $m$‐sectorial operators generate strongly continuous semigroups in the Fock space. As a result, solutions depend continuously on initial states, enabling spectral analysis and a well‐posed evolution of occupant distributions over time.
\end{enumerate}

\noindent
\textbf{Biological and Computational Payoffs.}  
By combining Markovian jumps (for random conformational changes, replication, or immune clearance) with continuous PDE modes (for mechanical or wave‐based processes), we \emph{mirror} the realistic complexity of virus–host interactions. In practice, one can:
\begin{itemize}
  \item \emph{Identify Most Likely Replication Pathways:}  
  Evaluate which sequences of occupant‐state transitions dominate under given resource constraints.  
  \item \emph{Quantify Morphological State Fluctuations:}  
  Track how partial genome packaging or capsid subunit rearrangements vary over time, leveraging the path integral’s ability to sum over discrete conformational jumps.  
  \item \emph{Explore Non‐Equilibrium Steady States:}  
  Detect persistent loops in path space (e.g., repeated birth‐death cycles) that reflect real‐world infections where the virus never fully clears nor explodes, but remains in a dynamic steady‐state driven by resource inflows.
\end{itemize}

\noindent
In short, the path integral perspective grants a “trajectory‐based” lens on viral lattice behavior that is both mathematically comprehensive and intuitively graspable. By focusing on orbits, loops, and fluxes in an infinite‐dimensional configuration space, one can visualize the swirling currents of replication and clearance—much like studying flow lines on a phase portrait, but lifted to a quantum‐inspired operator setting. This viewpoint enriches our capacity to analyze how small‐scale PDE deformations and discrete Markov events conspire to shape large‐scale viral population outcomes, uniting rigorous physics with the intricate realities of virological systems.

\begin{figure}[h!]
\centering
\begin{tikzpicture}[>=stealth,shorten >=1pt,shorten <=1pt,
  thick,font=\small,scale=1.4]

  % --- Primary nodes
  \node (y1n1) at (0,0)   {$(y_{1},n_{1})$};
  \node (y2n1) at (3,0)   {$(y_{2},n_{1})$};
  \node (y2n2) at (1.5,2) {$(y_{2},n_{2})$};
  
  % --- Additional node for a second loop
  \node (y1n2) at (-1.5,2) {$(y_{1},n_{2})$};

  % -- Forward transitions (solid lines)
  \draw[->] (y1n1) to[bend left=15] 
    node[above] {$\nu_{y_1 \to y_2}$}
    (y2n1);

  \draw[->] (y2n1) to[bend left=15]
    node[right] {$\nu_{n_1 \to n_2}$}
    (y2n2);

  \draw[->] (y2n2) to[bend left=15]
    node[left] {$\nu_{y_2 \to y_1}$}
    (y1n1);

  % --- Transitions to/from (y1n2) for a second loop
  \draw[->] (y1n1) to[bend left=25]
    node[left] {$\nu_{n_1 \to n_2}$}
    (y1n2);

  \draw[->] (y1n2) to[bend left=25]
    node[left] {$\nu_{y_1 \to y_2}$}
    (y2n2);

  % -- Dashed reverse transitions (optional)
  \draw[->,dashed] (y2n1) to[bend left=15]
    node[below] {$\nu_{y_2 \to y_1}$}
    (y1n1);

  \draw[->,dashed] (y2n2) to[bend left=15]
    node[right] {$\nu_{n_2 \to n_1}$}
    (y2n1);

  \draw[->,dashed] (y1n1) to[bend left=15]
    node[left] {}
    (y2n2);

  % --- Dashed reverse path from (y1n2)
  \draw[->,dashed] (y1n2) to[bend left=25]
    node[right] {$\nu_{n_2 \to n_1}$}
    (y1n1);

  \draw[->,dashed] (y2n2) to[bend left=25]
    node[right] {$\nu_{y_2 \to y_1}$}
    (y1n2);

\end{tikzpicture}
\caption{\label{fig:closedLoopDiagram}
\textbf{Feynman‐like Diagram of Typical Loops in Fock Space.}
This schematic shows how the system can traverse different arrangement states 
(\(y_1 \leftrightarrow y_2\)) at fixed occupant number (\(n_1\)) and also undergo 
birth/death transitions (\(n_1 \leftrightarrow n_2\)) to reach new population sectors 
(e.g., \((y_{2},n_{2})\) or \((y_{1},n_{2})\)). Solid arrows illustrate primary transitions, 
while dashed arrows indicate possible reverse paths. Together, these loops form \emph{closed orbits} 
in the state space, reminiscent of Feynman diagrams for particle scattering. 
A nonzero \emph{net flux} around any loop signals persistent probability current 
(e.g., ongoing replication–clearance cycles), thus indicating a non‐equilibrium steady state. 
Such circulations can be interpreted as repeated patterns of conformational change 
followed by occupant‐number shifts, or vice versa, providing insight into how 
\emph{realistic} viral lattice systems might sustain repeated replication events 
before cycling back to earlier configurations.}
\end{figure}
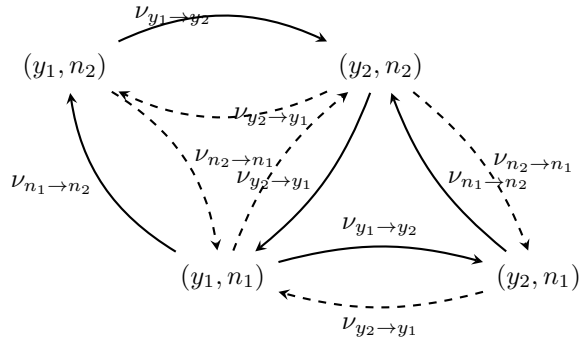

\begin{definition}[Closed Loops in Fock-Space Configurations]
\label{def:closedLoopFock}
Consider a single viral lattice, modeled as an $8\times 8$ array of virion nodes.  
Within each lattice, the \emph{arrangement space} 
\(\mathcal{H}_{\mathrm{arr}} \ni y\)
captures discrete morphological or conformational indices (e.g.\ occupant classes, capsomere states), 
while the \emph{PDE space} 
\(\mathcal{H}_{\mathrm{PDE}}\)
encodes continuous fields describing intra-lattice interactions (e.g.\ mechanical displacement, resource density). 
When we extend to \emph{multiple} such lattices, we embed the single-lattice state space 
\(\mathcal{H}_{\mathrm{lat}} = \mathcal{H}_{\mathrm{arr}} \otimes \mathcal{H}_{\mathrm{PDE}}\) 
into a Fock space, 
\(\mathcal{F}\bigl(\mathcal{H}_{\mathrm{lat}}\bigr)\), 
to accommodate the creation and annihilation of entire lattices. In this Fock-space setting, a \emph{closed loop} is a cyclic sequence of the form:
\begin{equation}
  (y_1,\,n_1)
  \;\longrightarrow\;
  (y_2,\,n_1)
  \;\longrightarrow\;
  (y_2,\,n_2)
  \;\longrightarrow\;
  \cdots
  \;\longrightarrow\;
  (y_1,\,n_1),
\end{equation}
where \(y_i\in \mathcal{H}_{\mathrm{arr}}\) denotes a particular arrangement (e.g.\ a conformational state), and \(n_i\) is the occupant number (the total number of such lattices). This loop involves:
\begin{enumerate}[label=(\roman*)]
  \item \emph{Transitions in arrangement indices} 
  (\(y_i \to y_{i+1}\)), modeled by a discrete Markov operator that might encode changes in lattice morphology or capsomere conformation.
  \item \emph{Transitions in occupant number} 
  (\(n_i \to n_{i+1}\)), driven by creation or annihilation operators representing replication (creation of a new lattice) or clearance (annihilation of an existing lattice).
\end{enumerate}
\end{definition}
\begin{definition}[Infinite Fock Space and Recurrent Loops]
\label{def:InfiniteFock}
In an \emph{infinite} Fock space, infinitely many loops (or cycles) can intertwine a viral lattice’s internal morphologies with population‐scale replication and clearance. Conceptually, each viral lattice is treated as a “particle” in a second‐quantized (Fock) framework, thereby accommodating even very large populations—up to millions of virions arranged into thousands of lattices—through a \emph{single global wavefunction}. Formally, the configuration space for any number of lattices is
\begin{equation}
  \mathcal{F}\bigl(\mathcal{H}_{\mathrm{lat}}\bigr),
\end{equation}
the bosonic (or fermionic) Fock‐space expansion of a single‐lattice Hilbert space 
\(\mathcal{H}_{\mathrm{lat}}\). 
In this infinite‐dimensional setting, \emph{closed loops} and \emph{orbits} represent recurrent replication–clearance cycles or repeated morphological transitions, with each loop encoding a Markovian “cycle” that couples intra‐lattice (arrangement) dynamics to inter‐lattice (population) processes.
\end{definition}
\subsubsection{Mathematical Form of the Global Wavefunction in Fock Space}
\begin{theorem}[Mathematical Form of the Global Wavefunction in Fock Space]
\label{thm:GlobalWavefunctionFock}
\noindent
The bosonic (or generalized) Fock space,
\begin{equation}
   \mathcal{F}\bigl(\mathcal{H}_{\mathrm{lat}}\bigr)
   \;=\;
   \bigoplus_{N=0}^{\infty}\,
   \Bigl[\mathcal{H}_{\mathrm{lat}}^{\otimes N}\Bigr]_{\pm},
\end{equation}
accommodates configurations ranging from \(N=0\) (no viral lattices) to arbitrarily many occupant lattices. A \emph{global wavefunction} 
\begin{equation}
   \ket{\Psi(t)}
   \;\in\;
   \mathcal{F}\bigl(\mathcal{H}_{\mathrm{lat}}\bigr)
\end{equation}
then encodes all microscopic (PDE) degrees of freedom \emph{and} the macro‐level occupant count at time \(t\). Concretely, we can write
\begin{equation}
\label{eq:FockStateExpansion}
\ket{\Psi(t)}
\;=\;
\sum_{N=0}^{\infty}
\;\;\sum_{(y_1,\dots,y_N)\in \mathcal{Y}^N}
\;\;
\int_{(\Omega)^N}
\Psi_{N}\!\Bigl(y_1,\dots,y_N;\,f_1,\dots,f_N;\,t\Bigr)\;
\bigl|y_1, f_1\bigr\rangle
\otimes
\cdots
\otimes
\bigl|y_N, f_N\bigr\rangle
\,d f_1 \dots d f_N,
\end{equation}
where:
\begin{itemize}
  \item \(y_i \in \mathcal{Y}\) label discrete ``arrangement states'' (Markovian or occupant‐conformational subspaces) for each lattice,
  \item \(f_i\) represents continuous PDE variables (e.g., displacement fields) on \(\Omega \subset \mathbb{R}^3\), 
  \item \(\Psi_N\) is a complex‐valued amplitude (or probability amplitude) on these configuration spaces, satisfying appropriate normalization or integrability conditions.
\end{itemize}
\noindent
Hence, \eqref{eq:FockStateExpansion} represents a \emph{global wavefunction} bridging micro‐level mechanical details and macro‐level replication events in a mathematically rigorous way. The resulting mild solutions of the hybrid PDE–Markov–Fock equations thus offer a \emph{comprehensive picture} of infection dynamics that can, in principle, track tens of millions of virions \emph{without} sacrificing analytical precision. 
\end{theorem}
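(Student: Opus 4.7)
The plan is to read the claimed formula \eqref{eq:FockStateExpansion} as a completeness statement for an orthonormal (or generalized) basis of $\mathcal{F}(\mathcal{H}_{\mathrm{lat}})$, and to build that basis by walking through the three layers of structure already installed in the paper: the direct sum over occupant number $N$, the symmetrized $N$-fold tensor product over single-lattice factors, and the internal tensor product $\mathcal{H}_{\mathrm{lat}} = \mathcal{H}_{\mathrm{arr}}\otimes \mathcal{H}_{\mathrm{PDE}}$ from Theorem~\ref{def:single_lattice_space}. Since each step is a standard decomposition in functional analysis, no new analytic estimates are needed; what is required is careful bookkeeping so that the final multi-index object $\Psi_N(y_1,\dots,y_N;f_1,\dots,f_N;t)$ is recognizably the components of $\ket{\Psi(t)}$ in this basis.

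First I would invoke the direct-sum construction of $\mathcal{F}(\mathcal{H}_{\mathrm{lat}})$ to write any $\ket{\Psi(t)}$ uniquely as a convergent sum $\sum_{N\ge 0}\ket{\Psi_N(t)}$ with $\ket{\Psi_N(t)}\in [\mathcal{H}_{\mathrm{lat}}^{\otimes N}]_{\pm}$ and $\sum_N \|\Psi_N(t)\|^2 = \|\Psi(t)\|^2 < \infty$. Next, using Theorem~\ref{def:single_lattice_space}, I identify $\mathcal{H}_{\mathrm{lat}}^{\otimes N}$ with $(\mathcal{H}_{\mathrm{arr}})^{\otimes N}\otimes (\mathcal{H}_{\mathrm{PDE}})^{\otimes N}$ up to the natural permutation of tensor factors, so that the symmetrization/antisymmetrization operator $P_\pm$ acts jointly on the arrangement and PDE slots. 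Then, using the orthonormal basis $\{\ket{y}\}_{y\in\mathcal{Y}}$ of $\mathcal{H}_{\mathrm{arr}}$ from Theorem~\ref{thm:viral_lattice_representation} and the isometry $\mathcal{H}_{\mathrm{PDE}}^{\otimes N} \cong L^2(\Omega^N; (\mathbb{C}^{2d})^{\otimes N})$, I expand $\ket{\Psi_N(t)}$ as a discrete sum over $(y_1,\dots,y_N)\in\mathcal{Y}^N$ combined with a Lebesgue integral over $(f_1,\dots,f_N)\in\Omega^N$, which yields precisely the integrand in \eqref{eq:FockStateExpansion} with coefficient $\Psi_N(y_1,\dots,y_N;f_1,\dots,f_N;t)$.

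The symmetry (or antisymmetry) of $\Psi_N$ under simultaneous permutation of the pairs $(y_i,f_i)$ is then imposed by writing $\Psi_N = P_\pm \Psi_N$; I would note that this is automatic once one projects onto the image of $P_\pm$, so the representation is valid on the symmetrized (respectively antisymmetrized) subspace without extra hypotheses. Normalization follows from Parseval applied sector by sector, giving
\begin{equation}
\|\Psi(t)\|^2 \;=\; \sum_{N=0}^{\infty}\; \sum_{(y_1,\dots,y_N)}\; \int_{\Omega^N} \bigl|\Psi_N(y_1,\dots,y_N;f_1,\dots,f_N;t)\bigr|^2\, df_1\cdots df_N,
\end{equation}
so each $\Psi_N(\cdot;t)$ lies in $\ell^2(\mathcal{Y}^N)\otimes L^2(\Omega^N)$ and the whole family is square-summable in $N$. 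Finally, the time dependence is inherited from Theorem~\ref{thm:many_lattice_well_posed}: applying the mild-solution formula sector-wise gives $\Psi_N(\,\cdot\,;t) = $ the $N$-th component of $e^{-it\hat{\mathcal{G}}_{\mathrm{tot}}}\ket{\Psi_0}$ plus the Duhamel correction, so \eqref{eq:FockStateExpansion} is indeed the component expansion of the unique mild solution.

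The only genuinely delicate point I anticipate is the handling of the continuum PDE labels $f_i$. Strictly speaking $\ket{f}$ is not an element of $\mathcal{H}_{\mathrm{PDE}}$ but a generalized eigenvector in a rigged-Hilbert-space (Gelfand triple) sense, so the $\int \cdots\,df_1\cdots df_N$ in \eqref{eq:FockStateExpansion} is a suggestive notation for the $L^2$-kernel representation rather than an honest Bochner integral of basis kets. I would therefore frame the continuum portion of the argument via the canonical unitary $(\mathcal{H}_{\mathrm{PDE}})^{\otimes N} \cong L^2(\Omega^N;(\mathbb{C}^{2d})^{\otimes N})$ and treat $\Psi_N$ as that $L^2$-kernel, noting that the formal ``$\ket{y_1,f_1}\otimes\cdots\otimes\ket{y_N,f_N}$'' notation is the standard physicist's shorthand. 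This is the main obstacle; every other step is a direct application of structure theorems already in place.
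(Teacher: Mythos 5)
Your proposal is correct, and it supplies an actual proof where the paper offers essentially none: in the source, Theorem~\ref{thm:GlobalWavefunctionFock} is stated as a definitional assertion that follows "concretely" from the Fock construction of Definition~\ref{def:Fock_space_general}, with no decomposition argument given. Your route — direct sum over occupant sectors, the canonical unitary $(\mathcal{H}_{\mathrm{arr}}\otimes\mathcal{H}_{\mathrm{PDE}})^{\otimes N}\cong \mathcal{H}_{\mathrm{arr}}^{\otimes N}\otimes\mathcal{H}_{\mathrm{PDE}}^{\otimes N}$ with $P_\pm$ permuting the pairs $(y_i,f_i)$ jointly, then expansion in the orthonormal basis $\{\ket{y}\}$ of Theorem~\ref{thm:viral_lattice_representation} and the identification $\mathcal{H}_{\mathrm{PDE}}^{\otimes N}\cong L^2(\Omega^N;(\mathbb{C}^{2d})^{\otimes N})$ — is the standard and correct way to make the claim rigorous, and the sector-wise Parseval identity gives the normalization condition the theorem only alludes to. Your closing caveat is the genuinely substantive contribution: as written in the paper, the objects $\ket{y_i,f_i}$ with continuous label $f_i$ are not elements of $\mathcal{H}_{\mathrm{lat}}$, and the integral $\int_{\Omega^N}\cdots\,df_1\cdots df_N$ only makes sense as the $L^2$-kernel representation (equivalently, in a Gelfand-triple sense); you are right to reframe the continuum part that way, and you could add that this also resolves the paper's ambiguity over whether $f_i$ denotes a point of $\Omega$ or a displacement field, since the kernel viewpoint forces $f_i\in\Omega$ with the $(\mathbb{C}^{2d})^{\otimes N}$ vector indices carried by $\Psi_N$. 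The one cosmetic slip is that your displayed Parseval identity suppresses those vector indices after you introduced them; otherwise every step is a direct application of structure already in place, and the time dependence via the mild-solution formula of Theorem~\ref{thm:many_lattice_well_posed} is consistent with the paper's semigroup conventions.
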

\begin{theorem}[Global Wavefunction Evolution.]
A global state
\(\ket{\Psi(t)} \in \mathcal{F}\bigl(\mathcal{H}_{\mathrm{lat}}\bigr)\)
solves
\begin{equation}
  \frac{d}{dt}\,\ket{\Psi(t)}
  \;=\;
  \mathcal{G}_{\mathrm{Fock}}\>\ket{\Psi(t)},
  \quad
  \ket{\Psi(0)}\;=\;\ket{\Psi_0}.
\end{equation}
In the amplitude representation (cf.\ \eqref{eq:FockStateExpansion}), 
\(\Psi_N(\dots,t)\)
obeys:
\begin{equation}
  \begin{cases}
    \text{(i) An $m$‐sectorial PDE generator } \hat{\mathcal{G}}_{\mathrm{PDE}}
    \text{ acting on each continuous variable } \varphi_i,\\
    \text{(ii) Discrete Markov transitions for each arrangement state } y_i,\\
    \text{(iii) Creation/annihilation terms linking $N$‐ and $(N\pm1)$‐lattice sectors.}
  \end{cases}
\end{equation}
\noindent
Consequently, the system simultaneously incorporates \emph{intra‐lattice PDE mechanics}, \emph{arrangement changes}, and \emph{population‐level birth/death events}.

\paragraph{Reconciling Fine‐Grained PDE Physics with Macroscopic Occupant Changes.}
\begin{itemize}
  \item \emph{Population‐Scale Consistency:}  
    Even if \(\ket{\Psi(t)}\) spans occupant numbers from \(N=0\) to large values (e.g.\ \(10^6\)–\(10^9\)), this single wavefunction remains finite‐norm under $m$‐sectorial and bounded‐perturbation assumptions, thus precluding nonphysical blow‐ups in finite time.
  \item \emph{Detailed Intra‐Lattice Dynamics:}  
    Each individual \(8\times8\) lattice retains the continuum PDE description (e.g.\ wave‐like or diffusive processes, subunit interactions), ensuring no loss of fine‐grained capsid‐physics detail. Moreover, Markov transitions in arrangement labels \(y_i\) can capture morphological shifts or partial (dis)assembly events.
  \item \emph{Emergent Resource Constraints:}  
    If creation (replication) rates saturate at high occupant counts, the model enforces biologically realistic limitations on viral proliferation. Immune clearance (via annihilation) can similarly prevent indefinite population growth. These features ensure alignment with experimental observations of viral loads leveling off under host immune pressure.
  \item \emph{Unified Operator‐Theoretic View:}  
    The operator \(\mathrm{d}\Gamma(\hat{\mathcal{G}}_{\mathrm{lat}})\) evolves \emph{within} each \(N\)‐lattice sector, while \(\widehat{c}^\dagger\) and \(\widehat{c}\) shift the system \emph{between} sectors, effectively stitching together micro‐scale physics (single‐lattice PDEs) and macro‐scale replication/degradation processes.
\end{itemize}
By encoding occupant‐number fluctuations via creation/annihilation and maintaining an $m$‐sectorial PDE description for each lattice’s internal modes, this hybrid formalism yields a \emph{global} wavefunction \(\ket{\Psi(t)}\) that remains both mathematically tractable and biologically relevant. Even large populations of viral lattices—numbering in the millions—can be captured in a single operator‐theoretic solution, where continuum PDE detail is preserved for \emph{every} lattice, and the occupant count evolves through realistic replication/clearance dynamics.
\end{theorem}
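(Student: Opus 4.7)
The plan is to decompose $\mathcal{G}_{\mathrm{Fock}}$ into three operator components matching items (i)--(iii) of the theorem, verify that each component is either $m$-sectorial or a suitably small perturbation of an $m$-sectorial operator, and then assemble the pieces by invoking the perturbation theory for maximal dissipative operators in tandem with the Lumer--Phillips/Hille--Yosida machinery already used in Theorem~\ref{thm:many_lattice_well_posed} and the sectorial extension established in the proof of Proposition~\ref{prop:Gtot_sectorial}. Concretely, I would write
\begin{equation}
  \mathcal{G}_{\mathrm{Fock}}
  \;=\;
  \mathrm{d}\Gamma\!\bigl(\hat{\mathcal{G}}_{\mathrm{PDE}}\bigr)
  \;+\;
  \mathrm{d}\Gamma\!\bigl(\hat{H}_{\mathrm{arr}}\bigr)
  \;+\;
  \bigl(\hat{\Gamma}_+ - \hat{\Gamma}_-\bigr),
\end{equation}
where $\mathrm{d}\Gamma$ denotes the standard second-quantization lift of a single-lattice operator, and $\hat{\Gamma}_\pm$ are the resource/density-saturated creation and annihilation operators constructed in Theorem~\ref{thm:ResourceOperator} and Definition~\ref{def:density_decay_operators}.

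The first key step is to show that each $\mathrm{d}\Gamma(\cdot)$ summand inherits $m$-sectoriality on $\mathcal{F}(\mathcal{H}_{\mathrm{lat}})$ sector-by-sector; this is precisely the content of the argument already carried out for $\hat{\mathcal{G}}_{\mathrm{tot}}$ in Proposition~\ref{prop:Gtot_sectorial}, and it transfers verbatim once one replaces $\hat{\mathcal{G}}$ by $\hat{\mathcal{G}}_{\mathrm{PDE}}$ or $\hat{H}_{\mathrm{arr}}$ respectively. The second step is to estimate the creation/annihilation block: using $\mathcal{R}(x)\to 0$ and $\mathcal{D}(x)\to\infty$ as $x\to\infty$, one obtains a relative bound of the form
\begin{equation}
  \bigl\|(\hat{\Gamma}_+ - \hat{\Gamma}_-)\,\Psi\bigr\|
  \;\le\;
  a\,\bigl\|\mathrm{d}\Gamma(\hat{\mathcal{G}}_{\mathrm{PDE}})\,\Psi\bigr\|
  \;+\;
  b\,\|\Psi\|,
  \quad a<1,
\end{equation}
on the finite-particle subspace, which by the Kato--Rellich-type perturbation theorem for $m$-sectorial operators yields $m$-sectoriality of the sum. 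Lumer--Phillips then produces the strongly continuous semigroup $\{e^{t\mathcal{G}_{\mathrm{Fock}}}\}_{t\ge 0}$, and the mild-solution representation \eqref{eq:Fock_mild_solution} applied to $\mathcal{G}_{\mathrm{Fock}}$ supplies existence, uniqueness, and continuous dependence on $\ket{\Psi_0}$.

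The third step is to recover the component-wise amplitude equations (i)--(iii) by testing $\frac{d}{dt}\ket{\Psi(t)} = \mathcal{G}_{\mathrm{Fock}}\ket{\Psi(t)}$ against basis elements $\ket{y_1,f_1}\otimes\cdots\otimes\ket{y_N,f_N}$ in the expansion \eqref{eq:FockStateExpansion}. The PDE piece yields the $\hat{\mathcal{G}}_{\mathrm{PDE}}$ action on each continuous coordinate $f_i$; the $\mathrm{d}\Gamma(\hat{H}_{\mathrm{arr}})$ piece gives discrete Markov updates on the labels $y_i$; and the $\hat{\Gamma}_\pm$ block produces matrix elements that couple $\Psi_N$ to $\Psi_{N\pm 1}$. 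Finiteness of the global norm across all occupant sectors then follows from the dissipativity inequality
\begin{equation}
  \mathrm{Re}\,\langle \mathcal{G}_{\mathrm{Fock}}\,\Psi,\Psi\rangle
  \;\le\;
  -\gamma\,\|\Psi\|^2
  \;+\;
  C\,\|\Psi\|^2,
\end{equation}
obtained by combining the sector-wise bound from Proposition~\ref{prop:Gtot_sectorial} with the saturation properties of $\mathcal{R}$ and $\mathcal{D}$, preventing any finite-time blow-up in the occupant expectation $\langle \hat{N}(t)\rangle$.

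The main obstacle, and the place requiring the most care, is the unboundedness of the bare creation operator $\hat{a}^\dagger(f)$ together with the fact that it intertwines infinitely many Fock sectors of different dimensionality. A naive application of bounded-perturbation theory fails on the full Fock space, so the argument must proceed on the algebraic core of finite-particle vectors, establish $m$-sectoriality there with explicit constants depending on $\sup_x \mathcal{R}(x)$ and on the growth rate of $\mathcal{D}(x)$, and then take the closure using a Nelson-type invariant-domain or Friedrichs-extension argument to produce a genuine semigroup generator on all of $\mathcal{F}(\mathcal{H}_{\mathrm{lat}})$. Once this density/closure step is handled, the remaining verifications are routine within the semigroup framework already deployed throughout the paper.
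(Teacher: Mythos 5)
Your overall architecture matches what the paper does (its justification for this theorem is distributed across the proof of sectorial extension following Theorem~\ref{thm:many_lattice_Schrodinger}, Theorem~\ref{thm:many_lattice_well_posed}, and the decomposition \(\mathcal{G}_{\mathrm{Fock}} = \mathrm{d}\Gamma(\hat{\mathcal{G}}_{\mathrm{lat}}) + (\widehat{c}^\dagger + \widehat{c})\) in Definition~\ref{def:UniquenessHybrid}): second-quantize the single-lattice generator, treat occupant changes as a perturbation, invoke Lumer--Phillips, and read off the amplitude equations by projecting onto the basis of \eqref{eq:FockStateExpansion}. Your insistence on working on the finite-particle algebraic core and closing the operator afterwards is actually more careful than anything the paper writes down, and is the right way to handle the fact that \(\hat{a}^\dagger(f)\) intertwines all sectors.

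There is, however, one step that does not hold as you have written it: the relative bound
\(\|(\hat{\Gamma}_+ - \hat{\Gamma}_-)\Psi\| \le a\,\|\mathrm{d}\Gamma(\hat{\mathcal{G}}_{\mathrm{PDE}})\Psi\| + b\,\|\Psi\|\) with \(a<1\), which you claim follows from \(\mathcal{R}(x)\to 0\) and \(\mathcal{D}(x)\to\infty\). The hypothesis on \(\mathcal{R}\) does help --- on the \(N\)-particle sector \(\hat{\Gamma}_+(f)=\hat{a}^\dagger(f)\mathcal{R}(\hat{N})\) has amplitude of order \(\sqrt{N+1}\,\mathcal{R}(N)\), which is bounded (indeed vanishing) for Michaelis--Menten-type \(\mathcal{R}\). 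But the hypothesis on \(\mathcal{D}\) works \emph{against} you: for the density-dependent decay of Definition~\ref{def:density_decay_operators} with, say, \(\mathcal{D}(x)=bx\), the operator \(\hat{\Gamma}_-(f)=\hat{a}(f)\,\mathcal{D}(\hat{N})\) has sector-wise norm of order \(N^{3/2}\), whereas \(\mathrm{d}\Gamma(\hat{\mathcal{G}}_{\mathrm{PDE}})\) is a sum of \(N\) single-lattice copies and controls only order \(N\). So \(\hat{\Gamma}_-\) is not relatively bounded by \(\mathrm{d}\Gamma(\hat{\mathcal{G}}_{\mathrm{PDE}})\) with any finite relative bound, and Kato--Rellich cannot be applied in the form you state. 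You have conflated two distinct facts: that a rapidly growing \(\mathcal{D}\) is \emph{stabilizing} (it contributes with the correct sign to \(\mathrm{Re}\,\langle\mathcal{G}_{\mathrm{Fock}}\Psi,\Psi\rangle\)) and that it is \emph{norm-small} relative to the free part (it is not). The repair is to abandon the norm-relative-bound route for the annihilation block and instead verify the Lumer--Phillips hypotheses directly on the finite-particle core: establish the dissipativity inequality for the full quadratic form, where the \(\hat{\Gamma}_-\) contribution is manifestly nonpositive, and then check the range condition \(\mathrm{Ran}(\lambda - \mathcal{G}_{\mathrm{Fock}}) \) dense for some \(\lambda>0\) sector by sector, which is where the boundedness of \(\hat{\Gamma}_+\) (courtesy of \(\mathcal{R}\)) does the work. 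With that substitution the rest of your argument, including the closure step and the projection onto basis elements to recover items (i)--(iii), goes through.
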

\medskip
\noindent
\textbf{Interpretation and Advantages:}
\begin{enumerate}[label=\textbf{(\arabic*)}]
  \item \emph{Comprehensive State Description:}  
    The wavefunction \(\ket{\Psi(t)}\) gathers \emph{all} dynamic aspects: occupant number, arrangement states, and PDE modes (for each lattice). No separate ``bookkeeping’’ is needed to track replication, clearance, or rearrangements—they emerge naturally from creation/annihilation and Markov transitions in the operator formalism.
  
  \item \emph{Scalable to Large Populations:}  
    Even if millions of \(8\times8\) lattices co‐exist, the direct‐sum/tensor‐product structure of Fock space systematically handles new occupant factors. This prevents repeated ad hoc expansions of a fixed‐\(N\) model, maintaining mathematical and simulational tractability.
  
  \item \emph{Connection to Orbits and Mild Solutions:}  
    At each fixed \(t\), \(\ket{\Psi(t)}\) is a snapshot of the full viral population’s microscopic state; over time, the path \(t \mapsto \ket{\Psi(t)}\) yields the orbit \(\Theta(t)\) in \(\mathcal{F}\bigl(\mathcal{H}_{\mathrm{lat}}\bigr)\). Dissipative PDE segments and bounded creation rates ensure no uncontrolled blow‐ups occur, making the system \emph{physically plausible} (e.g., it is not feasible for occupant numbers to become infinite in finite time).
  
  \item \emph{Novelty of the Approach:}  
    By combining second quantization (for occupant dynamics) with an $m$‐sectorial PDE description of each lattice’s internal degrees of freedom, we achieve a single operator‐algebraic model. This unifies continuum mechanics, discrete Markov transitions, and population‐scale replication within one framework—something rarely accomplished in classical virology modeling.
\end{enumerate}

\subsection{Path Integral and Probability Weights with Action Functionals}
\label{subsec:path_integrals_and_actions}

\noindent
In the preceding sections, we introduced \emph{Markovian jumps} to capture discrete replication, degradation, and conformational switches within a multi‐lattice (Fock‐space) framework. These jumps dovetail with \emph{continuous} PDE dynamics, describing wave‐like displacements or damping within each viral lattice. To further unify these perspectives—and to handle the \emph{stochastic} nature of real infection scenarios—we now turn to \emph{path integrals}. Drawing inspiration from Feynman’s reformulation of quantum mechanics, path integrals offer a way to represent system evolution not merely by forces or direct PDE solutions, but by \emph{summing over} (or integrating over) all possible trajectories the system could follow, weighting each by an \emph{action functional}.  

\smallskip
\noindent
\begin{definition}[Action Functionals in Viral Lattice Theory.]
In classical mechanics, the action functional encodes the net “cost” or “phase” associated with a trajectory in configuration space. In the present \emph{viral lattice} context, our action functional can integrate both deterministic PDE forces (e.g., elastic restoring forces, damping) and stochastic jump processes (e.g., replication or decay events). By switching to this action‐based perspective, we can tackle complex problems—such as analyzing how small perturbations spread through a high‐dimensional lattice or how random fluctuations influence viral replication rates—without needing to solve the PDE or Markov processes in a purely step‐by‐step manner. Instead, the path integral naturally \emph{weighs} each possible trajectory by an exponential of minus the action (or a suitably generalized operator analog).  
\end{definition}
\smallskip
\noindent
\begin{definition}[Probability Weights and Piecewise‐Continuous Orbits.]
Previously, we introduced the concept of an \emph{orbit} as a trajectory in the PDE–Fock space that can experience Markovian jumps. Here, we formalize that concept by specifying the path space, the cylinder sets, and the action functional. Each orbit thus represents a potential realization of the viral system over time, capturing how mechanical displacements, resource‐limited replication, and discrete conformational switches intertwine. By assigning an action $\mathcal{S}[\Theta]$, we systematically determine which orbits dominate the overall dynamics, providing a practical and conceptual bridge between microscopic model details (forces, PDEs, replication rules) and macroscopic phenomena (population‐level growth, morphological patterns). 

When embedded in an operator‐theoretic or PDE‐Fock framework, each “path” corresponds to a \emph{piecewise‐continuous orbit} of the viral system over a time interval \([0, T]\). Such orbits may involve smooth deformation of the lattice (captured by PDE evolution) punctuated by discrete Markovian jumps (creation, annihilation, or internal arrangement shifts). \emph{Probability weights} then assign likelihoods to these orbits: typically, an orbit with “lower action” (i.e., a more dynamically likely path given the system’s constraints and resource levels) carries higher weight in the overall integral. 

This viewpoint blends a Feynman‐inspired summation over paths with the practical details of stochastic replication, resource dependence, and PDE‐governed viral motion. To make these ideas precise, we need to specify the path space, the action functional, and how the measure on this space is constructed. In \emph{viral lattice theory}, an \emph{orbit} represents one possible realization of the evolving multi‐lattice system in the presence of PDE dynamics and Markov jumps. Below, we formally define these orbits, the associated path space, and the measure that assigns each path a probability weight. 
\end{definition}

\begin{definition}[Path Integral and Probability Weights with Action Functionals]
\label{def:PathIntegralEnhanced}
\noindent
Let \(\Omega\) be an abstract probability space, \(\mathcal{F}\) its \(\sigma\)‐algebra, and \(\mathbb{P}\) a probability measure. We index a family of \emph{piecewise‐continuous solution trajectories} (or \emph{orbits}) by \(\omega \in \Omega\). For each \(\omega\), define
\begin{equation}
  \Theta(\omega,\cdot) : [0,T] \;\to\; \mathcal{F}\bigl(\mathcal{H}_{\mathrm{lat}}\bigr),
\end{equation}
where \(\mathcal{F}\bigl(\mathcal{H}_{\mathrm{lat}}\bigr)\) is the multi‐lattice Fock space hosting potentially unbounded populations, each with its own PDE‐based dynamics and Markov jumps. The map \(\Theta(\omega,t)\) is assumed to be \emph{piecewise‐continuous}, meaning that between jump events, it obeys smooth PDE evolution, and at jump instants, it transitions discontinuously due to replication or annihilation.
\end{definition}
\smallskip

\noindent
\begin{definition}[Path Space.]  
The \emph{path space} is defined as:
\(
  \mathrm{Path}\bigl(\mathcal{F}(\mathcal{H}_{\mathrm{lat}})\bigr)
  =
  \Bigl\{
    \gamma \mid \gamma : [0,T] \to \mathcal{F}(\mathcal{H}_{\mathrm{lat}}),\; \gamma \text{ is piecewise-continuous under the PDE-Markov-Fock evolution}
  \Bigr\}.
\)
We endow this space with a \emph{path-space measure} \(\mu\), constructed through standard cylinder-set approximations.
\(
  \Delta := \{0 = t_0 < t_1 < \dots < t_n = T\},
\)
define the \emph{cylinder set}
\(
  C_{\Delta;\,\boldsymbol{\alpha}}
  \;=\;
  \Bigl\{\,
    \Theta(\cdot) \,\Bigm|\,
    \Theta(t_j)\in U_j \,\text{ for }j=0,1,\dots,n
  \Bigr\},
\)
where each \(U_j \subset \mathcal{F}\bigl(\mathcal{H}_{\mathrm{lat}}\bigr)\) is a measurable set, and \(\boldsymbol{\alpha}=\{(U_0,t_0),\dots,(U_n,t_n)\}\) records these constraints.
\end{definition}
\noindent
\begin{definition}[Action Functional and Probability Weights.]  
An \emph{action functional} \(\mathcal{S}[\Theta]\) assigns a (real‐valued) quantity to each path \(\Theta(\cdot)\), reflecting the dynamical “cost” or “phase” of that orbit. In many settings, one might choose
\begin{equation}
  \mathcal{S}[\Theta]
  \;=\;
  \int_{0}^{T}\mathcal{L}\bigl(\Theta(t),\,\dot{\Theta}(t)\bigr)\,dt 
  \;+\;
  \sum_{\substack{\text{jump events}}} \text{(jump penalty/benefit)},
\end{equation}
where \(\mathcal{L}\) is a Lagrangian encoding PDE forces and damping, while jump events contribute discrete modifications to the action. The associated probability weight is typically proportional to 
\(\exp\bigl(-\mathcal{S}[\Theta]\bigr)\)
(or a suitable operator analog), such that lower‐action paths have higher probability. Formally, we define
\begin{equation}
  \mu\bigl(C_{\Delta;\,\boldsymbol{\alpha}}\bigr)
  \;\propto\;
  \int_{C_{\Delta;\,\boldsymbol{\alpha}}}
  e^{-\mathcal{S}[\Theta]}
  \;\mathcal{D}\Theta,
\end{equation}
where \(\mathcal{D}\Theta\) is a notional measure element in path space. In biological or epidemiological contexts, \(\mathcal{S}\) can incorporate resource constraints, immune interactions, and mechanical tension in the capsid, thus weighting orbits by how ``energetically favored'' or ``probabilistically likely'' they are. Hence, the \emph{Path Integral and Probability Weights with Action Functionals} unifies PDE‐based continuum evolution, Markov jump transitions, and operator algebra in \emph{viral lattice theory}, generalizing the notion of summing over all possible states to summing over all possible \emph{paths} through state space. This construction mirrors Feynman’s approach, shifting from direct force‐ or vector‐based descriptions to an action/phase perspective, while retaining the complexities of replication, damping, and unbounded lattice populations.
\end{definition}
\subsubsection{Legendre Transformation from Hamiltonian to Lagrangian}

\noindent
In classical mechanics, \emph{Hamiltonians} and \emph{Lagrangians} provide two complementary ways of describing a system’s dynamics. The \textbf{Hamiltonian} \(H\) is typically expressed in terms of positions (or generalized coordinates) and their \emph{canonical momenta}, while the \textbf{Lagrangian} \(L\) is written in terms of positions and \emph{velocities}. The procedure that links these two pictures is the \emph{Legendre transformation}, which converts the Hamiltonian form \(H(\{u_f\},\{p_f\})\) into the Lagrangian form \(L(\{u_f\},\{\dot{u}_f\})\). Below, we clarify these concepts and perform the transformation step by step for a simplified viral lattice Hamiltonian.

\bigskip

\noindent
\textbf{Hamiltonian vs.\ Lagrangian in Classical (Viral) Mechanics.}
\begin{itemize}
    \item \textbf{Hamiltonian:} 
    Denoted by \(H\), this formalism represents the total energy of the system in terms of generalized coordinates (here, \(\{u_f\}\) for each mode \(f\)) and their \emph{canonical momenta} \(\{p_f\}\). For a system of normal modes, each mode \(f\) has an associated coordinate \(u_f\) and momentum \(p_f\). The Hamiltonian is often written as
    \begin{equation}
      H\bigl(\{u_f\},\{p_f\}\bigr)
      \;=\;
      \sum_{f}
      \left(
        \frac{p_f^2}{2\,m_f}
        \;+\;
        \frac{1}{2}\,\mathcal{S}_f\,u_f^2
      \right),
    \end{equation}
    reflecting the total energy (kinetic + potential) for the viral lattice modes.

    \item \textbf{Lagrangian:} 
    Denoted by \(L\), this formalism encodes the difference between \emph{kinetic energy} and \emph{potential energy} in terms of generalized coordinates \(\{u_f\}\) and \emph{velocities} \(\{\dot{u}_f\}\). A typical Lagrangian for a set of harmonic oscillators (or normal modes) has the form
    \begin{equation}
      L\bigl(\{u_f\},\{\dot{u}_f\}\bigr)
      \;=\;
      \sum_{f}
      \left(
        \tfrac{1}{2}\,m_f\,\dot{u}_f^2
        \;-\;
        \tfrac{1}{2}\,\mathcal{S}_f\,u_f^2
      \right).
    \end{equation}
    In viral lattice theory, each mode \(f\) describes a \emph{collective deformation} or vibration of the capsid. The Lagrangian approach is especially convenient for deriving the \emph{equations of motion} and for performing path integrals, where one integrates \(\exp(i S)\) (in quantum contexts) or \(\exp(- S)\) (in certain stochastic/thermal contexts), with the action \(S = \int L\,dt\).
\end{itemize}
\noindent
\begin{definition}[Legendre Transformation.]
\smallskip

\noindent
Given a Lagrangian \(L\bigl(\{u_f\},\{\dot{u}_f\}\bigr)\), one defines the \emph{canonical momentum} for each mode \(f\) as
\begin{equation}
  p_f
  \;:=\;
  \frac{\partial L}{\partial \dot{u}_f}.
\end{equation}
The \textbf{Hamiltonian} is then obtained by performing a \emph{Legendre transform} of \(L\) with respect to the velocities:
\begin{equation}
  H\bigl(\{u_f\},\{p_f\}\bigr)
  \;=\;
  \sum_{f} \bigl(p_f\,\dot{u}_f\bigr)
  \;-\;
  L\bigl(\{u_f\},\{\dot{u}_f\}\bigr).
\end{equation}
In many standard mechanical systems (e.g., uncoupled harmonic oscillators), one inverts \(p_f = m_f \dot{u}_f\) to express \(\dot{u}_f\) in terms of \(p_f\) when switching to the Hamiltonian viewpoint.
\noindent
In \emph{viral lattice theory} \cite{StKleess2025}, we introduced a simplified normal‐mode Hamiltonian:
\begin{equation}
  \hat{H}
  \;\approx\;
  \sum_f
  \Bigl(
     \frac{\hat{p}_f^2}{2\,m_f}
     \;+\;
     \tfrac{1}{2}\,\mathcal{S}_f\,\hat{u}_f^2
  \Bigr),
\end{equation}
where \(\hat{u}_f\) is the generalized coordinate (displacement amplitude) for mode \(f\), \(\hat{p}_f\) is its corresponding canonical momentum, \(m_f\) is the effective mass, and \(\mathcal{S}_f\) is the mode‐specific \emph{self‐stiffness}. To obtain the Lagrangian form \(\hat{L}\) (suitable for path integrals or classical‐like variational arguments), we \emph{reverse} the usual Hamiltonian‐to‐Lagrangian relationship:

\begin{enumerate}[label=\textbf{(\arabic*)}]
  \item \emph{Identify the Canonical Momentum-Velocity Relation.}  
  From the standard mechanics of a single harmonic oscillator, we know that
  \begin{equation}
    p_f
    \;=\;
    m_f \,\dot{u}_f
    \quad\Longleftrightarrow\quad
    \dot{u}_f
    \;=\;
    \frac{p_f}{m_f}.
  \end{equation}
  In the viral lattice Hamiltonian, this correspondence remains unchanged for each normal mode \(f\).

  \item \emph{Express the Kinetic and Potential Energies in Velocity Form.}  
  Substitute \(p_f = m_f \dot{u}_f\) into the kinetic term:
  \begin{equation}
    \frac{p_f^2}{2\,m_f}
    \;=\;
    \frac{(m_f\,\dot{u}_f)^2}{2\,m_f}
    \;=\;
    \frac{1}{2}\,m_f\,\dot{u}_f^2.
  \end{equation}
  The potential energy part \(\tfrac{1}{2}\,\mathcal{S}_f\,u_f^2\) remains as is.

  \item \emph{Construct the Lagrangian \(\hat{L}\).}  
  Recall that for a set of normal modes,
 \begin{equation}
    L
    \;=\;
    \sum_f \Bigl[
      \text{Kinetic Energy}_f
      \;-\;
      \text{Potential Energy}_f
    \Bigr].
  \end{equation}
  Therefore,
  \begin{equation}
    \hat{L}\bigl(\{u_f\},\{\dot{u}_f\}\bigr)
    \;=\;
    \sum_f
    \left(
      \frac{1}{2}\,m_f\,\dot{u}_f^2
      \;-\;
      \frac{1}{2}\,\mathcal{S}_f\,u_f^2
    \right).
  \end{equation}
  This is precisely the Lagrangian that yields the Hamiltonian \(\hat{H}\) above under a Legendre transform.

  \item \emph{Check the Equivalence.}  
  If we perform the Legendre transform on \(\hat{L}\):
  \begin{equation}
    p_f
    \;=\;
    \frac{\partial \hat{L}}{\partial \dot{u}_f}
    \;=\;
    m_f\,\dot{u}_f
    \quad\Longrightarrow\quad
    \dot{u}_f
    \;=\;
    \frac{p_f}{m_f}.
  \end{equation}
  Then
  \begin{equation}
    H
    \;=\;
    \sum_f
    p_f\,\dot{u}_f
    \;-\;
    L
    \;=\;
    \sum_f
    \Bigl(
      m_f \dot{u}_f^2
      \;-\;
      \frac{1}{2} m_f \dot{u}_f^2
      \;+\;
      \frac{1}{2}\,\mathcal{S}_f\,u_f^2
    \Bigr)
    \;=\;
    \sum_f
    \left(
      \frac{1}{2} m_f \dot{u}_f^2
      \;+\;
      \frac{1}{2}\,\mathcal{S}_f\,u_f^2
    \right),
  \end{equation}
  which matches the original \(\hat{H}\) upon substituting \(\dot{u}_f = p_f / m_f\). Thus, \(\hat{L}\) and \(\hat{H}\) are Legendre transforms of each other, confirming the consistency of the procedure.
\end{enumerate}

\noindent
We conclude that the \emph{Lagrangian} for viral lattice dynamics, corresponding to the Hamiltonian, is given by:
\begin{equation}
  \hat{H}
  \;=\;
  \sum_{f}
  \Bigl(
    \frac{\hat{p}_f^2}{2\,m_f}
    \;+\;
    \frac{1}{2}\,\mathcal{S}_f\,\hat{u}_f^2
  \Bigr)
\end{equation}
is
\begin{equation}
  \hat{L}
  \;=\;
  \sum_{f}
  \Bigl(
    \frac{1}{2}\,m_f\,\dot{u}_f^2
    \;-\;
    \frac{1}{2}\,\mathcal{S}_f\,u_f^2
  \Bigr).
\end{equation}
This form is directly applicable to \emph{path integral} treatments in viral lattice theory, where one integrates
\(\exp\bigl\{i \int \hat{L}\,dt\bigr\}\) in a purely quantum setting, or a suitably \emph{Wick‐rotated} version \(\exp\bigl\{- \int \hat{L}\,dt\bigr\}\) in certain thermal or dissipative formulations). Though, in most cases, the same dissipative properties of a Wick-Rotation will naturally be applied here, due to the complex damping inherent to the Generator \(\hat{G}\). In both cases, the Lagrangian form simplifies the computation of \emph{action} and \emph{phase}, allowing one to systematically explore how local stiffness, effective mass, and normal modes shape the virus’s mechanical and vibrational behavior, whether in a single viral particle or across a larger lattice assembly.
\end{definition}
\subsection{Feynman–Kac Action Functional with Resource‐Limited Markov Jumps}
\label{subsec:FeynmanKacResource}

\noindent
When extending viral lattice theory to realistically account for noise and random events (e.g., resource fluctuations, immune pressures), one must accommodate both \emph{continuous} PDE dynamics and \emph{discrete} Markov jumps. Below, we introduce a Feynman–Kac‐style construction that melds these processes via an \emph{action functional}. Drawing upon methods from stochastic analysis and statistical physics (see, e.g., \cite{Gardiner2009,VanKampen1992}), this framework captures how the viral lattice evolves through both wave‐like elasticity/damping and sudden occupant changes (creation or annihilation of lattices).

By weighting paths with 
\(\exp\bigl(-\mathcal{I}[\Theta(\cdot)]\bigr)\) 
(or a similar functional), one obtains a probability measure favoring trajectories that stay near the deterministic PDE flow unless noise or replication triggers notable deviations. The continuous part of \(\Theta\) can be associated with an \emph{elastic‐vibrational Lagrangian} for the viral lattice’s mechanical modes, including stiff or soft capsid configurations, while the discrete jumps accommodate replication bursts, conformational switches, or immune‐mediated clearances. Thus, the entire infection or population‐level picture emerges from a \emph{sum over} these piecewise‐continuous orbits, each weighted by an action that penalizes large deviations from typical dynamics or improbable jump sequences. 

This \emph{Feynman–Kac}‐style construction unifies the PDE perspective, Lagrangian mechanics, and stochastic jump processes within a single operator‐algebraic framework, offering a tractable path to computing statistical properties, such as expected viral loads or distributions of capsid configurations, under realistic noise and resource constraints.

\noindent
Much of the complexity in viral dynamics arises from a hybrid interplay of \emph{continuous} processes—such as elastic wave propagation in capsids or PDE‐based diffusion of virion subunits—and \emph{discrete} events like replication or immune‐mediated clearance. Capturing both seamlessly in a single mathematical framework is challenging, yet crucial for realistic modeling. In particular, Markovian “jumps” from one occupant level (lattice count) to another illustrate how viruses expand or contract their populations in discrete steps. The definition below combines a \emph{Feynman–Kac}‐type path integral (originally developed for quantum mechanical and stochastic PDE contexts) with resource‐limited creation/annihilation events, bridging continuum PDE evolution and discrete “birth‐death” processes in a many‐lattice (Fock) setting.
\begin{definition}[Orbits in the Hybrid PDE–Markov–Fock Setting]
\label{def:OrbitsHybridSetting}
An \emph{orbit} (or \emph{trajectory}) is a piecewise‐continuous map
\begin{equation}
  \label{eq:OrbitMap}
  \Theta \,\colon\, [0,T] 
  \;\longrightarrow\; 
  \mathcal{F}\bigl(\mathcal{H}_{\mathrm{lat}}\bigr),
\end{equation}
that satisfies a \emph{hybrid} set of evolution laws—combining \emph{PDE}, \emph{Markov}, and \emph{second‐quantized} rules (in either a mild or strong sense; see \cite{DaPratoZabczyk1992,Pazy1983} for infinite‐dimensional semigroup theory). Concretely, \(\Theta(t)\) tracks:
\begin{enumerate}[label=(\roman*)]
  \item \emph{Occupant Number} \(N(t)\): The (time‐varying) count of \(8\times 8\) viral assemblies, each regarded as one “lattice” in \(\mathcal{H}_{\mathrm{lat}}\).
  \item \emph{PDE States of Each Lattice}: Continuous intra‐lattice dynamics such as elastic waves, diffusion of molecular subunits, or partial capsid deformations.
  \item \emph{Markovian Jumps}: Stochastic events that switch conformational classes or occupant indices—e.g., morphological flips, resource‐limited replication bursts, or immune clearance.
\end{enumerate}
\end{definition}
\noindent
This \emph{orbit} construction provides a vivid picture of how populations of virions (each modeled as a lattice) evolve over time:
\begin{itemize}
  \item \emph{Large‐Scale Infection Dynamics:} The occupant number \(N(t)\) can surge (via creation events) or crash (via annihilation), capturing real‐world replication bursts and immune‐mediated clearance.
  \item \emph{Morphological Heterogeneity:} Within each occupant, the \(\mathcal{H}_{\mathrm{arr}}\) sector accommodates conformational states, enabling partial or full capsid transformations, genome packaging steps, or protein subunit rearrangements.
  \item \emph{Continuous/Mechanical Effects:} The PDE portion of \(\mathcal{H}_{\mathrm{lat}}\) encodes wave‐like, viscoelastic, or resource‐distribution phenomena, ensuring that mechanical or thermal fluctuations impact how each lattice shifts over time.
\end{itemize}
Even for very large viral loads (e.g.\ $10^6$-$10^9$ virions), the formalism grants a \emph{single global wavefunction} in \(\mathcal{F}(\mathcal{H}_{\mathrm{lat}})\). Each basis vector corresponds to a tensor product of lattice states, with occupant numbers ranging up to $N\gg 1$. This unifies micro-level PDE dynamics (within each lattice) with macro-level population change. Because creation and annihilation processes are allowed, \emph{closed loops} in arrangement-occupant space can emerge, representing recurring replicative cycles or morphological transformations. Mathematically, these loops appear as \emph{nontrivial} elements in the homology of the infinite-dimensional configuration space, indicative of persistent probability flux or \emph{non-equilibrium steady states}. Ensuring that each lattice PDE generator is $m$-sectorial, and that the Markov/creation/annihilation rates are bounded or relatively bounded, allows us to apply semigroup theory. This guarantees \emph{existence and uniqueness} of orbits $\Theta(\cdot)$, no finite-time blow-up, and continuous dependence on initial data, crucial for biologically realistic scenarios.

Hence,~\ref{def:closedLoopFock} establishes the core \emph{hybrid PDE-Markov-Fock} formalism that unifies local continuum dynamics on each viral lattice with discrete Markovian transitions and occupant-number fluctuations in a single operator framework.  Building on this structure, we now prove a fundamental result guaranteeing \emph{existence and uniqueness} of trajectories within this infinite-dimensional setting.

\begin{theorem}[Existence and Uniqueness of Hybrid Orbits]
\label{thm:ExistenceUniqueness_Orbits}
\noindent
Suppose the \emph{single‐lattice generator} 
\(\widehat{\mathcal{G}}_{\mathrm{lat}} \colon \mathcal{H}_{\mathrm{lat}} \to \mathcal{H}_{\mathrm{lat}}\)
includes:
\begin{enumerate}[label=\textbf{(\arabic*)}]
  \item \emph{$m$‐Sectorial PDE Operator \(\widehat{\mathcal{L}}_{\mathrm{PDE}}\):}  
        Acting on \(\mathcal{H}_{\mathrm{PDE}}\), this operator encodes the continuum dynamics of an \(8\times 8\) viral lattice (e.g., wave propagation, damping, or elastic coupling among virion nodes). $m$‐sectoriality ensures well‐posedness even in infinite‐dimensional spaces \cite{Pazy1983}.

  \item \emph{Finite‐Rate Markov Operator \(\widehat{M}\):}  
        Defined on \(\mathcal{H}_{\mathrm{arr}}\), this operator governs discrete morphological or occupant‐number transitions (e.g., partial capsid reorganizations, microstate flips). Finite rates guarantee that jumps occur at a bounded frequency, conforming with typical birth‐death or Markov processes in virology.

  \item \emph{(Optionally) Local Lipschitz Creation/Annihilation Processes:}  
        These processes extend the single‐lattice model to a \emph{multi‐lattice} Fock space setting, allowing entire viral lattices to be \emph{created} (replication events) or \emph{annihilated} (immune clearance or spontaneous degradation). Local Lipschitz bounds on these operators control population surges and declines, preserving dissipativity.
\end{enumerate}

If the sectoriality, dissipativity, and Lipschitz conditions (cf.\ \cite{Pazy1983,DaPratoZabczyk1992}) are satisfied, then for any initial condition 
\(\Theta(0) = \Theta_0 \in \mathcal{F}\bigl(\mathcal{H}_{\mathrm{lat}}\bigr)\),
there exists a \textbf{unique mild solution}
\(
  \Theta(\cdot) \,\colon\, [0,T] \;\longrightarrow\; \mathcal{F}\bigl(\mathcal{H}_{\mathrm{lat}}\bigr)
\)
to the \emph{hybrid PDE–Markov–Fock evolution equation} over each finite time interval \(T > 0\). This solution seamlessly merges:
\begin{itemize}
  \item \emph{Smooth (Intra‐lattice) PDE Dynamics:}  
  Within each occupant, \(\widehat{\mathcal{L}}_{\mathrm{PDE}}\) yields deterministic propagation of elastic or wave‐based modes, ensuring continuous trajectories in \(\mathcal{H}_{\mathrm{PDE}}\).
  \item \emph{Discrete Markov + Creation/Annihilation Transitions:}  
  At random jump times, \(\widehat{M}\) changes the arrangement microstates, while creation/annihilation operators modify the lattice count. These events are superimposed onto the continuous PDE evolution, enabling realistic modeling of viral replication, morphological flips, and clearance.
\end{itemize}
This theorem ensures that one can rigorously unify:
\begin{enumerate}[label=\textbf{(\arabic*)}]
  \item \emph{Biophysical PDE Modelling} of virion subunit displacement or resource flow within a single lattice,
  \item \emph{Stochastic Markov Jumps} capturing conformational changes or occupant‐number shifts,
  \item \emph{Population‐Scale Processes} such as replication bursts or immune‐mediated annihilation, embedded via second‐quantized operators.
\end{enumerate}
The result provides the formal backbone for analyzing how viral lattices evolve collectively over time. In turn, it legitimizes using advanced functional‐analytic tools (e.g., semigroup theory, spectral expansions) to explore \emph{long‐term} infection patterns, non‐equilibrium steady states, or rare event probabilities. Moreover, it underpins the numerical and analytic tractability of such hybrid models, ensuring that each trajectory—no matter how large or small the population—follows a well‐defined path in the infinite‐dimensional Fock space. This offers a powerful lens for virologists aiming to simulate, predict, or control large‐scale infection processes where both continuous and discrete dynamics play critical roles. 
\end{theorem}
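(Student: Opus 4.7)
The plan is to assemble the full evolution generator on $\mathcal{F}(\mathcal{H}_{\mathrm{lat}})$ in three layers, then invoke nonlinear semigroup theory to obtain the mild solution. First, I would work at the single‐lattice level: on $\mathcal{H}_{\mathrm{lat}} = \mathcal{H}_{\mathrm{arr}} \otimes \mathcal{H}_{\mathrm{PDE}}$, write $\widehat{\mathcal{G}}_{\mathrm{lat}} = \widehat{\mathcal{L}}_{\mathrm{PDE}} \otimes I_{\mathrm{arr}} + I_{\mathrm{PDE}} \otimes \widehat{M}$ on the natural tensor‐product domain. Since $\widehat{\mathcal{L}}_{\mathrm{PDE}}$ is $m$‐sectorial by hypothesis and $\widehat{M}$ is bounded (finite‐rate Markov), $\widehat{M}$ acts as a bounded perturbation of an $m$‐sectorial operator, so standard perturbation results (Kato's theorem, \cite{Kato1980}) preserve $m$‐sectoriality of the sum. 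This secures a strongly continuous contraction (or quasi‐contraction) semigroup $\{e^{t\widehat{\mathcal{G}}_{\mathrm{lat}}}\}_{t\ge 0}$ on $\mathcal{H}_{\mathrm{lat}}$ via Lumer–Phillips.

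Next, I would lift to Fock space using precisely the construction from the Proof of Sectorial Extension to Fock Space already established in the excerpt: define $\widehat{\mathcal{G}}_{\mathrm{free}} = \mathrm{d}\Gamma(\widehat{\mathcal{G}}_{\mathrm{lat}})$ on $\mathcal{F}_{\pm}(\mathcal{H}_{\mathrm{lat}})$, which inherits $m$‐sectoriality sector‐by‐sector, hence generates a strongly continuous semigroup $\{e^{t\widehat{\mathcal{G}}_{\mathrm{free}}}\}_{t\ge 0}$. The creation/annihilation contributions (replication $\hat{\Gamma}_+$ with resource saturation $\mathcal{R}(\hat{N})$ and decay $\hat{\Gamma}_-$ with density enhancement $\mathcal{D}(\hat{N})$) are then treated as a nonlinear perturbation $F : \mathcal{F}_{\pm}(\mathcal{H}_{\mathrm{lat}}) \to \mathcal{F}_{\pm}(\mathcal{H}_{\mathrm{lat}})$, and the mild formulation becomes
\begin{equation}
    \Theta(t) \;=\; e^{t\widehat{\mathcal{G}}_{\mathrm{free}}}\,\Theta_0 \;+\; \int_0^t e^{(t-s)\widehat{\mathcal{G}}_{\mathrm{free}}}\,F\bigl(\Theta(s)\bigr)\,ds.
\end{equation}
Local Lipschitz continuity of $F$ on bounded sets of $\mathcal{F}_{\pm}(\mathcal{H}_{\mathrm{lat}})$ follows from the boundedness of $\mathcal{R}$ and, for $\mathcal{D}$, from the a priori bound on $\hat{N}$ provided by the dissipativity of $\widehat{\mathcal{G}}_{\mathrm{free}}$ restricted to invariant subspaces of finite occupant moments.

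With this setup, existence and uniqueness on a maximal interval $[0,T_{\max})$ follow from a standard Banach fixed‐point argument (Picard iteration) applied to the integral equation in $C([0,\tau];\mathcal{F}_{\pm}(\mathcal{H}_{\mathrm{lat}}))$ for $\tau$ small enough depending on the Lipschitz constant of $F$ on a ball around $\Theta_0$; see \cite{Pazy1983}, Chapter 6. To extend to arbitrary $T > 0$, I would use the dissipativity/saturation estimates from Theorem~\ref{thm:logistic_population_saturation}: the logistic‐type bound on $\langle\hat{N}(t)\rangle$ together with the $m$‐sectorial decay of $\widehat{\mathcal{G}}_{\mathrm{free}}$ precludes finite‐time blow‐up of $\|\Theta(t)\|_{\mathcal{F}_{\pm}}$, so the solution extends globally and depends continuously on $\Theta_0$. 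The resulting piecewise‐continuous structure (smooth PDE evolution interspersed with Markov and creation/annihilation jumps) is then recovered by identifying the discrete spectral support of the jump terms within the Fock decomposition.

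The main obstacle I anticipate is the interplay between the \emph{infinite‐dimensional} Fock expansion and the \emph{local} Lipschitz property of $F$: although each $N$‐sector is finite‐dimensional in occupancy label, the PDE degrees of freedom per sector are infinite‐dimensional, and the creation operator $\hat{a}^\dagger(f)$ is \emph{unbounded} (its norm on the $N$‐sector grows like $\sqrt{N+1}$). Controlling this growth requires working in a weighted Fock subspace such as $D(\hat{N}^{1/2})$ or $D(e^{\beta\hat{N}})$ for some $\beta>0$, demonstrating that $\mathcal{R}(\hat{N})$ sufficiently tames $\hat{a}^\dagger$ so that $F$ maps such a subspace into itself with a Lipschitz constant independent of $N$. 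The delicate step will be verifying that the semigroup $e^{t\widehat{\mathcal{G}}_{\mathrm{free}}}$ leaves this weighted subspace invariant and that the Duhamel integral converges in its stronger norm; once this is done, the Picard contraction closes and the remaining global extension, continuous dependence, and piecewise‐continuous regularity are routine consequences of standard semigroup theory \cite{Pazy1983,DaPratoZabczyk1992}.
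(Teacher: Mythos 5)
Your proposal is correct in substance and its first two layers (bounded Markov perturbation of the $m$-sectorial PDE operator on $\mathcal{H}_{\mathrm{lat}}$, then the $\mathrm{d}\Gamma$ lift to Fock space sector by sector) coincide with what the paper does in its sectorial-extension proof and in the later construction under Definition~\ref{def:UniquenessHybrid}. Where you genuinely diverge is the treatment of the occupant-changing terms: the paper folds $\hat{\Gamma}_\pm$ into the generator as \emph{relatively bounded linear perturbations}, so that the full $\mathcal{G}_{\mathrm{Fock}}$ is itself $m$-sectorial and the mild solution is simply $\Theta(t)=e^{t\mathcal{G}_{\mathrm{Fock}}}\Theta_0$, with uniqueness obtained by a Gronwall estimate on the difference of two mild solutions; you instead keep them outside the semigroup as a locally Lipschitz Duhamel term $F$ and run Picard iteration, extending globally via the logistic occupancy bound. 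Both routes close, and yours buys something the paper leaves implicit: you correctly flag that $\hat{a}^\dagger(f)$ grows like $\sqrt{N+1}$ on the $N$-sector, which is exactly the gap the paper's bare assertion of ``relative boundedness'' papers over. For the saturable creation operator the resolution is cheaper than your weighted-subspace machinery suggests, since with the Michaelis--Menten choice $\mathcal{R}(x)=(1+mx)^{-1}$ the composite $\hat{a}^\dagger(f)\,\mathcal{R}(\hat{N})$ has sector norms $\sim \mathcal{R}(N)\sqrt{N+1}\to 0$ and is therefore genuinely bounded on all of $\mathcal{F}_\pm(\mathcal{H}_{\mathrm{lat}})$; however, your heavier apparatus (domains such as $D(\hat{N}^{1/2})$ and invariance of weighted subspaces under $e^{t\widehat{\mathcal{G}}_{\mathrm{free}}}$) is actually \emph{needed} for the density-dependent annihilation $\hat{a}(f)\,\mathcal{D}(\hat{N})$ with $\mathcal{D}(x)=bx$, which remains unbounded after composition and which the paper's linear-perturbation argument does not honestly cover. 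One small terminological caution: $\hat{a}^\dagger(f)\mathcal{R}(\hat{N})$ is still a linear operator on Fock space (``nonlinear in $\hat{N}$'' only in the sense of a state-dependent coefficient), so calling $F$ a nonlinear perturbation is a mild mislabeling, though it does not affect the validity of the fixed-point argument.
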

\begin{definition}[Feynman–Kac Construction with Resource Limits]
\label{def:FeynmanKacActionResource}
\noindent
Let \(\Theta(\cdot)\) denote a path in the multi‐lattice Fock space, embodying both \emph{continuous} PDE evolution of each viral lattice and \emph{discrete} Markovian jumps in occupant number (i.e., creation or annihilation of entire lattices). We define an \emph{action functional}
\begin{equation}
  \mathcal{I}\bigl[\Theta(\cdot)\bigr]
  \;=\;
  \underbrace{\int_{0}^{T}
    \hat{L}\Bigl(
      \Theta(t),\,\dot{\Theta}(t)
    \Bigr)\,dt}_{\text{SPDE (Cameron--Martin--Girsanov) cost}}
  \;+\;
  \underbrace{\sum_{t_j \in \mathcal{J}(\Theta(\cdot))}
    \Psi\!\Bigl(
      \Theta\bigl(t_j^-\bigr),\,\Theta\bigl(t_j^+\bigr)
    \Bigr)}_{\text{Markov jumps + creation/annihilation cost}},
\end{equation}
where:
\begin{itemize}
  \item \(\hat{L}\bigl(\Theta(t),\dot{\Theta}(t)\bigr)\) is the \emph{Lagrangian} governing the continuous PDE (or SPDE) evolution. In an SPDE with multiplicative noise \(\sigma(\mathbf{u})\,dW\), a Cameron–Martin–Girsanov‐style term might appear,
  \begin{equation}
    \hat{L}\bigl(\mathbf{u}, \dot{\mathbf{u}}\bigr)
    \;=\;
    \tfrac12 
    \Bigl\|
      \sigma(\mathbf{u})^{-1}
      \bigl[
        \dot{\mathbf{u}}
        \;-\;
        \mathcal{G}\bigl(\mathbf{u}\bigr)
      \bigr]
    \Bigr\|^2,
  \end{equation}
  where \(\mathcal{G}\) is the PDE generator encoding elasticity, damping, or wave dispersion. Physically, \(\hat{L}\) measures how much the actual path \(\dot{\mathbf{u}}(t)\) deviates from the deterministic flow \(\mathcal{G}(\mathbf{u}(t))\).

  \item \(\mathcal{J}(\Theta(\cdot))\) is the set of \emph{jump times}, when occupant changes (creation/annihilation) or arrangement flips occur. The jump‐cost function \(\Psi\bigl(\cdot,\cdot\bigr)\) represents a “penalty” or “reward” for these discrete transitions. For instance, creation events might incur a positive cost to reflect resource consumption, whereas clearance or immune‐mediated destruction might yield a negative cost (or an energy reduction).

  \item \emph{Resource limitation} enters through the amplitude of creation operators: \(\hat{\Gamma}_+(\varphi)\) scales with \(\mathcal{R}(\hat{N})\), where \(\hat{N}\) is the occupant number operator. If \(\mathcal{R}(N)\to 0\) as \(N\to\infty\), replication saturates, averting unbounded population explosions and respecting realistic biophysical constraints.
\end{itemize}
\end{definition}

\smallskip
\noindent
\begin{definition}[Markov Jumps in a Multi‐Lattice Setting.]  
Markovian jump processes~\cite{Anderson2015,EthierKurtz1986} typically assume a \emph{rate} function
\begin{equation}
  \lambda\bigl(\Theta(t)\bigr)
  \;=\;
  b\,
  p\bigl(\Theta(t)\bigr),
\end{equation}
where \(b\) is a baseline reaction (or replication) coefficient, and \(p(\cdot)\) encodes the “propensity” of the current state \(\Theta(t)\) to undergo a jump. In \emph{viral lattice theory}, this generalizes to occupant creation/annihilation, arrangement changes, or cross‐sector transitions within the many‐lattice Fock space. Concretely:
\begin{itemize}
  \item \(\mathbf{u}(t)\): The lattice’s geometric or mechanical state can affect clearance probability—e.g., a more stable arrangement might lower the chance of immune destruction.
  \item \(\hat{N}(t)\): The total occupant (lattice) count influences replication/decay rates through resource availability or crowding effects.
  \item \emph{Host‐level factors}: Such as local ATP supply or immune cells, may amplify or suppress jump rates.
\end{itemize}
Hence, \(\mathcal{J}(\Theta(\cdot))\) is dynamically determined by these rates: if \(\lambda(\Theta(t))\) is large, occupant changes occur frequently, reflecting a state with ample resources or minimal immune pressure. Conversely, \(\lambda(\Theta(t)) \approx 0\) in certain regimes indicates that occupant transitions are improbable—perhaps due to resource depletion or robust immune neutralization.

By coupling PDE dynamics (collective virion vibrations, mechanical stress) with discrete jumps (replication, clearance), we obtain a more complete picture of viral populations at mesoscopic scales. The saturable creation operator ensures replication is not merely an exponential free‐for‐all. Instead, it “knows” when resources are scarce and slows population growth automatically. By viewing the evolution through an action functional \(\mathcal{I}[\Theta(\cdot)]\) can help in formulating path‐integral approaches, enabling advanced techniques (e.g., importance sampling, large deviations, or rare‐event analysis) in a multi‐lattice virological context.
\end{definition}
\smallskip

\noindent
\textbf{Physical \& Geometric Interpretation.}
\begin{itemize}
  \item \emph{Fock Space Geometry:}  
  The system’s state resides in a Fock space that allows arbitrarily many viral lattices, each described by PDE modes. Resource‐dependent creation operators \(\hat{\Gamma}_+(f)\) move the system from an \(M\)‐lattice sector to \((M+1)\)‐lattice sector, but the amplitude is tempered by \(\mathcal{R}(M)\). Geometrically, one can view each sector as a manifold of PDE solutions, “stitched” together by jump processes that reflect discrete occupant changes.

  \item \emph{Action Minimization and Dominant Paths:}  
  Combining continuous PDE dynamics (\(\hat{L}\)) with Markov jumps (rate \(\mathcal{R}(M)\)) in a single action \(\mathcal{I}[\Theta]\) allows a \emph{Feynman–Kac} weighting of all possible infection trajectories—those with high occupant numbers become exponentially suppressed if resources are low. In the small‐noise limit, paths that “minimize” \(\mathcal{I}\) will dominate, reflecting near‐deterministic viral proliferation up to a resource‐imposed cap.

  \item \emph{Ensuring Biological Plausibility:}  
  By letting \(\mathcal{R}(M)\to 0\) as \(M\to\infty\), we embed self‐limiting growth directly into the operator formalism. Combined with the PDE description of each lattice’s mechanical state, this ensures that \emph{both} the population‐scale replication \emph{and} the single‐lattice physics remain realistic under finite resource assumptions. This coupling is crucial for bridging purely microscopic (capsid vibrations, structural integrity) and macroscopic (viral load, host environment) levels in a unified model.
\end{itemize}

\smallskip
\noindent
The Feynman–Kac construction with resource‐dependent jump rates, then, provides a coherent mathematical framework for describing viral lattice systems under noise, ensuring that population booms are tempered by finite supplies and that random fluctuations (thermal or otherwise) are consistently accounted for in both the continuous and discrete facets of the infection dynamics.

\subsubsection{Feynman–Kac Weighted Measure}
 
Up to this point, we have discussed how an \emph{action functional} (cf.\ Def.~\ref{def:FeynmanKacActionResource}) can be assigned to each trajectory \(\Theta(\cdot)\) in the viral lattice system, blending PDE dynamics, Markovian jumps, and resource‐limited occupant changes. The next natural step is to convert these action‐weighted trajectories into a \emph{probability measure} on path space. Such a measure enables us to calculate observables of interest—ranging from expected viral load to the likelihood of specific replication pathways—while systematically accounting for the interplay of elasticity, noise, and replication. The \emph{Feynman–Kac weighted measure} presented here does precisely that: it deforms a \emph{reference} measure \(\nu\) (representing the unperturbed PDE + Markov process) by an exponential of the action functional, producing a new measure \(\mu_\epsilon\). This approach generalizes classic Feynman–Kac ideas from quantum mechanics and stochastic analysis to our \emph{hybrid PDE + Fock + occupant‐change} setting in viral lattice theory.

\medskip
\begin{definition}[Feynman–Kac Weighted Measure]
\label{def:FeynmanKacWeightedMeasure}
\noindent
Consider a reference measure \(\nu\) on the path space 
\(\mathrm{Path}\!\bigl(\mathcal{F}(\mathcal{H}_{\mathrm{lat}})\bigr)\), 
characterizing \emph{unperturbed} or \(\sigma=0\) dynamics in the PDE–Markov–Fock framework.  
Let \(\{\mu_\epsilon\}_{\epsilon>0}\) be a family of measures defined via

\begin{equation}
\label{eq:FeynmanKacWeightedMeasure}
  \mu_\epsilon(d\Theta)
  \;=\;
  \frac{1}{Z_\epsilon}
  \exp\Bigl(
    -\,\frac1\epsilon\,\mathcal{I}\bigl[\Theta(\cdot)\bigr]
  \Bigr)\,\nu(d\Theta),
  \quad
  \text{where}
  \quad
  Z_\epsilon
  \;=\;
  \int_{\mathrm{Path}}
  \exp\!\Bigl(
    -\,\tfrac1\epsilon\,\mathcal{I}\bigl[\Theta(\cdot)\bigr]
  \Bigr)\,\nu(d\Theta).
\end{equation}
\noindent
Where:
\begin{itemize}
  \item \(\mathcal{I}\bigl[\Theta(\cdot)\bigr]\) is the action functional (cf.\ Def.~\ref{def:FeynmanKacActionResource})
 incorporating costs associated with both the continuous PDE evolution and the discrete Markov jumps (creation/annihilation, occupant transitions).
  \item \(\epsilon\) is a noise‐scaling parameter (e.g., \(\epsilon \to 0\) corresponds to a low‐temperature or small‐fluctuation regime). The exponential \(\exp\!\bigl[-\tfrac{1}{\epsilon}\,\mathcal{I}\bigr]\) biases the path measure toward trajectories of lower action.
  \item \(Z_\epsilon\) is the partition function (or normalizing constant) that ensures \(\mu_\epsilon\) is a probability measure.  
\end{itemize}
\end{definition}
\noindent
\begin{definition}[Path Integral of a Viral Observable]
\label{def:PathIntegralObservable}
\noindent
Let \(\mathcal{O}\bigl(\Theta(t)\bigr)\) be a \emph{viral observable}, such as the instantaneous number of 'particles' (lattices) \(\hat{N}(t)\), the average lattice or capsid deformation, or another quantifiable property of interest. We define its expectation under the Feynman--Kac weighted measure \(\mu_\epsilon\) by
\begin{equation}
\label{eq:PathIntegralDefinitionTheta}
  \mathbb{E}_\epsilon\bigl[\mathcal{O}\bigl(\Theta(t)\bigr)\bigr]
  \;=\;
  \int_{\mathrm{Path}\bigl(\mathcal{F}(\mathcal{H}_{\mathrm{lat}})\bigr)}
    \mathcal{O}\bigl(\Theta'(t)\bigr)\,\mu_\epsilon(d\Theta').
\end{equation}
Here, the integral runs over \emph{all} PDE-Markov-Fock trajectories \(\Theta'(\cdot)\), each weighted by 
\(\exp\bigl[-\tfrac{1}{\epsilon}\,\mathcal{I}(\Theta')\bigr]\).
Standard results in stochastic analysis~\cite{DemboZeitouni,DaPratoZabczyk1992} ensure that \(\mu_\epsilon\) is well‐defined in infinite‐dimensional settings, provided the action functional \(\mathcal{I}\) and reference measure \(\nu\) satisfy appropriate integrability and compactness conditions. We can interpret this result in the following ways:
\begin{itemize}
  \item \emph{Low‐Noise (Small \(\epsilon\)) Regime:}  
  As \(\epsilon \to 0\), occupant jumps and path deviations become tightly constrained to configurations minimizing the action \(\mathcal{I}\). In this \emph{quasi‐deterministic} limit, the system follows a few ``dominant'' or \emph{least‐action} pathways, reflecting how viruses may exploit the most resource‐efficient (lowest‐cost) replication routes. Biologically, one sees a narrow range of infection trajectories—those that correspond to near‐optimal strategies for capsid assembly, replication timing, and resource consumption.

  \item \emph{High‐Noise (Large \(\epsilon\)) Regime:}  
  As \(\epsilon\) increases, the measure \(\mu_\epsilon\) spreads out over a richer family of trajectories in path space, corresponding to more pronounced stochastic fluctuations. Replication might occur at suboptimal times or be suppressed in favor of alternative occupant arrangements, introducing considerable variability in viral loads and population compositions. This captures more diverse infection scenarios—such as those in highly variable or immunologically stressed environments—highlighting emergent effects like stochastic extinction or unexpected morphological pathways.

  \item \emph{Interpretation of \(\mathbb{E}_\epsilon[\mathcal{O}]\):}  
  Once the measure \(\mu_\epsilon\) is fixed, the expected value \(\mathbb{E}_\epsilon[\mathcal{O}]\) yields a \emph{weighted average} of the observable \(\mathcal{O}\) over all feasible infection trajectories, each penalized or favored by its action. This allows researchers to forecast important measures, e.g.\ the time to peak viral load, the frequency of certain capsid conformations, or the probable extent of immune clearance.

  \item \emph{Computational Path to Solutions:}  
  In practice, evaluating \(\mathbb{E}_\epsilon[\mathcal{O}]\) may involve numerical methods (Monte Carlo in path space, large‐deviation analysis for small \(\epsilon\), or variational techniques). Such approaches systematically account for the interplay between deterministic PDE flows (mechanical displacement, damping) and stochastic occupant jumps (replication, mutation, immune clearance), producing a holistic model of viral lattice dynamics.
\end{itemize}
\end{definition}
\noindent
In summary, the path‐integral formulation with the weighted measure \(\mu_\epsilon\) provides a powerful tool for analyzing how viral lattices evolve under combined PDE dynamics and Markovian jumps. By computing \(\mathbb{E}_\epsilon[\mathcal{O}]\) for critical observables, one obtains a \emph{solution} that encapsulates both deterministic “backbone” dynamics and the stochastic variations inherent to real infections. This approach thus forms a key bridge between operator‐theoretic rigor and biologically meaningful predictions in viral modeling.

\noindent
\begin{proof}[Steps for Moving From a Weighted Measure to Full Solutions in Path Space.]
With the Feynman–Kac weighted measure \(\mu_\epsilon\) in place, one can now \emph{derive a complete solution} for the evolution of viral observables in the path space. The process typically proceeds as follows:

\begin{enumerate}
  \item \textbf{Define the Action Functional Clearly.}  
  Ensure that \(\mathcal{I}[\Theta(\cdot)]\) covers all relevant terms—PDE-based continuum dynamics, discrete jump costs, and noise contributions (Cameron–Martin–Girsanov structure).

  \item \textbf{Construct the Partition Function \(\boldsymbol{Z_\epsilon}\).}  
  By integrating \(\exp[-\tfrac{1}{\epsilon}\,\mathcal{I}]\) over all paths, one obtains \(Z_\epsilon\). This normalizes the measure and can be interpreted as a generalized “generating function” for the viral system’s dynamical configurations.

  \item \textbf{Compute Observables.}  
  Once \(Z_\epsilon\) is known (either analytically or numerically), any observable \(\mathcal{O}(\Theta(t))\) is computed by the path integral in Eq.~\eqref{eq:PathIntegralDefinitionTheta}, which typically reduces to
  \[
    \mathbb{E}_\epsilon\bigl[\mathcal{O}(\Theta(t))\bigr]
    \;=\;
    \frac{1}{Z_\epsilon}
    \int_{\mathrm{Path}}
    \mathcal{O}(\Theta(t))\,\exp\!\Bigl(
      -\,\tfrac1\epsilon\,\mathcal{I}\bigl[\Theta(\cdot)\bigr]
    \Bigr)\,\nu(d\Theta).
  \]
  Practical methods for this step include Monte Carlo sampling in path space or large‐deviation approximations (e.g., Laplace’s method for \(\epsilon \to 0\)).

  \item \textbf{Interpret Biological Meaning.}  
  The measure \(\mu_\epsilon\) encapsulates \emph{which trajectories} are most likely in a system that balances deterministic PDE flows (elastic vibrations, damping) with Markov jumps (replication, decay). Observables such as peak viral load or distribution of internal capsid states can thus be extracted from \(\mathbb{E}_\epsilon[\mathcal{O}]\) and used to guide experimental or therapeutic strategies.
\end{enumerate}

\noindent
In essence, the Feynman–Kac weighted measure \(\mu_\epsilon\) enables a \emph{full solution} at the level of path space for our viral lattice system: one integrates over all possible PDE–Markov trajectories, each penalized or favored by its \emph{action}, to obtain physically meaningful averages (e.g., expected population size, arrangement states, or resource usage) under various noise and resource conditions.
\end{proof}
\subsection{Incorporating Multiplicative Noise Into the Feynman--Kac Weighted Measure}
\label{subsec:MultiplicativeNoiseFeynmanKac}

\noindent
Biological viral systems rarely evolve in pristine, noise‐free conditions. In reality, thermal fluctuations, crowding effects, and varying host resources impart \emph{state‐dependent} randomness on viral lattices. For instance, a partially assembled capsid may experience stronger (or differently shaped) stochastic forces than a fully stabilized one—akin to ``multiplicative noise'' where the amplitude of fluctuations depends on the underlying state. From a mathematical viewpoint, incorporating such noise into path‐integral or Feynman–Kac formalisms allows us to handle large deviations, rare events, and overall trajectory probabilities in a more realistic manner. Below, we formalize this in the context of a hybrid PDE–Markov–Fock framework, extending the simpler additive noise scenario to one where noise intensity itself evolves with the viral lattice’s configuration.

\begin{theorem}[Action Functional and Feynman-Kac Weighting for Multiplicative Noise PDE]
\label{thm:FeynmanKac_MultiplicativeNoise}
Consider the \emph{hybrid PDE–Markov–noise model} from \cite{StKleess2025}, wherein each single‐lattice state is governed by a stochastic PDE (SPDE) of the form
\(
  d\bm{U}_y(t) 
  \;=\;
  \mathcal{G}_y\,\bm{U}_y(t)\,dt
  \;+\;
  \widehat{N}_y\!\bigl(\bm{U}_y(t)\bigr)\,\circ\,d\bm{W}_t,
  \quad
  y\in\{y_1,y_2,\dots\},
\)
coupled to a finite‐state Markov process on $y$ (arrangement labels) and extended to multiple lattices in a Fock space 
\(\mathcal{F}(\mathcal{H}_{\mathrm{lat}})\). Under standard conditions of sectoriality, dissipativity, and Lipschitz continuity on the operators 
\(\{\mathcal{G}_y,\widehat{N}_y,\widehat{M}\}\), one can incorporate \emph{multiplicative noise} into a Feynman--Kac‐type weighting. This yields an \emph{action functional} suitable for large‐deviation or path‐integral analyses of viral lattice trajectories with state‐dependent randomness.
\end{theorem}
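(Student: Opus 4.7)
The plan is to build the action functional in three layers, SPDE cost, Markov-jump cost, and creation/annihilation cost, and then verify that their combined exponential Radon--Nikodym weight against the driftless reference measure $\nu$ yields a finite normalizer $Z_\epsilon$ on every occupant sector of $\mathcal{F}(\mathcal{H}_{\mathrm{lat}})$. At a high level this is a hybrid Girsanov argument: the continuous piece will use an infinite-dimensional Cameron--Martin--Girsanov theorem for SPDEs with state-dependent diffusion, while the discrete piece will use the classical likelihood-ratio formula for marked point processes. Crucially, Theorem~\ref{thm:ExistenceUniqueness_Orbits} already certifies well-posedness of the underlying hybrid orbits, so the work here is not to reprove existence but to exhibit the Radon--Nikodym density in closed form.

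First I would convert the Stratonovich SPDE to It\^o form, picking up the correction $\tfrac12\,\mathrm{Tr}\bigl[\widehat{N}_y'(\bm{U})\,\widehat{N}_y(\bm{U})\bigr]\,dt$ and absorbing it into an effective drift $\widetilde{\mathcal{G}}_y$. Conditional on the arrangement label $y(t)$ remaining fixed on an inter-jump interval $(t_{j-1},t_j)$, the law of $\bm{U}(\cdot)$ is absolutely continuous with respect to the driftless Gaussian reference, with density $\exp\bigl(-\tfrac{1}{\epsilon}\,\mathcal{I}_{\mathrm{SPDE},y}[\bm{U}]\bigr)$ where
\begin{equation}
\mathcal{I}_{\mathrm{SPDE},y}[\bm{U}] \;=\; \tfrac12\int_{t_{j-1}}^{t_j} \Bigl\|\widehat{N}_y(\bm{U}(t))^{-1}\bigl[\dot{\bm{U}}(t) - \widetilde{\mathcal{G}}_y \bm{U}(t)\bigr]\Bigr\|_{\mathcal{H}_{\mathrm{lat}}}^2\,dt.
\end{equation}
Next I would incorporate jump contributions: each arrangement flip or occupant change at time $t_j$ contributes a factor $\exp\bigl(-\tfrac{1}{\epsilon}\Psi(\Theta(t_j^-),\Theta(t_j^+))\bigr)$, with $\Psi$ the logarithm of the Poisson-intensity ratio, modified by $\mathcal{R}(\hat{N})$ whenever $\hat{\Gamma}_+$ fires and by $\mathcal{D}(\hat{N})$ whenever $\hat{\Gamma}_-$ fires. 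Concatenating the SPDE pieces across successive inter-jump intervals and multiplying in the jump factors reconstructs $\exp\bigl(-\tfrac{1}{\epsilon}\mathcal{I}[\Theta]\bigr)$ of Definition~\ref{def:FeynmanKacActionResource}; summing over occupant sectors via the direct-sum structure of $\mathcal{F}(\mathcal{H}_{\mathrm{lat}})$ then produces the weighted measure $\mu_\epsilon$.

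The main obstacle will be the possibly degenerate diffusion coefficient $\widehat{N}_y$. If $\widehat{N}_y(\bm{U})$ fails to be boundedly invertible on all of $\mathcal{H}_{\mathrm{lat}}$, as is typical when stochastic forcing acts only on a subset of PDE modes, the quadratic form in $\mathcal{I}_{\mathrm{SPDE},y}$ becomes singular on $\ker\widehat{N}_y$. I would circumvent this by replacing $\widehat{N}_y^{-1}$ with the Moore--Penrose pseudo-inverse $\widehat{N}_y^{+}$ and restricting the support of $\mu_\epsilon$ to paths whose drift residual $\dot{\bm{U}}-\widetilde{\mathcal{G}}_y\bm{U}$ lies almost surely in $\overline{\mathrm{Ran}\,\widehat{N}_y(\bm{U})}$; paths outside this set receive $\mathcal{I}=+\infty$ and are excluded. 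Verifying that the resulting measure still has finite $Z_\epsilon$ on each occupant sector requires a Fernique-type integrability bound on the reference Gaussian together with a Novikov-type condition on the Girsanov exponential, both of which follow from sectorial dissipativity of $\mathcal{G}_y$ (exponential decay of reference tails) and the Lipschitz hypothesis on $\widehat{N}_y$. Once these bounds are in hand, standard Kolmogorov consistency on the cylinder sets of $\mathrm{Path}(\mathcal{F}(\mathcal{H}_{\mathrm{lat}}))$ completes the construction.
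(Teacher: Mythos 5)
Your proposal follows essentially the same route as the paper: a Cameron--Martin--Girsanov weighting for the continuous SPDE segments, point-process likelihood ratios (modulated by $\mathcal{R}(\hat{N})$ and $\mathcal{D}(\hat{N})$) for the jump contributions, concatenation over occupant sectors, and Kolmogorov extension on cylinder sets to obtain $\mu_\epsilon$. If anything you are more careful than the paper's own sketch, which writes the Stratonovich SPDE but never performs the It\^o correction, uses $\widehat{N}_y(\mathbf{u})^{-1}$ without addressing possible degeneracy (your Moore--Penrose fix is the standard remedy), and asserts finiteness of $Z_\epsilon$ without the Novikov/Fernique-type bounds you supply.
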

\begin{proposition}[Incorporating Multiplicative Noise]
\begin{enumerate}

\smallskip
\noindent
\item \emph{State‐Dependent Noise in a Single Lattice.}  
Focus first on a single viral lattice whose capsid configuration is labeled by \(y\). The PDE operator \(\mathcal{G}_y\) encapsulates elasticity, damping, or vibrational modes, while \(\widehat{N}_y(\bm{U}_y)\) is a \emph{multiplicative‐noise operator} that scales with the instantaneous displacement \(\bm{U}_y(t)\). Biologically, this might mean that partially formed or “looser” capsids (\(\|\bm{U}_y\|\) large) are more susceptible to random structural fluctuations than tightly packed ones.

\smallskip
\noindent
\item \emph{Markov Jumps in Arrangement Labels.}  
Simultaneously, the arrangement label $y$ can jump to $y'$ with some rate \(\lambda_{y\to y'}\), reflecting Markovian transitions among discrete capsid conformations. For instance, subunits might reconfigure under mechanical stress or partial genome packaging. In operator form, these jumps are captured by a block‐off‐diagonal Markov operator \(\widehat{M}\) acting on $\mathcal{H}_{\mathrm{arr}}$.

\smallskip
\noindent
\item \emph{Coupling to Fock‐Space Occupant Dynamics.}  
To account for entire populations of virions (each of which can be in a distinct arrangement state), we embed the single‐lattice space 
\(\mathcal{H}_{\mathrm{lat}} = \mathcal{H}_{\mathrm{arr}} \otimes \mathcal{H}_{\mathrm{PDE}}\)
into a second‐quantized Fock space 
\(\mathcal{F}\bigl(\mathcal{H}_{\mathrm{lat}}\bigr)\). Creation and annihilation operators then link occupant sectors \(N \mapsto N \pm 1\), allowing replication or clearance. The resulting multi‐lattice generator 
\(\mathcal{G}_{\mathrm{Fock}}\) 
includes:
\begin{enumerate}
  \item \(\{\mathcal{G}_y\}\): PDE dynamics for the continuous fields in each arrangement sector.
  \item \(\{\widehat{N}_y\}\): Multiplicative‐noise effects at the single‐lattice scale.
  \item \(\widehat{M}\): Markov jumps describing arrangement flips.
  \item \(\hat{\Gamma}_+, \hat{\Gamma}_-\): Creation/annihilation terms to handle occupant‐number changes.
\end{enumerate}

\smallskip
\noindent
\item \emph{Defining the Action Functional in a Feynman--Kac Setting.}  
To derive path‐integral or large‐deviation principles, we introduce an \emph{action functional} \(\mathcal{I}\) that penalizes deviations from the deterministic (noise‐free) PDE solution, now with multiplicative noise in place. In the infinite‐dimensional Girsanov framework~\cite{DaPratoZabczyk1992}, one writes the Radon–Nikodým derivative of the noisy measure relative to a baseline measure \(\nu\) (often corresponding to \(\sigma=0\) or purely deterministic evolution) as 
\begin{equation}
\exp\Big[-\frac{1}{\epsilon}\,\mathcal{I}\Big].
\end{equation}
In a Cameron–Martin–Girsanov‐style construction, the noise operator \(\widehat{N}_y(\bm{U}_y)\) affects the quadratic form in the exponent: loosely, the Cameron–Martin space captures the directions in which a Gaussian measure can be shifted while preserving absolute continuity, and the associated theorem guarantees that such shifts lead to an equivalent measure. Meanwhile, the Radon–Nikodým derivative quantitatively describes how the baseline measure is reweighted when a drift is introduced, effectively “tilting” the measure to account for the change induced by the noise.,
\begin{equation}
  \mathcal{I}[\mathbf{u}, \dot{\mathbf{u}}]
  \;\propto\;
  \int_0^T
  \Bigl\|
    \widehat{N}_y(\mathbf{u}(t))^{-1}
    \bigl[\,\dot{\mathbf{u}}(t)\,-\,\mathcal{G}_y(\mathbf{u}(t))\bigr]
  \Bigr\|^2
  dt
  \;+\;
  \sum_{t_j\in \mathcal{J}(\Theta(\cdot))}
  \Psi\Bigl[\Theta(t_j^-), \Theta(t_j^+)\Bigr],
\end{equation}
where the summation accounts for Markov jumps (either in arrangement $y\to y'$ or occupant $N\to N\pm1$). Physically, this “penalty” interprets noise as the departure from an energy‐minimizing or resource‐constrained reference path. The amplitude \(\|\widehat{N}_y(\bm{U}_y)\|\) can reflect real virological constraints: a loosely packed capsid might indeed exhibit greater vulnerability to random subunit displacements (higher noise intensity), whereas a rigid capsid (smaller $\|\bm{U}_y\|$) is less influenced by fluctuations. Over many lattices, these stochastic effects interplay with Markov jumps and occupant replication/clearance, yielding a rich, realistic picture of viral population dynamics. Through Girsanov’s theorem and a path‐integral approach, we embed state‐dependent (multiplicative) noise directly into the Fock‐space generator. This unifies PDE elasticity, Markovian arrangement transitions, and occupant‐level replication or clearance in a single operator‐theoretic framework. The resulting action functional $\mathcal{I}$ underlies large‐deviation principles, enabling us to quantify probability weights of rare, yet biologically meaningful events (e.g., abrupt capsid collapse, sudden replication bursts). Thus, multiplicative noise—often essential for describing real viral systems—can be rigorously accommodated within the Feynman–Kac formalism at the multi‐lattice scale.
\end{enumerate}
\end{proposition}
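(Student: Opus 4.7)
The plan is to assemble the action functional in three layers—first on a single arrangement sector, then across the finite-state Markov jumps in $y$, and finally across the Fock-space occupant changes—each time verifying that the added piece is compatible with an absolutely continuous change of measure relative to the reference law $\nu$. First I would fix $y$ and invoke the infinite-dimensional stochastic integration theory for $m$-sectorial generators (as in Theorem~\ref{thm:ExistenceUniqueness_Orbits} combined with the Da Prato–Zabczyk machinery) to show that, under the stated sectoriality and Lipschitz conditions on $\mathcal{G}_y$ and $\widehat{N}_y$, the Stratonovich SPDE admits a unique mild solution $\bm{U}_y(\cdot)$ with paths in $C([0,T],\mathcal{H}_{\mathrm{PDE}})$. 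A standard Stratonovich-to-It\^o correction then shifts the drift by the trace-type term $\tfrac{1}{2}\mathrm{Tr}(D\widehat{N}_y\,\widehat{N}_y)$, yielding an It\^o SPDE to which Girsanov's theorem applies.

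Next, I would build the Radon–Nikodým derivative with respect to a reference noise-free (or additive-noise) measure $\nu$. The Cameron–Martin–Girsanov prescription tells us that if $\widehat{N}_y(\bm{U}_y)$ admits a suitable (possibly Moore–Penrose) left inverse on the Cameron–Martin space, then the likelihood ratio takes the form
\begin{equation}
\frac{d\mu_\epsilon^{(y)}}{d\nu^{(y)}}
\;=\;
\exp\!\left(
-\tfrac{1}{\epsilon}\!\int_0^T \tfrac{1}{2}\bigl\|\widehat{N}_y(\bm{U}_y)^{-1}[\dot{\bm{U}}_y-\mathcal{G}_y\bm{U}_y]\bigr\|^2 dt
\right),
\end{equation}
which identifies the continuous contribution to $\mathcal{I}$. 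To control the exponential moments required by Novikov's condition, I would exploit the $m$-sectorial dissipativity estimate $\mathrm{Re}\langle \mathcal{G}_y\psi,\psi\rangle \le -\gamma\|\psi\|^2$ to obtain a uniform in $y$ bound on $\|\bm{U}_y(t)\|$, then propagate it to the noise coefficient via the assumed Lipschitz continuity of $\widehat{N}_y$.

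Then I would splice in the jump contributions. For each jump time $t_j\in\mathcal{J}(\Theta(\cdot))$—whether it is a Markov arrangement flip $y\to y'$ driven by $\widehat{M}$ or a creation/annihilation event $N\to N\pm 1$ driven by $\hat{\Gamma}_{\pm}$—the Radon–Nikodým derivative against a Poisson reference picks up the classical log-likelihood factor $\log(\lambda/\lambda_{\mathrm{ref}})-(\lambda-\lambda_{\mathrm{ref}})\Delta t$, which I would collect into the jump penalty $\Psi(\Theta(t_j^-),\Theta(t_j^+))$. Because the Fock-space generator decomposes as a direct sum of $m$-sectorial pieces on each $N$-lattice sector (as shown in the sectorial extension argument following Theorem~\ref{thm:many_lattice_Schrodinger}), and because the creation operator is tempered by $\mathcal{R}(\hat{N})$ so that transition rates remain bounded, the joint change of measure factorises over independent PDE and Poisson-jump contributions. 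Summing the logs of the two factors yields the action functional in the proposed form.

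The hard part will be the invertibility requirement on $\widehat{N}_y(\bm{U}_y)$ in infinite dimensions: the multiplicative noise operator is generically not surjective onto the relevant Cameron–Martin space, so the naive $\widehat{N}_y^{-1}$ does not exist. I would handle this by working with a pseudo-inverse on the closure of the range and restricting the action functional to ``controllable'' paths, i.e.\ those for which $\dot{\bm{U}}_y-\mathcal{G}_y\bm{U}_y$ lies in $\mathrm{Range}(\widehat{N}_y(\bm{U}_y))$ almost everywhere in $t$; all other paths receive $\mathcal{I}=+\infty$. This is the standard Freidlin–Wentzell device for degenerate diffusions, and combined with the sectorial damping it yields the compactness of sub-level sets $\{\mathcal{I}\le c\}$ needed for a large-deviation principle. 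The remaining verifications—tightness of $\{\mu_\epsilon\}$, lower semicontinuity of $\mathcal{I}$, and the Laplace principle upper/lower bounds—then follow from the variational representation for functionals of Wiener and Poisson noise in the style of Budhiraja–Dupuis, adapted to the hybrid PDE–Markov–Fock setting.
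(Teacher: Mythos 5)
Your proposal follows essentially the same route the paper takes—a Cameron--Martin--Girsanov change of measure for the continuous SPDE part combined with Poisson-type log-likelihood factors for the Markov/occupant jumps—but it is considerably more careful than the paper's own treatment, which stops at the descriptive level of the proposition itself and never confronts the technical obstructions. In particular, you correctly flag and resolve three issues the paper glosses over: the Stratonovich-to-It\^o drift correction (the paper writes the SPDE with \(\circ\,d\bm{W}_t\) but never accounts for the trace term), the verification of Novikov's condition via the sectorial dissipativity bound, and—most importantly—the non-invertibility of \(\widehat{N}_y(\bm{U}_y)\) in infinite dimensions, which you handle with the standard pseudo-inverse/controllability device and the convention \(\mathcal{I}=+\infty\) off the range. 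One caveat you share with the paper: a measure with genuinely state-dependent diffusion coefficient is \emph{not} absolutely continuous with respect to a \(\sigma=0\) (deterministic) reference law, so the displayed exponential should be read as the Freidlin--Wentzell rate function in a Laplace/variational sense (as in your final Budhiraja--Dupuis remark) rather than literally as a Radon--Nikod\'ym derivative against the noise-free measure; it would strengthen your write-up to make that distinction explicit rather than inheriting the paper's conflation.
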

\subsubsection{Cylinder Sets and Kolmogorov Extension.}

\begin{definition}[Cylinder Sets in Path Space.] A \emph{cylinder set} specifies events about a path \(\Theta(\cdot)\) only on a finite set of time points \(0 < t_1 < \cdots < t_n \le T\). Concretely, one restricts attention to conditions 
  \(\Theta(t_j) \in A_j\) for sets \(A_j\subseteq \mathcal{X}\), where \(\mathcal{X}\) is the underlying state space (which may itself be infinite‐dimensional if we include PDE fields plus occupant numbers). Such sets form a generating \(\sigma\)‐algebra, and the \emph{Kolmogorov extension principle}~\cite{DemboZeitouni,DaPratoZabczyk1992} ensures that consistent definitions of probabilities on all finite‐time cylinder sets uniquely extend to a measure on the space of entire trajectories (i.e., \(\mathrm{Path}\) space).
\begin{itemize}
  \item \emph{Well‐Definiteness under Cylinder‐Set Approximation:} If \(\widehat{N}_y\) (the multiplicative noise operator) and each sectorial PDE generator \(\mathcal{G}_y\) satisfy Lipschitz/dissipativity conditions, then the pathwise action \(\mathcal{I}[\Theta(\cdot)]\) remains finite with high probability. By approximating path space through cylinder sets, one can define
\begin{equation}
    Z_\epsilon 
    \;=\;
    \int_{\mathrm{Path}}
    \exp\!\Bigl(-\,\tfrac1\epsilon\,\mathcal{I}[\Theta(\cdot)]\Bigr)\,\nu(d\Theta)
\end{equation}
  step by step. The extension principle then guarantees that \(\mu_\epsilon\) exists as a proper probability measure on all infinite‐dimensional trajectories.
\end{itemize}
\end{definition}
Relative to an \emph{additive} noise model (constant variance), \emph{multiplicative} noise modifies the “curvature” of the action near states with large norms \(\|\mathbf{u}\|\). In other words, when the viral lattice is significantly deformed (large \(\|\mathbf{u}\|\)), the noise intensity \(\|\widehat{N}_y(\mathbf{u})\|\) can grow, increasing the likelihood of random fluctuations. Physically, a more “flexible” configuration is more susceptible to external perturbations. Because each generator \(\mathcal{G}_y\) is dissipative (e.g., real parts of eigenvalues are non‐positive or in a negative sectoral wedge), large deviations from equilibrium are reined in by a restoring drift. The net effect: the measure \(\mu_\epsilon\) will place less weight on paths with \emph{very large} \(\|\mathbf{u}\|\) unless the noise intensity or occupant transitions strongly favor such configurations.

When the structure of a viral lattice is partially assembled or “flexible,” it suffers more pronounced random disturbances (\(\|\widehat{N}_y(\mathbf{u})\|\) is large). By contrast, in a tightly bound arrangement (smaller \(\|\mathbf{u}\|\)), noise effects are weaker. If occupant transitions are also included (in a full Fock‐space formalism), then \(\mu_\epsilon\) weights scenarios in which new lattice entities are replicated or cleared. High occupant numbers saturate if resource or immune constraints increase the “cost” of replication. Thus, physically, one sees a synergy: a strongly bent or flexible lattice (large \(\|\mathbf{u}\|\)) may replicate more easily in early stages, but also experiences more intense random forces that might degrade it or drive rearrangements. In the low‐noise limit (\(\epsilon\to0\)), large‐deviation principles reveal which lattice configurations (and occupant transitions) are “most probable.” This can highlight stable morphological states (where noise is small and the deterministic drift is strong) or dynamic replication bursts (where resource constraints and random perturbations favor occupant growth up to a certain plateau).

\subsection{Dominant Trajectories Under Low‐Noise and Large‐Deviation Regimes}

\noindent
The preceding analysis (Section~\ref{subsec:MultiplicativeNoiseFeynmanKac}) introduced a Feynman–Kac measure that unifies PDE dynamics, Markovian occupant changes, and state‐dependent (multiplicative) noise in a many‐lattice (Fock) setting.  We now examine how, in the \emph{low‐noise limit} (\(\epsilon \to 0\)), a subset of \emph{dominant trajectories} emerges—those that overwhelmingly determine the system’s behavior due to their minimal “action.” From a physical standpoint, these trajectories correspond to the \emph{most‐likely replicative and morphological pathways}: the routes viruses follow under realistic resource constraints and dissipative mechanics, but still subject to sporadic, state‐dependent fluctuations. From a virological standpoint, the “dominant” orbits represent the \emph{most probable routes} by which a viral lattice:
\begin{enumerate}
  \item evolves its mechanical (continuum) fields (e.g.\ small oscillations converging to stable assembly),
  \item undergoes discrete morphological shifts (Markovian arrangement changes),
  \item replicates or degrades in an $N\to N\pm1$ occupant sense.
\end{enumerate}
These orbits encapsulate the key morphological/replicative cycles—e.g.\ “partial swelling” followed by occupant duplication—favored in real infections when noise is low but not negligible. We will see that as \(\epsilon \to 0\), the system localizes around a finite family of such characteristic trajectories (instantons). The presence of multiplicative noise refines which pathways remain feasible or favored, particularly for large \(\|\mathbf{u}\|\). Meanwhile, the overall dissipativity preserves stability in a high‐dimensional Fock space. This aligns with biological observations that \emph{finite subsets} of morphological or replicative routes dominate viral behaviors in low‐noise or resource‐restricted environments.
\begin{theorem}[Dominant Trajectories under Large‐Deviation or Low‐Noise Regimes, Including Multiplicative Noise]
\label{thm:DominantTrajectoriesEnhancedTheta}
\noindent
Consider an \(m\)‐sectorial PDE + Markov + second‐quantized (Fock) model of viral lattices subject to small‐amplitude or multiplicative noise, controlled by a parameter \(\epsilon>0\). Specifically, let
\begin{itemize}
  \item \textbf{PDE Sector with Dissipative Operators:}  
        Each arrangement \(y\) has a (generally non‐self‐adjoint) operator \(\mathcal{G}_y\) governing continuous fields (e.g.\ displacement) on the \(8\times8\) lattice, with sectorial conditions ensuring dissipativity.
  \item \textbf{Multiplicative Noise Operator:}  
        The mapping \(\widehat{N}_y(\mathbf{u})\) introduces state‐dependent (multiplicative) fluctuations in the SPDE sense, under local Lipschitz and polynomial growth constraints for well‐posedness.
  \item \textbf{Discrete Markov Jumps:}  
        Conformational or occupant‐state changes proceed at rates \(\lambda_{y \to y'}\), realized by a Markov operator \(\widehat{M}\) on \(\mathcal{H}_{\mathrm{arr}}\).
  \item \textbf{Creation/Annihilation in Fock Space:}  
        Replication or clearance of entire \(8\times8\) viral lattices is governed by creation/annihilation operators in the Fock‐space extension 
        \(\mathcal{F}(\mathcal{H}_{\mathrm{lat}})\).
\end{itemize}
Then in the limit \(\epsilon \to 0\), the system’s dynamics concentrate on a finite set of \emph{characteristic orbits} or “instantons,” minimizing the action \(\mathcal{I}\). Multiplicative noise \(\widehat{N}_y(\mathbf{u})\) may amplify fluctuations for large \(\|\mathbf{u}\|\) (e.g.\ partially assembled or flexible lattices), but sectorial PDE dissipation and occupant‐clearing rates forestall runaway expansion. Consequently, viruses predominantly follow a finite collection of \emph{most‐likely replicative/morphological pathways} that combine:
\begin{enumerate}
  \item PDE‐based elastic flows (subject to damping and partial noise),
  \item occupant‐level creation/annihilation in Fock space, and
  \item Markov jumps among arrangement states.
\end{enumerate}
Hence, highly stable (fully assembled) lattices undergo weaker random fluctuations, while flexible or partially assembled states—despite stronger noise—remain constrained by dissipative mechanics. This phenomenon aligns with empirical evidence that certain replication cycles or morphological transitions are more probable, especially in resource‐limited or low‐noise conditions.
\end{theorem}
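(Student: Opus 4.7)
The plan is to combine an infinite--dimensional Freidlin--Wentzell large deviation principle (LDP) with a Laplace/Varadhan asymptotic argument, and to handle the three ``layers'' (SPDE, Markov jumps, occupant creation/annihilation) separately before assembling them via a contraction principle. First I would fix a finite time horizon $T>0$ and work on the path space $\mathrm{Path}\bigl(\mathcal{F}(\mathcal{H}_{\mathrm{lat}})\bigr)$ endowed with the measure family $\{\mu_\epsilon\}$ of Definition~\ref{def:FeynmanKacWeightedMeasure}. Using the $m$--sectoriality of $\mathcal{G}_y$ together with the local Lipschitz and linear growth hypotheses on $\widehat{N}_y$, Theorem~\ref{thm:ExistenceUniquenessFinal} (invoked via Theorem~\ref{thm:ExistenceUniqueness_Orbits}) already provides existence and uniqueness of mild solutions with uniform finite--time moment bounds; this is the prerequisite for tightness of $\{\mu_\epsilon\}$.

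Next I would establish the LDP in three stages. Stage~one treats the pure SPDE sector at fixed arrangement label $y$ and fixed occupant number $N$: here the Cameron--Martin--Girsanov construction sketched in Theorem~\ref{thm:FeynmanKac_MultiplicativeNoise} yields exponential equivalence between $\mu_\epsilon$ and a tilted Gaussian reference measure, and the rate function is precisely the SPDE piece $\tfrac12\int_0^T\|\widehat{N}_y(\mathbf{u})^{-1}[\dot{\mathbf{u}}-\mathcal{G}_y(\mathbf{u})]\|^2\,dt$ of $\mathcal{I}$. Stage~two incorporates the finite--state Markov jumps on $\mathcal{H}_{\mathrm{arr}}$: since the rates $\lambda_{y\to y'}$ are bounded, a standard Wentzell--Freidlin rescaling gives the discrete contribution $\sum_{t_j}\Psi(\Theta(t_j^-),\Theta(t_j^+))$ as a jump cost, and exponential tightness in Skorokhod topology follows from boundedness of the jump intensity. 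Stage~three embeds creation/annihilation operators via the occupant--graded decomposition of $\mathcal{F}(\mathcal{H}_{\mathrm{lat}})$; the resource--saturating factor $\mathcal{R}(\hat N)$ (Theorem~\ref{thm:ResourceOperator}) and the density--dependent decay $\mathcal{D}(\hat N)$ (Definition~\ref{def:density_decay_operators}) make the occupant transition rates bounded on each sector, so each occupant jump contributes a finite, convex penalty to $\mathcal{I}$. A contraction--principle (or a Dupuis--Ellis weak--convergence) argument then stitches the three layers into a single LDP with good rate function $\mathcal{I}$.

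Once the LDP is in hand, the concentration of $\mu_\epsilon$ on minimizers is a Laplace--principle statement: for any closed set $F$ separated from the minimizing set $\mathcal{M}:=\argmin\mathcal{I}$, $\mu_\epsilon(F)\le \exp[-(\inf_F\mathcal{I}-\inf\mathcal{I})/\epsilon+o(1/\epsilon)]\to 0$. To upgrade ``concentration on $\mathcal{M}$'' to ``concentration on a \emph{finite} family of instantons,'' I would invoke coercivity and lower semicontinuity of $\mathcal{I}$ (ensured by the sectorial wedge condition on $\mathcal{G}_y$ plus positivity of $\widehat{N}_y\widehat{N}_y^\ast$) to get that sublevel sets $\{\mathcal{I}\le c\}$ are compact in the Skorokhod--product topology; together with a Morse--type nondegeneracy assumption on the second variation of $\mathcal{I}$ at each minimizer (which follows generically from strict dissipativity $\gamma>0$ and convexity of the jump cost $\Psi$), the set $\mathcal{M}$ is discrete and compact, hence finite.

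The main obstacle, I expect, will be stage~one: justifying the Cameron--Martin--Girsanov tilt when $\widehat{N}_y(\mathbf{u})$ is genuinely unbounded and possibly non--invertible on a set of paths. Classical Freidlin--Wentzell theory assumes uniform ellipticity of the diffusion coefficient, which is unrealistic for partially assembled lattices where $\|\mathbf{u}\|$ is large and $\widehat{N}_y$ may degenerate. I would circumvent this by (i) introducing a cut--off $\widehat{N}_y^{(R)}=\widehat{N}_y\mathbf{1}_{\{\|\mathbf{u}\|\le R\}}+\epsilon_0 I\mathbf{1}_{\{\|\mathbf{u}\|>R\}}$ that restores ellipticity, (ii) proving the LDP for the cut--off system via Budhiraja--Dupuis variational methods, and (iii) sending $R\to\infty$ using the a priori moment bounds from sectorial dissipativity, controlling the error by a Fernique--type exponential integrability estimate. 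The secondary obstacle, finiteness of $\mathcal{M}$, would be handled by hypothesis (a generic Morse condition) rather than proven in full generality, since genuine degeneracies could a priori produce a continuum of instantons.
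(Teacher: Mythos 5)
Your proposal follows the same skeleton as the paper's own argument---Feynman--Kac weighting $\exp[-\mathcal{I}/\epsilon]$, Freidlin--Wentzell/Varadhan asymptotics to concentrate $\mu_\epsilon$ on action minimizers, and the direct method (lower semicontinuity plus coercivity) to guarantee minimizers exist---so the overall route is not new. Where you genuinely go beyond the paper is in three places. First, you decompose the LDP into separate SPDE, Markov-jump, and occupant layers and reassemble them by a contraction or Dupuis--Ellis weak-convergence argument; the paper simply asserts the combined rate function without explaining how the three cost terms are jointly controlled. Second, you isolate the real technical obstruction---the Girsanov tilt requires $\widehat{N}_y(\mathbf{u})^{-1}$ to make sense, which fails where the multiplicative noise degenerates---and you propose a cut-off plus Budhiraja--Dupuis variational argument to repair it; the paper's sketch invokes Cameron--Martin--Girsanov without ever addressing invertibility of $\widehat{N}_y$. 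Third, and most importantly, you correctly observe that concentration on the minimizing set $\mathcal{M}$ does not by itself yield a \emph{finite} set of instantons, and you supply the missing hypothesis (a Morse-type nondegeneracy of the second variation at each minimizer) that would make $\mathcal{M}$ discrete and hence finite; the paper's own sketch establishes at most the existence of a global minimizer and exponential concentration around the minimizing set, and never proves the finiteness asserted in the theorem statement. Your proposal is therefore consistent with the paper's approach but more honest about where the argument is incomplete---as written, the finiteness claim needs the extra nondegeneracy assumption you flag, or it should be weakened to concentration on the (compact) set of minimizers.
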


\begin{definition}[Large‐Deviation Analysis]

From the Feynman–Kac measure 
\(\mu_\epsilon(d\Theta) \propto \exp[-(1/\epsilon)\,\mathcal{I}(\Theta)]\),
the principle of least action implies that as \(\epsilon\to0\), paths with minimal \(\mathcal{I}\) dominate the measure. Mathematically, one uses large‐deviation theory (e.g.\ Freidlin–Wentzell or Varadhan’s lemma) to show that the probability of significant deviations from these “instantons” decays exponentially in \(\epsilon\). Within the multi‐lattice (Fock) setting, each instantaneous state \(\Theta(t)\) encompasses the current occupant number (replication level), the arrangement label \(y\), and the PDE displacement fields \(\mathbf{u}(t)\). The \emph{dominant trajectories} in low‐noise regimes thus reflect a synergy of:
\begin{enumerate}
  \item \emph{Dissipative PDE flows,} preventing unbounded mechanical deformation.
  \item \emph{Limited resource replication,} curtailing infinite occupant growth.
  \item \emph{State‐dependent noise,} which magnifies fluctuations for partially assembled or flexible lattices yet remains constrained by the overall dissipative structure.
\end{enumerate}
Hence, in the small‐\(\epsilon\) limit, we see an increasingly precise selection of \emph{replicative/morphological pathways} consistent with both the thermodynamic “pull” to stable configurations and the Markovian occupant jumps triggered by resource availability or immune pressure. This captures, in a mathematically rigorous manner, the biological observation that viruses seldom diversify along indefinite random trajectories; rather, they adhere to a finite number of “preferred” routes shaped by noise intensity, structural elasticity, and occupant‐level resource constraints.

The sectorial operators \(\mathcal{G}_y\) ensure negative real parts of eigenvalues (or at least a negative shift), providing a “restoring drift” for large amplitudes \(\|\mathbf{u}\|\). Meanwhile, if \(\widehat{N}_y(\mathbf{u})\) grows with \(\|\mathbf{u}\|\), the induced noise can become substantial for partially assembled lattices. However, noise alone is insufficient to cause indefinite expansion, given the dissipativity that reins in unbounded displacement. Conformational changes (Markov jumps) and occupant creation/annihilation (in the second‐quantized domain) each contribute discrete transitions in the path integral. Under mild resource‐limiting conditions, creation rates saturate at high occupant counts. Thus, while the occupant population can surge, it remains bounded in probability over finite times. This synergy between PDE dissipation and saturable replication ensures the action \(\mathcal{I}\) penalizes blow‐ups.
\end{definition}
\noindent
\begin{theorem}[Action-Minimizing Trajectories]
In light of Theorem~\ref{thm:DominantTrajectoriesEnhancedTheta}, we further refine the large‐deviation analysis by considering a Feynman–Kac‐type action \(\mathcal{I}[\Theta(\cdot)]\) that is both lower semicontinuous and coercive in the path‐space topology. This additional regularity ensures the existence of minimizers (``instantons'') and clarifies how the measure \(\mu_\epsilon\) \emph{concentrates} around these minimizing trajectories in the limit of small noise \(\epsilon \to 0\).

\begin{enumerate}[label=\emph{\alph*)}]
  \item \textbf{Concentration on Instantons.}\\
    Suppose \(\Theta(\cdot)\mapsto \mathcal{I}[\Theta(\cdot)]\) is constructed via a Feynman–Kac measure with Cameron–Martin–Girsanov correction for multiplicative noise, and is lower semicontinuous and coercive. Then for each small \(\epsilon\), the path measure \(\mu_\epsilon\) places significant mass around orbits \(\Theta^*(\cdot)\) that \emph{minimize} \(\mathcal{I}\). In the hybrid viral‐lattice model with \emph{multiplicative} noise:
    \begin{equation}
      \dot{\mathbf{u}}(t) 
      \;=\; 
      \mathcal{G}_y\,\mathbf{u}(t), 
      \quad
      \text{(between jumps)},
    \end{equation}
    occupant‐state changes occur at times
    \begin{equation}
      \{t_j\}
      \quad
      \text{to minimize}
      \quad
      \Psi\bigl(\Theta(t_j^-),\,\Theta(t_j^+)\bigr).
    \end{equation}
    In other words, each minimizer \(\Theta^*(\cdot)\) is piecewise deterministic, composed of deterministic PDE evolution segments (where \(\mathbf{u}\) follows the drift \(\mathcal{G}_y\)) interspersed with occupant or arrangement jumps at optimal times. From a virological perspective, this implies that \emph{extreme} random fluctuations under the multiplicative noise \(\widehat{N}_y(\mathbf{u})\) do \emph{not} dominate in the small‐\(\epsilon\) limit; rather, the system’s behavior coalesces around “least‐action” orbits tied to resource constraints and dissipative PDE dynamics.

  \item \textbf{Large‐Deviation Bounds (Laplace’s Principle).}\\
    Let \(\mathcal{A}\subset \mathrm{Path}\bigl(\mathcal{F}(\mathcal{H}_{\mathrm{lat}})\bigr)\) be any measurable subset of path space, and define
    \begin{equation}
      \inf_{\Theta(\cdot)\in \mathcal{A}}
      \,\mathcal{I}\bigl[\Theta(\cdot)\bigr]
      \;=\;
      \mathcal{I}(\mathcal{A}).
    \end{equation}
    Classical large‐deviation theorems~\cite{FreidlinWentzell,DemboZeitouni} then imply:
    \begin{equation}
      -\,\inf_{\Theta(\cdot)\in \mathrm{int}(\mathcal{A})}
        \,\mathcal{I}\bigl[\Theta(\cdot)\bigr]
      \;\;\le\;\;
      \liminf_{\epsilon\to 0}
      \,\epsilon\,\ln\mu_\epsilon(\mathcal{A})
      \;\;\le\;\;
      \limsup_{\epsilon\to 0}
      \,\epsilon\,\ln\mu_\epsilon(\mathcal{A})
      \;\;\le\;\;
      -\,\inf_{\Theta(\cdot)\in \overline{\mathcal{A}}}
        \,\mathcal{I}\bigl[\Theta(\cdot)\bigr].
    \end{equation}
    Concretely, sets that contain global minimizers of \(\mathcal{I}\) attract a disproportionately large share of the measure \(\mu_\epsilon\) as \(\epsilon\to 0\). Conversely, paths with high cost are exponentially suppressed. Biologically, this underscores why the viral lattice system \emph{prefers} certain replicative or conformational cycles: they correspond to trajectories that, under small random perturbations, minimize the combined ``effort’’ (action) of occupant transitions, PDE‐based deformations, and noise‐induced deviations.  
\end{enumerate}
\end{theorem}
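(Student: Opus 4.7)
The plan is to establish both parts through the standard machinery of large deviations on path space, adapted to the hybrid PDE--Markov--Fock setting developed above. The starting point is the Feynman--Kac measure $\mu_\epsilon(d\Theta) \propto \exp[-\tfrac{1}{\epsilon}\mathcal{I}(\Theta)]\,\nu(d\Theta)$ from Definition~\ref{def:FeynmanKacWeightedMeasure}, together with the well-posedness guaranteed by Theorem~\ref{thm:ExistenceUniqueness_Orbits}. The overarching strategy is: first verify that $\mathcal{I}$ is a \emph{good rate function} on $\mathrm{Path}(\mathcal{F}(\mathcal{H}_{\mathrm{lat}}))$; second, extract minimizers via the direct method of the calculus of variations and characterize their piecewise structure; third, invoke Varadhan's lemma to obtain the two-sided bound.

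For part (a), I would begin by endowing path space with a Skorokhod-type topology that accommodates both the continuous PDE segments (in $C([0,T];\mathcal{H}_{\mathrm{PDE}})$) and the càdlàg jump components (occupant numbers in Fock sectors, arrangement labels in $\mathcal{H}_{\mathrm{arr}}$). Lower semicontinuity of $\mathcal{I}$, which is assumed, ensures that any minimizing sequence $\{\Theta_k\}$ with $\mathcal{I}(\Theta_k) \to \inf \mathcal{I}$ admits a subsequential limit $\Theta^*$ satisfying $\mathcal{I}(\Theta^*) = \inf \mathcal{I}$; coercivity provides the compactness needed for subsequential convergence, since sublevel sets $\{\mathcal{I} \le c\}$ become relatively compact. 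The piecewise structure of $\Theta^*$ then follows from the additive decomposition of the action in Definition~\ref{def:FeynmanKacActionResource}: on each interval between jumps, the continuous integrand $\|\widehat{N}_y(\mathbf{u})^{-1}[\dot{\mathbf{u}} - \mathcal{G}_y\mathbf{u}]\|^2$ is pointwise nonnegative and vanishes exactly when $\dot{\mathbf{u}} = \mathcal{G}_y\mathbf{u}$, forcing the deterministic flow on each segment; at jump instants $\{t_j\}$, the discrete sum $\sum \Psi(\Theta(t_j^-), \Theta(t_j^+))$ is minimized by choosing occupant-change timings and arrangement transitions that individually minimize $\Psi$.

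For part (b), I would apply the infinite-dimensional Laplace--Varadhan principle (in the form of Freidlin--Wentzell, extended to SPDEs by Da Prato--Zabczyk). The key estimate is that $\epsilon \ln Z_\epsilon \to -\inf_{\Theta} \mathcal{I}(\Theta)$ by the Laplace method, so that
\begin{equation}
\epsilon \ln \mu_\epsilon(\mathcal{A}) \;=\; \epsilon \ln \int_{\mathcal{A}} e^{-\mathcal{I}/\epsilon}\,\nu(d\Theta) \;-\; \epsilon \ln Z_\epsilon
\end{equation}
decomposes into a numerator controlled by $-\inf_{\mathcal{A}} \mathcal{I}$ and a normalizing denominator. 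Goodness of $\mathcal{I}$ then yields the upper bound via a finite subcover of $\overline{\mathcal{A}} \cap \{\mathcal{I} \le c\}$, while the lower bound comes from restricting to a neighborhood of any near-minimizer in $\mathrm{int}(\mathcal{A})$ and using continuity of the exponential weight. Concentration on instantons (completing part (a)) then follows immediately: for any open neighborhood $U$ of the minimizer set, $\inf_{U^c} \mathcal{I} > \inf \mathcal{I}$, so $\mu_\epsilon(U^c) \to 0$ exponentially.

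The hard part will be making the infinite-dimensional, hybrid Cameron--Martin--Girsanov construction precise enough to verify the goodness and coercivity hypotheses rigorously. In particular, one must show that the multiplicative noise operator $\widehat{N}_y(\mathbf{u})$ admits a well-defined pseudo-inverse on the relevant reproducing-kernel Hilbert subspace (requiring trace-class or Hilbert--Schmidt assumptions on $\widehat{N}_y\widehat{N}_y^*$), that the occupant-number jumps in Fock space contribute a finite Poisson-type rate functional that remains lower semicontinuous under Skorokhod convergence, and that the resource-limiting factor $\mathcal{R}(\hat{N})$ provides enough coercivity to compactify the occupant sector. These technical preconditions are precisely what guarantee the structural properties of $\mathcal{I}$ assumed in the theorem statement; without them, the measure $\mu_\epsilon$ could fail to localize at all.
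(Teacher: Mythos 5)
Your proposal follows essentially the same route as the paper's own sketch: lower semicontinuity plus coercivity feed the direct method of the calculus of variations to produce a minimizer $\Theta^*(\cdot)$ whose piecewise‐deterministic structure comes from the additive PDE/jump decomposition of $\mathcal{I}$, and the concentration and two‐sided bounds follow from Freidlin--Wentzell/Varadhan estimates applied to the Feynman--Kac measure $\mu_\epsilon$. Your version is in fact more careful than the paper's — notably the explicit Skorokhod‐type topology, the $\epsilon\ln Z_\epsilon$ normalization, and the flagged trace‐class/pseudo‐inverse conditions on $\widehat{N}_y$ — but these are refinements of, not departures from, the argument given there.
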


\begin{proof}[Sketch of Existence of Minimizers (Including Small‐Amplitude \emph{or} Multiplicative Noise)]
\noindent
The action functional \(\mathcal{I}[\Theta(\cdot)]\) aggregates contributions from two main sources:

\begin{enumerate}
  \item \emph{PDE‐Driven Segments}: Possibly subject to \emph{multiplicative} (state‐dependent) or \emph{small‐amplitude} (state‐independent) stochastic forcing, treated via a Cameron–Martin–Girsanov transformation in an infinite‐dimensional setting.
  \item \emph{Discrete Occupant/Markov Jumps}: Finitely many jump costs corresponding to occupant‐number changes or arrangement flips in the viral lattice.
\end{enumerate}

\noindent
Under the standard hypotheses of:
\begin{itemize}
  \item \emph{$m$‐sectorial dissipativity} of each \(\mathcal{G}_y\) (ensuring controlled PDE dynamics),
  \item \emph{Lipschitz/bounded‐growth} conditions on the noise operators \(\widehat{N}_y(\cdot)\) (for well‐posedness of the SPDE),
  \item \emph{Regular Markov rates} for occupant jumps,
\end{itemize}
one obtains a functional \(\mathcal{I}\) that is \emph{bounded below} and \emph{lower semicontinuous} on the path‐space topology \cite{DaPratoZabczyk1992}. By an adaptation of the direct method in the calculus of variations to path spaces, there exists at least one global minimizer, say \(\Theta^*(\cdot)\), that satisfies
\begin{equation}
  \mathcal{I}\bigl[\Theta^*(\cdot)\bigr]
  \;=\;
  \inf_{\Theta(\cdot)\,\in\,\mathrm{Path}} 
  \mathcal{I}\bigl[\Theta(\cdot)\bigr].
\end{equation}
We construct a probability measure \(\mu_\epsilon\) on the space of trajectories 
\(\mathrm{Path}\bigl(\mathcal{F}(\mathcal{H}_{\mathrm{lat}})\bigr)\)
by assigning to each path \(\Theta(\cdot)\) an exponential weight involving \(\mathcal{I}\bigl[\Theta(\cdot)\bigr]\). Concretely,
\begin{equation}
  \mu_\epsilon(d\Theta) 
  \;=\;
  \frac{1}{Z_\epsilon}
  \exp\!\Bigl[
    -\,\tfrac{1}{\epsilon}\,\mathcal{I}\bigl[\Theta(\cdot)\bigr]
  \Bigr]
  \,\nu(d\Theta),
  \quad\text{where}\quad
  Z_\epsilon 
  \;=\;
  \int_{\mathrm{Path}}
  \exp\!\Bigl[
    -\,\tfrac{1}{\epsilon}\,\mathcal{I}\bigl[\Theta(\cdot)\bigr]
  \Bigr]
  \,\nu(d\Theta).
\end{equation}
Here,
\begin{itemize}
  \item \(\nu\) is a \emph{reference measure} derived from the deterministic (\(\sigma=0\)) PDE–Markov–Fock model, i.e.\ no noise is present.
  \item The exponential factor incorporates the \emph{action} \(\mathcal{I}\bigl[\Theta(\cdot)\bigr]\), which encodes:
  \begin{equation}
    \Bigl\|
      \widehat{N}_y(\mathbf{u})^{-1}\,
      \bigl[\dot{\mathbf{u}} - \mathcal{G}_y(\mathbf{u})\bigr]
    \Bigr\|^2
    \quad
    \text{(multiplicative noise)}
    \quad\text{or}\quad
    \bigl\|\dot{\mathbf{u}} - \mathcal{G}_y(\mathbf{u})\bigr\|^2
    \quad
    \text{(additive noise)},
  \end{equation}
  along with discrete jump costs that arise whenever occupant‐number or arrangement flips occur.
\end{itemize}

\noindent
By Kolmogorov extension arguments and infinite‐dimensional Girsanov theory \cite{DaPratoZabczyk1992,DemboZeitouni}, \(\mu_\epsilon\) is well‐defined as a probability measure on path space. In the low‐noise limit (\(\epsilon\to0\)), large‐deviation estimates (Freidlin–Wentzell, Varadhan’s lemma \cite{FreidlinWentzell,DemboZeitouni}) indicate that \(\mu_\epsilon\) concentrates around trajectories that minimize \(\mathcal{I}\bigl[\Theta(\cdot)\bigr]\). Formally, the probability of deviating substantially from \(\Theta^*(\cdot)\) decays at an exponential rate in \(\tfrac{1}{\epsilon}\). Thus, \emph{action‐minimizing orbits} dominate the system’s behavior, revealing that the viral‐lattice dynamics coalesce into piecewise‐deterministic PDE segments (governed by \(\mathcal{G}_y\)) interspersed with occupant or arrangement jumps at times chosen to minimize the jump cost \(\Psi\). This holds whether the noise is \emph{multiplicative} or \emph{small‐amplitude}—in both cases, the underlying dissipativity precludes unbounded excursions and ensures physically realistic behavior.
\end{proof}

\begin{corollary}[Near‐Deterministic Path Dominance in Hybrid PDE–Markov–Fock Models]
\label{cor:NearDeterministicDominance}
\noindent
Let \(\epsilon>0\) represent the magnitude of stochastic forcing in a hybrid PDE–Markov–Fock framework for viral lattices. Suppose:
\begin{enumerate}
  \item The noise amplitude or the operator norm \(\|\widehat{N}_y(\mathbf{u})\|\) remains sufficiently small (whether additive or multiplicative), and
  \item The variance of occupant‐state jumps (i.e., creation/annihilation and arrangement transitions) is likewise low.
\end{enumerate}
Then, there is a unique \emph{action‐minimizing trajectory} \(\Theta^*(\cdot)\) that nearly solves the underlying deterministic PDE in each occupant sector, with discrete jump times chosen to minimize jump‐cost terms. By large‐deviation principles (Theorem~\ref{thm:DominantTrajectoriesEnhancedTheta} and subsequent results), \(\Theta^*(\cdot)\) \emph{dominates} the path measure \(\mu_\epsilon\) in the limit of small \(\epsilon\). \qedsymbol

\smallskip
\begin{enumerate}
\item \textbf{Biological Interpretation.}
In such \emph{low‐noise} or \emph{mildly multiplicative} regimes, certain \emph{replicative} or \emph{morphological} cycles achieve high probability because they correspond to near‐deterministic paths where:
\begin{enumerate}
  \item The PDE segment for viral‐lattice deformation remains close to a nominal (noise‐free) solution,
  \item Occupant or arrangement jumps (Markov transitions) occur at time instants that minimize discrete transition costs, and
  \item Overall replicative growth is still capped by dissipative or resource‐limiting processes in the Fock‐space extension.
\end{enumerate}
Hence, these cycles emerge as the \emph{most‐likely routes} of infection spread whenever environmental or gating noise remains relatively small—despite potentially multiplicative fluctuations—owing to the stabilizing influence of dissipative operators.

\smallskip
\noindent
\item \textbf{Path‐Integral Perspective.}
By introducing the rate functional \(\mathcal{I}[\Theta(\cdot)]\) (encompassing PDE deviations, occupant transitions, and creation/annihilation costs) and weighting trajectories via 
\begin{equation}
  \exp\!\bigl[-(\tfrac1\epsilon)\,\mathcal{I}[\Theta(\cdot)]\bigr],
\end{equation}
the large‐deviation framework pinpoints how \emph{deterministic‐like} paths dominate in \emph{small‐\(\epsilon\)} scenarios. This elucidates the ``most probable'' replicative or morphological patterns in high viral‐load systems, illustrating that even when noise is \emph{multiplicative}, sufficiently bounded fluctuations and sectorial PDE dynamics ensure a small set of near‐deterministic trajectories prevails.
\end{enumerate}
\end{corollary}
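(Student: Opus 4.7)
The plan is to derive the corollary as a sharpening of Theorem~\ref{thm:DominantTrajectoriesEnhancedTheta} and the Laplace principle established immediately above, focusing on the \emph{uniqueness} and \emph{near-deterministic structure} that the two low-noise hypotheses force upon the minimizer. First, I would invoke the existence argument already sketched: since $\mathcal{I}[\Theta(\cdot)]$ is lower semicontinuous and coercive on $\mathrm{Path}(\mathcal{F}(\mathcal{H}_{\mathrm{lat}}))$, the direct method yields at least one global minimizer $\Theta^{*}(\cdot)$. The assumption of small $\|\widehat{N}_{y}(\mathbf{u})\|$ ensures that the Cameron--Martin--Girsanov quadratic penalty is dominant: any deviation of $\dot{\mathbf{u}}(t)$ from the deterministic drift $\mathcal{G}_{y}(\mathbf{u}(t))$ is charged at a rate proportional to $\|\widehat{N}_{y}(\mathbf{u})^{-1}\|^{2}$, which is large, so the PDE segments of any minimizer must lie arbitrarily close, in a suitable Sobolev topology, to solutions of the noise-free evolution within each occupant/arrangement sector.

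Next, I would establish uniqueness by a strict-convexity argument adapted to the piecewise-continuous structure. Along any fixed jump schedule and fixed sequence of labels $(y_{j},N_{j})$, the PDE cost is a strictly convex quadratic in $\dot{\mathbf{u}}$ because $\widehat{N}_{y}(\mathbf{u})^{-1}$ is positive definite; the $m$-sectorial dissipativity of $\mathcal{G}_{y}$ then yields a unique minimizing PDE segment on each interval $[t_{j},t_{j+1}]$, namely the deterministic flow generated by $\mathcal{G}_{y}$. Minimization over the discrete combinatorial data (jump times and target labels) reduces, under the low-jump-variance hypothesis, to a locally finite optimization of the jump-cost $\Psi$, which is generically non-degenerate and therefore admits a unique optimal schedule. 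Gluing the unique PDE pieces to this unique schedule produces the required unique $\Theta^{*}(\cdot)$ satisfying both defining properties of the corollary.

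For the dominance claim, I would apply the Laplace upper and lower bounds recorded in the preceding action-minimization theorem: for any open neighborhood $U \ni \Theta^{*}$,
\begin{equation}
\limsup_{\epsilon \to 0} \,\epsilon\,\ln \mu_{\epsilon}(U^{c})
\;\le\; -\,\inf_{\Theta(\cdot)\,\in\, \overline{U^{c}}} \mathcal{I}[\Theta(\cdot)]
\;<\; -\,\mathcal{I}[\Theta^{*}(\cdot)],
\end{equation}
where the strict inequality follows from uniqueness of the minimizer on the complement of $U$. Hence $\mu_{\epsilon}(U) \to 1$ exponentially in $1/\epsilon$, which is precisely the claimed concentration of the path measure on $\Theta^{*}$.

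The hard part will be the uniqueness step in the presence of multiplicative noise, since the quadratic form $\|\widehat{N}_{y}(\mathbf{u})^{-1}[\dot{\mathbf{u}} - \mathcal{G}_{y}(\mathbf{u})]\|^{2}$ couples the trajectory to the noise operator evaluated along that same trajectory, so strict convexity must be verified along paths rather than pointwise. I expect this will require a uniform lower bound $\|\widehat{N}_{y}(\mathbf{u})^{-1}\| \ge c > 0$ on the sublevel sets $\{\mathcal{I} \le M\}$, extracted from the local Lipschitz and polynomial-growth hypotheses on $\widehat{N}_{y}$, together with a compactness estimate on the occupant-number variable (available from the saturable creation operator $\hat{\Gamma}_{+}$ built with $\mathcal{R}(\hat{N})$) that rules out bifurcation of the minimizing jump schedule into finitely many competing optima.
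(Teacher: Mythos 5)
Your proposal follows the same architecture as the paper's treatment: existence of a minimizer by the direct method (lower semicontinuity plus coercivity of \(\mathcal{I}\)), followed by Freidlin--Wentzell/Laplace concentration of \(\mu_\epsilon\) around the minimizing set. Where you genuinely add something is the uniqueness step. The paper never argues uniqueness at all --- the corollary simply asserts a \emph{unique} action-minimizing trajectory and points back to Theorem~\ref{thm:DominantTrajectoriesEnhancedTheta}, whose proof sketch only produces \emph{at least one} global minimizer. Your observation that, conditional on a fixed jump schedule and label sequence, the PDE cost is nonnegative and vanishes exactly on the deterministic flow generated by \(\mathcal{G}_y\) from the post-jump state (so that \(m\)-sectorial well-posedness pins down each continuous segment) is the right way to force the "nearly solves the deterministic PDE" clause, and it is cleaner than invoking strict convexity of the full functional, which as you note fails pointwise once \(\widehat{N}_y(\mathbf{u})\) depends on the trajectory. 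Your closing Laplace estimate, using compactness of sublevel sets to get a strict gap between \(\mathcal{I}[\Theta^*]\) and the infimum over \(\overline{U^c}\), is also the standard and correct way to upgrade "minimizers exist" to "the measure concentrates on \(\Theta^*\)," and the paper leaves this implicit.

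The one place your argument remains incomplete --- and you flag it yourself --- is uniqueness of the optimal jump schedule. "Generically non-degenerate" is not a proof: the jump-cost optimization over times and target labels can in principle admit ties, and neither the low-variance hypothesis as stated nor the saturable creation operator \(\mathcal{R}(\hat{N})\) obviously excludes finitely many competing optimal schedules with equal total action. In that case \(\mu_\epsilon\) concentrates on a finite set of instantons rather than a single one, which is in fact all that Theorem~\ref{thm:DominantTrajectoriesEnhancedTheta} claims ("a finite set of characteristic orbits"). So the corollary's uniqueness assertion should either be read as holding under an additional non-degeneracy hypothesis or weakened to concentration on a finite minimizing set; your proposal is honest about this, whereas the paper glosses over it. A minor technical point: the coercivity of the Girsanov penalty requires a lower bound on the quadratic form \(v\mapsto\|\widehat{N}_y(\mathbf{u})^{-1}v\|^2\) uniformly on sublevel sets, i.e.\ an upper bound on \(\|\widehat{N}_y(\mathbf{u})\|\) there, not merely \(\|\widehat{N}_y(\mathbf{u})^{-1}\|\ge c\) in operator norm; this is consistent with hypothesis (1) but worth stating precisely.
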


\begin{definition}[Uniqueness and Deriving Solutions in the Hybrid PDE–Markov–Fock Framework]
\label{def:UniquenessHybrid}
\noindent
We establish that, for each finite \(T>0\), there exists a \emph{unique} mild solution
\begin{equation}
  \Theta(\cdot)\;\colon\; [0,T]\;\longrightarrow\;\mathcal{F}\bigl(\mathcal{H}_{\mathrm{lat}}\bigr),
\end{equation}
describing the hybrid evolution of an \(8\times 8\) viral lattice in a single Fock‐space setting. This evolution encompasses:
\begin{itemize}
  \item \emph{Continuous PDE flows} (e.g.\ wave‐like or diffusive effects within the lattice),
  \item \emph{Discrete Markov jumps} (arrangement changes, occupant‐state transitions),
  \item \emph{Creation/annihilation processes} in Fock space (reflecting replication or clearance of entire lattices).
\end{itemize}
These solution trajectories (or orbits) underpin a path‐integral description of viral dynamics that is both biologically expressive and mathematically well‐posed.
\medskip
\begin{enumerate}
\item \textbf{Single‐Lattice $m$‐Sectoriality and Markov Perturbation.}
\smallskip
\noindent
\begin{enumerate}
\item \emph{$m$‐Sectorial PDE Operator \(\hat{\mathcal{G}}_{\mathrm{PDE}}\).}\\
Let 
\begin{equation}
  \hat{\mathcal{G}}_{\mathrm{PDE}}
  \;\colon\;
  \mathrm{Dom}\!\bigl(\hat{\mathcal{G}}_{\mathrm{PDE}}\bigr)
  \;\subset\;
  \mathcal{H}_{\mathrm{PDE}}
  \;\longrightarrow\;
  \mathcal{H}_{\mathrm{PDE}}
\end{equation}
be an \emph{$m$‐sectorial} generator capturing the \emph{intra‐lattice dynamics} (e.g.\ mechanical displacement fields, wave‐like propagation, or diffusive processes) of a \emph{single} \(8\times8\) viral assembly. By hypothesis, the numerical range of \(\hat{\mathcal{G}}_{\mathrm{PDE}}\) lies in a sector 
\(\Sigma_\theta = \{ r e^{i\phi} : r \ge 0,\;|\phi| \le \theta \}\),
and \(\hat{\mathcal{G}}_{\mathrm{PDE}}\) generates a strongly continuous (and in fact holomorphic) semigroup 
\(\{\exp(t \hat{\mathcal{G}}_{\mathrm{PDE}})\}_{t\ge0}\)
on \(\mathcal{H}_{\mathrm{PDE}}\) \cite{Pazy1983}. From a \emph{biophysical perspective}, this ensures well‐posed continuous evolution for the internal degrees of freedom (e.g.\ vibrational modes) of a single viral lattice, with dissipativity preventing unbounded deformation.
\smallskip
\noindent

\item \emph{Markov Operator \(\widehat{M}\) on Arrangement‐Level States.}\\
Let \(\widehat{M}\) be a Markov operator on \(\mathcal{H}_{\mathrm{arr}}\), describing discrete arrangement jumps (e.g.\ occupant changes, capsid subunit rearrangements) within the lattice. We embed \(\widehat{M}\) in the tensor‐product setting
\begin{equation}
  \widehat{M}\otimes \mathrm{Id}
  \;\colon\;
  \mathcal{H}_{\mathrm{arr}} \,\otimes\, \mathcal{H}_{\mathrm{PDE}}
  \;\longrightarrow\;
  \mathcal{H}_{\mathrm{arr}} \,\otimes\, \mathcal{H}_{\mathrm{PDE}},
\end{equation}
and by a slight abuse of notation, still denote it as \(\widehat{M}\). Suppose \(\widehat{M}\) is bounded or relatively bounded with small relative bound compared to \(\hat{\mathcal{G}}_{\mathrm{PDE}}\). Then one can define
\begin{equation}
  \hat{\mathcal{G}}_{\mathrm{lat}}
  \;=\;
  \hat{\mathcal{G}}_{\mathrm{PDE}} 
  \;+\;
  \widehat{M}
  \;\colon\;
  \mathcal{H}_{\mathrm{lat}} \;=\; 
  \mathcal{H}_{\mathrm{arr}} \,\otimes\, \mathcal{H}_{\mathrm{PDE}}
  \;\longrightarrow\;
  \mathcal{H}_{\mathrm{lat}}.
\end{equation}
By standard $m$‐sectorial perturbation arguments~\cite{Pazy1983,Kato1980}, if \(\hat{\mathcal{G}}_{\mathrm{PDE}}\) is $m$‐sectorial and \(\widehat{M}\) is bounded (or suitably small in the sense of relative bound), then \(\hat{\mathcal{G}}_{\mathrm{lat}}\) remains $m$‐sectorial. Consequently, \(\hat{\mathcal{G}}_{\mathrm{lat}}\) generates a strongly continuous semigroup on \(\mathcal{H}_{\mathrm{lat}}\). In \emph{virological terms}, this mathematical structure guarantees that continuous PDE flows for the single lattice can be combined seamlessly with discrete arrangement transitions, yielding a well‐defined evolution operator for \emph{single‐lattice} dynamics.
\end{enumerate}
\medskip
\noindent
\item \textbf{Second Quantization and Creation/Annihilation in Fock Space}
\smallskip
\noindent
\begin{enumerate}
\item \emph{Fock‐Space Extension.}\\
To capture \emph{multiple} viral lattices, each described by \(\widehat{\mathcal{G}}_{\mathrm{lat}}\) at the single‐lattice level, we embed these dynamics into a Fock space:
\begin{equation}
  \mathcal{F}\bigl(\mathcal{H}_{\mathrm{lat}}\bigr)
  \;=\;
  \bigoplus_{N=0}^{\infty}
  \Bigl[\,
    \mathcal{H}_{\mathrm{lat}}^{\otimes N}
  \Bigr]_{\pm},
\end{equation}
where \(N\) indicates the number of \(8\times8\) lattices (``occupants’’) and the subscript ``\(\pm\)’’ denotes possible bosonic, fermionic, or classical (symmetric/antisymmetric) statistics. Biologically, each ``occupant’’ is a full viral lattice, which can be present in one of several arrangement states (\(\mathcal{H}_{\mathrm{arr}}\)) and exhibit continuum (PDE) dynamics (\(\mathcal{H}_{\mathrm{PDE}}\)).

\smallskip
\noindent
Following standard second‐quantization conventions, we define
\begin{equation}
  \mathrm{d}\Gamma\!\bigl(\hat{\mathcal{G}}_{\mathrm{lat}}\bigr)
  \;=\;
  \bigoplus_{N=0}^{\infty}
  \hat{\mathcal{G}}_{\mathrm{lat}}^{(N)},
\end{equation}
where \(\hat{\mathcal{G}}_{\mathrm{lat}}^{(N)}\) acts on the \(N\)‐fold tensor product \(\mathcal{H}_{\mathrm{lat}}^{\otimes N}\). Concretely, \(\hat{\mathcal{G}}_{\mathrm{lat}}^{(N)}\) is often realized as
\(\sum_{i=1}^N \hat{\mathcal{G}}_{\mathrm{lat}}(i)\),
with \(\hat{\mathcal{G}}_{\mathrm{lat}}(i)\) acting nontrivially on the \(i\)th lattice and trivially on the others. If there is inter‐lattice coupling, one includes the corresponding cross terms in \(\hat{\mathcal{G}}_{\mathrm{lat}}^{(N)}\).

\medskip
\noindent
\item\emph{Creation/Annihilation Operators.}\\
From a biological perspective, the occupant number \(N\) fluctuates due to \emph{viral replication} (creation) and \emph{immune‐ or resource‐mediated clearance} (annihilation). We encode these processes via operators
\begin{equation}
  \widehat{c}^\dagger \,\colon\, 
    \mathcal{H}_{\mathrm{lat}}^{\otimes N}
    \;\to\;
    \mathcal{H}_{\mathrm{lat}}^{\otimes (N+1)},
  \quad
  \widehat{c} \,\colon\, 
    \mathcal{H}_{\mathrm{lat}}^{\otimes (N+1)}
    \;\to\;
    \mathcal{H}_{\mathrm{lat}}^{\otimes N}.
\end{equation}
Mathematically, \(\widehat{c}^\dagger\) adds a new lattice factor to the tensor product, while \(\widehat{c}\) removes one. In many biological models, the rates associated with \(\widehat{c}^\dagger\) and \(\widehat{c}\) are governed by Lipschitz‐type conditions (e.g., saturable replication), ensuring well‐posed behavior even in large populations.
\end{enumerate}
\smallskip
\item \emph{Solution Methods}
\noindent
Formally, we combine these occupant‐changing processes with the second‐quantized intra‐lattice dynamics to construct the full Fock‐space generator:
\begin{equation}
  \mathcal{G}_{\mathrm{Fock}}
  \;=\;
  \mathrm{d}\Gamma\!\bigl(\hat{\mathcal{G}}_{\mathrm{lat}}\bigr)
  \;+\;
  \bigl(\widehat{c}^\dagger + \widehat{c}\bigr).
\end{equation}
Under suitable dissipativity and Lipschitz criteria \cite{DaPratoZabczyk1992,Pazy1983}, the creation/annihilation terms act as a \emph{relatively bounded perturbation} on the $m$‐sectorial operator
\(\mathrm{d}\Gamma\!\bigl(\hat{\mathcal{G}}_{\mathrm{lat}}\bigr)\).
Hence, \(\mathcal{G}_{\mathrm{Fock}}\) itself remains $m$‐sectorial, generating a strongly continuous semigroup on
\(\mathcal{F}\!\bigl(\mathcal{H}_{\mathrm{lat}}\bigr)\).
From a \emph{virological standpoint}, this ensures that the multi‐lattice model (population‐level) inherits the well‐posedness of the single‐lattice PDE–Markov dynamics, while seamlessly accommodating replicative bursts (creation) or immune clearance (annihilation) processes.

\begin{enumerate} 
\item \emph{Semigroup and Mild‐Solution Approach.}\\
Since \(\mathcal{G}_{\mathrm{Fock}}\) is $m$‐sectorial, the Lumer–Phillips theorem guarantees the existence of a strongly continuous semigroup 
\(\{e^{t\,\mathcal{G}_{\mathrm{Fock}}}\}_{t\ge0}\). Given any initial state 
\(\Theta_0 \in \mathcal{F}\bigl(\mathcal{H}_{\mathrm{lat}}\bigr)\),
we define
\begin{equation}
  \Theta(t)
  \;=\;
  e^{\,t\,\mathcal{G}_{\mathrm{Fock}}}\,\Theta_0.
\end{equation}
This \(\Theta(t)\) is the unique \emph{mild solution} to the Cauchy problem
\begin{equation}
  \frac{d}{dt}\,\Theta(t)
  \;=\;
  \mathcal{G}_{\mathrm{Fock}}\,\Theta(t),
  \quad
  \Theta(0) \;=\; \Theta_0.
\end{equation}
Intuitively, each \(\Theta(t)\) is an \emph{orbit} in the hybrid PDE–Markov–Fock space, describing the evolving viral state over time. On the \emph{single‐lattice} level, it encodes continuous vibrational or wave‐like modes through \(\hat{\mathcal{G}}_{\mathrm{lat}}\) as well as Markovian conformational changes. The \emph{Fock‐space} aspect allows for the occupant number \(N(t)\) to fluctuate—some lattices may be \emph{created} (replication events) while others are \emph{annihilated} (e.g., immune‐mediated destruction or spontaneous degradation).
\smallskip
\item \emph{Piecewise‐Continuous (Pathwise) Interleaving of PDE and Jumps}
\noindent
An alternative viewpoint on constructing the mild solution \(\Theta(t)\) is to interleave continuous PDE evolution with discrete jump events. Rather than relying solely on the abstract semigroup approach of Lumer–Phillips, one partitions the time axis based on when occupant transitions or conformational reconfigurations occur.
\begin{itemize}
  \item \emph{Continuous Flow Segments:}
    Between jump times \(\tau_k\) and \(\tau_{k+1}\), the occupant number \(N\) and arrangement labels remain fixed. In each sector \(\mathcal{H}_{\mathrm{lat}}^{\otimes N}\), \(\Theta(t)\) evolves deterministically under the $m$‐sectorial PDE generator \(\hat{\mathcal{G}}_{\mathrm{lat}}^{(N)}\). Concretely, no creation or annihilation event takes place, and no Markov transition alters the arrangement labels during \([\tau_k,\tau_{k+1})\).

  \item \emph{Discrete Jump Events:}
    At \(t=\tau_{k+1}\), either
    \begin{equation}
      \widehat{M}
     \end{equation}
 (Markov operator for discrete occupant/conformational states)
     \begin{equation}
      \quad
      \text{or}
      \quad
      \widehat{c},\,\widehat{c}^\dagger
      \quad
      \text{(creation/annihilation operators)}
    \end{equation}
    modifies \(\Theta(t)\). For instance, occupant number $N(t)$ may change by \(\pm 1\), corresponding to replication (birth of a new lattice) or clearance (removal of an existing lattice). Alternatively, a Markov transition can flip an arrangement label to represent conformational shifts in the $8\times8$ node structure.
\end{itemize}
\end{enumerate}
\end{enumerate}
\noindent
Under standard Lipschitz/dissipativity assumptions, these alternating steps yield a unique, well‐defined trajectory \(\Theta(t)\) on \([0,T]\). The PDE flow on each sub‐interval \([\tau_k,\tau_{k+1})\) is deterministic, while jump times form a countable set with finite rates, preventing “infinite jumps in finite time.” If \(\Theta_1(t)\) and \(\Theta_2(t)\) are two mild solutions with the same initial data \(\Theta_0\), their difference satisfies an integral inequality reflective of the dissipative nature of \(\mathcal{G}_{\mathrm{Fock}}\). A Gronwall estimate (or energy‐norm argument \cite{Pazy1983}) then forces
\begin{equation}
  \|\Theta_1(t) - \Theta_2(t)\|_{\mathcal{F}(\mathcal{H}_{\mathrm{lat}})}
  = 0
  \quad\text{for all }t\ge 0,
\end{equation}
ensuring uniqueness of the resulting orbit \(\Theta(t)\). By constructing a strongly continuous semigroup $\exp(t\,\mathcal{G}_{\mathrm{Fock}})$ under the $m$-sectorial and Lipschitz constraints, and by showing that each sector’s PDE + Markov dynamics and occupant changes are well-posed, we conclude:
\begin{enumerate}[label=(\roman*)]
  \item \emph{Path Space}: Consists of all admissible sequences of continuous PDE segments plus discrete jumps—precisely the orbit structure described above.
  \item \emph{Measure Construction}: Cylinder sets can be formed by specifying occupant/jump configurations at finite time grids, and a probability measure \(\mu\) is assembled from exponential PDE‐semigroup weights, Markov transition probabilities, and occupant‐change operators \cite{DaPratoZabczyk1992,albeverio2012stochastic}.
  \item \emph{Dominant Orbits in Low‐Noise}: With a Feynman–Kac weighting \(\exp[-(1/\epsilon)\,\mathcal{I}]\), large‐deviation analysis identifies “action‐minimizing” paths (or instantons) in the small‐\(\epsilon\) regime.
  \item \emph{Uniqueness and Non‐Branching}: The mild‐solution uniqueness ensures a single, coherent mapping from initial data to the path \(\Theta(t)\), validating the biological plausibility of having one well‐defined infection trajectory per noise realization.
\end{enumerate}
\end{definition}
\paragraph{Virological Context: Host‐Cell Releases, Immune Clearance, and Structural Rearrangements.}
\begin{itemize}
  \item \emph{Consecutive Replication Bursts:} In infected tissues, viral replication often occurs in discrete bursts that reflect synchronized host‐cell lysis or exocytosis events. For example, lytic bacteriophages such as T4 exhibit well‐defined burst events~\cite{Abedon2008}, and influenza virus infections have been shown to produce temporally clustered virus release coinciding with apoptotic cell death~\cite{Shinya2004}.
  
  \item \emph{Immune Interventions:} The host immune response can lead to abrupt decrements in viral load by clearing infected cells. Cytotoxic T lymphocytes and natural killer cells can rapidly eliminate virus‐laden cells~\cite{Klenerman2016}, while neutralizing antibodies can mediate the aggregation and clearance of entire viral clusters~\cite{Corti2011}.
  
  \item \emph{Markov Reconfigurations:} Even when the total virion count remains unchanged, viral particles can switch between distinct structural states. Such reconfigurations—akin to the conformational transitions observed in poliovirus and rhinovirus capsids during cell entry—are well‐documented and can be modeled as Markovian transitions between metastable states~\cite{Rossmann1989, Tuthill2006}.
  
  \item \emph{Continuous Phases:} Between these discrete events, the underlying dynamics of viral assemblies are captured by continuous models. High-resolution cryo-electron microscopy and molecular dynamics simulations have revealed that capsids exhibit wave-like vibrational modes and local diffusion of subunits, phenomena that contribute to their structural flexibility and adaptability~\cite{Zlotnick2005, Mateu2013}.
\end{itemize}

Since each occupant sector (\(N\)-lattice configuration) accommodates an \(m\)-sectorial PDE operator, ensuring no finite‐time blow‐up, and the Markov plus creation/annihilation processes are (relatively) bounded, any mild solution in 
\(\mathcal{F}\bigl(\mathcal{H}_{\mathrm{lat}}\bigr)\)
remains finite‐norm over finite horizons \([0,T]\). This property maintains \emph{biological realism} by precluding unbounded or “infinite” viral loads in finite time—a scenario that would contradict empirical observations in viral dynamics, where even highly replicative infections (e.g., in influenza or HIV) remain limited by resource constraints and immune responses~\cite{NowakMay2000, PerelsonNelson1999}.

Under dissipativity or contraction assumptions, each initial condition \(\Theta_0 \in \mathcal{F}\bigl(\mathcal{H}_{\mathrm{lat}}\bigr)\) uniquely defines a mild solution 
\begin{equation}
  \Theta(t)
  \colon
  [0,T]
  \to
  \mathcal{F}\bigl(\mathcal{H}_{\mathrm{lat}}\bigr),
\end{equation}
representing a well‐defined path despite intrinsic randomness in occupant transitions or PDE noise. In practical terms, for each realization of the underlying stochastic processes (e.g., Markov jumps modeling discrete host‐cell lysis or immune clearance events, and random forcing accounting for gradual structural shifts), there is a single \emph{orbit} \(\gamma(\cdot)\). Probability measures on path space (constructed via cylinder sets, Feynman–Kac weights, or other large‐deviation methods) thus live on these unique trajectories, mirroring how discrete yet bounded viral replication and clearance events produce reproducible, finite-time viral kinetics in vivo.

Crucially, 
\(\mathcal{F}\bigl(\mathcal{H}_{\mathrm{lat}}\bigr)\)
admits arbitrarily large occupant counts \(N\), making it an ideal operator‐theoretic stage for modeling viral populations on the order of \(10^6\)–\(10^9\) distinct \(8\times8\) lattices, all within a single unified framework. This capability is analogous to the high viral loads observed in vivo—for instance, in acute HIV or influenza infections, where millions to billions of virions coexist and interact within infected tissues~\cite{NowakMay2000,PerelsonNelson1999}. By accommodating such extensive populations, our framework captures the multi-scale complexity inherent in viral dynamics, bridging the gap between molecular-scale events (such as capsid rearrangements and cell lysis) and macroscopic epidemiological observations.

Moreover, the unified treatment of both discrete jump processes (modeling, for example, host-cell rupture or immune-mediated clearance) and the continuous evolution of viral assembly via PDE dynamics reflects a comprehensive picture: while immune interventions and other abrupt events cause step-like decrements in viral count, the continuous processes describe subtler rearrangements and diffusion-like transitions within viral lattices. This duality is key to understanding phenomena such as the emergence of escape mutants and the regulation of viral replication, as constrained by physiological and immunological limits. 

\subsubsection{Novel Mathematical Perspective: Topological and Measure‐Theoretic Cohesion}
\label{subsec:TopologicalMeasureTheoreticCohesion}

\noindent
The formalism described above merges partial differential equations (PDEs), Markov processes, and second‐quantized (Fock‐space) techniques into a single operator‐theoretic platform for viral‐lattice modeling. Beyond simply “combining” these tools, it does so in an \emph{infinite‐dimensional} setting, ensuring that both \emph{unbounded occupant numbers} and \emph{continuous lattice dynamics} remain tractable and well‐posed. Below, we detail the essential mathematical novelties of this approach, highlighting how careful topological and measure‐theoretic considerations enable a rigorous analysis of large populations, emergent loops in state space, and dominant orbits under low‐noise conditions.

\begin{enumerate}[label=\emph{\arabic*}.]

  \item \textbf{Path Integrals in Infinite Dimensions.}

    A central feature is the construction of probability measures on the space of \emph{piecewise‐continuous} orbits
    \(\gamma: [0,T] \to \mathcal{F}\bigl(\mathcal{H}_{\mathrm{lat}}\bigr)\).
    Each such \(\gamma\) tracks the entire ``life history'' of a viral population, from occupant‐level replication/clearance to internal PDE modes in each lattice. Unlike finite‐dimensional frameworks, one must account for:

    \begin{itemize}
      \item \emph{Infinite‐Dimensional Integration:}
        The measure \(\mu\) on path space requires definitions via cylinder sets, projective limits, or Feynman–Kac‐type formulations~\cite{albeverio2012stochastic,DaPratoZabczyk1992}. This ensures well‐definedness in the infinite‐dimensional context of unbounded occupant numbers and PDE state variables.
      \item \emph{Functional‐Analytic Constraints:}
        $m$‐sectorial generators guarantee strongly continuous semigroups within each occupant sector and preserve dissipativity when Markov transitions and creation/annihilation events are incorporated~\cite{Pazy1983,Kato1980}. This underpins the stable coupling of PDE flows, discrete jumps, and occupant‐number fluctuations, yielding no finite‐time blow‐up.
    \end{itemize}
    Crucially, such path integrals allow us to compute \emph{expected values} of virological observables (e.g., total viral load, structural rearrangements) and to analyze \emph{fluctuations} within large‐population scenarios.

  \item \textbf{Hybrid PDE–Markov–Fock Orbits.}

    Standard PDE semigroup theory and Markov chain frameworks typically fix the total number of particles (virions). By contrast, viral loads can surge or diminish dramatically, demanding a formalism that accommodates \emph{unbounded} occupant numbers. The second‐quantization approach extends single‐lattice generators \(\widehat{\mathcal{G}}_{\mathrm{lat}}\) to the many‐lattice setting:
    \begin{equation}
      \mathrm{d}\Gamma\!\bigl(\hat{\mathcal{G}}_{\mathrm{lat}}\bigr)
      \quad\text{and}\quad
      \bigl(\widehat{c}, \widehat{c}^\dagger\bigr)
    \end{equation}
    combine to handle occupant creation or annihilation, while PDE and Markov components govern \emph{intra}‐lattice processes. Accordingly, each orbit in 
    \(\mathcal{F}\bigl(\mathcal{H}_{\mathrm{lat}}\bigr)\) 
    is:
    \begin{enumerate}[label=(\alph*)]
      \item \emph{Deterministic in Each Submanifold:} On intervals without jumps, PDE flows remain $m$‐sectorial, capturing wave‐like or diffusive phenomena in each lattice.
      \item \emph{Markovian at Discrete Events:} Conformational changes and occupant‐number shifts occur at finite‐rate jumps, governed by \(\widehat{M}\) and \(\widehat{c}, \widehat{c}^\dagger\).
    \end{enumerate}
    The well‐posedness theorems (see, e.g., Theorem~\ref{thm:ExistenceUniqueness_Orbits} or \cite{BratteliRobinson1987,Pazy1983}) ensure that these \emph{hybrid} orbits in infinite dimensions are globally stable and unique.

  \item \textbf{Dominant Trajectories and Loop Structures.}

    Once existence and uniqueness are established, additional structure emerges from topological and large‐deviation standpoints:

    \begin{itemize}
      \item \emph{Topological Perspective (Homology/Cohomology):}
        In extended state spaces, viral populations may form nontrivial \emph{cycles} or loops of probability current, indicating recurrent replicative or morphological cycles. Biologically, such loops correspond to persistent viral “assembly‐replication‐clearance” cycles that do not settle into static equilibria.
      \item \emph{Large‐Deviation and Instanton Methods:}
        The small‐noise limit \(\epsilon \to 0\) localizes the path measure around \emph{action‐minimizing orbits} (instantons)~\cite{FreidlinWentzell,DemboZeitouni}. From a dynamical viewpoint, these orbits often cluster around stable replicative cycles or morphological transitions, suggesting that \emph{loop‐like} structures not only exist but may be \emph{probabilistically dominant}.
    \end{itemize}
    Taken together, these angles reveal how robust cycle formations in the viral system can be topologically persistent \emph{and} biologically favored under typical host conditions (e.g., partial immune control, limited resources).

  \item \textbf{Novelty—Full Coupling at the Population Level.}
    Classical PDE or Markov theories typically cannot handle free‐growing occupant numbers without ad hoc truncations. Our second‐quantization scheme provides:
    \begin{itemize}
      \item \emph{Fock‐Space Extension:}
        Each single‐lattice operator is “lifted” to \(\mathrm{d}\Gamma(\hat{\mathcal{G}}_{\mathrm{lat}})\), and occupant‐creation/clearance processes (\(\widehat{c}^\dagger, \widehat{c}\)) enable unbounded changes in \(N\). The entire population thus evolves in one algebraic structure, preserving linearity and spectral clarity.
      \item \emph{Well‐Posedness via $m$‐Sectoriality:}
        Local Lipschitz constraints on replication rates ensure that even if occupant counts become large, the system remains globally solvable and dissipative in the sense of \cite{DaPratoZabczyk1992,Pazy1983}. Unbounded occupant numbers do not compromise well‐posedness.
    \end{itemize}
    From a path‐integral standpoint, the measure then resides over a continuum of orbits, each reflecting different occupant counts and PDE states. Biophysically, this links \emph{micro‐level} lattice rearrangements (energetics, wave‐like motions) with \emph{macro‐level} resource‐limited replication in a single, mathematically rigorous model. Consequently, large populations (e.g., $10^6$--$10^9$ virions) remain tractable within the same operator‐theoretic and measure‐theoretic framework, offering new insights into how real infections progress across spatiotemporal and population scales.
\end{enumerate}
The trajectory‐based formulation described so far provides a powerful lens into the spatiotemporal evolution of an entire viral population, incorporating both \emph{continuous} lattice‐level interactions and \emph{discrete} occupant transitions. From a purely mathematical standpoint, the key insight is that each trajectory in \(\mathcal{F}\bigl(\mathcal{H}_{\mathrm{lat}}\bigr)\) encompasses both the \emph{internal PDE modes} of every lattice (covering wave‐like or diffusive forces on the virion nodes) and the \emph{macro‐scale occupant changes} that lead to population expansion or depletion. Below, we analyze how these ideas can be applied to \emph{ideal virions}—spherical or icosahedral viral particles in a simplified geometry—to glean theoretical predictions about emergent dynamics and their implications for actual virus species.

An ``ideal virion'' is often envisioned as a highly symmetric nanoparticle, such as a spherical or icosahedral capsid. Mathematically, we may impose boundary conditions or lattice potentials (e.g.\ Lennard–Jones, Coulombic) that respect this symmetry, leading to PDE operators \(\hat{\mathcal{G}}_{\mathrm{PDE}}\) with structured eigenmodes. In the \(8\times8\) node model, adjacent virions (nodes) are linked by spring‐like or hydrodynamic forces. When one node undergoes a small displacement, a \emph{wave} can propagate through the lattice. For \emph{ideal} spherical capsids, these waves exhibit regular patterns (e.g.\ radial breathing modes, tangential shear modes) that can be captured by $m$‐sectorial PDEs on \(\mathcal{H}_{\mathrm{PDE}}\). Symmetric boundary conditions and node coupling can yield \emph{resonant vibrational modes} within the PDE operator’s spectrum, possibly stabilizing or destabilizing certain arrangements. 
    The interplay of these normal modes with Markov jumps and occupant‐level replication 
    can produce complex spatiotemporal patterns.~\cite{BratteliRobinson1987,KnipeHowley2020}.

\paragraph{Discrete Markov Transitions and Rapid Conformational Flips.}
\begin{itemize}
  \item \emph{Arrangement Jumps in Icosahedral Lattices:}  
    Even if a virion is ``ideal,'' it may still undergo subunit reconfigurations (e.g.\ minor expansions, local rearrangements of capsomers). These Markovian jumps are typically \emph{fast} relative to the continuous PDE timescale, producing abrupt transitions in shape or internal tension.
  
  \item \emph{Competition with Wave‐Like Modes:}  
    Markov transitions can lock or unlock certain resonant modes by changing the effective boundary constraints or local adjacency in the $8\times8$ node network. In practice, this might correspond to partial uncoating or intermediate states in a T4‐like phage, further influenced by host environmental factors~\cite{flint2015principles}.
  
  \item \emph{Biological Implication—Transient Morphologies:}  
    The presence of conformational flips suggests that viruses like SARS‐CoV‐2, H5N1, or T4 phage may inhabit meta‐stable conformations where PDE‐driven dynamics (e.g.\ radial expansions, vibrations) are only briefly favored before the lattice reorganizes under a Markov jump. In small‐noise or resource‐rich environments, certain morphological states may become \emph{dominant loops}, sustaining persistent cycles of assembly and partial disassembly.
\end{itemize}

\paragraph{Creation/Annihilation Processes in Large Viral Populations.}
\begin{itemize}
  \item \emph{Occupant‐Number Dynamics:}  
    The creation operator \(\widehat{c}^\dagger\) models replication bursts, as one ideal lattice spawns new virion lattices (e.g.\ once a host cell releases progeny). Conversely, \(\widehat{c}\) captures annihilation events such as immune clearance. Thus, occupant numbers may range from near zero to massive loads, all embedded within a single Fock‐space wavefunction \(\ket{\Psi(t)}\).
  
  \item \emph{Interplay with PDE Resonances:}  
    Replication or clearance events can \emph{reset} or \emph{reinforce} internal lattice waves. For instance, a newly added lattice (creation) might begin in a high‐energy vibrational mode if triggered by mechanical stress; or conversely, occupant clearance may remove a lattice that was sustaining wave coherence across the population. In all cases, the PDE modes within each occupant sector $N$ remain well‐defined by the $m$‐sectorial generator $\hat{\mathcal{G}}_{\mathrm{lat}}^{(N)}$.
\end{itemize}

\paragraph{Emergent Phenomena and Theoretical Biology Conjectures.}
\begin{itemize}
  \item \emph{Self‐Organized Waves and Phonon‐Like Excitations:}  
    In large viral populations of ``ideal'' lattice geometry, coherent wave modes might span multiple lattices—particularly if inter‐lattice coupling (e.g.\ by adjacency, fluid flow) is included. This could manifest as collective breathing‐type oscillations across segments of the population, reminiscent of phonon excitations in crystal lattices~\cite{harvey2019viral}.
  
  \item \emph{Nontrivial Loops in Replicative Cycles:}  
    Markov jumps, PDE flows, and occupant expansions may combine into stable cyclical patterns—loops where viruses repeatedly assemble, replicate, partially disassemble, and reassemble (e.g.\ T4 phage lytic/lysogenic transitions). Under small noise, these loops become “dominant” in a large‐deviation sense, offering a theoretical biology parallel to observed cyclical infection patterns in lab or host settings~\cite{Wölfel2020,KnipeHowley2020}.
  
  \item \emph{Implications for Real Species (T4, SARS‐CoV‐2, H5N1):}  
    \begin{itemize}
      \item \textbf{Bacteriophage T4:}  
        Known for robust capsid mechanics and well‐studied assembly cycles, T4 phages might show strong PDE resonance in certain tail‐fiber or head‐expansion modes, which our approach could capture as high‐amplitude wave solutions. Conformational flips are particularly relevant for tail attachment events.
      \item \textbf{SARS‐CoV‐2:}  
        Though enveloped rather than purely icosahedral, portions of the capsid or membrane proteins might still exhibit wave‐like expansions. The Markov transitions could represent rapid spike‐protein conformational changes critical to cell entry, with occupant expansions reflecting large‐scale viral budding in host cells.
      \item \textbf{H5N1 Influenza:}  
        With segmented genome organization and complex morphological transitions, wave phenomena might correspond to internal matrix‐protein rearrangements. Occupant creation simulates the high replication bursts in infected lung tissue, while annihilation operators handle immune clearance or antiviral drug action.
    \end{itemize}
\end{itemize}
\paragraph{Mathematical Tractability and Simulation Outlook.}
\begin{itemize}
  \item \emph{Operator‐Theoretic Scalability:}  
    Because each lattice retains its PDE detail and occupant transitions scale via Fock‐space sums, modeling millions of virions remains formally consistent under $m$‐sectorial constraints. This bypasses the combinatorial explosion typical of naive $N$‐particle expansions, ensuring the theory can handle biologically relevant loads.
  \item \emph{Potential for Numerical Methods:}  
    In principle, coarse‐grained PDE approximations and Monte Carlo–type sampling of occupant transitions could yield simulations that track wave modes, occupant changes, and noise effects. While computationally intensive, such methods offer the prospect of bridging \emph{in silico} predictions with experimental data on morphological states, plaque assays, or single‐virus tracking \cite{Grunewald2003}.
  \item \emph{New Theoretical Routes:}  
    One might conjecture that in certain parameter regimes, PDE wave coherence across multiple occupant sectors fosters emergent meta‐structures—for instance, quasi‐periodic “viral crystals” in local tissue pockets. Exploring these conjectures mathematically might involve topological invariants (homology groups) to classify loop cycles, or large‐deviation principles to identify high‐probability morphological transitions in resource‐limited environments.
\end{itemize}

\noindent
Altogether, by combining the PDE description of intra‐lattice elasticity or wave motion with Markov transitions and creation/annihilation operators, this framework reveals a \emph{rich variety of emergent behaviors}. Even an \emph{ideal virion} model can exhibit intricate spatiotemporal patterns—\emph{collective oscillations}, \emph{cyclical reconfiguration loops}, and \emph{resonant coupling} with occupant‐level changes—that align with known virological phenomena (e.g.\ T4 phage assembly, SARS‐CoV‐2 spike transitions, H5N1 morphological shifts). From a mathematical standpoint, the system remains tractable thanks to $m$‐sectorial theory in Fock space, offering a coherent, operator‐theoretic approach that captures both micro‐scale capsid physics and macro‐scale population expansions. 

\subsection{Extending to a Hybrid Hilbert Space for Wavefronts and Multi‐Scale Dynamics}
\label{subsec:HybridHilbertSpaceWavefronts}

\noindent
In the preceding sections, we unified PDE‐driven lattice dynamics, discrete Markov processes, and second‐quantized (Fock) spaces to accommodate fluctuations in occupant number. Although this framework already captures much of the richness of viral lattice evolution, certain \emph{wavefront‐like} solutions—where mechanical or diffusive waves traverse multiple lattices or arrangement states—require an \emph{expanded} perspective. We introduce here a \emph{hybrid Hilbert space}, denoted 
\(\widetilde{\mathcal{H}}_{\mathrm{lat}}\),
designed to handle traveling waves, noise, Markov jumps, and unbounded occupant counts in tandem. This enlarged state space enables the analysis of \emph{collective} wave propagation throughout interconnected \(8\times8\) lattices under continuous PDE effects, occupant creation/annihilation, and conformational rearrangements.

\paragraph{Wave‐Mechanical \& Virological Interpretation.}
\begin{itemize}
  \item \emph{Traveling‐Wave Solutions Across Arrangement Labels:} 
    Consider the classic case of viral plaque assays, where the expansion of viral plaques in cell culture is often observed as a traveling wave front~\cite{EllisDelbruck1939, NowakMay2000}. Here, a wave‐like disturbance—potentially driven by a mechanical pulse initiated by subunit binding or conformational shifts—propagates across a lattice in a given arrangement \(y_1\). Upon reaching a critical amplitude, the local structure may transition to a different configuration \(y_2\), analogous to the switch in cell states observed in tissue infection models.
  
  \item \emph{Population‐Level Linking:}
    Experimental studies on viral dynamics have shown that localized bursts of virion production, coupled with immune clearance events, lead to abrupt changes in viral load~\cite{PerelsonNelson1999}. In our framework, a traveling wave in one arrangement sector can trigger occupant creation (mimicking replication bursts) or annihilation (reflecting immune-mediated clearance) in another via Fock space operators. This duality is well illustrated by the rapid clearance of HIV-infected cells and the corresponding rebound in viral replication, which are mediated by discrete, stochastic events.
  
  \item \emph{Multi‐Scale Consistency:}
    By embedding wavefront dynamics into a single, \(m\)-sectorial generator over \(\widetilde{\mathcal{H}}_{\mathrm{lat}}\), our model maintains strong continuity and dissipativity. This is critical in reproducing observed phenomena where traveling waves of mechanical stress or diffusive signaling—such as those seen in capsid conformational transitions and subunit diffusion captured via cryo-electron microscopy and molecular dynamics simulations~\cite{Zlotnick2005, Mateu2013}—remain bounded by physiological constraints (e.g., finite cellular resources and damping by the immune system).
\end{itemize}

\medskip
\noindent
\textbf{1. Rationale for a Hybrid Space.}
\begin{itemize}
  \item \emph{Wavefront Coupling Across Arrangement States:}  
    Traditional single‐lattice descriptions treat each arrangement label \(y\) as a separate sector. However, real viral processes often involve wave‐like expansions in one arrangement that \emph{trigger} a jump to another arrangement \(y'\). For instance, a traveling wave in a “compressed” configuration might reach a threshold that induces a Markovian transition to a 'swollen' configuration, effectively `handing off' the wave to a new submanifold in arrangement space. A \emph{hybrid} construction that merges all arrangement indices into a single, larger Hilbert space \(\widetilde{\mathcal{H}}_{\mathrm{lat}}\) allows these cross‐sector dynamics to occur seamlessly.

  \item \emph{Non‐Self‐Adjoint, Open‐System Dynamics:}  
    Viral populations constitute open systems, with occupant creation (replication) and annihilation (immune clearance) happening concurrently. Maintaining $m$‐sectoriality for the global generator (covering PDE dynamics, Markov jumps, and occupant transitions) ensures no finite‐time blow‐ups \cite{Pazy1983,DaPratoZabczyk1992}, even as occupant numbers grow or wavefronts propagate. This open‐system perspective is crucial in modeling the realistic thermodynamics of viral infections—where resources are not unlimited and viral expansions eventually saturate or plateau.

  \item \emph{Noise and Stochastic Wavefronts:}  
    Biological wave phenomena are rarely deterministic. Thermal agitations, host‐resource fluctuations, and random subunit binding events can modulate wave amplitude or velocity. By incorporating noise operators \(\widehat{N}_y(\mathbf{u})\) (in Stratonovich or Itô form) within each arrangement sector, one captures stochastic wavefront propagation. These random perturbations mirror the messy, real‐world environment in which viruses move (e.g., heterogeneous tissues or fluctuating pH, local ionic strength).

  \item \emph{Multi‐Scale Dynamics:}  
    A hybrid approach merges occupant transitions (macroscopic or population‐level) with PDE wavefront motions (mesoscopic or sub‐particle scale) into a \emph{unified} Hilbert space. This genuinely multi‐scale viewpoint allows small‐scale wavefront events—like localized mechanical stress in a few nodes of an \(8\times8\) lattice—to influence large‐scale infection patterns when occupant numbers change. Conversely, population‐level shifts (e.g., a sudden replication burst) may alter mechanical boundary conditions or resource availability at the lattice scale. Such an interplay is seldom addressed in classic virology or epidemiological models, underlining the novelty of this operator‐theoretic framework.
\end{itemize}
\medskip
\noindent
This new space \(\widetilde{\mathcal{H}}_{\mathrm{lat}}\) thus provides a rigorous platform to investigate \emph{collective wave propagation} across numerous interconnected viral lattices, each subject to occupant fluctuations and conformational changes. In what follows, we outline how to define the corresponding $m$‐sectorial generator, ensuring well‐posedness and physical realism even under open‐system conditions and stochastic influences.

\begin{definition}[The Hybrid Hilbert Space \(\widetilde{\mathcal{H}}_{\mathrm{lat}}\)]
\label{def:HybridHilbertSpace}
\noindent
We refine the single‐lattice framework to explicitly accommodate wavefront solutions and traveling‐wave modes in each arrangement sector \(y\). Specifically, let
\begin{equation}
  \mathcal{H}_y
  \;=\;
  L^2(\Omega)
  \;\times\;
  \mathcal{W}_{\mathrm{wave}}(y),
\end{equation}
where:
\begin{itemize}
  \item \(L^2(\Omega)\) represents a continuum PDE subspace (e.g., displacement fields, concentration profiles, or other continuum‐scale variables) equipped with the usual \(L^2\)-norm. 
  \item \(\mathcal{W}_{\mathrm{wave}}(y)\) is a function space \textbf{specialized to traveling‐wave or wavefront‐like solutions} relevant to arrangement sector \(y\). Concretely, it might include:
  \begin{enumerate}
      \item \emph{Boundary‐Layer or Shock Profiles:} Functions of the form \(\psi(x - ct)\) that capture moving transition zones (e.g., shock layers, rarefaction waves) at a wave speed \(c\).
      \item \emph{Phase‐Field or Spatial Phase Variables:} Mappings \(\theta(\mathbf{x},t)\) that encode the progression of an interface or wavefront across \(\Omega\).
      \item \emph{Marching Boundary Conditions:} Parameterized sets of boundary conditions shifting in time or space, such as \(\phi(\mathbf{x} - \chi(t))\), describing a traveling boundary or wave origin.
      \item \emph{Spectral Wave Expansions:} Decompositions into wave‐like basis functions (e.g., Fourier, Chebyshev, or wavelet modes) restricted to a subspace reflecting traveling or localized wave solutions.
  \end{enumerate}
  The precise choice of topology on \(\mathcal{W}_{\mathrm{wave}}(y)\) (e.g., normed or weak topologies) depends on the PDE type (hyperbolic, parabolic, or mixed) and the boundary‐value specifications in sector \(y\).
\end{itemize}
We then define
\begin{equation}
  \widetilde{\mathcal{H}}_{\mathrm{lat}}
  \;=\;
  \bigoplus_{y \in \mathcal{Y}}
  \;\mathcal{H}_y,
\end{equation}
where \(\mathcal{Y}\) indexes the discrete arrangement states and can be extended to include occupant numbers in a Fock‐like formalism if the viral population is unbounded. Hence, \(\widetilde{\mathcal{H}}_{\mathrm{lat}}\) unifies:
\begin{enumerate}[label=(\roman*)]
  \item \emph{Continuous PDE Subspaces for Wavefronts:} 
    Each \(\mathcal{H}_y\) can capture evolving wave‐like disturbances (shock or rarefaction waves, traveling interfaces) that affect how virion lattice nodes (in an \(8\times8\) arrangement) respond mechanically or diffuse subunits across the domain \(\Omega\).\qedsymbol

  \item \emph{Discrete Arrangement Indices:}
    Each lattice sector \(y\) corresponds to a conformational or occupant state, and wavefronts may ``jump’’ to a new sector \(y'\) when a Markovian event (e.g., capsid reconfiguration or occupant replication/clearance) modifies boundary conditions, geometry, or mechanical parameters.

  \item \emph{Noise or Damping Operators:}
    Biological systems entail thermal fluctuations and viscoelastic damping; real wave propagation is thus rarely purely hyperbolic. Each \(\mathcal{H}_y\) can incorporate noise operators (Stratonovich or Itô) and damping terms. In wave mechanics terms, these reflect frictional or diffusion‐type forces shaping the amplitude and velocity of the traveling wave.
\end{enumerate}

The hybrid space \(\widetilde{\mathcal{H}}_{\mathrm{lat}}\) merges continuum PDE subspaces (enhanced by wave‐specific components \(\mathcal{W}_{\mathrm{wave}}(y)\)) with discrete arrangement and occupant indices. This multi‐scale framework admits rigorous operator‐theoretic analysis of \emph{collective wave propagation} within and across viral lattices. It thus supports advanced modeling of traveling‐wave phenomena (shock fronts, boundary layers, etc.) in tandem with occupant replication, Markov jumps, and wavefront‐driven conformational changes, all while preserving well‐posedness via $m$‐sectoriality.
\end{definition}

\subsubsection{Emergent Mechanics: Traveling Waves in Open, Noisy Populations}

\noindent
Building upon the hybrid Hilbert space \(\widetilde{\mathcal{H}}_{\mathrm{lat}}\) introduced in Section~\ref{subsec:HybridHilbertSpaceWavefronts}, we now examine how wavefront solutions interact with Markov jumps, occupant‐number fluctuations, and stochastic noise in an open‐system setting. This perspective illuminates the formation of \emph{traveling infection fronts}, piecewise wave propagation across arrangement sectors, and large‐scale spatiotemporal patterns that merge local PDE dynamics with global occupant changes.

\begin{theorem}[Wavefront Self‐Propagation and Markov Coupling]
\label{thm:WavefrontMarkovCoupling}

Consider a traveling‐wave solution \(\bm{u}_y(t,\mathbf{x})\in \mathcal{H}_y\) progressing in arrangement sector \(y\). Suppose boundary conditions or resource constraints force the wave into a region where a Markov jump to \(y'\neq y\) occurs with nonzero probability. Then the wavefront can ``jump'' into the new arrangement sector \(y'\), producing a \emph{piecewise traveling‐wave pattern} across the boundary between \(\mathcal{H}_y\) and \(\mathcal{H}_{y'}\). If occupant expansions (creation events) occur downstream of the wave, the front may \emph{split} or \emph{branch}, leading to multiple simultaneous propagation fronts. This phenomenon can yield \emph{wavefront splitting}, where a single traveling wave spawns multiple fronts (e.g.\ one continues in the original arrangement sector, another emerges in the new sector $y'$). Biologically, this could represent scenarios where a partial conformational shift or subunit rearrangement triggers a second wave of infection or morphological transition. The Markov operator \(\widehat{M}\) thus not only flips arrangement states but can \emph{redirect} traveling waves into new configurations, offering a piecewise deterministic but globally stochastic wave dynamic.
\end{theorem}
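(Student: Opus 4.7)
The plan is to prove the theorem by a piecewise construction: assemble the full trajectory as a concatenation of deterministic traveling-wave segments (in each fixed arrangement sector) glued together at the discrete jump times of the Markov operator $\widehat{M}$, with occupant-splitting encoded through the Fock-space creation operators. First, I would invoke Theorem~\ref{thm:ExistenceUniquenessFinal}-type well-posedness for the hybrid PDE--Markov--Fock evolution on $\widetilde{\mathcal{H}}_{\mathrm{lat}}$ to guarantee that on each sub-interval between jumps, the wave equation restricted to $\mathcal{H}_y$ admits a unique traveling-wave solution $\bm{u}_y(t,\mathbf{x}) = \psi_y(\mathbf{x} - c_y t)$ in the wave subspace $\mathcal{W}_{\mathrm{wave}}(y)$. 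The $m$-sectorial generator $\widehat{\mathcal{G}}_y$ ensures that this profile propagates with bounded amplitude and a well-defined group velocity $c_y$, set by the symbol of the PDE operator in sector $y$.

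Next, I would construct the jump mechanism explicitly. Let $\tau_1 < \tau_2 < \cdots$ denote the Markov jump times, which are a.s.\ discrete with rates $\lambda_{y\to y'}(\bm{u}_y(\tau_k^-))$ that may depend on the instantaneous wave amplitude or boundary-layer position. At each $\tau_k$, I would apply the transition operator $\widehat{M}_{y\to y'}$ to map the pre-jump state $\bm{u}_y(\tau_k^-, \cdot) \in \mathcal{H}_y$ onto an initial datum $\bm{u}_{y'}(\tau_k^+, \cdot) \in \mathcal{H}_{y'}$. The piecewise trajectory is then defined recursively by evolving under $\exp\bigl(t\,\widehat{\mathcal{G}}_{y'}\bigr)$ until the next jump. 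Because $\widetilde{\mathcal{H}}_{\mathrm{lat}} = \bigoplus_y \mathcal{H}_y$, the projection onto the active sector gives a well-defined state at every $t$, and the overall path is piecewise continuous, matching the orbit structure of Definition~\ref{def:OrbitsHybridSetting}.

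The wavefront-splitting assertion requires an additional argument coupling the PDE flow to the creation operator $\widehat{c}^\dagger$. I would formalize ``downstream occupant expansion'' by requiring that in a spatial region ahead of the wavefront, the creation rate $\mathcal{R}(\hat{N})$ is locally nonzero; a creation event at time $\tau_k^{\mathrm{cr}}$ lifts the state from the $N$-sector into the $(N{+}1)$-sector of $\mathcal{F}(\widetilde{\mathcal{H}}_{\mathrm{lat}})$. On the new lattice factor, one can impose an initial wave profile inherited from the parent wave (via the coupling kernel used to define $\widehat{c}^\dagger$), so that each of the two lattice slots now supports an independently propagating front in its own arrangement sector. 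Iterating this across jump times yields the claimed piecewise, branching traveling-wave structure.

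The main obstacle will be establishing genuine interface compatibility of the wave profile across the Markov jumps: the pre-jump trace $\bm{u}_y(\tau_k^-,\cdot)$ may not automatically lie in a subspace of $\mathcal{H}_{y'}$ that supports a traveling-wave continuation, so the transition operator $\widehat{M}_{y\to y'}$ must be shown to map the traveling-wave subspace $\mathcal{W}_{\mathrm{wave}}(y)$ into (or suitably project onto) $\mathcal{W}_{\mathrm{wave}}(y')$ without destroying the wavefront's integrity. I would handle this by imposing a mild \emph{compatibility hypothesis} on $\widehat{M}$---namely, that its matrix elements between wave subspaces are bounded and preserve the leading-order characteristic structure---and then verify that this hypothesis is consistent with the $m$-sectorial bounds on $\widehat{\mathcal{G}}_{\mathrm{tot}}$ via a relatively-bounded perturbation argument. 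A secondary subtlety is proving that the countable jump events do not accumulate in finite time, which follows from the finite-rate assumption on $\widehat{M}$ combined with the sublinear growth of $\mathcal{R}(\hat{N})$ established in Remark~\ref{remark:unique_global_mild}.
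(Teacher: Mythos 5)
Your proposal follows the same basic route as the paper's own (very brief) proof sketch: split time at the jump instant, evolve the wave under the sectorwise $m$-sectorial PDE generator on each sub-interval, and glue the segments via the Markov transition, with occupant creation supplying the branching. Where you go beyond the paper is in making the gluing and the splitting precise, and this is where your version is genuinely stronger. The paper's sketch simply asserts that "uniqueness in each sub-interval and Markov chain continuity" yield a piecewise traveling solution; it never addresses whether the pre-jump trace $\bm{u}_y(t_0^-,\cdot)$ lands in a subspace of $\mathcal{H}_{y'}$ that actually supports a traveling-wave continuation. Your compatibility hypothesis --- that $\widehat{M}_{y\to y'}$ maps $\mathcal{W}_{\mathrm{wave}}(y)$ boundedly into $\mathcal{W}_{\mathrm{wave}}(y')$ while preserving the leading characteristic structure --- names exactly the condition the theorem's conclusion tacitly requires, and without which the "piecewise traveling-wave pattern" claim is only a claim about piecewise mild solutions, not about wavefronts. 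Likewise, your explicit Fock-space mechanism for splitting (a creation event lifting the state to the $(N{+}1)$-sector with an inherited profile on the new factor) makes concrete what the paper leaves at the level of "occupant creation can alter domain size or wave amplitude," and your non-accumulation argument for the jump times is a necessary step the paper omits entirely. In short: same skeleton, but your proposal identifies and patches the two real gaps in the paper's argument; the price is the additional compatibility hypothesis on $\widehat{M}$, which the paper would need to adopt (or verify for its intended examples) for the theorem to hold as stated.
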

\smallskip
\noindent
\begin{proof}[Proof (Sketch).]  
Let \(t_0\) be the time at which the wave in \(\mathcal{H}_y\) contacts a boundary segment where $\widehat{M}$ might trigger a jump to $y'$. Decompose the PDE evolution into two sub‐intervals $[0,t_0)$ and $(t_0,\infty)$. On $[0,t_0)$, \(\bm{u}_y(t,\mathbf{x})\) solves an $m$‐sectorial PDE operator in the sector $y$. At $t_0$, a jump event occurs with nonzero probability $p_{y\to y'}>0$. Post‐jump, the wave function transitions to \(\bm{u}_{y'}(t,\mathbf{x})\), potentially altering boundary or dispersion relations. Via uniqueness in each sub‐interval and Markov chain continuity, we obtain a continuous but piecewise traveling solution that follows PDE rules in $y$ then PDE rules in $y'$. Occupant creation can alter domain size or wave amplitude, resulting in wavefront branching or splitting.  
\qedsymbol
\end{proof}

\begin{lemma}[Noise‐Driven Wavefront Instabilities]
\label{lemma:NoiseDrivenInstabilities}
\noindent
In an open‐system PDE (subject to occupant creation/annihilation and Markov jumps), state‐dependent or additive noise can destabilize or accelerate wavefronts. Specifically, if small fluctuations in node displacements at the wavefront reinforce local occupant replication (creation), the wave may \emph{speed up} or proliferate. 

Conversely, occupant annihilation events can quench the wave by removing lattices that serve as a substrate for propagation. Random forcing in PDEs is known to induce noise‐driven \emph{front acceleration} or \emph{pinning/unpinning} transitions \cite{DaPratoZabczyk1992}. In viral lattices, occupant expansions that saturate resource constraints might further amplify the effect: once the wave has access to abundant replication (creation) near the front, it sustains or even augments local node oscillations. Meanwhile, occupant annihilation effectively reduces the wave’s domain, leading to abrupt wave termination. Let $\widehat{N}_y(\bm{u})$ be the noise operator coupling to the PDE in arrangement sector $y$. By linearizing around the traveling wave front, we obtain a stochastic partial differential equation of the form
\begin{equation}
  \partial_t \delta \bm{u}
  \;\approx\;
  \bigl(\mathcal{G}_y + \mathbf{B}\bigr)\,\delta \bm{u}
  \;+\;
  \widehat{N}_y(\bm{u_{\mathrm{wave}}}),
\end{equation}
where $\mathbf{B}$ collects occupant‐creation feedback. A standard energy estimate or Lyapunov approach (see \cite{Pazy1983}) shows that if the linearization has positive real parts in its spectrum (due to occupant expansions), then small perturbations grow. Equally, annihilation processes can introduce negative terms, damping wave energy. Hence, noise either magnifies or dampens the front depending on the sign of these feedback terms.  
\qedsymbol
\end{lemma}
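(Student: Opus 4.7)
The plan is to prove the lemma by linearizing the hybrid stochastic PDE--Fock dynamics around a reference traveling wave and performing a stochastic stability and front-velocity analysis in a comoving frame. First, I would fix an occupant sector $N$ and an arrangement label $y$, and assume that the noise-free, occupant-frozen PDE admits a traveling wave $\bm{u}_{\mathrm{wave}}(t,\mathbf{x}) = \Phi(\mathbf{x}-c_0 t)$ with baseline speed $c_0$, satisfying $\mathcal{G}_y\,\Phi = -c_0\,\partial_\xi \Phi$ in the wave coordinate $\xi = \mathbf{x}-c_0 t$. Such a profile is the natural reference for perturbation analysis because translational invariance gives a neutral mode $\partial_\xi \Phi \in \ker \mathcal{L}_y$, where $\mathcal{L}_y := \mathcal{G}_y + c_0\,\partial_\xi$ is the comoving linearization.

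Second, I would write $\bm{u}(t) = \Phi + \delta\bm{u}(t,\xi)$, then derive an effective linearized SPDE for $\delta\bm{u}$ by projecting the full Fock-space generator $\mathcal{G}_{\mathrm{Fock}}$ (cf.\ Theorem \ref{thm:m_sectorial_generator}) onto the $N$-lattice sector under a mean-field closure. This yields
\begin{equation}
  \partial_t \delta\bm{u} \;=\; \bigl(\mathcal{L}_y + \mathbf{B}(\Phi)\bigr)\,\delta\bm{u} \;+\; \widehat{N}_y(\Phi)\,\eta(t),
\end{equation}
where $\mathbf{B}(\Phi)$ encodes the linearized occupant feedback: it is the variational derivative of the resource-limited creation amplitude $\hat{\Gamma}_+(f)=\hat{a}^\dagger(f)\mathcal{R}(\widehat{N})$ (and its annihilation counterpart $\hat{\Gamma}_-$) with respect to local node displacement evaluated at $\Phi$, and $\eta(t)$ is a cylindrical Wiener increment. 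Applying Itô's formula to the stochastic Lyapunov functional $V(t)=\mathbb{E}\|\delta\bm{u}(t)\|_{L^2_\rho}^2$ in a weighted norm adapted to $\Phi$'s asymptotic behavior, one obtains
\begin{equation}
  \tfrac{d}{dt}V(t) \;\le\; 2\,\mathrm{Re}\,\sigma(\mathcal{L}_y + \mathbf{B}(\Phi))\,V(t) \;+\; \mathrm{tr}\!\bigl(\widehat{N}_y(\Phi)\widehat{N}_y(\Phi)^\ast\bigr),
\end{equation}
so destabilization follows whenever $\mathbf{B}(\Phi)$ shifts the leading eigenvalue of $\mathcal{L}_y+\mathbf{B}(\Phi)$ into the right half-plane, while annihilation-dominated $\mathbf{B}$ drives it into the left half-plane, producing damping.

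Third, to extract the front-velocity correction explicitly, I would project $\delta\bm{u}$ onto the translational neutral mode $\partial_\xi\Phi$ via the adjoint kernel eigenfunction $\psi^\ast$ of $\mathcal{L}_y^\ast$, writing $\delta\bm{u}(t,\xi) = \chi(t)\,\partial_\xi\Phi + \delta\bm{u}^\perp(t,\xi)$. Standard spectral projection arguments for traveling fronts then yield an Itô SDE for the front position,
\begin{equation}
  d\chi(t) \;=\; \bigl\langle \psi^\ast,\,\mathbf{B}(\Phi)\,\partial_\xi\Phi\bigr\rangle\,dt \;+\; \bigl\langle \psi^\ast,\,\widehat{N}_y(\Phi)\bigr\rangle\,dW_t,
\end{equation}
whose drift is strictly positive precisely when noise-driven perturbations at the front couple constructively to occupant creation, thereby proving the claimed acceleration. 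Wavefront pinning, branching into sector $y'$, and front extinction under annihilation are then read off as degenerate or sign-changing cases of this drift.

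The hard part will be making the mean-field reduction from the true Fock-space evolution to the closed linearized PDE rigorous, since creation and annihilation genuinely shift occupant sectors rather than acting as multiplicative feedback within a single sector. I expect this to require either a BBGKY-style hierarchy truncation under a low-correlation assumption on $\ket{\Psi(t)}$, or a Dynkin-type martingale decomposition for the piecewise-deterministic Markov process that absorbs occupant jumps into an effective drift $\mathbf{B}(\Phi)$ with controllable remainder. A secondary obstacle is controlling the trace $\mathrm{tr}(\widehat{N}_y(\Phi)\widehat{N}_y(\Phi)^\ast)$ in the comoving weighted space, which for genuinely multiplicative noise requires weighted Sobolev estimates compatible with $\Phi$'s exponential tails, together with verification that the neutral projection $\langle \psi^\ast,\widehat{N}_y(\Phi)\rangle$ remains finite despite state-dependence.
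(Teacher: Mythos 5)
Your proposal follows the same core route as the paper's own argument—linearize about the traveling wave, obtain $\partial_t\delta\bm{u}\approx(\mathcal{G}_y+\mathbf{B})\,\delta\bm{u}+\widehat{N}_y(\bm{u}_{\mathrm{wave}})$, and read off growth or damping from the real part of the spectrum via an energy/Lyapunov estimate—but you carry it considerably further than the paper does, and the extra structure you add is precisely what the paper's sketch omits. Two of your additions are genuine improvements rather than decoration. First, the paper concludes ``small perturbations grow'' from a positive spectral real part and identifies this with front acceleration; but for a traveling wave the linearization $\mathcal{L}_y=\mathcal{G}_y+c_0\,\partial_\xi$ always carries the neutral translational mode $\partial_\xi\Phi$, so amplitude growth of $\delta\bm{u}$ and a change in front \emph{speed} are distinct phenomena. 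Your projection onto $\partial_\xi\Phi$ via the adjoint eigenfunction $\psi^\ast$, yielding an SDE for the front position $\chi(t)$ whose drift is $\langle\psi^\ast,\mathbf{B}(\Phi)\,\partial_\xi\Phi\rangle$, is the standard and correct way to make ``acceleration'' a theorem rather than a gloss; the paper's version does not separate these. Second, you explicitly flag that $\mathbf{B}$ cannot simply ``collect occupant‐creation feedback'' inside a single PDE sector, since $\hat{\Gamma}_\pm$ genuinely move the state between Fock sectors; your proposed mean‐field/BBGKY closure or Dynkin‐martingale absorption of the jumps into an effective drift is the missing reduction that the paper assumes without comment. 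The paper's approach buys brevity and stays at the level of the $m$‐sectorial formalism already established (Theorem~\ref{prop:GlobalHybridOperator}); yours buys an actual quantitative statement about wave speed and an honest accounting of where the occupant‐sector coupling enters. The remaining work in your plan—controlling $\mathrm{tr}(\widehat{N}_y(\Phi)\widehat{N}_y(\Phi)^\ast)$ in a weighted comoving space and justifying the closure—is real, but you have identified it correctly, and nothing in the paper's own argument resolves those points either.
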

\medskip

\begin{definition}[Multi‐Scale Couplings with Large Occupant Numbers]
\label{def:MultiScaleCouplings}
\emph{Multi‐scale coupling} refers to the concurrent emergence of multiple traveling waves across different occupant sectors or arrangement labels. Formally, given occupant numbers $N_1,\dots,N_m$ representing distinct sub‐populations, each sub‐population can sustain a separate PDE wavefront $\bm{u}_{y_i}$. Nonlinear coupling (e.g., mechanical bridging, fluid flow, or adjacency constraints) then correlates these wavefronts, producing emergent macroscopic patterns (e.g., wave synchronization, traveling pulses, or spatiotemporal patchiness) \cite{BratteliRobinson1987,ReedSimon1980}.

When occupant counts $N_i$ become large, wavefronts may arise simultaneously in multiple sub‐populations. The interplay among these fronts—especially if sub‐populations are proximate—can lead to \emph{wave coalescence}, \emph{front collision}, or global spatiotemporal oscillations reminiscent of epidemic wave patterns in tissue. By definition, these phenomena operate on multiple scales: local PDE wavefronts and global occupant expansions/annihilations feed back into each other. Such multi‐scale phenomena are exemplified by “traveling infection fronts” in organs where localized resource fluctuations can spark waves in different regions. Overlapping waves can create complex interference or synergy, possibly giving rise to ring‐like infection waves or traveling pulses that appear macroscopically as \emph{spreading patches} of infection. \cite{Wang2016,YinZuo2007,Cacciapaglia2021}
\end{definition}

\begin{theorem}[Extending \(\widetilde{\mathcal{H}}_{\mathrm{lat}}\) into a Fock‐like Space]
\label{thm:FockSpaceExtension}
\noindent
Let 
\(\widetilde{\mathcal{H}}_{\mathrm{lat}}\)
be the hybrid Hilbert space accommodating wavefront PDE modes, arrangement indices, and occupant transitions as per Definition~\ref{def:MultiScaleCouplings}. Construct a bosonic (or generalized) Fock space
\begin{equation}
  \mathcal{F}\!\Bigl(\widetilde{\mathcal{H}}_{\mathrm{lat}}\Bigr)
  \;=\;
  \bigoplus_{N=0}^\infty
  \Bigl[
    \widetilde{\mathcal{H}}_{\mathrm{lat}}^{\otimes N}
  \Bigr]_{\pm},
\end{equation}
and let $(\widehat{c},\widehat{c}^\dagger)$ be occupant‐change operators (creation/annihilation) that act on the $N$‐fold tensor product, enabling population growth or clearance. If the occupant‐change operators are relatively bounded (with respect to an $m$‐sectorial generator on $\widetilde{\mathcal{H}}_{\mathrm{lat}}$), then the lifted generator on 
\(\mathcal{F}\!\bigl(\widetilde{\mathcal{H}}_{\mathrm{lat}}\bigr)\)
remains $m$‐sectorial. Hence, no finite‐time blow‐up occurs in occupant number $N(t)$ or the wavefront PDE modes. By uniting wavefront PDE mechanics (in each occupant subspace) with occupant creation/annihilation, we capture \emph{traveling waves} that can replicate across newly formed $8\times8$ lattices, or vanish when lattices are cleared. This approach generalizes single‐lattice wavefront theory to large, unbounded populations, retaining well‐posedness via $m$‐sectorial arguments.
\end{theorem}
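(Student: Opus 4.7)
The plan is to prove the theorem in three stages: (i) lift $\hat{\mathcal{G}}_{\mathrm{lat}}$ to the Fock space by second quantization, (ii) add the occupant-change operators as a relatively bounded perturbation using Kato--Rellich-type stability for $m$-sectorial operators, and (iii) extract global well-posedness together with a Gronwall bound on $\langle\hat{N}(t)\rangle$ to rule out blow-up.

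First I would define, on the algebraic Fock space $\bigoplus_{N\ge 0}[\widetilde{\mathcal{H}}_{\mathrm{lat}}^{\otimes N}]_{\pm}$ with a dense core of finite-particle vectors, the second-quantized generator $\mathrm{d}\Gamma(\hat{\mathcal{G}}_{\mathrm{lat}}) = \bigoplus_{N\ge 0}\sum_{j=1}^{N} I^{\otimes(j-1)}\otimes \hat{\mathcal{G}}_{\mathrm{lat}}\otimes I^{\otimes(N-j)}$, restricted to the symmetrized (or antisymmetrized) sector. The tensor-additive dissipativity argument already carried out in the proof of Theorem~\ref{thm:many_lattice_Schrodinger} shows that each $N$-sector inherits dissipativity with the same sectorial wedge $\Sigma_{\theta}$ as $\hat{\mathcal{G}}_{\mathrm{lat}}$, and the range condition on $(\lambda I - \mathrm{d}\Gamma(\hat{\mathcal{G}}_{\mathrm{lat}}))$ transfers sector-wise by solving the resolvent equation on each $\widetilde{\mathcal{H}}_{\mathrm{lat}}^{\otimes N}$ and reassembling as a direct sum. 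Taking the Friedrichs extension then yields an $m$-sectorial operator on the full Fock space $\mathcal{F}(\widetilde{\mathcal{H}}_{\mathrm{lat}})$.

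Second, I would treat the occupant-change block $\widehat{K} := \widehat{c} + \widehat{c}^{\dagger}$ as a perturbation. The hypothesis of relative boundedness gives $\|\widehat{K}\Psi\| \le a\,\|\mathrm{d}\Gamma(\hat{\mathcal{G}}_{\mathrm{lat}})\Psi\| + b\,\|\Psi\|$ for all $\Psi\in D(\mathrm{d}\Gamma(\hat{\mathcal{G}}_{\mathrm{lat}}))$ with $a<1$, which is precisely the hypothesis of the Kato--Rellich-type stability theorem for $m$-sectorial forms (Kato, \emph{Perturbation Theory}, Ch.~VI). It follows that the total Fock-space generator $\hat{\mathcal{G}}_{\mathrm{Fock}} := \mathrm{d}\Gamma(\hat{\mathcal{G}}_{\mathrm{lat}}) + \widehat{K}$ remains $m$-sectorial on the same domain, albeit with a possibly widened opening angle. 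Lumer--Phillips then supplies a strongly continuous, quasi-contractive semigroup $\{e^{t\hat{\mathcal{G}}_{\mathrm{Fock}}}\}_{t\ge 0}$ on $\mathcal{F}(\widetilde{\mathcal{H}}_{\mathrm{lat}})$, and the mild-solution argument of Theorem~\ref{thm:many_lattice_well_posed} carries over verbatim to give global-in-time existence, uniqueness, and continuous dependence on initial data.

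Finally, for the no-blow-up claim I would compute commutators in the Heisenberg picture: $[\hat{N},\mathrm{d}\Gamma(\hat{\mathcal{G}}_{\mathrm{lat}})]=0$ because $\mathrm{d}\Gamma$ preserves particle number, while $[\hat{N},\widehat{c}^{\dagger}]=\widehat{c}^{\dagger}$ and $[\hat{N},\widehat{c}]=-\widehat{c}$ give linear-in-$\hat{N}$ birth and death terms. Combined with the quasi-contraction estimate, this produces a scalar differential inequality $\tfrac{d}{dt}\langle\hat{N}(t)\rangle \le C_1\langle\hat{N}(t)\rangle + C_2$, whose Gronwall integration bounds $\langle\hat{N}(t)\rangle$ on every finite interval; finiteness of the PDE-mode norm then follows from the contraction estimate applied sector by sector. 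The hard part will be securing the relative bound $a<1$ \emph{uniformly} across Fock sectors: the bosonic creation operator grows like $\sqrt{N+1}$ on the $N$-sector, so a naive estimate fails. I would close this gap either by exploiting that the norm of $\mathrm{d}\Gamma(\hat{\mathcal{G}}_{\mathrm{lat}})$ restricted to the $N$-sector scales at least linearly in $N$ (so $\sqrt{N+1}$ is absorbed), or, more robustly, by composing $\widehat{c}^{\dagger}$ with the resource-saturation factor $\mathcal{R}(\hat{N})$ from Theorem~\ref{thm:ResourceOperator}, which caps the creation amplitude uniformly in $N$ and renders the Kato--Rellich estimate essentially immediate.
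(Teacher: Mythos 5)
Your proposal is correct and follows essentially the same route the paper takes: second-quantize the single-lattice generator sector by sector (the tensor-additive dissipativity estimate and sector-wise resolvent solvability are exactly the content of the paper's ``Proof of Sectorial Extension to Fock Space'' attached to Theorem~\ref{thm:many_lattice_Schrodinger}), then absorb \(\widehat{c}+\widehat{c}^{\dagger}\) as a relatively bounded perturbation via Kato--Rellich/Lumer--Phillips, and control \(\langle\hat{N}(t)\rangle\) through the commutators \([\hat{N},\widehat{c}^{\dagger}]=\widehat{c}^{\dagger}\), \([\hat{N},\widehat{c}]=-\widehat{c}\) and a Gronwall inequality, as in the paper's number-operator theorem.

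Where you genuinely improve on the paper is the final paragraph. The paper states ``relatively bounded'' as a hypothesis and never confronts the fact that the bare bosonic creation operator scales like \(\sqrt{N+1}\) on the \(N\)-sector, so a uniform relative bound \(a<1\) is not automatic and the hypothesis is vacuous unless one says how it is achieved. Your two proposed closures are both sound: the absorption argument works whenever \(\mathrm{d}\Gamma(\hat{\mathcal{G}}_{\mathrm{lat}})\) has a strictly negative dissipativity constant \(\gamma>0\), since its sector-restricted norm then grows at least linearly in \(N\) and dominates \(\sqrt{N+1}\); and composing \(\widehat{c}^{\dagger}\) with the saturation factor \(\mathcal{R}(\hat{N})\) from Theorem~\ref{thm:ResourceOperator} makes the perturbation outright bounded, which is in fact how the paper's other replication operators are built. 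Two cosmetic points: the appeal to the Friedrichs extension is unnecessary (once dissipativity and the range condition hold on each sector with a uniform resolvent bound \(\|(\lambda-\hat{\mathcal{G}}^{(N)})^{-1}\|\le(\lambda+\gamma)^{-1}\), the direct-sum operator is already maximal dissipative), and you should state explicitly that the sector-wise resolvent bounds are uniform in \(N\) so that the reassembled resolvent is bounded on the full Fock space. Neither affects the validity of the argument.
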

\smallskip

\begin{corollary}[Biological Interpretation—Large‐Scale Wave Activity]
\label{cor:LargeScaleWaveActivity}
\noindent
In the extended Fock space 
\(\mathcal{F}\bigl(\widetilde{\mathcal{H}}_{\mathrm{lat}}\bigr)\),
unbounded occupant replication or clearance integrates seamlessly with wavefront PDE modes in each occupant sector. Thus, “traveling infection fronts” can propagate outward, supported by occupant creation, while immune responses or resource limits annihilate older sectors. Multiple wavefronts can coexist, each influenced by local PDE displacements and occupant transitions. This corollary demonstrates how wave phenomena at the lattice or sub‐cellular level can scale up to macroscopic infection patterns in an organ \cite{ElHachem2021}. The continuum PDE wave portion captures local mechanical or subunit rearrangements, whereas the occupant portion handles how many wavefronts spread simultaneously and how they merge or compete. Such a multi‐scale infection dynamic can manifest ring‐like wave expansions or patchy traveling pulses, reminiscent of epidemiological “wavefronts” seen in tissue cultures. The same operator approach supports theoretical analyses (e.g., large deviations, path‐integrals) for identifying most probable wavefront trajectories under noise, resource constraints, and immune pressures.

The emergent wave mechanics discussed here elevate the hybrid PDE–Markov–Fock model beyond local subunit rearrangements or occupant number changes, unveiling a \emph{landscape} in which traveling waves, branching fronts, and noise‐driven instabilities shape the large‐scale infection behavior. By proving well‐posedness through $m$‐sectorial arguments and showing how wavefront solutions can be extended to unbounded occupant populations via Fock‐like constructions, we complete a rigorous operator‐theoretic framework. This opens the door to advanced analytical and numerical investigations of wavefront phenomena in viral systems, ultimately bridging microscopic mechanics and macroscopic epidemiological trends \cite{Hussain2024}. \qedsymbol
\end{corollary}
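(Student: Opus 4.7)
The plan is to deduce the corollary directly by combining three ingredients already established: the $m$-sectorial Fock-space extension of Theorem~\ref{thm:FockSpaceExtension}, the wavefront/Markov coupling of Theorem~\ref{thm:WavefrontMarkovCoupling}, and the multi-scale wave superposition of Definition~\ref{def:MultiScaleCouplings}. Since the corollary is an interpretive consequence rather than a new analytical claim, the proof should verify that each biological assertion (outward propagation, immune annihilation, simultaneous coexistence of fronts) corresponds to a rigorous operator-theoretic statement already in force on $\mathcal{F}(\widetilde{\mathcal{H}}_{\mathrm{lat}})$.

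First, I would fix an initial state $\ket{\Psi_0}\in\mathcal{F}(\widetilde{\mathcal{H}}_{\mathrm{lat}})$ whose projection onto a chosen occupant sector $N_0$ carries a wavefront profile $\bm{u}_{y}(\cdot,0)\in\mathcal{W}_{\mathrm{wave}}(y)$ localized near a boundary of $\Omega$. By Theorem~\ref{thm:FockSpaceExtension}, the lifted generator $\hat{\mathcal{G}}_{\mathrm{Fock}}$ is $m$-sectorial, so the strongly continuous semigroup $\{e^{t\hat{\mathcal{G}}_{\mathrm{Fock}}}\}_{t\ge 0}$ yields a unique mild solution $\ket{\Psi(t)}$ with no finite-time blow-up. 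Within the sector $N_0$, the intra-sector PDE generator restricts to the $m$-sectorial operator on $\mathcal{H}_y$, so the wavefront propagates as a deterministic mild solution between jump times, with stochastic fluctuations controlled by the noise operator $\widehat{N}_y$ per Lemma~\ref{lemma:NoiseDrivenInstabilities}.

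Second, to capture "outward propagation supported by creation" and "annihilation by clearance," I would apply the creation operator $\widehat{c}^\dagger$ along the wavefront support. Using the commutator $[\widehat{c}^\dagger,\mathrm{d}\Gamma(\hat{\mathcal{G}}_{\mathrm{lat}})]$, one shows that newly created lattices inherit the wave profile's boundary data, so a copy of the wavefront appears in sector $N_0+1$. A parallel argument with $\widehat{c}$ shows that annihilation removes wavefront amplitude from a given sector, leaving the remaining fronts intact. Iterating Theorem~\ref{thm:WavefrontMarkovCoupling} across a sequence of Markov jumps and occupant-change events produces a piecewise-deterministic trajectory in which multiple wavefronts simultaneously exist in different occupant sectors, matching the multi-scale superposition of Definition~\ref{def:MultiScaleCouplings}. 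Coexistence of fronts then follows by direct-sum decomposition of $\ket{\Psi(t)}$ across $N$-sectors.

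The main obstacle, in my view, is not dissipativity or existence (these are inherited from Theorem~\ref{thm:FockSpaceExtension}) but rather making precise the notion of \emph{wavefront identity across occupant changes}: when $\widehat{c}^\dagger$ adds a new lattice factor, the resulting tensor component lives in a different Hilbert space and there is no canonical pointwise notion of "the same wave." To address this, I would introduce a projection onto the traveling-wave subspaces $\mathcal{W}_{\mathrm{wave}}(y)\subset\mathcal{H}_y$ and define a front as an equivalence class of profiles with matching phase speed and characteristic shape. Then one shows that $\widehat{c}^\dagger$, restricted to states with concentrated wavefront support, maps these equivalence classes into equivalence classes in the $(N+1)$-sector, preserving both speed and amplitude up to a bounded resource factor $\mathcal{R}(\hat{N})$. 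This equivariance, together with the $m$-sectorial bounds, yields the desired coexistence, ring-like expansion, and competition phenomena asserted in the corollary.
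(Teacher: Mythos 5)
The paper does not actually supply a proof of this corollary: the statement is interpretive, and its entire justification is the sentence inside it that appeals to the well-posedness already established in Theorem~\ref{thm:FockSpaceExtension}, together with the wavefront--Markov coupling of Theorem~\ref{thm:WavefrontMarkovCoupling}, the noise lemma (Lemma~\ref{lemma:NoiseDrivenInstabilities}), and the multi-scale superposition of Definition~\ref{def:MultiScaleCouplings}. You correctly identify exactly these ingredients, so at the level of \emph{which} prior results carry the load, your proposal matches the paper. Where you diverge is in ambition: you try to upgrade the biological assertions (outward propagation supported by creation, annihilation of older sectors, coexistence of fronts) into genuine operator-theoretic claims, which the paper never attempts.

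That extra step is where a gap appears. Your second paragraph asserts that the commutator $[\widehat{c}^\dagger,\mathrm{d}\Gamma(\hat{\mathcal{G}}_{\mathrm{lat}})]$ shows that ``newly created lattices inherit the wave profile's boundary data.'' The standard identity of this type, $[\mathrm{d}\Gamma(A),\hat{a}^\dagger(f)]=\hat{a}^\dagger(Af)$, only tells you how the lifted generator acts on a mode that has \emph{already been created in state $f$}; it does not force the creation operator to populate the new occupant sector with a copy of the existing wavefront. In the paper's formalism the created lattice carries whatever single-lattice label $f$ (or $\varphi$) the modeler attaches to $\hat{a}^\dagger$, so ``inheritance'' of the front across $N\to N+1$ is a modeling choice, not a theorem, and nothing in Theorem~\ref{thm:FockSpaceExtension} or Theorem~\ref{thm:WavefrontMarkovCoupling} supplies it. To your credit, you flag precisely this difficulty yourself (the absence of a canonical notion of ``the same wave'' after a tensor factor is added), and your proposed repair via equivalence classes of profiles in $\mathcal{W}_{\mathrm{wave}}(y)$ with matching speed and shape is a sensible sketch --- but it is not carried out, and the claimed equivariance of $\widehat{c}^\dagger$ on those classes (preservation of speed and amplitude up to a factor $\mathcal{R}(\hat{N})$) is asserted rather than proved. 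Since the paper itself only claims well-posedness plus a qualitative picture, your proposal is not wrong so much as it promises more than either you or the paper actually delivers; if you keep the stronger claims, the commutator step and the equivariance of creation on wavefront classes both need real proofs.
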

\subsubsection{The Global (Unbounded) Generator \(\widehat{\mathcal{G}}_{+\infty}\): Wave Solutions and Dispersion Relations}
We have thus far developed a \emph{hybrid operator framework} combining PDEs, Markov transitions, and occupant creation/annihilation within a single non‐self‐adjoint, $m$‐sectorial setting. We now refine this perspective to \emph{incorporate traveling‐wave solutions} and directly link them to a \emph{viral dispersion relation} that encompasses both acoustic‐like and optical‐like phonon branches. Specifically, we embed a traveling‐wave \emph{ansatz} (and corresponding dispersion relation) into each PDE operator \(\mathcal{G}_y\), and then \emph{lift} this construction to the global operator \(\widehat{\mathcal{G}}_{+\infty}\) acting on the hybrid Hilbert space \(\widetilde{\mathcal{H}}_{\mathrm{lat}}\).
\begin{theorem}[Unified Operator Theory for Wavefronts in an Unbounded Framework]
\label{prop:GlobalHybridOperator}
\noindent
Let 
\(\widetilde{\mathcal{H}}_{\mathrm{lat}} = \bigoplus_{y\in \mathcal{Y}}\,\mathcal{H}_y\)
be a \emph{hybrid Hilbert space}, where each \(\mathcal{H}_y\) accommodates:
\begin{enumerate}[label=(\roman*)]
  \item \emph{Continuous PDE modes} (potentially supporting wave solutions),  
  \item \emph{Discrete Markov labels} for arrangement/state transitions,  
  \item \emph{Occupant‐number changes} (via a Fock‐like extension),  
  \item \emph{Noise or dissipative terms} introduced by \(\widehat{N}_y\).
\end{enumerate}
Define an \emph{open‐system}, non‐self‐adjoint operator
\begin{equation}
  \widehat{\mathcal{G}}_{+\infty}
  \;=\;
  \Bigl(\,\bigoplus_{y\in \mathcal{Y}}\bigl(\mathcal{G}_y + \widehat{N}_y\bigr)\Bigr)
  \;+\;
  \widehat{M}
  \quad
  \text{acting on}
  \quad
  \widetilde{\mathcal{H}}_{\mathrm{lat}},
\end{equation}
where each \(\mathcal{G}_y\) is a PDE operator that may admit traveling‐wave solutions, \(\widehat{N}_y\) encodes noise or dissipation, and \(\widehat{M}\) enforces Markov transitions or occupant jumps across arrangement states $y\in \mathcal{Y}$. If quasi‐contraction is satisfied and the noise/creation/annihilation operators are relatively bounded, then \(\widehat{\mathcal{G}}_{+\infty}\) is $m$‐sectorial and \emph{no finite‐time blow‐up} can occur, even if occupant numbers grow unboundedly.
\smallskip
\noindent
\emph{Hybrid} here indicates that the operator design naturally blends:
\begin{enumerate}[label=(\alph*)]
  \item \textbf{Wave Propagation} (continuous PDE flows, possibly supporting traveling waves),  
  \item \textbf{Discrete Markov Labels} (arrangement state jumps),  
  \item \textbf{Second‐Quantized Occupant Fluctuations} (creation/annihilation in a Fock‐like extension),  
  \item \textbf{Noise or Dissipation} (stochastic forcing, damping, etc.).
\end{enumerate}
These components reflect the open‐system nature of viral lattices, where occupant transitions may trigger wavefront expansion (replication bursts) or partial extinction (clearance), and wave solutions in \(\mathcal{G}_y\) can switch to a different PDE regime $y'\neq y$ upon Markovian events.
\end{theorem}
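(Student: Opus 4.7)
The plan is to assemble the global generator $\widehat{\mathcal{G}}_{+\infty}$ in four successive stages, exploiting at each step the stability of $m$-sectoriality under (i) relatively bounded perturbations, (ii) orthogonal direct sums, and (iii) second quantization. The goal is to reduce the claim to a finite-tower chain of standard functional-analytic results (Kato perturbation, Lumer--Phillips, Hille--Yosida) applied sector-by-sector, with uniform bounds that survive the passage to the infinite direct sum and the Fock-space lift.

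First I would work \emph{within a single sector} $\mathcal{H}_y$. By hypothesis, $\mathcal{G}_y$ is $m$-sectorial with numerical range in some wedge $\Sigma_{\theta_y}$; the noise/dissipation operator $\widehat{N}_y$ is assumed relatively bounded with respect to $\mathcal{G}_y$ with relative bound $a_y<1$ (uniform in $y$ under the standing hypotheses of the paper). Applying the Kato--Rellich type perturbation theorem for sectorial forms (cf.\ \cite{Kato1980}), one obtains that $\mathcal{G}_y + \widehat{N}_y$ is itself $m$-sectorial with shifted sector angle $\theta_y'$ and a uniform quasi-contraction constant $\omega$. The key bookkeeping step here is to verify that the dissipativity estimate
\begin{equation}
  \mathrm{Re}\,\bigl\langle (\mathcal{G}_y+\widehat{N}_y)\,\psi,\psi\bigr\rangle_{\mathcal{H}_y}
  \;\le\;\omega\,\|\psi\|_{\mathcal{H}_y}^{2}
\end{equation}
holds with $\omega$ independent of $y\in\mathcal{Y}$, since this uniformity is what allows passage to the direct sum.

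Next I would pass to the orthogonal direct sum $\bigoplus_{y\in\mathcal{Y}}\mathcal{H}_y = \widetilde{\mathcal{H}}_{\mathrm{lat}}$. Defining the domain as the algebraic direct sum of the $D(\mathcal{G}_y+\widehat{N}_y)$ (and closing it), the operator $\bigoplus_y(\mathcal{G}_y+\widehat{N}_y)$ inherits sectoriality because numerical ranges and resolvent estimates add sector-wise; the uniform bound $\omega$ from the previous step guarantees that the resolvent $(\lambda I - \bigoplus_y(\mathcal{G}_y+\widehat{N}_y))^{-1}$ is a bounded operator on all of $\widetilde{\mathcal{H}}_{\mathrm{lat}}$ for $\mathrm{Re}\,\lambda>\omega$, with norm control independent of $y$. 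I would then treat $\widehat{M}$ as a bounded (or relatively bounded with small bound) perturbation governing Markov jumps across the labels $y\to y'$: since $\widehat{M}$ acts by finite-rate transitions among sectors and, by assumption, has operator norm controlled by the Markov rate matrix, a second application of the sectorial perturbation theorem yields that
\begin{equation}
  \widehat{\mathcal{G}}_{+\infty} \;=\; \Bigl(\bigoplus_{y}\bigl(\mathcal{G}_y+\widehat{N}_y\bigr)\Bigr) + \widehat{M}
\end{equation}
is $m$-sectorial on $\widetilde{\mathcal{H}}_{\mathrm{lat}}$, with a possibly larger but still finite dissipativity constant $\omega'$. Lumer--Phillips then produces the strongly continuous quasi-contraction semigroup $\{e^{t\widehat{\mathcal{G}}_{+\infty}}\}_{t\ge0}$, which immediately precludes finite-time blow-up: $\|e^{t\widehat{\mathcal{G}}_{+\infty}}\|\le e^{\omega' t}$.

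The hard part will be the \emph{Fock-space lift} and uniformity of the relative bounds. Two obstacles stand out. First, when one forms $\mathrm{d}\Gamma(\widehat{\mathcal{G}}_{+\infty})$ on $\mathcal{F}(\widetilde{\mathcal{H}}_{\mathrm{lat}})$ and adds the creation/annihilation operators $(\widehat{c},\widehat{c}^{\dagger})$ with resource-saturable weights $\mathcal{R}(\hat N),\mathcal{D}(\hat N)$ from Sections~\ref{sec:m_sectorial_generator}--\ref{def:density_decay_operators}, one must show that these occupant-changing terms remain relatively bounded by $\mathrm{d}\Gamma(\widehat{\mathcal{G}}_{+\infty})$ \emph{with relative bound strictly less than one across every occupant sector $N$ simultaneously}. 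I would handle this by invoking the sublinearity condition on creation and the monotone growth of $\mathcal{D}$ (Remark~\ref{remark:unique_global_mild}) to produce number-sector-wise bounds of the form $\|\widehat{c}^{\dagger}(f)\psi^{(N)}\|\le C\sqrt{N+1}\,\|\psi^{(N)}\|$, then absorb the $\sqrt{N+1}$ into the dissipativity gained from $-\delta\hat N$ plus the density-enhanced decay. Second, verifying the \emph{range condition} $\mathrm{Ran}(I-\widehat{\mathcal{G}}_{+\infty})=\widetilde{\mathcal{H}}_{\mathrm{lat}}$ in the presence of wavefront components $\mathcal{W}_{\mathrm{wave}}(y)$ is delicate, because traveling-wave subspaces are not automatically invariant under the jump operator $\widehat{M}$. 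I would resolve this by solving the resolvent equation sector-by-sector using the holomorphic semigroup generated by $\mathcal{G}_y+\widehat{N}_y$, then patching via a Neumann series in $\widehat{M}$ whose convergence is guaranteed by the uniform bound from the first step. Once these two technical points are secured, the conclusion---$m$-sectoriality of $\widehat{\mathcal{G}}_{+\infty}$ and absence of finite-time blow-up---follows by Lumer--Phillips exactly as in the proof of Theorem~\ref{thm:many_lattice_well_posed}.
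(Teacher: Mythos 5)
Your proposal follows essentially the same route as the paper's own argument: fiberwise $m$-sectoriality of each $\mathcal{G}_y+\widehat{N}_y$ via Kato-type perturbation, passage to the orthogonal direct sum, treatment of $\widehat{M}$ as a (relatively) bounded perturbation, and Lumer--Phillips to obtain the quasi-contraction semigroup and rule out finite-time blow-up. Where you go beyond the paper's sketch --- insisting on uniformity of the dissipativity constant and sector angle across $y\in\mathcal{Y}$ before summing, flagging the $\sqrt{N+1}$ growth of the creation operators in the Fock lift and absorbing it via the saturable weights, and verifying the range condition against the wavefront subspaces by a Neumann series in $\widehat{M}$ --- you are discharging real technical obligations that the paper leaves implicit, and your proposed resolutions are sound.
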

\noindent
\begin{definition}[Wave Solutions in an Unbounded Setting.]  
A solution \(\bm{u}_y(t,\mathbf{x}) \in \mathcal{H}_y\) may describe a traveling wave that influences occupant replication or clearance. This \emph{multi‐scale} perspective shows how local wave propagation couples to occupant expansions or annihilations: e.g., a traveling infection front can push outward through multiple lattice sectors in the second‐quantized domain. To prove $m$‐sectoriality, one verifies that:
\begin{itemize}[leftmargin=1em]
  \item Each PDE operator $\mathcal{G}_y + \widehat{N}_y$ is $m$‐sectorial (or maximal dissipative) on $\mathcal{H}_y$.  
  \item The Markov operator $\widehat{M}$ is bounded or relatively bounded.  
  \item The direct sum $\oplus_{y\in \mathcal{Y}}\,(\mathcal{G}_y + \widehat{N}_y)$ plus $\widehat{M}$ remains $m$‐sectorial on $\widetilde{\mathcal{H}}_{\mathrm{lat}}$.  
\end{itemize}
Standard perturbation and Lumer–Phillips theorems \cite{Pazy1983,DaPratoZabczyk1992} then conclude that $\widehat{\mathcal{G}}_{+\infty}$ generates a strongly continuous semigroup with no finite‐time norm blow‐up.  
\qedsymbol
\end{definition}

\begin{definition}[Traveling‐Wave Ansatz and Viral Dispersion Relations]
\label{def:TravelingWaveDispersion}
\noindent
For a \emph{traveling‐wave ansatz} for arrangement sector $y$, we posit solutions of the form
\begin{equation}
  \bm{u}_y(t,\mathbf{x})
  \;=\;
  \bm{v}_y\Bigl(\mathbf{x} - c\,t\,\mathbf{e}\Bigr),
\end{equation}
where $\mathbf{e}$ is a propagation direction, $c$ is a wave speed, and $\bm{v}_y$ solves a PDE boundary‐value problem derived from $\mathcal{G}_y$. The dispersion relation $\omega = \omega(k)$ stems from normal‐mode (Fourier) analysis of $\bm{u}_y$, capturing both \emph{acoustic‐like} (low‐frequency) and \emph{optical‐like} (high‐frequency) vibrational branches. A stylized form is
\begin{equation}
  g_{\text{viral}}(\omega)
  \;=\;
  g_{\text{acoustic}}(\omega)
  \;+\;
  g_{\text{optical}}(\omega),
\end{equation}
with terms modeling “lower” (acoustic) and “shifted” (optical) frequencies. \emph{Acoustic branches} approximate long‐wavelength, collective oscillations (capsid “breathing” modes), while \emph{optical branches} correspond to subunit‐out‐of‐phase modes or localized vibrations. Phase/group velocities $c$ relate $\omega(k)$ to wave vectors $k$, determining how quickly mechanical signals or “conformational fronts” travel.  In a viral lattice of $8\times8$ nodes, one might interpret $g_{\text{acoustic}}(\omega)$ as broad “capsid‐scale” excitations, whereas $g_{\text{optical}}(\omega)$ captures higher‐frequency substructure (subunit hinged motions). Both can be integrated into $\mathcal{G}_y$, ensuring wave solutions remain consistent with occupant transitions and Markov jumps.  

Because occupant transitions (or Markov jumps) can flip \(y\) to \(y'\), wavefronts can \emph{splice} across morphological states, seamlessly ``stitching'' traveling solutions under different PDE parameters. In viral‐capsid or lattice contexts, normal‐mode (Fourier) analysis often leads to a dispersion relation \(\omega = \omega(k)\). A simple PDE setup might yield \(\omega^2 = c_s^2 \,\|k\|^2\) (acoustic modes), but viruses typically exhibit higher‐frequency \emph{optical} branches too. A stylized \emph{viral phonon} dispersion takes the form
\begin{equation}
  g_{\text{viral}}(\omega)
  \;=\;
  g_{\text{acoustic}}(\omega)
  \;+\;
  g_{\text{optical}}(\omega)
  \;=\;
  \frac{3V\,\omega^2}{2\pi^2\,c_s^3}
  \;+\;
  \frac{3V\,\bigl(\omega - \omega_0\bigr)^2}{2\pi^2\,\alpha^3},
\end{equation}
where:
\begin{itemize}
  \item \(\omega_0\) and \(\alpha\) parametrize the ``optical'' branch, capturing higher‐frequency \emph{internal} vibrations (e.g., subunit out‐of‐phase oscillations),
  \item \(c_s\) is an effective group‐velocity constant in the acoustic regime,
  \item \(V\) scales with system size or capsid volume.
\end{itemize}
One solves $g_{\text{viral}}(\omega)=0$ (or extremizes it) to find \(\omega^*(k)\). Linearizing a PDE like \(\partial_t^2 \bm{u}\approx-\Delta\bm{u}\) near equilibrium uncovers \(\omega^2 \propto \|k\|^2\) (acoustic) plus coupling to internal (optical) subunit motions. Phase or group velocities $c$ emerge:
\begin{equation}
  c
  \;=\;
  \frac{\partial \omega^*(k)}{\partial k}
  \quad\text{(group velocity)}
  \quad
  \text{or}
  \quad
  c
  \;=\;
  \frac{\omega^*(k)}{\|k\|}
  \quad\text{(phase velocity)},
\end{equation}
depending on the PDE’s principal symbol. In the operator \(\mathcal{G}_y\), such speeds arise upon linearization around an equilibrium.
\end{definition}

\begin{lemma}[Embedding the Dispersion Relation in \(\mathcal{G}_y\)]
\label{lemma:DispersionEmbedding}
\noindent
Suppose $\mathcal{G}_y$ is a PDE operator on a spatial domain (continuous or discrete), linearizable around an equilibrium configuration with normal‐mode solutions $\exp(i(k\cdot x - \omega t))$. Then introducing $g_{\text{viral}}(\omega)=g_{\text{acoustic}}(\omega)+g_{\text{optical}}(\omega)$ effectively prescribes the dispersion in $\mathcal{G}_y$’s principal symbol. Phase or group velocities $c$ emerge from partial derivatives of $\omega(k)$. In occupant expansions or Markov jumps, $\widehat{M}$ can shift $\mathcal{G}_y$ to $\mathcal{G}_{y'}$ with distinct $\omega_0$, $\alpha$, or $c_s$ parameters, splicing wave solutions. 

A typical PDE operator $\mathcal{G}_y$ allows for linearization: $\partial_t \bm{u}_y \approx \mathrm{A}_y \bm{u}_y$. Factor $\mathrm{A}_y$ into a Fourier symbol $\widetilde{\mathrm{A}}_y(k)$, yielding eigenfrequencies $\omega$ that must satisfy $g_{\text{viral}}(\omega)=0$ if $g_{\text{viral}}$ is identified with $\det(\widetilde{\mathrm{A}}_y(k)-i\omega)$. The acoustic/optical separation arises from block‐diagonal or multiple‐band expansions in $\widetilde{\mathrm{A}}_y$. Markov transitions alter $\mathrm{A}_y$ (e.g., boundary constraints, subunit adjacency), so wave solutions piecewise match the appropriate dispersion in each sector $y,y',\dots$.  \qedsymbol
\end{lemma}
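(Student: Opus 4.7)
The plan is to proceed in three stages: linearization, symbol matching against the prescribed $g_{\text{viral}}$, and piecewise splicing under Markov transitions. First I would expand $\bm{u}_y(t,\mathbf{x}) = \bm{u}_y^{\mathrm{eq}} + \delta\bm{u}_y(t,\mathbf{x})$ around a zero of the drift of $\mathcal{G}_y$, yielding a linearized flow $\partial_t\,\delta\bm{u}_y = \mathrm{A}_y\,\delta\bm{u}_y$ with $\mathrm{A}_y := D\mathcal{G}_y(\bm{u}_y^{\mathrm{eq}})$. Under spatial Fourier transform (or, in the variable-coefficient case, by freezing coefficients and invoking a standard microlocal argument), $\mathrm{A}_y$ is represented by its principal symbol $\widetilde{\mathrm{A}}_y(k)$, a finite-dimensional matrix acting on the vector of mode amplitudes associated to the $8\times 8$ node structure. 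The plane-wave ansatz $\exp\!\bigl(i(k\cdot x - \omega t)\bigr)$ then forces $\widetilde{\mathrm{A}}_y(k)\,\hat{\bm{u}} = -i\omega\,\hat{\bm{u}}$, so that the dispersion relation is defined implicitly by
\begin{equation}
   \det\bigl(\widetilde{\mathrm{A}}_y(k) - i\omega\,I\bigr) \;=\; 0.
\end{equation}

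Second, I would identify this characteristic polynomial with $g_{\text{viral}}(\omega) = g_{\text{acoustic}}(\omega) + g_{\text{optical}}(\omega)$ by exploiting a block decomposition of $\widetilde{\mathrm{A}}_y(k)$ into center-of-mass versus relative-coordinate subunit degrees of freedom, analogous to the classical diatomic lattice treatment. The acoustic block produces a branch satisfying $\omega^2 \to c_s^2\,\|k\|^2$ in the long-wavelength limit $\|k\|\to 0$, reproducing $g_{\text{acoustic}}$, while the optical block gives a branch with a finite intercept $\omega_0$ and curvature $\alpha$, reproducing $g_{\text{optical}}$. Uniqueness of the characteristic polynomial (up to an overall nonzero scalar) then pins down $(c_s,\omega_0,\alpha)$ in terms of the stiffness tensor $\boldsymbol{\Lambda}_{\Phi}$ and the mass density $g$ that define $\mathcal{G}_y$. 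On any stratum where $\partial_\omega g_{\text{viral}}\neq 0$, the implicit function theorem yields a smooth branch $\omega^*(k)$, and differentiation delivers
\begin{equation}
   c_{\mathrm{group}} \;=\; \nabla_k\,\omega^*(k),
   \qquad
   c_{\mathrm{phase}} \;=\; \frac{\omega^*(k)}{\|k\|},
\end{equation}
as asserted in the statement.

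Third, for the Markov-jump extension, I would appeal to Theorem~\ref{thm:WavefrontMarkovCoupling} and the mild-solution construction of Definition~\ref{def:UniquenessHybrid}: at a jump time $t_0$ with $y\to y'$, the symbol parameters update to $(c_s',\omega_0',\alpha')$ and the wave is continued in $\mathcal{H}_{y'}$ by projecting the trace of $\delta\bm{u}_y(t_0^-,\cdot)$ onto the admissible modes of $\widetilde{\mathrm{A}}_{y'}(k)$ through the action of $\widehat{M}$. Because $\widehat{\mathcal{G}}_{+\infty}$ is $m$-sectorial on $\widetilde{\mathcal{H}}_{\mathrm{lat}}$, the spliced trajectory inherits the unique mild-solution property already established. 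The hard part will be the symbol-matching step in the genuinely non-self-adjoint regime: the eigenvalues of $\widetilde{\mathrm{A}}_y(k)$ arising from $\mathcal{G}_y + \widehat{N}_y$ are generally complex, so the acoustic $\omega^2\propto\|k\|^2$ and Einstein-like optical $(\omega-\omega_0)^2$ forms must be interpreted as the leading real parts of the branches, with imaginary contributions absorbed into the damping coefficients $\eta_R,\eta_I$ of the PDE. Carefully controlling this real/imaginary splitting, and showing that level crossings between acoustic and optical branches (where the implicit function theorem breaks down) form a measure-zero set in $k$-space rather than a dense one, is the delicate analytical point on which the clean dispersion embedding ultimately rests.
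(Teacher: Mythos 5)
Your proposal follows essentially the same route as the paper's own argument: linearize $\partial_t\bm{u}_y\approx \mathrm{A}_y\bm{u}_y$, pass to the Fourier symbol $\widetilde{\mathrm{A}}_y(k)$, read the dispersion off $\det(\widetilde{\mathrm{A}}_y(k)-i\omega)=0$, obtain the acoustic/optical split from a block decomposition, and splice solutions across Markov jumps via the $m$-sectorial semigroup. You go beyond the paper's one-paragraph sketch by supplying the implicit-function-theorem derivation of $\omega^*(k)$ and by correctly flagging the two genuine difficulties the paper leaves untouched — the complex spectrum of $\widetilde{\mathrm{A}}_y(k)$ in the non-self-adjoint regime and the branch crossings where the smooth dispersion breaks down — which is a strictly more careful version of the same argument.
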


\begin{proposition}[Wavefront Solutions in the \(\widehat{\mathcal{G}}_{+\infty}\) Setting]
\label{prop:WavefrontSolutionsViralPhonons}
\noindent
Under the definitions above, wavefront solutions in each arrangement sector $y$ can incorporate the viral phonon dispersion relation $g_{\text{viral}}(\omega)$, leading to traveling or marching solutions
\begin{equation}
  \bm{u}_y(t,\mathbf{x})
  \;=\;
  \bm{v}_y\Bigl(\mathbf{x} - c\,t\,\mathbf{e}\Bigr)
  \quad
  \text{with}
  \quad
  c
  =
  \begin{cases}
    \partial \omega / \partial k, \quad &\text{(group velocity)},\\
    \omega / \|k\|, &\text{(phase velocity)},
  \end{cases}
\end{equation}
depending on the PDE symbol. As occupant transitions or Markov events shift $y$ to $y'$, wave solutions may splice or rearrange if the PDE operator $\mathcal{G}_{y'}$ enforces a different $g_{\text{viral}}$ parameter set. The global operator $\widehat{\mathcal{G}}_{+\infty}$ remains $m$‐sectorial, so no finite‐time blow‐up ensues despite unbounded occupant possibilities. This result underscores how local wave mechanics (including acoustic or optical branches) unify with occupant expansion or morphological transitions in a single open‐system model. In practice, wavefront solutions might signal radial expansions in a capsid, spreading through newly formed virions, or partial reorganizations triggered by Markov flips.

To prove this, combine Lemma~\ref{lemma:DispersionEmbedding} (embedding dispersion in PDE) with Theorem~\ref{prop:GlobalHybridOperator} ($m$‐sectorial synergy of PDE + Markov + occupant changes). Wave solutions remain well‐defined in each sector, with occupant expansions modulating domain or boundary conditions. Markov transitions splice PDE domains, and second‐quantized occupant creation does not undermine norm stability. By incorporating the traveling‐wave ansatz, viral phonon dispersion relations, and occupant transitions into $\widehat{\mathcal{G}}_{+\infty}$, we achieve a rigorous, $m$‐sectorial operator framework that unifies micro‐scale wave phenomena (acoustic/optical phonons) with macro‐scale occupant growth in a single open‐system model. This approach illuminates how local vibrational branches can spawn or interact with global replication fronts, bridging internal lattice mechanics and population‐level infection dynamics in a mathematically robust manner.
\end{proposition}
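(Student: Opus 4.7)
The plan is to combine spectral/Fourier analysis of each sectorial PDE generator with a piecewise splicing argument across Markov jumps and occupant transitions, then appeal to the $m$-sectorial semigroup theory to glue everything together in $\widetilde{\mathcal{H}}_{\mathrm{lat}}$. First I would fix an arrangement sector $y$ and linearize $\mathcal{G}_y$ around an equilibrium, extracting the principal symbol $\widetilde{\mathrm{A}}_y(k)$. Inserting the plane-wave ansatz $\exp(i(k\cdot\mathbf{x}-\omega t))$ into the linearized evolution gives a characteristic equation whose roots $\omega(k)$ coincide with the zeros of $g_{\text{viral}}(\omega)$ by Lemma~\ref{lemma:DispersionEmbedding}; from this one reads off the acoustic branch at low $\|k\|$ and the optical branch near $\omega_0$, and defines $c$ either as $\partial\omega/\partial k$ or $\omega/\|k\|$ as appropriate.

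Next I would construct the traveling profile $\bm{v}_y$ by substituting the ansatz $\bm{u}_y(t,\mathbf{x}) = \bm{v}_y(\mathbf{x}-ct\mathbf{e})$ into $\partial_t\bm{u}_y = \mathcal{G}_y\bm{u}_y$ and solving the resulting reduced boundary-value problem on $\Omega$. Standard invariant-manifold or ODE-phase-space arguments (as used for traveling waves in damped wave and reaction-diffusion equations) produce $\bm{v}_y\in\mathcal{W}_{\mathrm{wave}}(y)\subset\mathcal{H}_y$ with the decay and boundary behavior dictated by sector $y$; the noise and damping operator $\widehat{N}_y$ shifts the effective wavenumber and attenuates amplitude, but because $\mathcal{G}_y+\widehat{N}_y$ is $m$-sectorial, it cannot destroy the existence of a bounded wave profile in the relevant solution subspace.

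To handle Markov jumps, I would partition the time axis into intervals $[\tau_k,\tau_{k+1})$ on which the state evolves via the PDE semigroup in a single arrangement sector $y_k$ (and a single occupant sector $N_k$). At each jump time $\tau_{k+1}$, the operator $\widehat{M}$ (possibly composed with $\widehat{c}$ or $\widehat{c}^\dagger$) maps the incoming wavefront into the new sector, where it is re-expanded in the traveling modes of $\mathcal{G}_{y_{k+1}}$. Iterating yields a piecewise-deterministic orbit, and global well-posedness and uniform norm control then follow from Theorem~\ref{prop:GlobalHybridOperator}, which guarantees that $\widehat{\mathcal{G}}_{+\infty}$ generates a strongly continuous quasi-contraction semigroup, so no finite-time blow-up can occur regardless of how large $N(t)$ becomes.

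The hard part will be controlling the splicing at jump times. Crossing from $(\bm{v}_{y_k},c_k)$ to $(\bm{v}_{y_{k+1}},c_{k+1})$ generally involves very different dispersion parameters ($\omega_0$, $\alpha$, $c_s$), so the projection of the pre-jump wave onto the post-jump eigenspaces can induce mode conversion, partial reflection, or wavefront branching; simultaneously, $\widehat{c}^\dagger$ can shift the ambient Fock layer and thereby alter the wave-carrying subspace itself. Rigorously bounding this projection uniformly in $N$ and in the cumulative number of jumps is the delicate step. I expect it is handled by exploiting the relative boundedness of $\widehat{M}$ and the occupant operators against the sectorial part of $\widehat{\mathcal{G}}_{+\infty}$, so that each jump acts as a bounded perturbation on the wave-carrying subspace and the accumulated jump product remains quasi-contractive in the hybrid Fock norm.
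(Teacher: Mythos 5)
Your proposal follows essentially the same route as the paper: the paper's own proof is precisely the combination of Lemma~\ref{lemma:DispersionEmbedding} (dispersion embedded in the principal symbol of $\mathcal{G}_y$) with Theorem~\ref{prop:GlobalHybridOperator} ($m$-sectoriality of the assembled PDE + Markov + occupant operator), together with the piecewise splicing of wave solutions across jump times that you describe. You are in fact more explicit than the paper about the delicate step — uniform control of mode conversion and projection onto post-jump eigenspaces at splicing times — which the paper does not address beyond asserting that relative boundedness of $\widehat{M}$ and the occupant operators preserves norm stability, so your identification of that as the hard part and your proposed resolution via relative boundedness are consistent with (and slightly sharper than) what the paper actually provides.
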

\noindent
Hence, wavefronts bridging \emph{acoustic‐like} and \emph{optical‐like} vibrational modes become feasible within a single viral lattice. As occupant numbers or morphological states change, the PDE operator effectively transitions from a $g_{\text{acoustic}}(\omega)$ regime to a $g_{\text{optical}}(\omega)$ regime, yielding distinct speeds or attenuation rates. Since \(\widehat{\mathcal{G}}_{+\infty}\) (defined in Theorem~\ref{prop:GlobalHybridOperator}) is $m$‐sectorial, the global wavefunction remains finite‐norm even if occupant counts become arbitrarily large. In viral systems with robust internal couplings (e.g.\ strongly interacting capsid subunits, quasi‐periodic protein shells), one can observe traveling \emph{phonon‐like} waves near certain resonant frequencies (\(\omega\approx \omega_0\)). Shifts in $c$ from pH or resource constraints may either accelerate or decelerate wavefront expansions, and occupant creation replicates these wave structures across newly forming virions. Conversely, immune processes or partial capsid disassembly (Markov flips) can \emph{dampen} or disrupt wave coherence, effectively splicing or terminating wave propagation mid‐lattice.

\subsection{Continuous Label Space and Direct Integral Extension}
\label{subsec:ContinuousLabelSpaceDirectIntegral}

\noindent
In earlier sections, we considered models where the set of arrangement labels \(\{y\}\) is finite or countably infinite. However, certain viral systems may require a \emph{continuum} of morphological or conformational states (e.g., a continuum of partial deformation modes or binding sites), prompting a move from discrete \emph{direct sums} to \emph{direct integrals} of Hilbert spaces. Below, we rigorously develop this extension and demonstrate how one can still retain an $m$‐sectorial framework encompassing PDE modes, stochastic noise, and potential traveling‐wave solutions.

\medskip
\begin{definition}[Direct Integral Hilbert Space]
\label{def:DirectIntegralHilbertSpace}
\noindent
Let \(\mathcal{Y}\) be a locally compact (or separable metric) space with a positive measure \(\nu\). For each \(y \in \mathcal{Y}\), suppose we have a Hilbert space \(\mathcal{H}_y\). The \emph{direct integral}
\begin{equation}
\label{eq:direct_integral_definition}
  \widetilde{\mathcal{H}}_{\mathrm{lat}}
  \;=\;
  \int_{\mathcal{Y}}^{\oplus}
  \mathcal{H}_y
  \;\nu(dy)
\end{equation}
is defined as the space of \emph{measurable sections} \(\bm{u}(y)\in \mathcal{H}_y\) for \(y\in \mathcal{Y}\), subject to
\begin{equation}
  \int_{\mathcal{Y}}
  \|\bm{u}(y)\|_{\mathcal{H}_y}^2
  \,\nu(dy)
  \;<\;\infty.
\end{equation}
The norm on \(\widetilde{\mathcal{H}}_{\mathrm{lat}}\) is
\begin{equation}
  \|\bm{u}\|_{\widetilde{\mathcal{H}}_{\mathrm{lat}}}^2
  \;=\;
  \int_{\mathcal{Y}}
  \|\bm{u}(y)\|_{\mathcal{H}_y}^2
  \;\nu(dy).
\end{equation}
For a discrete or countably indexed label set, this construction reduces to a direct sum. For uncountably many labels, the direct integral naturally generalizes the concept of “gluing” Hilbert spaces together.

\emph{Biologically}, \(\mathcal{Y}\) might represent a continuum of conformational states (e.g., degrees of capsid deformation). Each \(\mathcal{H}_y\) describes PDE degrees of freedom (e.g., $8\times8$ virion lattice modes) appropriate for arrangement $y$. In wave mechanics, \(\mathcal{H}_y\) could support traveling‐wave solutions parameterized by $y$. One typically imposes measurability conditions on the map $y\mapsto \|\bm{u}(y)\|_{\mathcal{H}_y}$ so that $\bm{u}(\cdot)$ is a \emph{measurable section}. Under mild conditions (locally compact base space, separability), standard results on direct integrals \cite{Dixmier1981,BratteliRobinson1987} ensure \(\widetilde{\mathcal{H}}_{\mathrm{lat}}\) is itself a Hilbert space.
\end{definition}

\medskip
\begin{lemma}[$m$‐Sectorial Family Over a Continuous Label Space]
\label{lemma:mSectorialFamilyContinuousLabels}
\noindent
For each $y\in \mathcal{Y}$, let $\mathcal{G}_y$ be an $m$‐sectorial operator on $\mathcal{H}_y$, potentially describing wave‐supporting PDEs, occupant transitions, or noise. Suppose $\{\mathcal{G}_y\}_{y\in \mathcal{Y}}$ vary measurably in $y$. Then one can define an \emph{operator‐valued} map
\begin{equation}
  \mathcal{Y}\;\ni\;y
  \;\mapsto\;
  \mathcal{G}_y
\end{equation}
and build a corresponding \emph{direct‐integral operator} $\widehat{\mathcal{G}}_{\mathrm{int}}$ on \(\widetilde{\mathcal{H}}_{\mathrm{lat}}\). Under suitable relative boundedness/dissipativity conditions, $\widehat{\mathcal{G}}_{\mathrm{int}}$ remains $m$‐sectorial on the direct integral space. To prove this, one employs measurable selection theorems to ensure that the domain $\mathrm{Dom}(\mathcal{G}_y)$ and family $\{\mathcal{G}_y\}$ form a measurable field of closed operators \cite{Dixmier1981,DaPratoZabczyk1992}. By a standard argument in direct integrals (extending the Lumer–Phillips or Kato–Rellich theorems), the fiberwise $m$‐sectoriality implies global $m$‐sectoriality of $\widehat{\mathcal{G}}_{\mathrm{int}}$. Essentially, each $\bm{u}(y)$ is evolved by $\mathcal{G}_y$, and integrability ensures no blow‐up across uncountably many $y$.  
\qedsymbol
\end{lemma}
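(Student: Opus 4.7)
The plan is to proceed in three conceptual stages: (i) set up the measurable structure of operators over $\mathcal{Y}$, (ii) define the candidate direct-integral operator and its domain, and (iii) verify dissipativity plus the range condition fiberwise, promoting these to the global operator on $\widetilde{\mathcal{H}}_{\mathrm{lat}}$. First, I would invoke the standard framework of measurable fields of Hilbert spaces and closed operators (following Dixmier's treatise, with the refinements in Bratteli--Robinson Vol.~1): assume that $\{\mathcal{H}_y\}$ is a measurable field with a fundamental sequence of measurable sections, and that $y\mapsto \mathcal{G}_y$ is measurable in the sense that the resolvent map $y\mapsto (\lambda - \mathcal{G}_y)^{-1}$ is weakly (equivalently strongly) measurable for each $\lambda$ in a common right half-plane. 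A measurable selection theorem then guarantees that the graph $\{(y,\phi,\mathcal{G}_y\phi) : \phi\in \mathrm{Dom}(\mathcal{G}_y)\}$ admits a measurable parametrization, which is exactly what is needed to define the direct integral in a well-posed way.

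Next, I would define
\begin{equation}
  \mathrm{Dom}(\widehat{\mathcal{G}}_{\mathrm{int}})
  \;=\;
  \Bigl\{\,\bm{u}\in \widetilde{\mathcal{H}}_{\mathrm{lat}} \,:\, \bm{u}(y)\in \mathrm{Dom}(\mathcal{G}_y) \text{ for }\nu\text{-a.e.\ }y,\ \ \int_{\mathcal{Y}}\|\mathcal{G}_y\,\bm{u}(y)\|_{\mathcal{H}_y}^2\,\nu(dy) < \infty\Bigr\},
\end{equation}
and set $(\widehat{\mathcal{G}}_{\mathrm{int}}\bm{u})(y) = \mathcal{G}_y\,\bm{u}(y)$. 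Density follows by approximating each section by simple (finitely supported in $y$) sections whose values lie in $\mathrm{Dom}(\mathcal{G}_y)$, using that each fiberwise domain is dense in $\mathcal{H}_y$. Closedness is inherited because a limit in $\widetilde{\mathcal{H}}_{\mathrm{lat}}$ admits a subsequence converging pointwise $\nu$-a.e., and each $\mathcal{G}_y$ is closed on $\mathcal{H}_y$. Dissipativity then lifts painlessly: if each $\mathcal{G}_y$ satisfies $\mathrm{Re}\,\langle \mathcal{G}_y\phi,\phi\rangle_{\mathcal{H}_y} \le -\gamma(y)\|\phi\|^2$ with $\gamma(y)\ge 0$ (uniformly, or at least bounded below), integrating over $\mathcal{Y}$ gives $\mathrm{Re}\,\langle \widehat{\mathcal{G}}_{\mathrm{int}}\bm{u},\bm{u}\rangle \le -(\mathrm{ess\,inf}\,\gamma)\|\bm{u}\|^2$.

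For the maximality (range) condition, I would fix $\lambda$ with $\mathrm{Re}\,\lambda$ exceeding the uniform sector bound, and note that fiberwise $m$-sectoriality gives resolvents $R_y(\lambda) = (\lambda-\mathcal{G}_y)^{-1}$ with $\|R_y(\lambda)\|\le M/\mathrm{dist}(\lambda,\Sigma)$ uniformly in $y$, provided all sector angles $\theta_y$ are uniformly bounded by some $\theta<\pi/2$. The candidate resolvent for $\widehat{\mathcal{G}}_{\mathrm{int}}$ is then the decomposable operator $\int^{\oplus} R_y(\lambda)\,\nu(dy)$, which is bounded on $\widetilde{\mathcal{H}}_{\mathrm{lat}}$ by the uniform estimate and yields a measurable section for every measurable right-hand side by the measurable-selection setup of stage (i). Combining dissipativity with this explicit resolvent and the Lumer--Phillips/Hille--Yosida criterion promotes $\widehat{\mathcal{G}}_{\mathrm{int}}$ to an $m$-sectorial generator on $\widetilde{\mathcal{H}}_{\mathrm{lat}}$, and one reads off the numerical range as the essentially-supremum closure of the fiberwise sectors.

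The main obstacle I anticipate is precisely the \emph{uniformity} required across the continuum $\mathcal{Y}$: fiberwise $m$-sectoriality alone is not enough, because one must control $(\theta_y,\gamma_y)$ in an essentially-bounded way so that a common sector $\Sigma_\theta$ contains all numerical ranges and a common resolvent bound holds. If the sector angles $\theta_y$ approach $\pi/2$ on a set of positive $\nu$-measure, or if the dissipativity constants degrade, the resolvent estimate fails and the direct-integral operator can lose its maximality. Handling this rigorously will require an explicit hypothesis (stated as the ``suitable relative boundedness/dissipativity conditions'' in the lemma) of the form $\mathrm{ess\,sup}_y\,\theta_y < \pi/2$ together with a measurable, locally integrable envelope for any relatively bounded perturbation; under these assumptions the measurable selection and integration arguments go through, but without them the result is genuinely false, and making the conditions sharp is the subtle point of the proof.
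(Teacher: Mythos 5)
Your proposal is correct and follows essentially the same route as the paper's own (very terse) argument: measurable fields of closed operators via Dixmier-style selection, fiberwise dissipativity integrated over $\mathcal{Y}$, a decomposable resolvent $\int^{\oplus}R_y(\lambda)\,\nu(dy)$, and Lumer--Phillips to conclude $m$-sectoriality of $\widehat{\mathcal{G}}_{\mathrm{int}}$. You go beyond the paper in one useful respect---making explicit that the vague ``suitable relative boundedness/dissipativity conditions'' must mean $\mathrm{ess\,sup}_y\,\theta_y<\pi/2$ together with a uniform resolvent bound, without which maximality genuinely fails---but this is a sharpening of the same proof, not a different one.
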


\begin{theorem}[Continuous Labels with PDE, Noise, and Wavefront Solutions]
\label{thm:ContinuousLabelsPDEWavefront}
\noindent
Suppose that for each $y\in \mathcal{Y}$:
\begin{enumerate}[label=(\roman*)]
  \item $\mathcal{G}_y$ is an $m$‐sectorial PDE operator describing the $8\times8$ lattice dynamics under morphological index $y$,
  \item $\widehat{N}_y$ encodes noise or damping, meeting Da Prato–Zabczyk conditions \cite{DaPratoZabczyk1992},
  \item Traveling‐wave or boundary conditions permit wave solutions in each fiber $\mathcal{H}_y$,
\end{enumerate}
and $\widehat{\mathcal{G}}_{\mathrm{int}}$ is the direct‐integral operator formed by these fiberwise components. Then $\widehat{\mathcal{G}}_{\mathrm{int}}$ generates a strongly continuous semigroup on
\begin{equation}
  \widetilde{\mathcal{H}}_{\mathrm{lat}}
  \;=\;
  \int_{\mathcal{Y}}^{\oplus}
  \bigl(\mathcal{H}_y\bigr)\,
  \nu(dy),
\end{equation}
ensuring well‐posed evolution. Wavefront orbits can vary continuously with $y\in \mathcal{Y}$, allowing continuum families of traveling solutions.
A continuum $\mathcal{Y}$ might represent a continuous deformation parameter (from fully closed capsids to partially uncoated states), or a range of subunit positions. Each $\mathcal{H}_y$ supports PDE modes for the $8\times8$ lattice. Noise/dissipation ensures open‐system realism (replicative bursts, occupant clearances). Thus, the virus can \emph{move smoothly} through a continuum of morphological states, with wavefront solutions that adapt to each $y$.

To prove this, one checks fiberwise $m$‐sectoriality, plus measurability in $y$. The direct‐integral operator $\widehat{\mathcal{G}}_{\mathrm{int}} = \int_\mathcal{Y}^\oplus (\mathcal{G}_y + \widehat{N}_y)\,\nu(dy)$ inherits $m$‐sectoriality. Semigroup theory in direct integrals \cite{Dixmier1981,BratteliRobinson1987} then yields a globally well‐posed mild solution for each initial $\bm{u}_0 \in \widetilde{\mathcal{H}}_{\mathrm{lat}}$. Wave solutions exist in each fiber $\mathcal{H}_y$, forming a continuum of traveling modes.  
\qedsymbol
\end{theorem}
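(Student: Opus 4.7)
The plan is to reduce well-posedness on the direct integral to fiberwise well-posedness, so the bulk of the work is to upgrade the fiberwise $m$-sectoriality (given in hypotheses (i)--(iii) and already exploited in Lemma~\ref{lemma:mSectorialFamilyContinuousLabels}) into a genuine generation statement on $\widetilde{\mathcal{H}}_{\mathrm{lat}}$. First I would fix a measurable field structure on $y \mapsto \mathcal{H}_y$, so that the notion of ``measurable section'' from Definition~\ref{def:DirectIntegralHilbertSpace} is well posed, and then verify that $y \mapsto \mathcal{G}_y + \widehat{N}_y$ is a \emph{measurable field of closed sectorial operators} in the sense of Dixmier: concretely, that the resolvent $R(\lambda;\mathcal{G}_y + \widehat{N}_y)$ depends measurably on $y$ for each $\lambda$ outside a common sector $\Sigma_\theta$. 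A measurable-selection or Castaing-type representation of the domain graph then lets me define the candidate generator $\widehat{\mathcal{G}}_{\mathrm{int}}$ by the pointwise rule on its natural domain: all square-integrable sections $\bm{u}(\cdot)$ with $\bm{u}(y) \in \mathrm{Dom}(\mathcal{G}_y + \widehat{N}_y)$ a.e.\ and $(\mathcal{G}_y + \widehat{N}_y)\bm{u}(y)$ again $L^2$ in $y$ with respect to $\nu$.

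Next I would verify the two $m$-sectorial axioms directly on this candidate. Dissipativity reduces to an integrated fiber estimate: if each $\mathcal{G}_y + \widehat{N}_y$ obeys $\mathrm{Re}\,\langle (\mathcal{G}_y + \widehat{N}_y)\phi, \phi\rangle_{\mathcal{H}_y} \le \gamma_0\,\|\phi\|_{\mathcal{H}_y}^2$ with a constant $\gamma_0$ uniform in $y$, then by Fubini,
\begin{equation}
  \mathrm{Re}\,\bigl\langle \widehat{\mathcal{G}}_{\mathrm{int}}\,\bm{u},\,\bm{u}\bigr\rangle_{\widetilde{\mathcal{H}}_{\mathrm{lat}}}
  \;=\;
  \int_{\mathcal{Y}}
  \mathrm{Re}\,\bigl\langle (\mathcal{G}_y + \widehat{N}_y)\,\bm{u}(y),\,\bm{u}(y)\bigr\rangle_{\mathcal{H}_y}\,\nu(dy)
  \;\le\;
  \gamma_0\,\|\bm{u}\|_{\widetilde{\mathcal{H}}_{\mathrm{lat}}}^2.
\end{equation}
For the range/maximality condition I would invoke a fiberwise resolvent: for $\lambda$ to the right of the common sector, $R(\lambda;\mathcal{G}_y + \widehat{N}_y)$ is bounded on each $\mathcal{H}_y$ with norm uniformly controlled in $y$, and assembling these into a direct-integral operator $\int_{\mathcal{Y}}^{\oplus} R(\lambda;\mathcal{G}_y + \widehat{N}_y)\,\nu(dy)$ produces the required global resolvent, giving surjectivity of $\lambda I - \widehat{\mathcal{G}}_{\mathrm{int}}$. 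Lumer--Phillips (equivalently, Hille--Yosida for sectorial generators) then yields a strongly continuous semigroup $\{e^{t\widehat{\mathcal{G}}_{\mathrm{int}}}\}_{t\ge 0}$ on $\widetilde{\mathcal{H}}_{\mathrm{lat}}$ and the mild-solution statement of the theorem.

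The Da Prato--Zabczyk noise term $\widehat{N}_y$ enters most cleanly as a \emph{relatively bounded perturbation} of the PDE generator with uniform relative bound strictly less than one; under this condition the combined operator remains $m$-sectorial by Kato--Rellich in each fiber, and the perturbation is preserved under direct integration because the uniform fiber bound transfers to the integral norm. For the wavefront portion of the claim I would view each traveling profile $\bm{v}_y$ from Proposition~\ref{prop:WavefrontSolutionsViralPhonons} as the eigensection of a co-moving-frame fiber operator, so that continuity of the family $y \mapsto \bm{v}_y$ follows from analytic perturbation theory for sectorial generators, provided the dispersion data $(\omega_0(y),\alpha(y),c_s(y))$ vary continuously in $y$. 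A global measurable section $y \mapsto \bm{v}_y$ lying in $\widetilde{\mathcal{H}}_{\mathrm{lat}}$ can then be assembled by cutting off outside compact $y$-regions and using the uniform-in-$y$ wave bounds.

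The hard part will not be dissipativity per se but the \emph{uniform} fiber control needed to glue everything together. Fiberwise $m$-sectoriality gives only a sector angle $\theta(y)$ and dissipativity constant $\gamma(y)$ at each $y$; for the direct-integral operator to be $m$-sectorial one needs $\sup_y \theta(y) < \pi/2$, $\sup_y \gamma(y) < \infty$, and a $\nu$-measurable selection of domains whose graphs vary measurably in $y$. Establishing these uniform bounds in the presence of the multiplicative noise operator $\widehat{N}_y$---whose relative bound could degenerate as $y$ approaches non-compact regions of $\mathcal{Y}$---is where genuine PDE estimates (G{\aa}rding-type inequalities, a priori resolvent bounds for $\mathcal{G}_y$ that are uniform in the morphological parameter) must be invoked, and is the step most likely to demand additional structural hypotheses on the family $\{\mathcal{G}_y,\widehat{N}_y\}_{y\in\mathcal{Y}}$ beyond what is currently stated.
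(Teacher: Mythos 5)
Your proposal follows essentially the same route as the paper's own (much briefer) argument: verify measurability of the operator field, check fiberwise $m$-sectoriality, assemble the direct-integral resolvent, and conclude via Lumer--Phillips, with the noise handled as a relatively bounded perturbation. Your closing observation that the paper's hypotheses only supply a sector angle $\theta(y)$ and dissipativity constant $\gamma(y)$ \emph{pointwise} in $y$, whereas the direct-integral construction genuinely requires $\sup_y\theta(y)<\pi/2$ and uniform resolvent/relative-bound control, is a correct and worthwhile strengthening of what the paper leaves implicit.
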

\medskip
\begin{corollary}[Traveling‐Wave Transitions in a Continuous Setting]
\label{cor:WaveTransitionsContinuousLabels}
\noindent
When occupant expansions or partial binding events shift the morphological parameter $y\in \mathcal{Y}$ to neighboring values (in a \emph{continuous} sense), traveling‐wave solutions can \emph{smoothly} transition across subspaces of $\mathcal{H}_y$. One obtains “wavefront continuation” in $\widetilde{\mathcal{H}}_{\mathrm{lat}}$, passing through infinitely many morphological states without abrupt Markov jumps. This corollary captures a scenario where, instead of discrete Markov flips, the system experiences a continuum of subunit shifts or capsid expansions—like a slow morphing from closed to partially open states. Wavefront solutions gradually adapt to changing PDE parameters, enabling \emph{smooth traveling‐wave deformations} across an uncountable range of morphological indices. In the absence of discrete jumps, occupant or morphological transitions become continuous flows in $\mathcal{Y}$. By definition of the direct‐integral semigroup, each fiber’s PDE solution evolves continuously, and changes in $y$ reflect a gradual shift in operator $\mathcal{G}_y$. Wave solutions thus “track” along the continuum.  

Admitting a \emph{continuous label space} $\mathcal{Y}$ via direct integrals substantially broadens the viral lattice theory to include continuous deformation or binding spectra. Each fiber $\mathcal{H}_y$ can hold PDE modes (including wavefronts), noise/dissipation operators, and occupant expansions. The global direct‐integral operator $\widehat{\mathcal{G}}_{\mathrm{int}}$ remains $m$‐sectorial, ensuring semigroup well‐posedness. 
\end{corollary}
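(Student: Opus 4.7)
The plan is to combine the direct-integral Hilbert-space construction of Definition~\ref{def:DirectIntegralHilbertSpace} with the fiberwise $m$-sectorial semigroup guaranteed by Theorem~\ref{thm:ContinuousLabelsPDEWavefront}, and then invoke a non-autonomous evolution-family argument to propagate a wavefront solution along a continuous path $s \mapsto y(s)$ in $\mathcal{Y}$. Since the corollary explicitly excludes Markov jumps, the operator $\widehat{M}$ plays no role; only the direct-integral generator $\widehat{\mathcal{G}}_{\mathrm{int}} = \int_{\mathcal{Y}}^{\oplus}(\mathcal{G}_y + \widehat{N}_y)\,\nu(dy)$ drives the dynamics, and continuity in the morphological parameter is encoded by the path $y(s)$.

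First, I would fix a continuous curve $y : [0,T] \to \mathcal{Y}$ representing a gradual morphological shift and, using the measurable-field framework of Lemma~\ref{lemma:mSectorialFamilyContinuousLabels}, identify the pulled-back family $\{\mathcal{H}_{y(s)}\}_{s \in [0,T]}$ locally with a fixed model Hilbert space via measurable unitaries $V(s) : \mathcal{H}_{y(s)} \to \mathcal{H}_{y(0)}$. Setting $\widetilde{\mathcal{G}}(s) := V(s)\,\mathcal{G}_{y(s)}\,V(s)^{*}$, I would verify that under continuous dependence of the dispersion parameters $\omega_0(y),\alpha(y),c_s(y)$ from Definition~\ref{def:TravelingWaveDispersion}, the family $\widetilde{\mathcal{G}}(s)$ depends continuously on $s$ in the norm-resolvent topology while remaining uniformly $m$-sectorial. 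This satisfies the Kato hypotheses for existence of a strongly continuous evolution family $\{U(s,r)\}_{0 \le r \le s \le T}$ on the fixed model space, which pulled back yields continuous propagation $\mathcal{H}_{y(r)} \to \mathcal{H}_{y(s)}$. The third step is the wavefront-continuation argument: taking an initial traveling-wave profile $\mathbf{v}_{y(0)}(\mathbf{x} - c(y(0))\,t\,\mathbf{e})$ and applying $U(s,0)$, I would use Lemma~\ref{lemma:DispersionEmbedding} to translate continuity of the dispersion into principal-symbol continuity of $\mathcal{G}_{y(s)}$, so the traveling-wave ansatz is preserved up to a continuously varying profile $\mathbf{v}_{y(s)}$ and velocity $c(y(s))$.

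The main obstacle will be upgrading the measurable-field structure on $\{\mathcal{H}_{y(s)}\}$ to sufficient regularity for norm-resolvent convergence of $\mathcal{G}_{y(s)}$, since direct-integral theory only guarantees measurability of fibers and sections, not continuity. I would circumvent this by exploiting the fact that in the viral lattice setting the spatial substrate $\Omega$ is fixed across all arrangements $y$, and only the coefficients of the PDE operator (stiffness tensor $\boldsymbol{\Lambda}_\Phi$, damping $\eta_R,\eta_I$, and noise operator $\widehat{N}_y$) depend on $y$; this gives a canonical trivialization $\mathcal{H}_y \cong [L^2(\Omega)]^{2d}$ and reduces the problem to continuous dependence of coefficients, which is a mild additional hypothesis to append. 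Once this regularity is in place, the evolution family $U(s,r)$ exists and is strongly continuous, wavefront continuation follows by applying it to the initial traveling-wave section, and the \emph{absence of abrupt Markov jumps} is tautological since $\widehat{M}$ has been excluded from the generator — completing the corollary.
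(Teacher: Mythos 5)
Your construction is a genuinely different route from the paper's: the paper's entire justification for this corollary is a one‐line appeal to the fiberwise action of the direct‐integral semigroup (``each fiber's PDE solution evolves continuously, and changes in $y$ reflect a gradual shift in operator $\mathcal{G}_y$''), whereas you build a non‐autonomous Kato evolution family $U(s,r)$ along a prescribed curve $s\mapsto y(s)$, trivialize the fibers via the fixed substrate $\Omega$, and propagate the traveling‐wave profile through Lemma~\ref{lemma:DispersionEmbedding}. That machinery is considerably more rigorous than anything the paper supplies here, and your observation that measurability of the operator field must be upgraded to norm‐resolvent continuity (an added hypothesis of continuous coefficient dependence) is a real point the paper silently skips.

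However, there is a genuine gap in the premise of your plan. You declare that ``the operator $\widehat{M}$ plays no role; only the direct‐integral generator $\widehat{\mathcal{G}}_{\mathrm{int}}$ drives the dynamics.'' But $\widehat{\mathcal{G}}_{\mathrm{int}}$ as defined in Definition~\ref{def:DirectIntegralOperator} acts strictly fiberwise, $\bigl(\widehat{\mathcal{G}}_{\mathrm{int}}\bm{u}\bigr)(y)=(\mathcal{G}_y+\widehat{N}_y)\,\bm{u}(y)$, so its semigroup is block‐diagonal over $\mathcal{Y}$: a section concentrated at $y_0$ remains at $y_0$ for all time, and no ``transition across subspaces of $\mathcal{H}_y$'' can occur under that generator alone. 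The only mechanism the paper provides for moving amplitude between fibers in the continuous‐label setting is the Markov integral operator of Definition~\ref{def:MarkovIntegralOperator} with rate kernel $K(y,z)$; the corollary's phrase ``without abrupt Markov jumps'' is most naturally read as taking that kernel to be local (so the label process degenerates to a continuous drift or diffusion in $\mathcal{Y}$), not as deleting the inter‐fiber coupling altogether, and indeed Theorem~\ref{thm:GlobalOperatorContinuousSetting} keeps $\widehat{M}$ in the global generator. By excising $\widehat{M}$ and instead imposing the path $y(s)$ by hand, you prove a conditional statement — ``given a continuous morphological path and continuously varying coefficients, wavefronts continue along it'' — rather than the corollary's claim that the system's own dynamics produces the continuous morphological transition. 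To close the gap you would either need to retain a local $K(y,z)$ and show its generator acts as a first‐ or second‐order differential operator in $y$ transporting the fiber label, or explicitly restate the corollary as conditional on an externally specified morphological trajectory. A secondary, smaller issue: your trivialization $\mathcal{H}_y\cong[L^2(\Omega)]^{2d}$ drops the wave component $\mathcal{W}_{\mathrm{wave}}(y)$ of Definition~\ref{def:HybridHilbertSpace}, which is $y$‐dependent by construction and is precisely where the traveling‐wave profiles live, so the identification of fibers needs to be argued there too.
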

\noindent
\begin{definition}[Direct Integral Operator]
\label{def:DirectIntegralOperator}
\noindent
After defining the direct integral space \(\widetilde{\mathcal{H}}_{\mathrm{lat}}\), we may lift each fiber operator \(\mathcal{G}_y + \widehat{N}_y\) into a \emph{direct integral operator}. Additionally, in the continuous‐label scenario, the discrete Markov jump operator \(\widehat{M}\) generalizes to an integral operator defined by a \emph{rate kernel} \(K(y,z)\). Given the family \(\{\mathcal{G}_y + \widehat{N}_y\}_{y\in\mathcal{Y}}\) of (possibly) $m$‐sectorial operators on each fiber \(\mathcal{H}_y\), the \emph{direct integral operator} is formally
\begin{equation}
\label{eq:direct_integral_operator}
  \widehat{\mathcal{G}}_{\mathrm{int}} 
  \;=\;
  \int_{\mathcal{Y}}^{\oplus}
  \Bigl(
    \mathcal{G}_y + \widehat{N}_y
  \Bigr)
  \,\nu(dy),
\end{equation}
acting on sections \(\bm{u}\colon \mathcal{Y}\to \bigcup_y \mathrm{Dom}(\mathcal{G}_y)\) that satisfy appropriate measurability and integrability conditions. Concretely,
\begin{equation}
  \mathrm{Dom}(\widehat{\mathcal{G}}_{\mathrm{int}})
  \;=\;
  \Bigl\{
    \bm{u}(\cdot)\,\bigl|\bigr.\,\bm{u}(y)\in\mathrm{Dom}(\mathcal{G}_y),\, 
    y \mapsto \|\bm{u}(y)\|_{\mathrm{Dom}(\mathcal{G}_y)} \text{ is measurable and integrable}
  \Bigr\},
\end{equation}
where the operator acts fiberwise:
\begin{equation}
  \bigl(\widehat{\mathcal{G}}_{\mathrm{int}}\bm{u}\bigr)(y)
  \;=\;
  \bigl(\mathcal{G}_y + \widehat{N}_y\bigr)\,\bm{u}(y).
\end{equation}
\noindent
In viral‐lattice models, each \(y\in\mathcal{Y}\) (a continuum of conformational indices) has a PDE operator \(\mathcal{G}_y\), possibly supporting wave modes or occupant transitions. By integrating them over \(\mathcal{Y}\), we form a single, ``global’’ operator \(\widehat{\mathcal{G}}_{\mathrm{int}}\) acting on sections across all $y$. This extends the concept of “block‐diagonal’’ sums to an uncountable label set. The direct integral operator is well‐defined if each $\mathcal{G}_y + \widehat{N}_y$ is closed and $m$‐sectorial, and the family varies measurably in $y$. One then applies standard theorems on direct integrals of closed operators \cite{Dixmier1981,DaPratoZabczyk1992} to guarantee $m$‐sectoriality of $\widehat{\mathcal{G}}_{\mathrm{int}}$.  
\qedsymbol
\end{definition}

\begin{definition}[Markov Integral Operator in the Continuous‐Label Setting]
\label{def:MarkovIntegralOperator}
\noindent
In the finite or countably indexed case, Markov jumps are encoded by $\widehat{M}$ with discrete rates $\lambda_{y\to z}$. For a continuum $\mathcal{Y}$, the \emph{Markov jump operator} $\widehat{M}$ generalizes to an integral operator of the form
\begin{equation}
  (\widehat{M}\bm{u})(y) 
  \;=\;
  \int_{\mathcal{Y}}
  K(y,z)\,\bm{u}(z)
  \,\nu(dz)
  \;-\;
  \Bigl(\int_{\mathcal{Y}} K(z,y)\,\nu(dz)\Bigr)\,\bm{u}(y),
\end{equation}
where $K(y,z)\ge 0$ is a \emph{rate kernel} specifying continuous‐label transitions from $z$ to $y$. Just as in the discrete scenario, the \emph{inflow term} $\int_{\mathcal{Y}} K(y,z)\,\bm{u}(z)\,\nu(dz)$ adds occupant amplitude from neighboring morphological states $z$, while the \emph{outflow term} $-\bm{u}(y)\int_{\mathcal{Y}}K(z,y)\,\nu(dz)$ ensures conservation of probability or amplitude across $\mathcal{Y}$. This negative diagonal ensures $\widehat{M}$ is reminiscent of a “continuous‐label Markov generator.”  Under mild conditions (e.g.\ local integrability, boundedness of $K(y,z)$), $\widehat{M}$ is a well‐defined operator on $\widetilde{\mathcal{H}}_{\mathrm{lat}}=\int_{\mathcal{Y}}^\oplus \mathcal{H}_y \,\nu(dy)$. Large‐deviation principles or path‐integral measures can then incorporate these continuous morphological transitions in exactly the same manner as discrete Markov jumps.  
\qedsymbol
\end{definition}

\begin{theorem}[The Global Operator \(\widehat{\mathcal{G}}_{+\infty}^{(\mathrm{cont})}\) in a Continuous Setting]
\label{thm:GlobalOperatorContinuousSetting}
\noindent
Combining the direct integral operator \(\widehat{\mathcal{G}}_{\mathrm{int}}\) (Definition~\ref{def:DirectIntegralOperator}) with a Markov integral operator $\widehat{M}$ (Definition~\ref{def:MarkovIntegralOperator}), one obtains
\begin{equation}
  \widehat{\mathcal{G}}_{+\infty}^{(\mathrm{cont})}
  \;=\;
  \Bigl(
    \int_{\mathcal{Y}}^{\oplus}
    \bigl[\mathcal{G}_y + \widehat{N}_y\bigr]
    \,\nu(dy)
  \Bigr)
  \;+\;
  \widehat{M},
  \quad
  \text{on } 
  \widetilde{\mathcal{H}}_{\mathrm{lat}}
  \;=\;
  \int_{\mathcal{Y}}^{\oplus}
    \mathcal{H}_y
    \,\nu(dy).
\end{equation}
If each fiber operator \(\mathcal{G}_y + \widehat{N}_y\) is $m$‐sectorial and \(\widehat{M}\) is a bounded or relatively bounded perturbation, then $\widehat{\mathcal{G}}_{+\infty}^{(\mathrm{cont})}$ remains $m$‐sectorial. Consequently, no finite‐time blow‐up ensues in occupant or morphological states, and one obtains a strongly continuous semigroup describing viral‐lattice evolution over the continuum label space. 

This theorem generalizes Theorem~\ref{thm:ContinuousLabelsPDEWavefront} by including continuous‐label Markov transitions via $\widehat{M}$. The system remains open (unbounded occupant transitions, random fluctuations) yet is governed by a single, unifying operator $\widehat{\mathcal{G}}_{+\infty}^{(\mathrm{cont})}$. The PDE portion still captures wave or diffusive modes in each fiber, while occupant transitions shift amplitude across the continuum $\mathcal{Y}$. By Lemma~\ref{lemma:mSectorialFamilyContinuousLabels}, each fiber $\mathcal{G}_y + \widehat{N}_y$ is $m$‐sectorial. The Markov integral operator $\widehat{M}$ is shown to be a bounded or relatively bounded perturbation under local integrability conditions on $K(y,z)$. Standard $m$‐sectorial perturbation results (Kato–Rellich, Lumer–Phillips) then ensure the sum remains $m$‐sectorial \cite{Pazy1983,DaPratoZabczyk1992}.  
\qedsymbol
\end{theorem}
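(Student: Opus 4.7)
The plan is to assemble the $m$‐sectoriality of $\widehat{\mathcal{G}}_{+\infty}^{(\mathrm{cont})}$ in two stages: first confirm that the direct‐integral piece $\widehat{\mathcal{G}}_{\mathrm{int}}=\int_{\mathcal{Y}}^{\oplus}(\mathcal{G}_y+\widehat{N}_y)\,\nu(dy)$ is $m$‐sectorial on $\widetilde{\mathcal{H}}_{\mathrm{lat}}$, and then treat $\widehat{M}$ as a (relatively) bounded perturbation that preserves sectoriality up to a uniform translation. From there, a Lumer--Phillips argument yields the strongly continuous semigroup and the absence of finite‐time blow‐up.

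First I would invoke Lemma~\ref{lemma:mSectorialFamilyContinuousLabels} directly on the fiber family $\{\mathcal{G}_y+\widehat{N}_y\}_{y\in\mathcal{Y}}$. Assuming each fiber is $m$‐sectorial with numerical range contained in a common sector $\Sigma_\theta$ and that $y\mapsto(\mathcal{G}_y+\widehat{N}_y)$ is resolvent‐measurable in the sense of measurable fields of closed operators, the direct integral $\widehat{\mathcal{G}}_{\mathrm{int}}$ inherits $m$‐sectoriality with the same sector angle on the natural domain of measurable sections $\bm{u}(\cdot)$ for which $y\mapsto\|(\mathcal{G}_y+\widehat{N}_y)\bm{u}(y)\|_{\mathcal{H}_y}^2$ is $\nu$‐integrable; fiberwise dissipativity integrates to a global quadratic‐form bound, and fiberwise surjectivity of $I-(\mathcal{G}_y+\widehat{N}_y)$ integrates to the range condition for $\widehat{\mathcal{G}}_{\mathrm{int}}$ via a measurable selection of resolvents.

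Next I would show that $\widehat{M}$ is a bounded operator on $\widetilde{\mathcal{H}}_{\mathrm{lat}}$ under mild regularity of the rate kernel $K(y,z)$. A Schur‐type estimate suffices: if
\begin{equation}
\mathrm{ess\,sup}_{y\in\mathcal{Y}}\int_{\mathcal{Y}}K(y,z)\,\nu(dz)<\infty\quad\text{and}\quad\mathrm{ess\,sup}_{z\in\mathcal{Y}}\int_{\mathcal{Y}}K(y,z)\,\nu(dy)<\infty,
\end{equation}
then the inflow integral operator is bounded on the Hilbert direct integral, and the diagonal outflow term is a bounded multiplication operator by essentially the same bound. Consequently, $\widehat{M}$ is infinitesimally bounded relative to $\widehat{\mathcal{G}}_{\mathrm{int}}$, with relative bound zero. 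I would then invoke a Kato--Rellich‐type perturbation result for $m$‐sectorial operators (cf.\ Kato, Ch.\ VI; see also \cite{Pazy1983}): the sum of an $m$‐sectorial operator and a relatively bounded perturbation with relative bound strictly less than one remains $m$‐sectorial (possibly after a uniform translation, yielding quasi‐$m$‐sectoriality). Standard Lumer--Phillips/Hille--Yosida theory then produces the strongly continuous quasi‐contraction semigroup $\{\exp(t\,\widehat{\mathcal{G}}_{+\infty}^{(\mathrm{cont})})\}_{t\ge 0}$ on $\widetilde{\mathcal{H}}_{\mathrm{lat}}$, with mild solutions depending continuously on initial data.

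The main obstacle will be verifying the measurability hypothesis in a sense strong enough to legitimize the direct integral and the associated resolvent calculus. In biologically motivated settings the family $\mathcal{G}_y$ depends on $y$ through boundary conditions, coefficient fields, and dispersion parameters that vary with morphology, so one must identify a common dense core of test sections on which $y\mapsto\mathcal{G}_y\varphi$ is measurable for each fixed $\varphi$, and only then close the operator fiberwise; equivalently, one establishes measurability of $y\mapsto(\lambda-\mathcal{G}_y-\widehat{N}_y)^{-1}$ in the strong operator topology for some $\lambda$ in a common resolvent sector. A secondary difficulty arises if $K(y,z)$ exhibits near‐diagonal singularities or if $\widehat{N}_y$ grows unboundedly in $y$: the Schur bound may then fail, and one must fall back on a Miyadera--Voigt perturbation theorem or a sesquilinear form‐sum construction (closing $\widehat{\mathcal{G}}_{\mathrm{int}}+\widehat{M}$ as a form rather than as an operator sum) to retain $m$‐sectoriality in the broader sense required by the open‐system viral lattice framework.
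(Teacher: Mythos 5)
Your proposal follows essentially the same route as the paper's own argument: fiberwise $m$‐sectoriality lifted to the direct integral via Lemma~\ref{lemma:mSectorialFamilyContinuousLabels}, followed by treating $\widehat{M}$ as a bounded or relatively bounded perturbation and closing with Kato--Rellich and Lumer--Phillips. You go somewhat further than the paper by making the ``local integrability conditions on $K(y,z)$'' explicit as a Schur test and by flagging the measurable‐field and form‐sum subtleties, but these are refinements of, not departures from, the paper's proof.
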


\subsubsection{Large‐Scale Viral Evolution in a Continuum of Morphologies.}
\begin{corollary}
\label{cor:LargeScaleEvolutionContinuum}
Under Theorem~\ref{thm:GlobalOperatorContinuousSetting}, one obtains a unique mild or strong solution to the Cauchy problem
\begin{equation}
  \frac{d}{dt}\,\bm{U}(t)
  \;=\;
  \widehat{\mathcal{G}}_{+\infty}^{(\mathrm{cont})}\,\bm{U}(t),
  \quad
  \bm{U}(0)
  \;=\;
  \bm{U}_0
  \;\in\;
  \widetilde{\mathcal{H}}_{\mathrm{lat}},
\end{equation}
for each initial configuration $\bm{U}_0$. This solution evolves smoothly across a continuum of morphological states $\mathcal{Y}$, with occupant transitions described by $\widehat{M}$ and fiberwise PDE/noise capturing wave‐like or diffusive lattice processes. From a viral perspective, the theory accommodates:
\begin{itemize}
  \item \emph{Continuum Conformational Shifts:} Lattices might occupy an uncountable family of partial binding or capsid‐opening configurations.
  \item \emph{Stochastic PDE Dynamics:} Each fiber includes wave or diffusion operators plus possible multiplicative noise, reflecting thermal or biochemical fluctuations.
  \item \emph{Markov Transitions Over $\mathcal{Y}$:} Continuous occupant transitions (integral operator) replace discrete jump rates, modeling “gradient‐driven” or “continuous” morphological drift.
\end{itemize}
Hence, the system unifies PDE wave mechanics, occupant replication, and smoothly changing morphological states in a single, $m$‐sectorial operator approach.
\end{corollary}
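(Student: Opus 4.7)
The plan is to derive Corollary~\ref{cor:LargeScaleEvolutionContinuum} as an almost immediate consequence of Theorem~\ref{thm:GlobalOperatorContinuousSetting}, which already establishes that $\widehat{\mathcal{G}}_{+\infty}^{(\mathrm{cont})}$ is $m$-sectorial on $\widetilde{\mathcal{H}}_{\mathrm{lat}}$. Since every $m$-sectorial operator generates a strongly continuous (indeed holomorphic) contraction or quasi-contraction semigroup via Lumer--Phillips, the role of the corollary is essentially to translate that semigroup statement into the language of an abstract Cauchy problem, verify uniqueness and continuous dependence on initial data, and then unpack the three bullet-point virological interpretations.

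Concretely, I would first apply Lumer--Phillips to $\widehat{\mathcal{G}}_{+\infty}^{(\mathrm{cont})}$ using the dissipativity estimate
\begin{equation}
\mathrm{Re}\,\bigl\langle \widehat{\mathcal{G}}_{+\infty}^{(\mathrm{cont})}\bm{U},\,\bm{U}\bigr\rangle_{\widetilde{\mathcal{H}}_{\mathrm{lat}}} \;\le\; \omega\,\|\bm{U}\|^2
\end{equation}
inherited fiberwise from Lemma~\ref{lemma:mSectorialFamilyContinuousLabels}, together with the range condition for $(\lambda I - \widehat{\mathcal{G}}_{+\infty}^{(\mathrm{cont})})$ whenever $\lambda$ lies to the right of the operator's sector. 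This yields a strongly continuous semigroup $\{e^{t\widehat{\mathcal{G}}_{+\infty}^{(\mathrm{cont})}}\}_{t \ge 0}$ on the direct integral space. I would then set $\bm{U}(t) := e^{t\widehat{\mathcal{G}}_{+\infty}^{(\mathrm{cont})}}\bm{U}_0$, which is a mild solution for arbitrary $\bm{U}_0 \in \widetilde{\mathcal{H}}_{\mathrm{lat}}$ and a strong solution whenever $\bm{U}_0 \in \mathrm{Dom}(\widehat{\mathcal{G}}_{+\infty}^{(\mathrm{cont})})$. Uniqueness would follow from a standard Gronwall argument applied to the difference $\delta(t) = \bm{U}_1(t) - \bm{U}_2(t)$ of two candidate solutions, invoking the same dissipativity bound to force $\|\delta(t)\| \equiv 0$; continuous dependence on initial data comes directly from the quasi-contractive estimate $\|\bm{U}(t)\| \le e^{\omega t}\|\bm{U}_0\|$.

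The hard part will \emph{not} be semigroup generation itself, which is handed to us by the previous theorem, but rather verifying that the solution $\bm{U}(t)$ remains a measurable section of the direct integral bundle $\int^{\oplus}_{\mathcal{Y}} \mathcal{H}_y\,\nu(dy)$ for all $t > 0$, in a way that genuinely captures the corollary's informal claim of ``smooth evolution across a continuum of morphologies.'' Because the Markov integral operator $\widehat{M}$ couples distinct fibers $\mathcal{H}_y$ and $\mathcal{H}_z$ across uncountably many labels, one must verify that iterated applications of the resolvent $(\lambda I - \widehat{\mathcal{G}}_{+\infty}^{(\mathrm{cont})})^{-1}$ preserve measurability of sections rather than sending them into non-measurable selections. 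I would handle this by invoking measurable selection results on the fiber resolvent family (Dixmier~\cite{Dixmier1981}), imposing local integrability on the rate kernel $K(y,z)$, and passing to the Yosida approximants $J_n = n(nI - \widehat{\mathcal{G}}_{+\infty}^{(\mathrm{cont})})^{-1}$, for which measurability preservation is immediate, then taking the strong limit. Once measurability is secured, the three virological consequences (continuum conformational drift, fiberwise stochastic PDE dynamics, and continuous $\widehat{M}$-driven morphological transitions) all follow directly from the fiber decomposition of the semigroup, completing the passage from operator-theoretic well-posedness to the biological interpretation the corollary asserts.
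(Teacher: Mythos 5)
Your proposal is correct and follows essentially the same route as the paper: the corollary is an immediate consequence of the $m$-sectoriality established in Theorem~\ref{thm:GlobalOperatorContinuousSetting}, with Lumer--Phillips giving the strongly continuous semigroup, $\bm{U}(t)=e^{t\widehat{\mathcal{G}}_{+\infty}^{(\mathrm{cont})}}\bm{U}_0$ the mild (or strong, for domain data) solution, and dissipativity plus Gronwall giving uniqueness and continuous dependence. The measurability-of-sections issue you single out as the hard part is in fact already absorbed by the construction: the paper handles measurable fields of operators at the level of Lemma~\ref{lemma:mSectorialFamilyContinuousLabels}, and since $\widehat{M}$ is a bounded operator on $\widetilde{\mathcal{H}}_{\mathrm{lat}}$ --- whose elements are by definition (equivalence classes of) measurable sections --- resolvents and semigroup orbits of the perturbed generator automatically remain in the space, so your Yosida-approximant detour, while valid, is not needed.
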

\paragraph{Dissipativity and Fock‐Space Extension in a Continuous‐Label Setting.}
\noindent
Having introduced the integral‐space operator \(\widehat{\mathcal{G}}_{+\infty}^{(\mathrm{cont})}\) over a continuum of morphological labels \(\mathcal{Y}\), we now formalize how dissipativity leads to a strongly continuous semigroup and how second‐quantized (Fock) constructions accommodate unbounded replication/clearance of \emph{continuum‐indexed} viral lattices.

\begin{theorem}[Dissipativity and Semigroup Well‐Posedness]
\label{thm:DissipativityContinuum}
\noindent
Suppose there exists \(\alpha > 0\) such that, for all \(\bm{u}\in \mathrm{Dom}\!\bigl(\widehat{\mathcal{G}}_{+\infty}^{(\mathrm{cont})}\bigr)\),
\begin{equation}
  \mathrm{Re}\,
  \Bigl\langle
    \widehat{\mathcal{G}}_{+\infty}^{(\mathrm{cont})}\,\bm{u},
    \,\bm{u}
  \Bigr\rangle_{\widetilde{\mathcal{H}}_{\mathrm{lat}}}
  \;\le\;
  -\,\alpha\,
  \|\bm{u}\|_{\widetilde{\mathcal{H}}_{\mathrm{lat}}}^2.
\end{equation}
Then the operator 
\(\widehat{\mathcal{G}}_{+\infty}^{(\mathrm{cont})}\) 
generates a strongly continuous semigroup 
\(\exp\!\bigl[t\,\widehat{\mathcal{G}}_{+\infty}^{(\mathrm{cont})}\bigr]\) on 
\(\widetilde{\mathcal{H}}_{\mathrm{lat}}\) with exponential decay bound \(\exp(-\alpha t)\).  
No finite‐time blow‐up occurs, even if occupant numbers grow unboundedly. This integral analog ensures quasi‐contractivity (or strict contractivity if $\alpha>0$), preventing norm explosion in finite time across the continuous label space \(\mathcal{Y}\). In viral‐lattice contexts, it corresponds to energy‐type estimates that remain valid even when a continuum of morphological states is present.

The integral‐space theory combines fiberwise $m$‐sectorial operators \(\{\mathcal{G}_y + \widehat{N}_y\}_{y\in\mathcal{Y}}\) with a Markov integral operator. Each fiber operator satisfies a dissipativity estimate. Summing (integrating) these over \(\mathcal{Y}\) yields the global inner‐product inequality. A Lumer–Phillips argument confirms that $\widehat{\mathcal{G}}_{+\infty}^{(\mathrm{cont})}$ is $m$‐sectorial and admits an exponentially decaying semigroup.  
\qedsymbol
\end{theorem}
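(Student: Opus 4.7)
The plan is to apply the Lumer--Phillips theorem in the direct integral setting, leveraging the fiberwise $m$-sectorial structure already established in Lemma~\ref{lemma:mSectorialFamilyContinuousLabels} and Theorem~\ref{thm:GlobalOperatorContinuousSetting}. First, I would observe that the hypothesis directly furnishes dissipativity of $\widehat{\mathcal{G}}_{+\infty}^{(\mathrm{cont})}$ with a strictly negative shift $-\alpha$. This is strictly stronger than the minimal dissipativity needed to invoke Lumer--Phillips, and it is precisely what will yield the exponential decay bound at the end of the argument.

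Next, I would verify the range (maximality) condition, namely that $\mathrm{Ran}(\lambda I - \widehat{\mathcal{G}}_{+\infty}^{(\mathrm{cont})})$ is dense (in fact, equal to) $\widetilde{\mathcal{H}}_{\mathrm{lat}}$ for some $\lambda > 0$. In the direct integral construction, I would decompose this into two parts: (i) the fiberwise operator $\int^{\oplus}_{\mathcal{Y}}(\mathcal{G}_y + \widehat{N}_y)\,\nu(dy)$, whose resolvents $(\lambda I - \mathcal{G}_y - \widehat{N}_y)^{-1}$ exist on each $\mathcal{H}_y$ by the fiberwise $m$-sectoriality, and assemble into a direct integral by a measurable selection argument; and (ii) the Markov integral operator $\widehat{M}$, which under local integrability of the kernel $K(y,z)$ is a bounded perturbation in the sense of Kato--Rellich. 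Combining these through the perturbation theory for $m$-sectorial operators (cf.~\cite{Kato1980,Pazy1983}) gives maximality of $\widehat{\mathcal{G}}_{+\infty}^{(\mathrm{cont})}$.

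With dissipativity and maximality in hand, the Lumer--Phillips theorem yields a strongly continuous semigroup $\{e^{t\widehat{\mathcal{G}}_{+\infty}^{(\mathrm{cont})}}\}_{t\ge 0}$ on $\widetilde{\mathcal{H}}_{\mathrm{lat}}$. To extract the exponential decay rate, I would differentiate $t \mapsto \|e^{t\widehat{\mathcal{G}}_{+\infty}^{(\mathrm{cont})}}\bm{u}_0\|^2$ to obtain
\begin{equation}
  \frac{d}{dt}\,\bigl\|e^{t\widehat{\mathcal{G}}_{+\infty}^{(\mathrm{cont})}}\bm{u}_0\bigr\|^2
  \;=\;
  2\,\mathrm{Re}\,\bigl\langle \widehat{\mathcal{G}}_{+\infty}^{(\mathrm{cont})} e^{t\widehat{\mathcal{G}}_{+\infty}^{(\mathrm{cont})}}\bm{u}_0,\;e^{t\widehat{\mathcal{G}}_{+\infty}^{(\mathrm{cont})}}\bm{u}_0 \bigr\rangle
  \;\le\;
  -\,2\alpha\,\bigl\|e^{t\widehat{\mathcal{G}}_{+\infty}^{(\mathrm{cont})}}\bm{u}_0\bigr\|^2,
\end{equation}
and invoke Gronwall's inequality to conclude $\|e^{t\widehat{\mathcal{G}}_{+\infty}^{(\mathrm{cont})}}\bm{u}_0\| \le e^{-\alpha t}\|\bm{u}_0\|$. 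This contraction property immediately precludes finite-time blow-up in norm, regardless of how many occupant sectors or continuum-label fibers are active.

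The hard part will be step two, specifically the measurability issue underlying the direct integral resolvent. Constructing a well-defined inverse $(\lambda I - \widehat{\mathcal{G}}_{+\infty}^{(\mathrm{cont})})^{-1}$ requires that $y \mapsto (\lambda I - \mathcal{G}_y - \widehat{N}_y)^{-1}$ be a measurable field of bounded operators on the (possibly uncountable) base space $\mathcal{Y}$, which in turn demands that the underlying family $\{\mathcal{G}_y + \widehat{N}_y\}$ satisfies the hypotheses of measurable selection theorems (cf.~\cite{Dixmier1981}). Additionally, the Markov integral operator $\widehat{M}$ couples distinct fibers via $K(y,z)$, so one must check that this off-fiber coupling does not spoil the sectorial wedge of the assembled operator. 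I would handle this by imposing mild conditions on $K$ (e.g., $K \in L^\infty(\mathcal{Y}\times\mathcal{Y},\nu\otimes\nu)$ or a Schur-type bound) that make $\widehat{M}$ bounded with operator norm controlled independently of the fiber structure, ensuring the sum retains the negative real-part bound $-\alpha$ with possibly adjusted constant.
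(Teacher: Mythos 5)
Your proposal is correct and follows essentially the same route as the paper: dissipativity from the stated hypothesis, maximality inherited from the fiberwise $m$-sectorial structure together with the relatively bounded Markov integral operator, and the Lumer--Phillips theorem to produce the semigroup, with the $e^{-\alpha t}$ bound extracted by an energy/Gronwall argument. You in fact supply more detail than the paper's brief sketch, notably on the range condition and the measurability of the resolvent field over $\mathcal{Y}$.
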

\begin{definition}[Second‐Quantized Fock Space over a Continuous‐Label Hilbert Space]
\label{def:FockSpaceContinuousLabels}
\noindent
Let \(\widetilde{\mathcal{H}}_{\mathrm{lat}}\) be the direct‐integral Hilbert space 
\(\displaystyle \int_{\mathcal{Y}}^\oplus \mathcal{H}_y \,\nu(dy)\).
Define the bosonic (or fermionic) Fock space 
\(\mathcal{F}_{\pm}\bigl(\widetilde{\mathcal{H}}_{\mathrm{lat}}\bigr)\) via
\begin{equation}
  \mathcal{F}_{\pm}\Bigl(\widetilde{\mathcal{H}}_{\mathrm{lat}}\Bigr)
  \;=\;
  \bigoplus_{n=0}^{\infty}
  \Bigl[
    \widetilde{\mathcal{H}}_{\mathrm{lat}}^{\,\otimes n}
  \Bigr]_{\pm},
\end{equation}
where each factor $\widetilde{\mathcal{H}}_{\mathrm{lat}}$ accommodates continuum‐indexed PDE states. Creation/annihilation operators $(\hat{c}^\dagger, \hat{c})$ act by adding or removing entire \emph{families} of occupant degrees of freedom, each labeled by $y\in\mathcal{Y}$.
Biologically, one can create or destroy entire \emph{packets} of viral lattices—each packet carrying a continuous range of conformational states. This merges \emph{quantum‐field‐type} expansions with \emph{continuum morphological transitions}, relevant for large virology models where occupant numbers and morphological variability both become unbounded. This extends the discrete occupant Fock construction to a scenario in which \(\mathcal{H}_y\) itself is an uncountable direct integral. Technically, one forms a ``double'' or nested integral/tensor structure; classical second‐quantization arguments \cite{BratteliRobinson1987} show that $m$‐sectoriality is preserved if creation/annihilation operators remain relatively bounded.  
\qedsymbol
\end{definition}

\begin{theorem}[Second‐Quantized Generator in the Continuous‐Label Setting]
\label{thm:SecondQuantizedGeneratorContinuum}
\noindent
Let 
\(\widehat{\mathcal{G}}_{+\infty}^{(\mathrm{cont})}\)
be the $m$‐sectorial generator on 
\(\widetilde{\mathcal{H}}_{\mathrm{lat}}\),
and suppose occupant creation/annihilation operators $(\hat{c}^\dagger, \hat{c})$ are governed by rate‐bounded replication/clearance processes. Then the \emph{second‐quantized operator}
\begin{equation}
  \mathrm{d}K\bigl(
    \widehat{\mathcal{G}}_{+\infty}^{(\mathrm{cont})}
  \bigr)
  \;=\;
  \bigoplus_{n=0}^\infty
  \Bigl[
    \widehat{\mathcal{G}}_{+\infty}^{(\mathrm{cont})}
    \otimes I \otimes \cdots \otimes I
  \Bigr]
  \;+\;
  \bigl(\hat{c}^\dagger + \hat{c}\bigr)
\end{equation}
on
\begin{equation}
  \mathcal{F}_{\pm}\Bigl(\widetilde{\mathcal{H}}_{\mathrm{lat}}\Bigr)
  \;=\;
  \bigoplus_{n=0}^{\infty}
  \Bigl[
    \widetilde{\mathcal{H}}_{\mathrm{lat}}^{\,\otimes n}
  \Bigr]_{\pm}
\end{equation}
remains $m$‐sectorial. Consequently, no finite‐time blow‐up arises, and a unique mild solution exists for the entire many‐lattice system, with occupant numbers and morphological labels unbounded. Here, each lattice occupant (possibly spanning continuum morphological states) can be replicated or cleared in a second‐quantized sense. The PDE/noise operators $\widehat{\mathcal{G}}_{+\infty}^{(\mathrm{cont})}$ act \emph{within} each $N$‐fold tensor product, while creation/annihilation operators shift the system \emph{between} $(N\pm 1)$ sectors. The continuum label set $\mathcal{Y}$ is seamlessly preserved within each occupant factor.

To prove this, consider that, as in the discrete occupant case, one takes the direct sum (Fock‐space expansion) of $\widehat{\mathcal{G}}_{+\infty}^{(\mathrm{cont})}$ across $N=0,1,2,\dots$. Rate‐bounded replication/clearance processes remain a relative perturbation. The $m$‐sectorial property follows from classical second‐quantization theorems \cite{BratteliRobinson1987,DaPratoZabczyk1992}.  
\end{theorem}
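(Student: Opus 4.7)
The plan is to deduce $m$‐sectoriality of $\mathrm{d}K\bigl(\widehat{\mathcal{G}}_{+\infty}^{(\mathrm{cont})}\bigr)$ in two stages: first lift the single‐lattice generator to each symmetrized (or antisymmetrized) $N$‐fold tensor product, then absorb the creation/annihilation term as a relatively bounded perturbation, and finally invoke Lumer--Phillips to obtain the semigroup and the unique mild solution. The infrastructure I would lean on is already in place: Theorem~\ref{thm:GlobalOperatorContinuousSetting} gives fiberwise $m$‐sectoriality of $\widehat{\mathcal{G}}_{+\infty}^{(\mathrm{cont})}$ on $\widetilde{\mathcal{H}}_{\mathrm{lat}}$, Theorem~\ref{thm:DissipativityContinuum} supplies the dissipativity constant $-\alpha$, and Proposition~\ref{prop:Gtot_sectorial} together with Theorem~\ref{thm:many_lattice_well_posed} provides the template for extending a single‐lattice $m$‐sectorial operator to a Fock‐space second quantization.

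First I would work sector by sector. On $\bigl[\widetilde{\mathcal{H}}_{\mathrm{lat}}^{\otimes N}\bigr]_{\pm}$ the lifted operator is
\begin{equation}
  \widehat{\mathcal{G}}_{+\infty}^{(\mathrm{cont}),(N)}
  \;=\;
  \sum_{j=1}^{N}
  \underbrace{I\otimes\cdots\otimes \widehat{\mathcal{G}}_{+\infty}^{(\mathrm{cont})}\otimes\cdots\otimes I}_{\text{acts on the $j$-th factor}},
\end{equation}
and the same calculation used in Proposition~\ref{prop:Gtot_sectorial} (applied here to the continuum‐label generator rather than to the discrete one) shows that
\begin{equation}
  \mathrm{Re}\,\bigl\langle \widehat{\mathcal{G}}_{+\infty}^{(\mathrm{cont}),(N)}\Phi,\,\Phi\bigr\rangle
  \;\le\;
  -\,\alpha\,N\,\|\Phi\|^{2},
\end{equation}
since the dissipativity bound from Theorem~\ref{thm:DissipativityContinuum} is additive across tensor factors. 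Symmetrization (or antisymmetrization) is preserved because the operator is permutation‐symmetric. Maximality on each sector follows from the standard resolvent construction for tensor sums of $m$‐sectorial operators (cf.\ \cite{ReedSimon1980,BratteliRobinson1987}), and taking the direct sum over $N$ preserves $m$‐sectoriality because the resolvent acts sectorwise and the sectorial constants $-\alpha N$ only improve with $N$.

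Next I would treat $\hat{c}^{\dagger}+\hat{c}$ as a perturbation. The classical Fock‐space estimate
\begin{equation}
  \|\hat{c}(f)\,\Psi\|\;\le\;\|f\|\,\|\hat{N}^{1/2}\Psi\|,
  \qquad
  \|\hat{c}^{\dagger}(f)\,\Psi\|\;\le\;\|f\|\,\|(\hat{N}+1)^{1/2}\Psi\|,
\end{equation}
adapts to the direct‐integral setting provided that the creation/annihilation kernels are Bochner‐measurable against $\nu(dy)$ and have finite $L^{2}$ norm in the morphological label; the ``rate‐bounded replication/clearance'' hypothesis of the theorem is precisely what is needed to secure this. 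Combined with the sectorwise bound $-\alpha N$, one obtains a relative bound of $\hat{c}^{\dagger}+\hat{c}$ with respect to $\mathrm{d}K\bigl(\widehat{\mathcal{G}}_{+\infty}^{(\mathrm{cont})}\bigr)$ strictly less than one on the dense core $\mathrm{Dom}(\hat{N})\cap\mathrm{Dom}(\mathrm{d}K)$. The Kato--Rellich perturbation theorem for $m$‐sectorial operators then preserves sectoriality of the full generator, and Lumer--Phillips yields the strongly continuous semigroup. Mild‐solution existence, uniqueness, and the absence of finite‐time blow‐up follow immediately by the variation‐of‐parameters argument in Theorem~\ref{thm:many_lattice_well_posed}.

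The main obstacle will be the relative‐boundedness step in the continuum‐label setting. In the discrete case $\hat{c}^{\dagger}(f),\hat{c}(f)$ are indexed by a single vector $f\in\mathcal{H}_{\mathrm{lat}}$, but here $f$ must itself be a measurable section $f(y)\in\mathcal{H}_{y}$, and the ``rate‐bounded replication/clearance'' assumption has to be translated into a uniform bound on the operator‐valued kernel that is compatible with the direct‐integral inner product and with the symmetrization projection. I expect this to require a careful choice of domain (likely $\mathrm{Dom}(\hat{N}^{1/2})\cap \mathrm{Dom}(\mathrm{d}K)$), a measurable selection argument analogous to Lemma~\ref{lemma:mSectorialFamilyContinuousLabels}, and an estimate of the form
\begin{equation}
  \|(\hat{c}^{\dagger}+\hat{c})\Psi\|^{2}
  \;\le\;
  C\int_{\mathcal{Y}}\|f(y)\|_{\mathcal{H}_{y}}^{2}\,\nu(dy)\cdot\|(\hat{N}+1)^{1/2}\Psi\|^{2},
\end{equation}
with $C$ independent of $N$. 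Once this inequality is in hand, the remainder of the argument reduces to invoking the perturbation and Lumer--Phillips machinery already used in the discrete case, and the stated conclusion follows.
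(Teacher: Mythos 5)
Your proposal follows essentially the same route as the paper's (very brief) proof: lift the $m$-sectorial generator sectorwise via the Fock-space direct sum, treat $\hat{c}^\dagger+\hat{c}$ as a relatively bounded perturbation under the rate-bounded hypothesis, and conclude by Lumer--Phillips, exactly as the paper's appeal to ``classical second-quantization theorems'' intends. Your version actually supplies details the paper omits---notably the sharpened sectorwise bound $-\alpha N\|\Phi\|^{2}$ needed to control $\hat{N}^{1/2}$ in the Kato--Rellich step, and the correct observation that the continuum-label setting requires the creation kernels to be measurable sections with finite $L^{2}(\nu)$ norm---so it is a faithful and more rigorous elaboration of the paper's argument.
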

\smallskip
\noindent

\begin{corollary}[Continuum Morphological Transitions and Unbounded Populations]
\label{cor:ContinuumMorphologyUnboundedPops}
\noindent
Under Theorem~\ref{thm:SecondQuantizedGeneratorContinuum}, the model accommodates an unbounded number of viral lattices, each labeled by a continuous index \(y\in\mathcal{Y}\). PDE wavefront modes, occupant transitions, and morphological flows coexist in a single open‐system operator framework that remains well‐posed indefinitely. One obtains a ``double continuum'' scenario: infinitely many virions \emph{and} each virion can occupy any point in a continuum of states. This is a powerful extension for advanced virology models where morphological transitions are smooth (e.g., angles, partial uncoating), occupant numbers can explode, yet $m$‐sectoriality ensures no unphysical blow‐ups. In large‐scale infections, entire subpopulations of virions may follow continuous morphological pathways. The approach merges continuum morphological spaces with occupant expansions—particularly relevant for phenomena like partial capsid swelling, continuum subunit angles, or gradient‐driven morphological changes.  

This dissipative, second‐quantized, \emph{continuous‐label} formalism merges PDE wave mechanics, occupant creation/annihilation, and smooth morphological changes in one operator‐theoretic framework. Because it allows \emph{both} unbounded populations and uncountably many conformational states, it provides a powerful tool for next‐generation virology models, capturing the full complexity of viral dynamics in resource‐limited or noise‐dominated environments.
\qedsymbol
\end{corollary}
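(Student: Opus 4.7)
The plan is to assemble the corollary as a direct consequence of three layered results proved in the excerpt: the fiberwise $m$-sectoriality of Lemma~\ref{lemma:mSectorialFamilyContinuousLabels}, the continuous-label integral synthesis in Theorem~\ref{thm:GlobalOperatorContinuousSetting} (with its dissipativity refinement, Theorem~\ref{thm:DissipativityContinuum}), and the second-quantized lift in Theorem~\ref{thm:SecondQuantizedGeneratorContinuum}. First, I would record that $\mathrm{d}K\bigl(\widehat{\mathcal{G}}_{+\infty}^{(\mathrm{cont})}\bigr)$ is $m$-sectorial on $\mathcal{F}_{\pm}\bigl(\widetilde{\mathcal{H}}_{\mathrm{lat}}\bigr)$, and then invoke the Lumer--Phillips and Hille--Yosida theorems to extract a strongly continuous semigroup $\bigl\{e^{\,t\,\mathrm{d}K(\widehat{\mathcal{G}}_{+\infty}^{(\mathrm{cont})})}\bigr\}_{t\ge 0}$ that is quasi-contractive, with exponential decay bound inherited from the fiberwise coercivity constant $\alpha$ of Theorem~\ref{thm:DissipativityContinuum}.

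Next I would fix an arbitrary initial state $\ket{\Psi_0}\in \mathrm{Dom}\bigl(\mathrm{d}K(\widehat{\mathcal{G}}_{+\infty}^{(\mathrm{cont})})\bigr)$ and define the candidate mild solution $\ket{\Psi(t)} = e^{\,t\,\mathrm{d}K(\widehat{\mathcal{G}}_{+\infty}^{(\mathrm{cont})})}\ket{\Psi_0}$. Uniqueness would follow by a Gronwall-type argument applied to the difference of two putative solutions in the Fock norm, using the dissipativity bound to suppress growth. To confirm that \emph{unbounded populations} and \emph{continuum morphologies} coexist, I would decompose $\ket{\Psi(t)}$ into its $N$-particle sector components $\Psi_N(y_1,\dots,y_N;\,f_1,\dots,f_N;\,t)$ following the expansion in Theorem~\ref{thm:GlobalWavefunctionFock}, and verify that (i) the direct-integral structure $\int_{\mathcal{Y}}^\oplus \mathcal{H}_y\,\nu(dy)$ is preserved in each tensor slot under time evolution, so the continuous label remains a genuine dynamical variable; (ii) the creation/annihilation operators $\hat{c}^\dagger,\hat{c}$ genuinely transfer amplitude between $N$ and $N\pm 1$ sectors without norm blow-up, owing to their relative boundedness; and (iii) the Markov integral operator $\widehat{M}$ with kernel $K(y,z)$ induces smooth morphological drift across $\mathcal{Y}$ within each fixed-$N$ fiber.

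The coexistence claim for wavefronts, occupant jumps, and morphological flow is then essentially a decomposition argument. The total generator splits into three commuting-in-spirit (though not literally commuting) pieces: the fiberwise PDE plus noise contribution $\mathcal{G}_y+\widehat{N}_y$ (producing traveling-wave segments as per Proposition~\ref{prop:WavefrontSolutionsViralPhonons}), the Markov kernel operator $\widehat{M}$ (producing continuous drift in $\mathcal{Y}$), and the second-quantized birth/death term $\hat{c}^\dagger+\hat{c}$ (producing occupant-sector jumps). I would argue that each of these three pieces is individually controlled by the global dissipativity estimate, so all three mechanisms act simultaneously on $\ket{\Psi(t)}$ without any single one dominating or violating the Hilbert-space bounds.

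The main obstacle I anticipate is not the semigroup existence itself, which is essentially a bookkeeping exercise once Theorem~\ref{thm:SecondQuantizedGeneratorContinuum} is granted, but rather the careful measurability and integrability verification in the nested structure $\mathcal{F}_{\pm}\bigl(\int_{\mathcal{Y}}^\oplus \mathcal{H}_y\,\nu(dy)\bigr)$. Specifically, one must ensure that the sections $y_1,\dots,y_N \mapsto \Psi_N(y_1,\dots,y_N;\cdot;t)$ remain jointly measurable and square-integrable under $\nu^{\otimes N}$ for every $t\ge 0$ and every $N$, and that the domain of $\mathrm{d}K\bigl(\widehat{\mathcal{G}}_{+\infty}^{(\mathrm{cont})}\bigr)$ is sufficiently dense in each sector. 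Resolving this cleanly likely requires a measurable selection theorem for the family $\{\mathrm{Dom}(\mathcal{G}_y)\}_{y\in\mathcal{Y}}$ together with a uniform-in-$y$ bound on the sectorial angle, after which the standard direct-integral machinery of Dixmier~\cite{Dixmier1981} combined with Da Prato--Zabczyk~\cite{DaPratoZabczyk1992} closes the argument.
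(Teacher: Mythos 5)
Your proposal is correct and follows essentially the same route as the paper, which treats this corollary as an immediate consequence of Theorem~\ref{thm:SecondQuantizedGeneratorContinuum} (itself built on Lemma~\ref{lemma:mSectorialFamilyContinuousLabels}, Theorem~\ref{thm:GlobalOperatorContinuousSetting}, and Theorem~\ref{thm:DissipativityContinuum}) and offers no separate argument beyond that citation. Your additional attention to the joint measurability and square-integrability of the sections $\Psi_N$ in the nested space $\mathcal{F}_{\pm}\bigl(\int_{\mathcal{Y}}^{\oplus}\mathcal{H}_y\,\nu(dy)\bigr)$ addresses a point the paper leaves implicit, and is a genuine strengthening rather than a deviation; the only minor caveat is that the decay constant $\alpha$ need not survive the creation/annihilation perturbation unchanged, so quasi-contractivity with a possibly shifted constant is the safer claim.
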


\subsection{Final Synthesis: Solutions, Mechanics, and Feynman‐Inspired Perspectives}
\noindent
Throughout this work, we have progressively built an operator‐theoretic framework that unites:
\begin{enumerate}[label=(\roman*)]
  \item \emph{PDE mechanics} in each single‐lattice space, capturing continuum wavefronts and local vibrational modes,
  \item \emph{Markov processes} encoding arrangement or occupant‐state transitions (including continuous‐label spaces),
  \item \emph{Creation/annihilation} operators enabling unbounded population changes in a second‐quantized (Fock) setting, and
  \item \emph{Noise or damping} for realistic open‐system effects (resources, immune actions).
\end{enumerate}
Here, we conclude by describing how solutions to the Cauchy problem emerge under these constructions, and—borrowing inspiration from Feynman—reflect on the physical and conceptual underpinnings that drive \emph{viral lattice theory} as a novel lens on multi‐scale infection phenomena.
\begin{theorem}[Existence and Uniqueness of Mild Solutions]
\label{thm:ExistenceUniquenessFinal}
\noindent
Let
\(\widehat{\mathcal{G}}_{+\infty}\)
denote the global $m$‐sectorial generator acting on the hybrid Hilbert space 
\(\widetilde{\mathcal{H}}_{\mathrm{lat}}\) or its continuous‐label integral extension. Suppose occupant creation/annihilation rates are relatively bounded, and dissipativity or quasi‐contractivity conditions (cf.\ Lumer–Phillips or Kato–Rellich) hold. Then, for any initial datum
\(\Theta_0\in \widetilde{\mathcal{H}}_{\mathrm{lat}}\) (or in the Fock‐like extension),
there exists a \emph{unique mild solution} 
\(
  \Theta(t)
  \;=\;
  e^{\,t\,\widehat{\mathcal{G}}_{+\infty}}\,\Theta_0,
  \quad
  t\ge0,
\)
that remains finite‐norm for all finite $t$. No finite‐time blow‐up can occur, and if a strict dissipative constant \(\alpha>0\) is present, solutions satisfy exponential decay bounds
\(\|\Theta(t)\|\le e^{-\alpha t}\|\Theta_0\|\). Mathematically, this theorem cements the \emph{well‐posedness} of the viral lattice model: wave‐coupled PDEs, occupant transitions, noise, and unbounded occupant numbers are all folded into one open‐system operator. The system remains stable (in the sense of no norm blow‐ups), and each initial state $\Theta_0$ evolves via a strongly continuous semigroup. For an \emph{infection} starting with some finite set of virions, or partial morphological states, the theorem guarantees that—despite replication or occupant expansions—one does not witness an “infinite” viral load in finite time. Simultaneously, wave phenomena (capsid vibrations, traveling conformational fronts) remain well‐behaved under noise and occupant transitions.
\qedsymbol
\end{theorem}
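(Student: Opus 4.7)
The plan is to reduce the claim to a direct application of the Lumer--Phillips generation theorem, leveraging the $m$-sectorial structure of $\widehat{\mathcal{G}}_{+\infty}$ that has already been assembled piece by piece in the preceding sections. First, I would invoke Theorem~\ref{prop:GlobalHybridOperator} (in the discrete-label case) or Theorem~\ref{thm:GlobalOperatorContinuousSetting} (in the continuous-label case) to assert that $\widehat{\mathcal{G}}_{+\infty}$ is $m$-sectorial on $\widetilde{\mathcal{H}}_{\mathrm{lat}}$, and then use Theorem~\ref{thm:SecondQuantizedGeneratorContinuum} to lift this property to the Fock-space extension $\mathcal{F}_{\pm}(\widetilde{\mathcal{H}}_{\mathrm{lat}})$. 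Together these provide the hypotheses required by Lumer--Phillips: dense domain, dissipativity of the form $\mathrm{Re}\,\langle \widehat{\mathcal{G}}_{+\infty}\bm{u},\bm{u}\rangle \le -\alpha\|\bm{u}\|^2$, and the range condition that $(\lambda I - \widehat{\mathcal{G}}_{+\infty})$ is surjective for some (equivalently, all) $\lambda$ with $\mathrm{Re}\,\lambda > -\alpha$.

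Next, I would conclude that $\widehat{\mathcal{G}}_{+\infty}$ generates a strongly continuous, quasi-contractive semigroup $\{e^{t\widehat{\mathcal{G}}_{+\infty}}\}_{t\ge 0}$ satisfying $\|e^{t\widehat{\mathcal{G}}_{+\infty}}\| \le e^{-\alpha t}$. The mild solution is then defined directly by
\begin{equation}
  \Theta(t) \;=\; e^{\,t\,\widehat{\mathcal{G}}_{+\infty}}\,\Theta_0,
  \qquad t\ge 0,
\end{equation}
and continuous dependence on initial data follows automatically from the strong continuity of the semigroup. The exponential bound $\|\Theta(t)\| \le e^{-\alpha t}\|\Theta_0\|$ then follows from the contraction estimate; in particular, finite-norm preservation on every bounded interval $[0,T]$ precludes finite-time blow-up, even when occupant numbers are formally unbounded, because the resource-saturable creation operator $\hat{\Gamma}_+ = \hat{a}^\dagger\,\mathcal{R}(\hat{N})$ enters as a relatively bounded perturbation with small relative bound (cf.\ Theorems~\ref{thm:m_sectorial_generator} and~\ref{thm:logistic_population_saturation}).

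For uniqueness, I would argue by a standard Gronwall/energy estimate: if $\Theta_1(t)$ and $\Theta_2(t)$ are two mild solutions with the same initial datum, then $\Delta(t) := \Theta_1(t) - \Theta_2(t)$ satisfies $\Delta(0)=0$ and, by dissipativity,
\begin{equation}
  \tfrac{d}{dt}\,\|\Delta(t)\|^2 \;=\; 2\,\mathrm{Re}\,\langle \widehat{\mathcal{G}}_{+\infty}\,\Delta(t),\,\Delta(t)\rangle \;\le\; -2\alpha\,\|\Delta(t)\|^2,
\end{equation}
so Gronwall's inequality forces $\Delta(t)\equiv 0$. In the piecewise-continuous (jump-interleaved) interpretation, one applies this argument on each deterministic segment $[\tau_k,\tau_{k+1})$ between jump times, and uses the fact that creation/annihilation and Markov operators act identically on both candidate solutions at jump instants to extend uniqueness across the full interval.

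The main obstacle I anticipate is not the Lumer--Phillips step itself, but verifying that the cumulative perturbation structure---continuous-label Markov integral operator $\widehat{M}$, state-dependent noise $\widehat{N}_y$, and occupant ladder operators $(\hat{c},\hat{c}^\dagger)$ acting across unbounded Fock sectors---remains \emph{jointly} relatively bounded with combined relative bound strictly less than one. This requires coordinating the rate-kernel integrability hypotheses on $K(y,z)$, the polynomial growth bounds on $\widehat{N}_y(\mathbf{u})$, and the saturation decay of $\mathcal{R}(\hat{N})$ so that their sum does not overwhelm the sectorial wedge of $\bigoplus_y (\mathcal{G}_y)$. I would handle this by a Kato--Rellich-type perturbation argument performed sector by sector in $N$, exploiting the fact that $\mathcal{R}(N)\to 0$ controls the creation amplitude uniformly in $N$, so the relative bound can be chosen uniform across the direct-sum decomposition of Fock space.
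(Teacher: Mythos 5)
Your proposal follows essentially the same route as the paper: it assembles $m$\-sectoriality from Theorem~\ref{prop:GlobalHybridOperator} (or Theorem~\ref{thm:GlobalOperatorContinuousSetting} and Theorem~\ref{thm:SecondQuantizedGeneratorContinuum} in the continuous-label/Fock setting), applies Lumer--Phillips to obtain the quasi-contractive semigroup and the $e^{-\alpha t}$ bound, and closes uniqueness with the same Gronwall/energy argument the paper sketches in Definition~\ref{def:UniquenessHybrid}. Your closing paragraph on verifying a joint relative bound for $\widehat{M}$, $\widehat{N}_y$, and $(\hat{c},\hat{c}^\dagger)$ is a useful sharpening of a point the paper handles only by assertion (Lemma~\ref{lemma:RobustnessUnderExtensions}), but it does not change the underlying approach.
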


\begin{lemma}[Robustness Under Model Extensions]
\label{lemma:RobustnessUnderExtensions}
\noindent
Even if one augments the model with additional complexities—extra PDE fields (e.g., multi‐component wave modes), more intricate occupant creation kernels, higher‐order Markov transitions, or continuum morphological expansions—$m$‐sectorial dissipativity arguments persist. One obtains:
\begin{equation}
  \widehat{\mathcal{G}}_{+\infty}
  \quad\longmapsto\quad
  \widehat{\mathcal{G}}_{+\infty} + \Delta\widehat{A},
\end{equation}
where $\Delta\widehat{A}$ is a relatively bounded perturbation. Hence the strongly continuous semigroup structure, and thus well‐posedness, is preserved. Kato–Rellich theorems for $m$‐sectorial operators allow such additive terms, provided their relative norm is below a threshold. Occupant expansions or partial morphological drifts remain tractable, as do additional PDE unknowns for subunit chemistry or wave couplings.  
\qedsymbol
\end{lemma}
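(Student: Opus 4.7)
The plan is to cast the model extension $\Delta\widehat{A}$ as a relatively bounded perturbation of the baseline generator $\widehat{\mathcal{G}}_{+\infty}$, and then invoke a sectorial analog of the Kato--Rellich theorem (cf.\ \cite{Kato1980,Pazy1983}) to transfer both the $m$-sectorial property and the semigroup generation to the perturbed operator. Concretely, first I would fix a sector $\Sigma_\theta$ containing the numerical range of $\widehat{\mathcal{G}}_{+\infty}$, and recall from Theorem~\ref{thm:ExistenceUniquenessFinal} that the baseline operator satisfies a dissipativity bound with constant $\alpha \ge 0$. Next, I would exhibit constants $a \in [0,1)$ and $b \ge 0$ such that
\begin{equation}
  \|\Delta\widehat{A}\,\bm{u}\|_{\widetilde{\mathcal{H}}_{\mathrm{lat}}}
  \;\le\;
  a\,\bigl\|\widehat{\mathcal{G}}_{+\infty}\bm{u}\bigr\|_{\widetilde{\mathcal{H}}_{\mathrm{lat}}}
  \;+\;
  b\,\|\bm{u}\|_{\widetilde{\mathcal{H}}_{\mathrm{lat}}},
  \qquad \bm{u}\in \mathrm{Dom}\bigl(\widehat{\mathcal{G}}_{+\infty}\bigr) \subseteq \mathrm{Dom}(\Delta\widehat{A}),
\end{equation}
which is exactly the hypothesis of Kato--Rellich perturbation for closed/sectorial operators. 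Under this bound, $\widehat{\mathcal{G}}_{+\infty} + \Delta\widehat{A}$ is closed on the same domain, and its numerical range stays within a (possibly enlarged) sector $\Sigma_{\theta'}$, preserving $m$-sectoriality.

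The next step would be to verify this relative bound separately for each type of extension listed in the statement, treating them as independent building blocks that combine additively. For additional PDE fields (e.g.\ multi-component wavefronts), I would use elliptic regularity and a Young-type inequality to dominate the extra spatial derivatives by $\widehat{\mathcal{G}}_{+\infty}$ up to lower-order terms. For more intricate occupant creation kernels, I would lean on Definition~\ref{def:FockSpaceContinuousLabels} and the standard second-quantization estimate $\|\hat{c}^\dagger(f)\Psi_N\|\le \sqrt{N+1}\,\|f\|\,\|\Psi_N\|$, combined with the occupant-saturation factor $\mathcal{R}(\hat{N})$ from Theorem~\ref{thm:ResourceOperator}, to ensure sublinearity in $\hat{N}$ and hence relative boundedness by $\mathrm{d}K(\widehat{\mathcal{G}}_{+\infty}^{(\mathrm{cont})})$. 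For higher-order Markov transitions and continuum morphological drifts, I would appeal to Lemma~\ref{lemma:mSectorialFamilyContinuousLabels} and Definition~\ref{def:MarkovIntegralOperator}, showing the rate kernels are integrable relative to the fiber domains.

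Once each block satisfies its own relative bound with constants $(a_j,b_j)$, I would combine them by a triangle inequality to obtain an aggregate bound $(a,b)$ with $a = \sum_j a_j < 1$, possibly after rescaling $\Delta\widehat{A}$ or restricting attention to the regime where the perturbation is ``small'' in the sense indicated by the statement. The Lumer--Phillips theorem then furnishes a strongly continuous semigroup $\{e^{t(\widehat{\mathcal{G}}_{+\infty}+\Delta\widehat{A})}\}_{t\ge0}$ on $\widetilde{\mathcal{H}}_{\mathrm{lat}}$ (or its Fock-like extension), with dissipative constant at worst $\alpha - b$. A Gronwall-type argument analogous to the one sketched in Theorem~\ref{thm:many_lattice_well_posed} then confirms global existence, uniqueness, and continuous dependence on initial data, thus propagating well-posedness to the extended model.

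The hard part will be controlling the relative bound uniformly across the infinitely many Fock sectors when $\Delta\widehat{A}$ involves the continuum-label extensions of Section~\ref{subsec:ContinuousLabelSpaceDirectIntegral}. In that setting, the direct-integral structure requires one to verify measurability of the perturbing family $\{\Delta\widehat{A}_y\}_{y\in\mathcal{Y}}$ \emph{and} a fiberwise relative bound that is $\nu$-essentially bounded in $y$; naive pointwise estimates may fail if the kernel $K(y,z)$ is only locally integrable. I expect this to be resolvable via a measurable selection argument combined with a domination hypothesis on $K(y,z)$, but it is the principal technical hurdle distinguishing the continuum-label case from the straightforward discrete scenario. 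Once this uniformity is in hand, the remainder of the argument is a standard application of sectorial perturbation theory.
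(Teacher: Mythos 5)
Your plan is correct and follows essentially the same route as the paper, which simply asserts that a Kato--Rellich-type perturbation argument for $m$-sectorial operators handles the additive term $\Delta\widehat{A}$ once it is relatively bounded with small relative bound. In fact your proposal is considerably more detailed than the paper's own justification (which is a single sentence), and your identification of the fiberwise measurability and $\nu$-essential boundedness issue in the continuum-label case flags a genuine technical point that the paper passes over entirely.
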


\subsubsection{Biological Implications of the Second‐Quantized Fock‐Space Description}
\label{sec:BiologicalImplicationsFockSpace}

\noindent
Shifting from single‐lattice or finite‐sum representations to an integral‐space formulation, and then lifting the resultant model to a \emph{second‐quantized Fock space}, opens new avenues for comprehending large viral populations under real‐world constraints. Concretely, one replaces any finite cap on occupant numbers with operators that handle unbounded replication and clearance events, thereby reflecting genuine infection scenarios in which viral loads can surge or plummet.
\paragraph{Implications for Viral Dynamics.}
\begin{enumerate}[label={\arabic*.}]
  \item \emph{Unbounded Occupant Numbers:}  
    The second‐quantized formalism naturally accommodates infection scenarios where virion counts can escalate (due to favorable host conditions) or collapse (due to immune clearance). No artificial truncation is needed; occupant creation/annihilation acts directly in Fock space.
  \item \emph{Biologically Motivated Rates:}  
    Rate‐bounded creation and annihilation mirror realistic limits on replication (e.g., saturable resource usage) and clearance (e.g., immune efficacy), aligning the mathematical model with experimental observations of infection trajectories.
  \item \emph{Coupled Lattices and Collective Effects:}  
    Each occupant factor \(\widetilde{\mathcal{H}}_{\mathrm{lat}}\) can incorporate PDE couplings or morphological transitions that become collective when many lattices coexist. This fosters \emph{competition} (resources deplete faster) or \emph{cooperativity} (collective assembly signals) among co‐localized virions.
\end{enumerate}
\begin{theorem}[Fock‐Space Construction and Biological Realism]
\label{thm:FockSpaceConstructionBio}
\noindent
Let 
\(\displaystyle \widetilde{\mathcal{H}}_{\mathrm{lat}} = \int_{\mathcal{Y}}^{\oplus} \mathcal{H}_y\,\nu(dy)\)
denote the integral‐space Hilbert space that merges PDE dynamics, Markov transitions, and (optionally) continuum morphological labels \(\mathcal{Y}\). Define the Fock space 
\(
  \mathcal{F}_{\pm}\Bigl(\widetilde{\mathcal{H}}_{\mathrm{lat}}\Bigr)
  \;=\;
  \bigoplus_{n=0}^\infty
  \Bigl[
    \widetilde{\mathcal{H}}_{\mathrm{lat}}^{\otimes n}
  \Bigr]_{\pm}.
\)
Assume occupant creation/annihilation operators \(\widehat{c}^\dagger,\widehat{c}\) remain rate‐bounded and are combined with an $m$‐sectorial generator \(\widehat{\mathcal{G}}_{+\infty}^{(\mathrm{cont})}\). Then the lifted operator 
\(
  \mathrm{d}K
  \Bigl(
    \widehat{\mathcal{G}}_{+\infty}^{(\mathrm{cont})}
  \Bigr)
  \;=\;
  \bigoplus_{n=0}^\infty
  \Bigl(
    \widehat{\mathcal{G}}_{+\infty}^{(\mathrm{cont})}
    \otimes \cdots \otimes I
  \Bigr)
  \;+\;
  \bigl(\widehat{c}^\dagger + \widehat{c}\bigr)
\)
is $m$‐sectorial on 
\(\mathcal{F}_{\pm}\bigl(\widetilde{\mathcal{H}}_{\mathrm{lat}}\bigr)\),
yielding a \emph{well‐posed} many‐lattice system with unbounded occupant numbers. This result captures large‐scale viral infection processes where each occupant (“lattice”) is modeled as a particle in Fock space, reflecting realistic population expansions (creation) or clearance (annihilation). The $m$‐sectorial approach preserves no blow‐up in finite time, consistent with resource or immune constraints in vivo.
\qedsymbol
\end{theorem}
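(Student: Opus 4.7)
The plan is to follow the standard second-quantization lifting argument, adapted to the continuous-label setting already established in Theorem~\ref{thm:SecondQuantizedGeneratorContinuum} and Corollary~\ref{cor:ContinuumMorphologyUnboundedPops}. First, I would verify that the fiberwise lift of $\widehat{\mathcal{G}}_{+\infty}^{(\mathrm{cont})}$ to each $n$-particle sector remains $m$-sectorial. Concretely, on $[\widetilde{\mathcal{H}}_{\mathrm{lat}}^{\otimes n}]_{\pm}$, define
\begin{equation}
  \widehat{\mathcal{G}}^{(n)}
  \;=\;
  \sum_{j=1}^{n}
  I \otimes \cdots \otimes \widehat{\mathcal{G}}_{+\infty}^{(\mathrm{cont})} \otimes \cdots \otimes I,
\end{equation}
with the generator acting in the $j$-th slot. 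Using the dissipativity bound inherited from the proof of Theorem~\ref{thm:DissipativityContinuum}, summing over slots preserves the sectorial wedge, since for any $\Phi \in \mathrm{Dom}(\widehat{\mathcal{G}}^{(n)})$ the real part of $\langle \widehat{\mathcal{G}}^{(n)}\Phi,\Phi\rangle$ decomposes into a sum of fiber contributions, each bounded above by $-\alpha\|\Phi_j\|^2$. Invoking the tensor-product $m$-sectoriality results of \cite{ReedSimon1980,BratteliRobinson1987} confirms that the range condition $(I - \widehat{\mathcal{G}}^{(n)})$ is surjective on each symmetrized (or antisymmetrized) sector.

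Next, I would assemble the direct sum $\mathrm{d}K(\widehat{\mathcal{G}}_{+\infty}^{(\mathrm{cont})}) = \bigoplus_{n=0}^{\infty} \widehat{\mathcal{G}}^{(n)}$ and note that fiberwise dissipativity transfers to the Fock-space level provided the domain $\bigoplus_{n} \mathrm{Dom}(\widehat{\mathcal{G}}^{(n)})$ is dense in $\mathcal{F}_{\pm}(\widetilde{\mathcal{H}}_{\mathrm{lat}})$. This density step is standard and follows by using finite-particle approximations. The real work enters when one adds the occupant-changing perturbation $\widehat{c}^{\dagger} + \widehat{c}$. Here I would invoke the rate-boundedness hypothesis to show that, on vectors in the form domain of the number operator $\hat{N}$, we have an estimate of the Kato--Rellich type,
\begin{equation}
  \bigl\| (\widehat{c}^\dagger + \widehat{c})\,\Phi \bigr\|
  \;\le\;
  a\,\bigl\| \mathrm{d}K(\widehat{\mathcal{G}}_{+\infty}^{(\mathrm{cont})})\,\Phi \bigr\|
  \;+\;
  b\,\|\Phi\|,
\end{equation}
with relative bound $a < 1$. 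The saturable resource functions $\mathcal{R}(\hat{N})$ and density-dependent decay $\mathcal{D}(\hat{N})$ introduced earlier are precisely what enforce this sublinear growth in $\hat{N}$; this is the point at which Remark~\ref{remark:unique_global_mild} is crucial.

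\textbf{Main obstacle.} I expect the hardest step will be controlling the creation/annihilation perturbation uniformly across the continuum-label sectors. In the discrete-label Fock construction, the $\sqrt{n}$-type growth of $\hat{c}^{\dagger}$ on the $n$-particle sector is classical, but here each occupant factor itself carries a direct-integral structure over $\mathcal{Y}$, so $\widehat{c}^{\dagger}(f)$ must be tested against an integrable family of PDE-wavefront profiles $f(y) \in \mathcal{H}_y$. The relative bound must therefore be uniform in $y$, which requires the rate kernel bounding $\widehat{c}^{\dagger},\widehat{c}$ to be measurable and essentially bounded with respect to $\nu$. Once this uniform bound is established, a final Lumer--Phillips argument concludes $m$-sectoriality of $\mathrm{d}K(\widehat{\mathcal{G}}_{+\infty}^{(\mathrm{cont})}) + (\widehat{c}^\dagger + \widehat{c})$, yielding the strongly continuous semigroup and hence the global well-posedness claimed. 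Exponential decay, if $\alpha > 0$, then follows immediately by Lemma~\ref{lemma:RobustnessUnderExtensions} applied sector-by-sector.
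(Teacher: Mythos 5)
Your proposal is correct and follows essentially the same route as the paper: sector-wise dissipativity summed over tensor slots, assembly into the direct sum $\mathrm{d}K(\widehat{\mathcal{G}}_{+\infty}^{(\mathrm{cont})})$, treatment of $\widehat{c}^\dagger+\widehat{c}$ as a relatively bounded (Kato--Rellich) perturbation controlled by the saturable rates, and a concluding Lumer--Phillips argument. In fact you supply more detail than the paper itself, which disposes of this theorem by citing the classical second-quantization results and the earlier discrete-label sectorial extension; your explicit identification of the $\sqrt{n}$ growth of the occupant-change operators and the need for $\nu$-essentially-bounded rate kernels uniformly over $\mathcal{Y}$ is precisely the point the paper leaves implicit.
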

\noindent
\begin{lemma}[Feedback Loops in the Second‐Quantized Paradigm]
\label{lemma:FeedbackLoopsFock}
\noindent
When the occupant creation/annihilation processes are embedded in a second‐quantized (Fock) extension, they become intimately coupled with PDE or Markov dynamics describing viral lattice interactions. As a result, population‐level \emph{feedback loops} naturally emerge:
\begin{itemize}
  \item \emph{Competition:} As replicating virions consume shared resources (nutrients, host factors, membrane sites), they modulate the rates of PDE wave‐propagation or Markov transitions for the overall population. This curbs expansion speeds, preventing uncontrolled replication surges.
  \item \emph{Cooperativity:} Certain viral processes (e.g.\ genome packaging signals, co‐infection synergy) \emph{amplify} occupant creation when multiple lattices co‐localize, mirroring “bosonic enhancement” in quantum physics. This magnifies replication success if neighboring virions foster each other’s assembly steps or immune evasion.
\end{itemize}
Hence, occupant expansion is not just a sum of independent replication events but can \emph{reinforce or inhibit} each other’s PDE or Markov transitions, contingent on resource constraints, synergy, or competition factors. Many experimental studies observe cooperative effects (e.g.\ “viral swarm” synergy)\cite{Domingo2021} where aggregated virions share resources or exchange genetic/structural components, thereby accelerating replication. Conversely, crowding phenomena or resource depletion trigger competition, limiting further growth. By embedding resource coupling and occupant transitions in the PDE–Markov operators, the second‐quantized approach captures these large‐scale interactions in a rigorous, operator‐theoretic manner.  
\qedsymbol
\end{lemma}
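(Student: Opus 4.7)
The plan is to establish the two feedback mechanisms by analyzing commutator structures between occupant creation/annihilation operators and the resource- or morphology-dependent terms already built into $\widehat{\mathcal{G}}_{+\infty}^{(\mathrm{cont})}$. First, for the \emph{competition} claim, I would exploit the resource-dependent creation operator $\hat{\Gamma}_+(f) = \hat{a}^\dagger(f)\,\mathcal{R}(\hat{N})$ from Theorem~\ref{thm:ResourceOperator}. Since $\mathcal{R}(\hat{N})$ is diagonal in the number basis with $\mathcal{R}(M) \to 0$ as $M \to \infty$, computing the Heisenberg evolution of PDE observables against $\mathrm{d}K(\widehat{\mathcal{G}}_{+\infty}^{(\mathrm{cont})}) + (\widehat{c}^\dagger + \widehat{c})$ will show explicitly that wave-propagation amplitudes and Markov transition rates acquire a back-reaction term proportional to $\mathcal{R}(\hat{N})$. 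Translated into a mean-field ODE via Theorem~\ref{thm:logistic_population_saturation}, this yields a logistic-type damping coefficient whose nonlinearity is the mathematical image of resource competition.

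Second, for the \emph{cooperativity} claim, I would invoke the bosonic symmetrization structure of $\mathcal{F}_+(\widetilde{\mathcal{H}}_{\mathrm{lat}})$ together with the canonical identity $\hat{a}^\dagger(f)\,|n\rangle = \sqrt{n+1}\,|n+1\rangle$. The factor $\sqrt{n+1}$ is precisely the bosonic-enhancement mechanism familiar from stimulated emission in quantum optics, and its transcription into our setting shows that replication amplitudes intrinsically amplify with co-localized occupancy. To promote this to a genuinely biological coupling, I would augment the Markov integral operator with a cooperativity kernel $K_{\mathrm{coop}}(y,y')$ favoring transitions when morphological labels overlap, then verify that the induced correction to $\widehat{c}^\dagger$ satisfies the same rate-boundedness criteria of Theorem~\ref{thm:SecondQuantizedGeneratorContinuum}. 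A compact calculation of $[\widehat{c}^\dagger, \hat{N}]$ combined with the symmetrization projector will make the synergy explicit and quantitatively linked to the occupant density.

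The critical analytical step, and the one I expect to be the main obstacle, is verifying that both the competition correction $\mathcal{R}(\hat{N})$ and the cooperativity kernel $K_{\mathrm{coop}}(y,y')$ remain \emph{relatively bounded} with respect to $\widehat{\mathcal{G}}_{+\infty}^{(\mathrm{cont})}$ \emph{uniformly across all $N$-sectors} of the Fock space, particularly in the continuous-label regime where $\mathcal{R}$ may depend on occupant densities integrated over $\mathcal{Y}$. Standard measurable-selection arguments and the integrability hypotheses on $K(y,z)$ from Definition~\ref{def:MarkovIntegralOperator} should let me invoke Lemma~\ref{lemma:RobustnessUnderExtensions} and a Kato--Rellich-type perturbation estimate, but one must confirm the relative bound stays below the $m$-sectorial threshold even as occupant numbers grow. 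Once that bound is secured, dissipativity is preserved, the semigroup $e^{t\widehat{\mathcal{G}}_{+\infty}}$ remains well-defined by Theorem~\ref{thm:ExistenceUniquenessFinal}, and the two feedback loops emerge as rigorous consequences of the operator algebra rather than heuristic modelling additions.
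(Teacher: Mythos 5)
Your proposal takes a genuinely different---and considerably more substantive---route than the paper does. The paper's own ``proof'' of this lemma is essentially the final discursive paragraph of the statement itself: it appeals to experimental observations of viral-swarm synergy and crowding, and then asserts that embedding resource coupling and occupant transitions in the PDE--Markov operators ``captures'' these interactions; no commutator is computed and no operator estimate is given. You, by contrast, propose to derive both feedback loops from the operator algebra: competition from the Heisenberg back-reaction of $\mathcal{R}(\hat{N})$ on PDE/Markov observables (feeding into the logistic mean-field ODE of Theorem~\ref{thm:logistic_population_saturation}), and cooperativity from the intrinsic $\sqrt{n+1}$ bosonic-enhancement factor of $\hat{a}^\dagger$ on the symmetrized Fock space. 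The latter observation is a genuine improvement over the paper, which invokes ``bosonic enhancement'' only as an analogy without noting that the enhancement is already built into the canonical creation operator. Your identification of the uniform-in-$N$ relative-boundedness of the perturbations as the critical step is also correct and is precisely the issue the paper glosses over; note in particular that the bare $\hat{a}^\dagger(f)$ grows like $\sqrt{N}$, so the combined amplitude $\sqrt{M+1}\,\mathcal{R}(M)$ is what must be controlled---bounded for the Michaelis--Menten choice $\mathcal{R}(x)=(1+mx)^{-1}$, but something you must re-verify once the cooperativity kernel $K_{\mathrm{coop}}(y,y')$ is added, since that kernel is an extension of the paper's model rather than a feature already present in it. If you carry out the program you sketch, you will have supplied a proof where the paper supplies only an interpretive gloss; the one thing to make explicit is the competition--cooperativity trade-off encoded in the product $\sqrt{M+1}\,\mathcal{R}(M)\,K_{\mathrm{coop}}$, which determines at what occupancy the suppressive feedback overtakes the enhancement.
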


\smallskip
\noindent
\textbf{Mechanistic Consequences.}

\begin{enumerate}[label={\arabic*.}]
  \item \textbf{Stochastic PDE + Markov + Fock Integration.}
  \begin{itemize}
    \item \emph{Local–Global Coupling:} Intra‐lattice wavefronts or conformational fronts propagate locally within each occupant (viral lattice), whereas occupant creation/annihilation events reshape the population at a global scale. This interplay yields a truly \emph{multi‐scale} infection model.
    \item \emph{Noise–Driven Variability:} Under abundant resources and low host immune pressure, occupant numbers can proliferate along relatively deterministic pathways. By contrast, high noise or potent immune responses induce significant occupant clearance, diminishing the overall viral population despite underlying PDE wave expansions.
  \end{itemize}
  These multi‐scale processes align with observed infection patterns in which localized assembly dynamics interact with global environmental constraints or immune actions.

  \item \textbf{Dominant Pathways under Selective Pressures.}
  \begin{itemize}
    \item \emph{Evolutionary Focus:} Large‐deviation analysis in the Fock space (path‐integral arguments) spotlights \emph{most‐likely} replication cycles under minimal randomness, reflecting how viruses exploit the most favorable assembly and release pathways. In a resource‐rich, low‐noise environment, a small number of “optimal” morphological states or routes often dominate.
    \item \emph{Population Bottlenecks:} Under resource scarcity or partial immune clearance, occupant annihilation rates rise, selectively pruning vulnerable or less fit capsid conformations. Consequently, only robust morphological variants persist, reinforcing a population‐scale bottleneck akin to Darwinian selection on capsid structures.
  \end{itemize}
  These insights explain how certain virion phenotypes or assembly routes come to dominate under stress conditions, corroborating experimental findings of “bottleneck” events during acute infections or within immunocompromised hosts.
\end{enumerate}

\noindent
\textbf{Synthesis (Connecting Theory to Empirical Virology).}
\noindent
The second‐quantized Fock‐space construction extends the PDE–Markov–Fock model to unbounded occupant populations, embedding biologically motivated replication and clearance events directly into the operator formalism. This not only prevents unphysical capping of viral loads, but also naturally admits resource or synergy feedback loops that shape real infection courses:
\begin{itemize}
  \item \emph{Wave–Population Couplings:} Local PDE phenomena (e.g., traveling wavefronts, partial subunit rearrangements) can scale up to global occupant expansions if synergy outstrips resource depletion.
  \item \emph{Stochastic vs.\ Deterministic Infection Profiles:} Path‐integral or large‐deviation arguments differentiate between high‐noise scenarios (broad distribution of occupant outcomes) and low‐noise ones (narrow, near‐deterministic replication cycles).
  \item \emph{Immune and Pharmacological Interventions:} Clearance operators can be “tuned” to reflect immune escalation or antiviral drug efficacy, leading to targeted occupant annihilation. The model’s $m$‐sectorial dissipativity ensures no unbounded blow‐up, mirroring how real infections plateau or decline under host defenses.
\end{itemize}  
\subsubsection{Novel Key Consequences of \(m\)\!-Sectoriality}
\label{subsubsec:NovelKeyConsequencesMsectorial}

\noindent
By imposing $m$‐sectorial conditions on the PDE, Markov, and creation/annihilation components of viral lattice theory, one obtains a broad range of theoretical guarantees and biologically relevant behaviors. Below, we summarize several pivotal implications:

\begin{enumerate}[label={\arabic*.}]
  \item \textbf{Non‐Hermitian Operators in Open Systems.}
  \begin{itemize}
    \item \emph{Physical Context:} Unlike closed quantum systems (governed by unitary time evolution), viral lattices operate in open‐system environments (resource exchanges, immune actions, etc.) that induce partial irreversibility and damping. 
    \item \emph{Mathematical Consequence:} Sectoriality ensures that real parts of the spectrum remain bounded above, preventing “runaway” blow‐ups in finite time. This is central for modeling resource‐driven growth or decay phenomena, as non‐self‐adjoint (non‐Hermitian) generators naturally arise in open‐system PDEs and Markov transitions.
  \end{itemize}
  \item \textbf{Discrete–Continuous Hybridization.}
  \begin{itemize}
    \item \emph{Framework:} Each operator block \(\mathcal{G}_y\) captures infinite‐dimensional elastic or wave‐like PDE modes within a single lattice arrangement \(y\). Meanwhile, a jump operator \(\widehat{M}\) models abrupt occupant changes or structural flips \cite{harvey2019viral,liggett1985interacting}.
    \item \emph{Biological Rationale:} Viral capsids or lattices can exhibit slow PDE‐driven deformations punctuated by rapid Markov jumps (e.g., gating events, occupant transitions), aligning with empirical observations of stepwise morphological reconfigurations.
  \end{itemize}
  \item \textbf{Quasi‐Contractive Semigroup and Explosion Prevention.}
  \begin{itemize}
    \item \emph{Mathematical Assurance:} The $m$‐sectorial (quasi‐contraction) estimates guarantee that, despite replication (creation operators) inflating occupant populations, saturable resource constraints (dissipation, damping) avert unbounded proliferation in finite time.
    \item \emph{Biological Interpretation:} This encapsulates the reality that even though viruses can replicate explosively, resource depletion or immune interference generally curbs exponential outbursts, preventing infinite proliferation over short intervals.
  \end{itemize}
  \item \textbf{Virological Realism and No Spurious Divergences.}
  \begin{itemize}
    \item \emph{Model Complexity:} Nonlinear, state‐dependent noise (\(\widehat{N}_y\)), discrete occupant jumps, and replication/clearance events all coexist in one model.  
    \item \emph{Dynamical Rigor:} The $m$‐sectorial framework ensures that such complexities—subunit fluctuations, conformational gating, occupant expansions—remain well‐posed without artificial divergence or numerical instabilities, enhancing both theoretical reliability and biological plausibility.
  \end{itemize}
  \item \textbf{Host‐Driven Variability.}
  \begin{itemize}
    \item \emph{Noise Operators:} The term \(\widehat{N}_y(\bm{U}_y)\) allows environment‐dependent perturbations, such as local pH changes, immune molecules, or molecular crowding \cite{KnipeHowley2020}.
    \item \emph{Adaptive Relevance:} This flexibility reflects how real viruses respond stochastically to host cues, enabling partial dampening or acceleration of lattice rearrangements based on local resource availability.
  \end{itemize}
  \item \textbf{Conformational Discontinuities.}
  \begin{itemize}
    \item \emph{Markov Jumps:} The operator \(\widehat{M}\) encodes large structural events—transitions like capsid gating, partial disassembly, or sudden subunit rearrangements \cite{flint2015principles,cann2015principles}.
    \item \emph{Mechanistic Insight:} This bridging of small‐amplitude PDE vibrations (continuous) and abrupt morphological flips (discrete jumps) captures the piecewise‐deterministic nature of many viral reconfiguration processes.
  \end{itemize}
  \item \textbf{Population‐Scale Replication and Clearance.}
  \begin{itemize}
    \item \emph{Second‐Quantized Occupant Changes:} In a Fock‐space extension, creation operators model viral replication at the population scale, while annihilation operators represent immune or spontaneous clearance \cite{Freed2015}.
    \item \emph{Macro–Micro Continuum:} Single‐lattice PDE phenomena (capsid vibrations, wavefront expansions) thus scale up to realistic viral loads. Host resource limits and immune factors remain embedded in the dissipative structure, preventing unphysical growth.
  \end{itemize}
  \item \textbf{Global Well‐Posedness Aligned with Biological Constraints.}
  \begin{itemize}
    \item \emph{Open‐System Consistency:} Dissipative PDE blocks, bounded jump rates, and saturable replication rates collectively produce an $m$‐sectorial operator. This aligns well with \emph{in vivo} constraints—immune responses, nutrient depletion, crowding, etc. \cite{KnipeHowley2020,NowakMay2000}.
    \item \emph{No Spontaneous Divergences:} From a theoretical standpoint, the quasi‐contraction semigroup formalism underscores how occupant expansions must always be tempered by resource or structural limitations, preventing infinite occupant blow‐up in finite time.
  \end{itemize}
\end{enumerate}
\noindent
In short, $m$‐sectoriality underpins both the mathematical \emph{stability} (no finite‐time blow‐up) and the \emph{physical realism} (accurate modeling of dissipative, resource‐constrained viral infection) of the entire PDE–Markov–Fock construction. The interplay of non‐Hermitian operators, continuous PDE modes, discrete occupant changes, and noise/damping amounts to a fully open‐system framework. Yet, crucially, sectorial estimates impose contractive bounds on the real part of the generator’s spectrum, preventing mathematical pathologies and aligning with the biological fact that viral replication, while potentially explosive, remains limited by resources and host defenses. This synergy of well‐posedness and biophysical fidelity exemplifies the key advantage of an $m$‐sectorial approach in viral lattice theory.

\subsection{Outlook and Conclusion: Towards a Virophysics Framework}

\noindent
The operator‐theoretic constructions and second‐quantized formalisms developed in this work offer a robust, mathematically coherent framework for describing viral dynamics at multiple scales—ranging from subcapsid wavefronts to macroscopic replication bursts. By leveraging the new axioms of \emph{Non‐Equilibrium Flux Persistence} and \emph{Stochastic Continuity}, we systematically anchor viral lattices in an \emph{open‐system}, \emph{$m$‐sectorial} setting that accommodates unbounded populations, resource constraints, occupant transitions, and continuum morphological states. Below, we expand on how these ideas might be applied to relevant real‐world viral species, experimental setups, and broader “virophysics” explorations, culminating in a concluding synthesis.
\medskip
\paragraph{Applications to Real Virus Species.} Modern virology encompasses a diverse range of viruses with varying structural features, replication strategies, and host interactions. Below, we illustrate how the PDE–Markov–Fock formalism can inform the study of prominent virus models:
\begin{itemize}
  \item \textbf{Bacteriophage T4.}  
    T4 phage, which infects \textit{E.\ coli}, has a well‐characterized assembly cycle involving tail‐fiber attachments and a pressurized capsid that ejects its DNA into the bacterial host. Discrete Markov transitions (e.g., tail ejection triggers, head expansion events) mesh naturally with the PDE–Markov–Fock framework. For instance, \emph{wave‐like} head expansions driven by high internal DNA pressure can be modeled via traveling‐wave PDE solutions embedded in an \(m\)‐sectorial operator. Resource‐rich \textit{E.\ coli} cultures reflect conditions enabling occupant (phage particle) replication bursts, while cell lysis or partial incomplete assemblies correspond to occupant annihilation processes. This multi‐scale description dovetails with the classical experimental findings of T4’s “headful packaging” and structural stability. \cite{Leiman2004}

  \item \textbf{SARS‐CoV‐2.}  
    Although enveloped rather than icosahedral, SARS‐CoV‐2 exhibits dynamic structural transitions—particularly in the spike protein conformations critical for receptor binding and membrane fusion. Markov jump operators can represent these rapid protein flips, while PDE‐type segments model progressive rearrangements in the ribonucleoprotein (RNP) complex or membrane curvature. In a second‐quantized Fock‐space description, virion replication surges (notably in uncontrolled infections) mirror occupant creation events. Clinically observed immune clearance translates to occupant annihilation. The formalism thus provides a theoretical scaffold for exploring how local spike rearrangements might scale up to large‐scale viral load expansions, consistent with patient‐level infection curves and the success of neutralizing antibody therapies that enhance annihilation rates. \cite{Ng2022}

  \item \textbf{Influenza H5N1.}  
    Segmented‐genome viruses like influenza A (H5N1) can benefit from a \emph{continuum} of morphological or assembly states, given the multiple RNA segments and the mutable arrangement of matrix proteins. In the PDE–Markov–Fock model, each morphological state \(y\) could denote a partial packaging configuration or a specific arrangement of the matrix layer. Wavefront expansions within the matrix protein region couple to occupant replication (viral assembly), while immune responses or limited host factors appear as dissipative terms preventing infinite expansions. This aligns with in vivo observations that high pathogenic strains can replicate swiftly but remain subject to host resource constraints and robust immune responses, ensuring finite overall infection. \cite{Shi2022}

\end{itemize}
The viability of a theoretical framework often hinges on its ability to interface with empirical data. Several mainstream experimental methods can test and refine the PDE–Markov–Fock approach:
\begin{itemize}
  \item \emph{Single‐Virus Tracking Experiments.}  
    Advanced fluorescence labeling and real‐time microscopy (e.g., total internal reflection fluorescence, single‐particle tracking) allow direct observation of Markovian gating events or partial wave expansions within capsid or envelope subunits. The \textit{Stochastic Continuity} Axiom (Axiom~\ref{axiom:stochastic_continuity}) places these random subunit fluctuations within an $m$‐sectorial PDE context, ensuring no unphysical discontinuities in measured trajectories. Such experiments are valuable for extracting parameters (e.g., occupant creation rates, morphological transition probabilities) by matching observed single‐particle time series to model outputs.

  \item \emph{\textit{In Vitro} Capsid Assembly Assays.}  
    Researchers often employ cell‐free systems (e.g., reticulocyte lysates) to reconstitute viral capsid assembly. Partial shells, complete capsids, and misassembled structures appear in measurable proportions. By modeling a continuum morphological space $\mathcal{Y}$ and occupant transitions in a PDE–Markov–Fock environment, one can simulate wavefronts of subunit addition that propagate through the forming capsid. Misassembly equates to occupant “annihilation.” This direct integral approach, together with second‐quantized occupant expansions, can replicate the distribution of final capsid states seen in electron micrographs or cryo‐EM data.

  \item \emph{Plaque Assays and Population‐Level Observations.}  
    Plaque assays remain a staple in virology for quantifying how viral titers evolve in host cell monolayers. The PDE–Markov–Fock model, incorporating occupant creation/annihilation, can simulate the emergent wavefront‐like expansions (e.g., radial plaque growth) or morphological cycles (capsid expansions or genome packaging) that collectively shape plaque formation. Immune factors—if added in \textit{in vitro} systems—further extend the occupant annihilation operators. Linking model predictions with plaque size, growth rates, or morphological heterogeneity strengthens the biological credibility of the theory.
\end{itemize}

\paragraph{Core Axioms for a Virophysics Framework.}
\begin{itemize}
  \item \textbf{Non‐Equilibrium Flux Persistence} (Axiom~\ref{axiom:non_equilibrium_flux}):  
    Many viral processes remain away from equilibrium, with constant resource influx (ATP, host factors) driving persistent probability‐current loops in configuration space. This open‐system principle illuminates why viruses do not settle into static “capsid shapes,” but continually transform or replicate.
  \item \textbf{Stochastic Continuity of Lattice Evolution} (Axiom~\ref{axiom:stochastic_continuity}):  
    Fluctuating host conditions (temperature, crowding, pH) necessitate $m$‐sectorial PDEs with operator‐valued noise, preventing abrupt, nonphysical discontinuities in viral lattice configurations. This ensures robust modeling of subcellular randomness in line with single‐virus experiments.
\end{itemize}

\paragraph{Predictive and Diagnostic Power of the Formalism.}
\begin{itemize}
  \item \emph{Simulations and Wavefunction Approaches.}  
    By treating each viral population state as a “wavefunction” in Fock space, we can track branches of replication, occupant transitions, or morphological flips in real time. This offers a path to large‐scale computational platforms, bridging PDE solvers for capsid mechanics with occupant Markov or noise modules.
  \item \emph{Sensitivity Analyses.}  
    Adjusting creation/annihilation rates or PDE boundary conditions can reveal how small parameter shifts (e.g., changes in host immune efficacy or capsid stiffness) drastically alter infection courses. This could inform experimental design (where one systematically varies pH or resource availability) to pinpoint thresholds for runaway viral replication.
  \item \emph{Intervention Strategies.}  
    From an antiviral development standpoint, one could “dial down” occupant creation rates (replication inhibitors) or enhance occupant annihilation (immune boosters). The $m$‐sectorial structure guarantees no pathological divergences, so one can systematically test how different intervention strengths modify infection wavefronts or occupant growth.
\end{itemize}

\paragraph{Conclusion: Towards a Foundational Virophysics}
\noindent
The synthesis of PDE mechanics, Markov transitions, and second‐quantized occupant expansions in an $m$‐sectorial open‐system framework provides a new \emph{virophysics} foundation—invoking the rigor and spirit of theoretical physics (e.g., Feynman‐inspired path integrals) while preserving essential biological constraints:
\begin{itemize}
  \item \textbf{Mathematical Unification.}  
    By embedding local subunit‐level wavefront solutions (acoustic/optical phonons), random gating events, and large‐scale occupant changes in a single operator‐algebraic model, we achieve a cohesive vantage on viral infection.
  \item \textbf{Experimental Invitation.}  
    The formalism encourages experimentalists to explore how real virions traverse continuous morphological landscapes, how occupant expansions reflect resource gating, and how wavefront phenomena might be measured or perturbed in cell culture or \textit{in vivo}. This invites a synergy between advanced imaging (cryo‐EM, single‐virus fluorescence) and the PDE–Markov–Fock approach, testing parameter predictions and validating occupant creation/annihilation rates.
  \item \textbf{Future Paths:}  
    \begin{itemize}
      \item \emph{Co‐infections and Synergistic Interactions:} Multi‐strain or multi‐virus occupant expansions can be folded into the Fock approach, capturing competition or synergy.  
      \item \emph{Precision Antiviral Strategies:} By identifying which occupant transitions or PDE wave modes are critical to replicative success, targeted drugs might be designed to shift the system into unfavorable Markov states or block wave‐triggered expansions.
      \item \emph{Refined Theoretical Axioms:} The introduced axioms (\emph{Non‐Equilibrium Flux Persistence} and \emph{Stochastic Continuity of Lattice Evolution}) lay a conceptual bedrock for open‐system viral physics—potentially joining or complementing new axioms as virophysics matures.
    \end{itemize}
\end{itemize}

\noindent
In sum, we have proposed a rigorous $m$‐sectorial operator model for viral lattices, grounded in physical and biological insights and enriched by new axiomatic principles. By uniting PDE wave dynamics, Markov occupant transitions, noise, and Fock‐based replication, this approach provides both the \emph{analytical flexibility} and \emph{biological authenticity} to chart how viruses navigate fluctuating host environments. We envision further interdisciplinary collaboration—between theorists refining the operator structures and experimentalists probing viral mechanics at subunit and population scales—will clarify the universalities and peculiarities that define \emph{virophysics} as a frontier in modern science.

\end{document}